\DeclareFontFamily{U}{mathx}{\hyphenchar\font45}
\DeclareFontShape{U}{mathx}{m}{n}{
<5> <6> <7> <8> <9> <10>
<10.95> <12> <14.4> <17.28> <20.74> <24.88>
mathx10}{}
\DeclareSymbolFont{mathx}{U}{mathx}{m}{n}
\DeclareMathAccent{\widecheck}{0}{mathx}{"71}
\numberwithin{equation}{section}
\def\fo{\mathring{f}}
\newcommand{\f}{{\sqrt{f}}}
\def\psio{\mathring{\psi}}
\def\phio{\mathring{\phi}}
\def\nabo{\mathring{\nab}}
\def\go{\mathring{g}}
\def\ko{\mathring{k}}
\def\Fo{\mathring{F}}
\def\Xt{\widetilde{X}}
\def\Xc{\widecheck{X}}
\def\vphi{{\hat{n}}}
\def\Bt{\widetilde{B}}
\def\Hdot{\dot{H}}
\def\Ft{\widetilde{F}}
\def\kl{{k_*}}
\def\io{\iota}
\def\kc{\widecheck{k}}
\def\tpsic{\widecheck{e_0\psi}}
\def\bo{\square}
\def\et{\widetilde{e}}
\def\kt{\widetilde{k}}
\def\Ll{\mathbb{L}}
\def\gt{\widetilde{{\g}}}
\def\Hh{\mathbb{H}}
\def\ev{{\vec{e}}}
\def\Dd{\mathbb{D}}
\def\nt{\widetilde{n}}
\def\qIt{\widetilde{q_I}}
\def\qJt{\widetilde{q_J}}
\def\qKt{\widetilde{q_K}}
\def\qit{\widetilde{q_i}}
\def\ec{\widecheck{e}}
\DeclareMathOperator{\dvol}{dvol}
\renewcommand{\c}{\cdot}
\newcommand{\DD}{\mathcal{D}}
\newcommand{\pr}{\partial}
\newcommand{\bg}{\mathbf{g}}
\def\gat{\widetilde{\ga}}
\def\psit{\widetilde{\psi}}
\newcommand{\M}{\mathcal{M}}
\newcommand{\D}{\mathbf{D}}
\newcommand{\Ric}{{\ric}}
\DeclareMathOperator{\sRic}{Ric}
\newcommand{\g}{\bg}
\newcommand{\R}{{\mathbf{R}}}
\def\La{\Lambda}
\def\la{\lambda}
\def\Tt{\mathbb{T}}
\def\dq{\de q}
\def\MM{\mathcal{M}}
\def\Si{\Sigma}
\def\si{\sigma}
\def\ga{\gamma}
\def\Ga{\Gamma}
\def\a{\alpha}
\def\b{\beta}
\def\ka{\kappa}
\def\de{\delta}
\def\nab{\nabla}
\def\Gab{\DD_b}
\def\Gag{\DD_g}
\def\Gaw{\DD_w}
\DeclareMathOperator{\diag}{diag}
\def\ep{\varepsilon}
\def\les{\lesssim}
\def\TT{{\mathcal{T}}}
\def\jp{\langle p\rangle}
\DeclareMathOperator{\err}{Err}
\DeclareMathOperator{\supp}{supp}
\DeclareMathOperator{\tr}{tr}
\DeclareMathOperator{\sdiv}{div}
\def\ric{\mathbf{Ric}}
\newtheorem{thm}{Theorem}[section]
\newtheorem{prop}[thm]{Proposition}
\newtheorem{lem}[thm]{Lemma}
\newtheorem{rk}[thm]{Remark}
\newtheorem{df}[thm]{Definition}
\title{Stability of Big Bang Singularity\\ for the Einstein--Maxwell--Scalar Field--Vlasov System\\ in the Full Strong Sub-Critical Regime}
\author{Xinliang An, Taoran He, Dawei Shen}
\begin{document}
\maketitle
\begin{abstract}
In $3+1$ dimensions, we study the stability of Kasner solutions for the Einstein--Maxwell--scalar field--Vlasov system. This system incorporates gravity, electromagnetic, weak and strong interactions for the initial stage of our universe. Due to the presence of the Vlasov field, various new challenges arise. By observing detailed mathematical structures and designing new delicate arguments, we identify a new \textit{strong sub-critical regime} and prove the nonlinear stability with Kasner exponents lying in this full regime. This extends the result of Fournodavlos--Rodnianski--Speck \cite{FRS} from the Einstein--scalar field system to the physically more complex system with the Vlasov field.\\ \\
{\bf Keywords:} Einstein--Maxwell--scalar field--Vlasov system, Big Bang singularity, Kasner solution, strong sub-critical regime.
\end{abstract}
\tableofcontents
\section{Introduction}
The presence of the Big Bang singularity poses fundamental mathematical and physical questions about the nature of our early universe. The interaction of matters and its geometric implications are quite mysterious. Mathematically, for Einstein field equations, a particularly interesting class of cosmological solutions describing the initial Big Bang singularities are the Kasner spacetimes, which provide exact, anisotropic models for spacetime dynamics near the Big Bang singularities.

In recent years, encouraging progress has been made toward proving the stability of Big Bang singularities. Without symmetry assumptions, pioneering explorations on the stable big bang formation were carried out by Rodnianski--Speck, Speck \cite{RSscalar,RSstiff,RSvacuum,Speck}. Remarkably, in \cite{FRS} Fournodavlos--Rodnianski--Speck demonstrated the nonlinear stability of Kasner solutions to the Einstein--scalar field system for the entire sub-critical regime. Later, in a notable result \cite{BOZ:subcrit}, Beyer--Oliynyk--Zheng established a localization-allowed version of \cite{FRS} using the Fuchsian method. See also a new localized construction \cite{AF} by Athanasiou--Fournodavlos. For the Fuchsian approach applying to the Einstein vacuum equations with cosmological constant under polarized $\mathbb{T}^2$ symmetry, we refer to Ames--Beyer--Isenberg--Oliynyk  \cite{ABIO:2021_Royal_Soc,ABIO:2021}. We also would like to mention another recent inspiring work \cite{GOPR} by Groeniger--Petersen--Ringstrom, where they extended the result of \cite{FRS} to allow for non-vanishing scalar potential and a broader class of initial data within the non-perturbative regime of Kasner-like solutions. Meanwhile, Fajman--Urban \cite{FU22} studied the stability of the (homogeneous) Friedman--Lema\^{i}tre--Robertson--Walker (FLRW) solutions for the Einstein--scalar field system  and in \cite{FU24} they achieved the past stability of FLRW solutions to the Einstein--scalar field--Vlasov system. We also note that, when incorporating the Vlasov matter field (for both massive and massless cases), Urban \cite{Urban} established the past stability of FLRW solutions in $1+2$ dimensions. It is also worth mentioning that Boothroyd \cite{Boothroyd} studied the big bang stability of the Einstein--Maxwell--scalar field system (no Vlasov) in $\mathbb{T}^2$-symmetry.
\vspace{2mm}

In this paper, along the line of \cite{FRS} we aim to establish the nonlinear stability of (anisotropic) Kasner solutions for a more physically complicated system in a regime as large as possible. In particular, here we study the stable big formation for the Einstein--Maxwell--scalar field--Vlasov system. 
In the Big Bang setting, the Maxwell field accounts for the electromagnetic interaction, the scalar field models the weak interaction of neutrinos. In this paper, we employ the massless Vlasov field to represent the quark–gluon plasma (QGP) in the earliest stages after the Big Bang, which governs by the strong interactions.\footnote{Here for mathematical simplicity we treat the particles as collisionless. We also remark that in physics literature, when studying the beginning stage of the Big Bang singularity, massless (instead of massive) particles are more commonly used.}
Our studied comprehensive system thus reflects the four fundamental forces: gravity, electromagnetism, weak interaction, and strong interaction.
\vspace{2mm}

Toward proving the nonlinear stability results for the corresponding ``entire'' sub-critical regime, the Vlasov imposes many new challenges. By making new observations of the Vlasov-related system, we introduce a new concept \textit{strong sub-critical regime}. Our main theorem establishes the nonlinear stability of Kasner solutions in the entire \textit{strong sub-critical regime} for the Einstein--Maxwell--scalar field--Vlasov system. We prove that, for initial data sufficiently close to a Kasner solution with Kasner components lying in the strong sub-critical regime, the dynamical solution is well-controlled and exhibits Kasner-type curvature blow-ups. This extends the main result of \cite{FRS} from the Einstein--scalar field system to the Einstein--Maxwell--scalar field--Vlasov system. In particular, for the Einstein--Maxwell--scalar field system without the Vlasov field, we obtain and recover the main conclusion of \cite{FRS} in the entire sub-critical regime.
\vspace{2mm}

In this paper, our key innovations are our treatments of the Vlasov field. By operating conservation laws directly and by employing weighted energy estimates, we establish sharp lower-order and higher-order estimates for the Vlasov field, and we also allow the perturbations of the Vlasov field to be with non-compact support in the mass shell.
\subsection{The Einstein--Maxwell--Scalar Field--Vlasov System}
In this paper, we study the $3+1$ dimensional Lorentz manifold $(\MM,\g)$ and our main goal is to investigate the stable big bang formation for the below Einstein--Maxwell--scalar field--Vlasov system (EMSVS):
\begin{align}\label{EMSV}
\Ric_{\mu\nu}-\frac12\R\g_{\mu\nu}=\D_\mu\psi\D_\nu\psi-\frac{1}{2}\g_{\mu\nu}\D_\a\psi\D^\a\psi+2\left(F_{\mu\a}{F_{\nu}}^\a-\frac{1}{4}\g_{\mu\nu}F_{\a\b}F^{\a\b}\right)+T^{(V)}_{\mu\nu}.
\end{align}
Here we have\footnote{We use $P(t,x)$ to denote the mass shell at $(t,x)$. See Section \ref{sssecmassshell} for more explanations.}
\begin{align}\label{dfT}
    T^{(V)}_{\mu\nu}\coloneqq\int_{P(t,x)}fp_\mu p_\nu\dvol
\end{align}
and it denotes the energy-momentum tensor of a massless Vlasov field. 

Contracting \eqref{EMSV} with $\g^{\mu\nu}$, we deduce that the scalar curvature $\R$ of the spacetime $(\M,\g)$ satisfies
\begin{align*}
    \R=\D_\a\psi\D^\a\psi.
\end{align*}
Injecting it into \eqref{EMSV}, we infer that the Einstein--Maxwell--scalar field--Vlasov field equations can be rewritten as
\begin{align}
\begin{split}\label{EMS}
    \Ric_{\mu\nu}&=\D_\mu\psi\D_\nu\psi+2F_{\mu\a}{F_{\nu}}^\a-\frac{1}{2}\g_{\mu\nu}F_{\a\b}F^{\a\b}+T^{(V)}_{\mu\nu}.
\end{split}
\end{align}
Regarding the matter fields, we have the following equations:
\begin{itemize}
\item Wave equation for the scalar field $\psi$:
\begin{align}\label{Wave}
    \bo_\g\psi=0.
\end{align}
\item The Maxwell equations for the electromagnetic field $F$:
\begin{align}\label{Maxwell}
    \D_{\a}F^{\a\b}=0,\qquad\qquad \D_{[\a}F_{\b\ga]}=0.
\end{align}
\item The (massless) Vlasov equation for the distribution function $f$:
\begin{align}\label{Vlasov}
    X(f)=0,
\end{align}
where $X\in\Ga(TT\M)$ denotes the \emph{geodesic spray}, i.e., the generator of the geodesic flow of $(\M,\g)$.
\end{itemize}
The system \eqref{EMS}--\eqref{Vlasov} admits a well-posed initial value formulation and sufficiently regular initial data yield unique solutions. An initial data set for the system \eqref{EMS}--\eqref{Vlasov} consists of a septuplet $(\Si_1,\go,\ko,\psio,\phio,\Fo,\fo)$, where $\go$ is a Riemannian metric on $\Si_1$, $\ko$ is a symmetric two-tensor, $(\psio,\phio)$ is a pair of scalar functions, $\Fo$ is a $2$-form and $\fo$ is a scalar function defined on the tangent bundle $T\Si_1$. This septuplet satisfies
\begin{align*}
\psi|_{t=1}&=\psio, &\pr_t\psi|_{t=1}&=\phio,\\
F_{\mu\nu}|_{t=1}&=\Fo_{\mu\nu}, &  f|_{t=1}&=\fo.
\end{align*}
We note that the admissible geometric initial data must satisfy the Hamiltonian and momentum constraint equations, which take the form of
\begin{align}
    \mathring{R}-|\ko|^2+(\tr\ko)^2&=\phio^2+|\nabo\psio|^2+4\left(\Fo_{0C}\Fo_{0C}+\frac{1}{4}\Fo_{\a\b}\Fo^{\a\b}\right)+2T_{00},\label{Hamiltonconstraint}\\
    (\mathring{\sdiv}\ko)_I-\nabo_I\tr\ko&=-\phio\left(\nabo_I\psio\right)-2\Fo_{0C}\Fo_{IC}-T_{0I}.\label{momentumconstraint}
\end{align}
Here, $\nabo$ and $\mathring{R}$ are the Levi--Civita connection and the scalar curvature of $\go$ respectively, and we use $C, I, J, K$ to denote indices $1,2,3$ for spatial variables. Moreover, the initial data of the electromagnetic field $\Fo$ satisfy
\begin{align}
\nabo_I\Fo_{JK}+\nabo_J\Fo_{KI}+\nabo_K\Fo_{IJ}&=0\qquad\qquad \text{with} \quad I\ne J, J \ne K, K\ne I.\label{Maxwellelliptic}
\end{align}
\subsection{Kasner Solutions and Strong Sub-Critical Condition}
This paper is to explore the stability of curvature-blowup phenomena for a large class of the Kasner solutions on $(0,\infty)\times\Tt^3$. These solutions take the form
\begin{align}
\begin{split}\label{kasnerg}
    \gt&=-dt\otimes dt+\sum_{I=1,2,3}t^{2\qIt}dx^I\otimes dx^I,\\
    \widetilde{\psi}&=\widetilde{B}\log t,\qquad\quad \widetilde{F}=0,\qquad\quad \widetilde{f}=0.
\end{split}
\end{align}
Here $\{\widetilde{q_I}\}_{I=1,2,3}$ and $\widetilde{B}$ are constants that satisfy the algebraic relations
\begin{align}\label{kasnerq}
    \sum_{I=1}^3\qIt=1,\qquad\quad\sum_{I=1}^3\qIt^2=1-\Bt^2.
\end{align}
We refer to $\{\widetilde{q_I}\}_{I=1,2,3}$ as the Kasner exponents.

We note that the constraints in \eqref{kasnerq}, which arise from the constant mean curvature (CMC) condition $\tr\kt=-\frac{1}{t}$ and the Hamiltonian constraint \eqref{Hamiltonconstraint}, ensure that \eqref{kasnerg} are indeed solutions to the Einstein--Maxwell--scalar field--Vlasov system \eqref{EMS}.
\begin{df}\label{sscdf}
We say that a Kasner solution \eqref{kasnerg} to the system \eqref{EMS} on $(0,\infty)\times\Tt^3$ satisfies the \textbf{strong sub-critical condition} (or \textbf{strong stability condition}), if its Kasner exponents satisfy
\begin{align}\label{strongstability}
    \max_{I,J,K=1,2,3}\{\qIt+\qJt-\qKt\}<1.
\end{align}
\end{df}
\begin{rk}
Note that the condition \eqref{strongstability} is slightly more restrictive than the so-called \emph{sub-critical condition} employed by Fournodavlos--Rodnianski--Speck in \cite{FRS}: 
\begin{align}\label{kasnerstability}
    \max_{\substack{I,J,K=1,2,3 \\ I\ne J}}\{\qIt+\qJt-\qKt\}<1.
\end{align}
To see the difference between \eqref{strongstability} and \eqref{kasnerstability}, the admissible regions for the exponents $\widetilde{q_M}\coloneqq \max\limits_{{I=1,2,3}}{\qIt}$ and $\widetilde{q_m}\coloneqq \min\limits_{{I=1,2,3}}{\qIt}$ are portrayed as in \Cref{fig:Kasnerexp}. Here the regime of strong sub-criticality \eqref{strongstability} corresponds to the gray region, while the sub-critical condition \eqref{kasnerstability} includes both the gray and hatched region.
\begin{figure}[htp]
    \centering
    \includegraphics[width=0.5\linewidth]{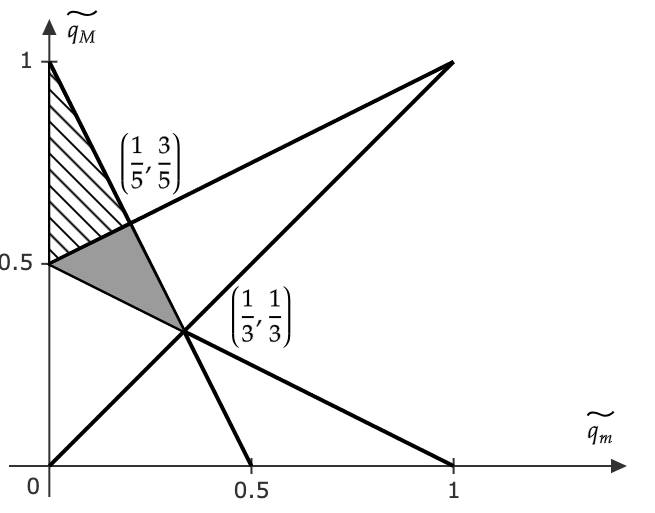}
    \caption{Admissible Regions for \eqref{strongstability} and \eqref{kasnerstability}}
    \label{fig:Kasnerexp}
\end{figure}
\end{rk}
\subsection{Main Theorem}
In this section, we state a rough version of our main theorem. The explicit statement is referred to \Cref{mainestimates}.
\begin{thm}[main theorem (rough version)]\label{Rough Main thm 1}
We study the Einstein--Maxwell--scalar field--(massless) Vlasov system \eqref{EMS} on the slab $(0,1]\times\Tt^3$. Let $(\Si_1,g,k,\psi,\mathring{\phi},\Fo,\fo)$ be its initial data set, that is close to a Kasner solution \eqref{kasnerg} with Kasner exponents satisfying the strong sub-critical condition \eqref{strongstability}. Then, for this system, there exists a unique solution and it obeys quantitative stability estimates provided in Theorem \ref{mainestimates}.
\end{thm}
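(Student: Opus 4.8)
The plan is to set up a continuity/bootstrap argument in a constant-mean-curvature spatially-harmonic (CMC-SH) gauge, following the scheme of \cite{FRS}. One normalizes the time function by $\tr\kt=-1/t$ so that the conjectured Big Bang is the limit $t\to 0^+$, and imposes spatial harmonic coordinates on each constant-$t$ slice. In this gauge the Einstein equations \eqref{EMS} split into an elliptic equation for the lapse $n$, ODE-type evolution equations in $t$ for the spatial metric $g$ and the second fundamental form $k$, coupled to the wave equation \eqref{Wave} for $\psi$, the Maxwell system \eqref{Maxwell} for $F$, and the Vlasov equation \eqref{Vlasov} for $f$. After rescaling all quantities relative to a Kasner-adapted frame, one isolates the borderline source terms, whose strength near $t=0$ is measured by powers of $t$ built from the Kasner exponents $\qIt$. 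The bootstrap controls a hierarchy of $t$-weighted Sobolev norms on slabs $[t_0,1]$: low-order norms held within $\ep$ of their Kasner values, high-order norms permitted to grow like a small negative power of $t$, with the goal of closing all estimates uniformly in $t_0$ and then letting $t_0\to 0$.

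For the geometric variables, $k$, the scalar field and the Maxwell field I would run the near-singularity energy method: commute the evolution equations with spatial derivatives, form the natural $t$-weighted energies, derive Gr\"onwall-type differential inequalities, and absorb the lapse through elliptic estimates for $n$; in the strong sub-critical regime the Maxwell contributions carry favorable $t$-weights and can be handled essentially as the scalar-field terms. The genuinely new ingredient is the Vlasov analysis. Since the coordinate vector fields $\pr_{x^I}$ are Killing for the exact Kasner metric, the covariant momentum components $p_I$ are conserved along background geodesics, and remain almost conserved for the perturbed flow; I would use this to integrate the characteristics of $X(f)=0$ and transport $f$ from $t=1$, obtaining pointwise bounds on $f$ directly from $\fo$ composed with the flow via the conservation law \eqref{Vlasov}. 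The moments $T^{(V)}_{\mu\nu}$ in \eqref{dfT} are then estimated by combining these pointwise bounds with the explicit form of the mass-shell volume element in the momentum variables, and high-order control of $f$ is obtained by weighted energy estimates for its derivatives in both $x$ and the mass-shell variables — with momentum weights chosen so that the relevant integrals converge without any compact-support assumption, as emphasized in the introduction.

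The principal obstacle I anticipate is the Vlasov stress-energy tensor: its components $T^{(V)}_{IJ}$, including the diagonal ones $T^{(V)}_{II}=\int_{P(t,x)}f\,p_I p_I\,\dvol$, feed source terms into the evolution of $k$ whose $t$-behavior is governed by sums $\qIt+\qJt-\qKt$ with \emph{no restriction} $I\ne J$, unlike the purely geometric/scalar/Maxwell terms. Demanding that all such terms be integrable in $t$ near $0$ is exactly what forces the strong sub-critical condition \eqref{strongstability} rather than the weaker \eqref{kasnerstability}; verifying this, uniformly across all index combinations and together with the momentum-weight bookkeeping in the Vlasov moments, is the heart of the argument. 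A secondary difficulty is controlling the derivative loss in $T^{(V)}_{\mu\nu}$ — each spatial derivative costs a derivative of $f$ and a power of the momentum — against the allowed growth of the high-order norms, which must be balanced within the norm hierarchy. Once the bootstrap closes uniformly in $t_0$, sending $t_0\to 0$ produces the solution on $(0,1]\times\Tt^3$ with the quantitative estimates of \Cref{mainestimates}, and evaluating a curvature invariant such as the Kretschmann scalar on the controlled solution exhibits the Kasner-type blow-up as $t\to 0^+$.
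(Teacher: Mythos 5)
Your high-level scheme (bootstrap on weighted norms, elliptic estimate for the lapse, weighted energy estimates for the geometric and matter fields, identifying $\qIt+\qJt-\qKt$ with no restriction $I\ne J$ as the source of the strong sub-critical condition) is on the right track, and you are correct that the Vlasov stress-energy feeding into the $k$-evolution is what forces \eqref{strongstability}. However, your proposed treatment of the Vlasov field is materially different from the paper's and has a genuine gap.

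You propose to integrate $X(f)=0$ along characteristics and transport $f$ from $t=1$, obtaining pointwise bounds on $f$ from $\fo$ composed with the flow, then estimate $T^{(V)}_{\mu\nu}$ from these pointwise bounds plus the mass-shell volume element. The problem is that a pointwise bound $\|f\|_{L^\infty}$ on its own gives no control on $T^{(V)}_{\mu\nu}=\int_{P(t,x)}f\,p_\mu p_\nu\,\dvol$ unless you also bound the extent of $\supp_{P(t,x)}f$ in momentum (or equivalently the second moments of the characteristics). Tracking the geodesic flow for $p^I$ near $t=0$ is exactly what the paper's method is designed to avoid: as they emphasize, they ``do not need to control the size of $p^\mu$ in $\supp_{P(t,x)}f$.'' Their substitute for your characteristics argument, at lower order, is to run the conservation law $\D_\mu T^{\mu 0}=0$ directly on $T_{00}$, exploit the non-negativity of the diagonal entries $T_{II}\ge 0$ together with $T_{00}=\sum_I T_{II}$ to dominate $\sum_I \frac{\qIt}{t}T_{II}$ by $\frac{\max_I\qIt}{t}T_{00}$, and then integrate in $t$. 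This produces the sharp bound $T_{00}\lesssim\ep_0 t^{-1-q_M}$ without any information about the momentum support. At top order they take a square root, $X(\sqrt f)=0$, commute with $\pr_x^{\io_1}(p\pr_p)^{\io_2}$ (not plain $\pr_p$), and exploit the commutation identity $[p^J\pr_{p^K},\sum_I\frac{\qIt}{t}p^I\pr_{p^I}]=\frac{\qKt-\qJt}{t}p^J\pr_{p^K}$ so that each $p\pr_p$ costs at most $\de q$ in the $t$-power, which is exactly what forces the constraint $A_*/\kl\approx\de q$ and in turn pins down the strong sub-critical regime. Your sketch of ``weighted energy estimates for the derivatives in $x$ and mass-shell variables'' is headed in the right direction but does not identify the $\sqrt f$ normalization (which, as they point out in their Remark on the choice $q=\frac12$, is what lets $\|T\|_{\dot H^{\kl}}$ be bounded by $\|\TT\|_{L^\infty}\|\TT^{(\kl)}\|_{L^2}$ without any volume factor), nor the $p\pr_p$ commutator bookkeeping.

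A secondary point: the paper does not use a CMC spatially-harmonic gauge. Like \cite{FRS}, it works in CMC-\emph{transported} spatial coordinates together with a Fermi--Walker transported orthonormal frame $\{e_I\}$, and the top-order estimates for $(k,\ga)$, $(e_0\psi,\ev\psi)$ and $(F_{0I},F_{IJ})$ are closed by exploiting an exact Bianchi-pair cancellation in the frame equations, not by elliptic gains from spatial harmonic coordinates. This is a distinct gauge and a distinct mechanism for avoiding derivative loss, and your CMC-SH proposal would need its own verification that the lapse-and-shift elliptic system does not introduce borderline terms that destroy the sharp $t$-weights needed at the strong sub-critical endpoint.
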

\begin{rk}
   The proof of \Cref{Rough Main thm 1} crucially builds upon our sharp controls of both lower-order and higher-order derivatives of the Vlasov field. These sharp controls enable us to demonstrate the stable big bang formation for EMSVS in the full strong sub-critical regime.
\end{rk}
\begin{rk}
    Owing to our new method for handling the Vlasov equation, in \Cref{Rough Main thm 1} the initial perturbation for the Vlasov field is not necessarily compactly supported in the mass shell $P(1, x)$.
\end{rk}
When taking the Vlasov field $f$ to be identically $0$, adapting our arguments in this paper as well as the approach in \cite{FRS}, we can also show the nonlinear stability of Kasner solutions to the Einstein--Maxwell--scalar field system in the full sub-critical regime. We summarize the conclusion here:
\begin{prop}
Consider the Einstein--Maxwell--scalar field system, i.e., \eqref{EMS} with $f=0$ on the slab $(0,1]\times\Tt^3$. Let  $(\Si_1,g,k,\psi,\mathring{\phi},\Fo)$ be its initial data set that is close to a Kasner solution \eqref{kasnerg} with Kasner exponents satisfying the stability condition \eqref{kasnerstability} (with $\fo=0$). Then, for this system there exists a unique solution on $(0,1]\times\Tt^3$ and it satisfies quantitative stability bounds.
\end{prop}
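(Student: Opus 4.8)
The plan is to specialize the full argument developed in the main body of the paper — the one that proves \Cref{Rough Main thm 1} and the quantitative bounds of \Cref{mainestimates} for the Einstein--Maxwell--scalar field--Vlasov system — to the case $f\equiv 0$, while tracking precisely where the \emph{strong} sub-critical condition \eqref{strongstability} is genuinely invoked and where the weaker sub-critical condition \eqref{kasnerstability} of \cite{FRS} already suffices. When $f\equiv 0$ the Vlasov equation \eqref{Vlasov} holds trivially, the energy--momentum tensor $T^{(V)}_{\mu\nu}$ of \eqref{dfT} vanishes identically, the reduced system \eqref{EMS} collapses to the Einstein--Maxwell--scalar field system, and the constraints \eqref{Hamiltonconstraint}--\eqref{momentumconstraint} lose their $T_{00}$ and $T_{0I}$ terms. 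First I would set up the same CMC-transported spatial gauge, the same renormalized variables for the metric, the second fundamental form, the scalar field $\psi$ and the electromagnetic field $F$, and the same hierarchy of $t$-weighted energies used in \Cref{mainestimates}.

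The key reduction is an audit of that hierarchy. In the main proof the appearances of \eqref{strongstability} — rather than merely \eqref{kasnerstability} — are all attached to terms sourced by the Vlasov field: the borderline contributions in the lower- and higher-order Vlasov energy estimates, and, crucially, the feedback of $T^{(V)}_{\mu\nu}$ into the evolution equations for the metric and for $k$, which is exactly what promotes the ``diagonal'' cases $I=J$ into the maximum in \eqref{strongstability}. Setting $f\equiv 0$ annihilates every such term, and one is left with the Einstein--scalar field analysis of \cite{FRS}, for which \eqref{kasnerstability} is the sharp threshold, now coupled to the Maxwell field. It then remains only to check that the electromagnetic sector closes under \eqref{kasnerstability}.

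For the Maxwell field I would decompose $F$ with respect to the CMC foliation into its electric and magnetic parts, recast \eqref{Maxwell} as a first-order symmetric hyperbolic (equivalently, wave-type) system for these components, and propagate the same $t$-weighted energies as for $\psi$ and the metric variables, the initial data being admissible by virtue of \eqref{Maxwellelliptic}. The structural point is that, perturbing the Kasner background \eqref{kasnerg} in which $\Ft=0$, the electromagnetic stress tensor $2F_{\mu\a}{F_{\nu}}^\a-\frac{1}{2}\g_{\mu\nu}F_{\a\b}F^{\a\b}$ enters the reduced Einstein equations \eqref{EMS} with strictly sub-critical strength once \eqref{kasnerstability} holds: the anisotropic factors $t^{\pm2\qIt}$ generated by raising indices produce at worst off-diagonal ($I\ne J$) weights, never the diagonal obstruction that only the Vlasov field creates. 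One then closes the bootstrap as in the main body, and the Kasner-type curvature blow-up follows because the leading singular behavior of the metric, of $k$ and of $\psi$ is unchanged while $F$ lies strictly below them in the blow-up hierarchy.

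The main obstacle here is organizational rather than analytic: one must verify exhaustively that no use of \eqref{strongstability} survives the substitution $f\equiv 0$, and that every top-order product coupling $F$ to the metric through \eqref{EMS} — in particular the borderline terms in the highest-order energy estimates, where a single missing power of $t$ would be fatal — is controlled by \eqref{kasnerstability} alone. I do not expect a genuinely new difficulty: the required estimates are furnished by the techniques already developed in the main body together with the scalar-field arguments of \cite{FRS}, and the proposition then follows.
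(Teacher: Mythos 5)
The paper itself gives no explicit proof of this proposition; it simply states after the proposition (in the surrounding text) that the result follows by "adapting our arguments in this paper as well as the approach in [FRS]," and earlier (Section 1.4.1) records the structural reason: "adding only the Maxwell field would keep the same sub-critical regime \eqref{kasnerstability}." Your proposal is a faithful expansion of exactly that remark: specialize the machinery to $f\equiv 0$, observe that every occurrence of strong sub-criticality traces back to the Vlasov source (the $\partial T \sim t^{-1-q_M-\de q}$ contribution in the $k$-equation and, more fundamentally, the constraint $A_*/\kl \approx \de q$ forced by the momentum-space commutator \eqref{Intro:commute p pr p}), then note that with $T\equiv 0$ one can instead take $A_*/\kl$ arbitrarily small as in \cite{FRS}, redefine $q,\si,q_M,\de q$ with the max taken only over $I\ne J$, and close the bootstrap. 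That matches the paper's intent, so the approach is the same.

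One small inaccuracy in your mechanism discussion: the phrase "anisotropic factors $t^{\pm 2\qIt}$ generated by raising indices" does not really apply in the paper's setup, since the reduced system is written entirely in the Fermi--Walker transported orthonormal frame, where $F_{0I},F_{IJ}$ carry no explicit $t^{\pm 2\qIt}$ weights — those factors are already absorbed into $e_I^i\approx t^{-\qIt}\de_I^i$. The correct way to see that the Maxwell sector is harmless is simpler: $F\in\Gab$ in the paper's hierarchy, so the Maxwell stress terms in \eqref{2.22a} are pure $\Gab\cdot\Gab$ nonlinearities, of the same type and with the same null structure as the $(e_I\psi)(e_J\psi)$ terms already handled in \cite{FRS}; and the Maxwell evolution equations \eqref{Maxwell1}--\eqref{Maxwell2} form a Bianchi pair with linear weights $1-\qIt$ and $\qIt+\qJt$ (with $I\ne J$ by antisymmetry of $F_{IJ}$), both strictly less than $1$ under \eqref{kasnerstability}. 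This is the structural point the paper is implicitly relying on, and it is cleaner than the index-raising heuristic.
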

\subsection{Main Difficulty and New Ingredients}\label{overview}
This section is devoted to highlighting the key ideas and new ingredients presented in our proof of \Cref{Rough Main thm 1}. 
\subsubsection{Strong Sub-Critical Condition and AVTD Behavior}
Our results yield the dynamical stability of the Kasner Big Bang singularity for the Einstein--Maxwell--scalar field--Vlasov system if the exponents of the background Kasner solution verify the strong sub-critical condition \eqref{strongstability}. Based on the hyperbolic estimates established for various geometric quantities, we are able to show that these perturbed spacetimes converge to Kasner-like\footnote{See details in Proposition \ref{Kasnerlimit}.} Big-Bang solutions when they are approaching the Big-Bang singularity.

The condition \eqref{strongstability} is referred to as the \textit{strong sub-critical condition} or the \textit{strong stability condition} for the following two reasons:
\begin{itemize}
\item The requirement \eqref{strongstability} is more restrictive than the stability condition \eqref{kasnerstability} employed in \cite{FRS}. For example, the close-to-endpoint case $(\widetilde{q_1}, \widetilde{q_2}, \widetilde{q_3})=(\epsilon, \epsilon, 1-2\epsilon)$ with $0<\epsilon\ll 1$ verifies the stability condition \eqref{kasnerstability}, but is excluded in our strong sub-critical regime.
\item The strong stability condition \eqref{strongstability} is necessary in order to establish the stability of Big Bang formation when taking into account the effect of the massless Vlasov field which describes the strong interaction. We point out that adding only the Maxwell field would keep the same sub-critical regime \eqref{kasnerstability} and lead to the same conclusion, namely, all sub-critical Kasner solutions are stable for the Einstein--Maxwell--scalar field system.
\end{itemize}
In our proof, we will guarantee that all geometric quantities exhibit the so-called \emph{asymptotically velocity term dominated} (AVTD) behaviors. We now clarify its meaning. For the sake of illustration, we first introduce the following key parameters.\footnote{See also Section \ref{seckey} for more precise definitions.} According to \eqref{strongstability}, we pick parameters $q, \si\in (0, 1)$ such that
\begin{align*}
    \max_{I,J,K=1,2,3}\big\{1-\qIt,\qIt+\qJt-\qKt\big\}=q-4\si=1-6\si.
\end{align*}
Notice that the parameter $\si>0$ allows us to lose a $\si$-room for controlling the blow-up rate, which plays a notable role in the derivation of the sharp lower-order estimates.

We also define
\begin{align}\label{dfqintro}
    q_M\coloneqq\max_{I=1,2,3}\{\qIt\}+\si,\qquad\quad \dq\coloneqq\max_{I,J=1,2,3}\{\qIt-\qJt\}+\si.
\end{align}
The AVTD behavior manifests that, for any tensorfield $h$, the blow-up speed of the time derivative of $h$ surpasses that of the spatial derivatives of $h$. Quantitatively, in our paper, we have
\begin{equation}\label{AVTDintro}
    e_0h\sim t^{-1}h,\quad\qquad \ev h\sim t^{-q_M-\dq}h,
\end{equation}
where $e_0$ and $\ev$ denote the normalized time derivative and the normalized spatial derivatives, respectively. Observing that
\begin{align}\label{AVTDintr}
    q_M+\dq=q-2\si<q<1,
\end{align}
we thus infer from \eqref{AVTDintro} that the time derivative of $h$ is more singular compared to the spatial derivatives of $h$, which reflects the AVTD behavior.
\subsubsection{Lower-Order Estimates and Top-Order Estimates}
The proof of Theorem \ref{Rough Main thm 1} mainly relies on deriving the desired estimates for both lower-order derivatives and higher-order derivatives of all geometric quantities and matter fields. Specifically, for the connection coefficients $k,\ga$ defined in \eqref{Def:k} and \eqref{gaIJK}, the scalar field $\psi$ and the Maxwell field $F$, we aim to establish the following bounds\footnote{See \Cref{secortho} and \Cref{seckey} for the definitions of $\kc$, $\ga, \tpsic,F$ and those of $q$ and $\kl$. We also note that the homogeneous Sobolev space $\Hdot^\kl(\Si_t)$ is introduced in Section \ref{secSobolev}.}
\begin{align}
    t\|\kc\|_{W^{1, \infty}(\Si_t)}+t^q\|\ga\|_{W^{1,\infty}(\Si_t)}+t^q\|\ev\psi\|_{W^{1,\infty}(\Si_t)}+t\|\tpsic\|_{W^{1,\infty}(\Si_t)}+t^q\|F\|_{W^{1,\infty}(\Si_t)}&\les\ep_0, \label{intro:Est:otherlow}\\
t^{A_*+1}\Big(\|k\|_{\Hdot^\kl(\Si_t)}+\|\ga\|_{\Hdot^\kl(\Si_t)}+\|\vec{e}\psi\|_{\dot{H}^\kl(\Si_t)}+\|e_0\psi\|_{\dot{H}^\kl(\Si_t)}+\|F\|_{\Hdot^\kl(\Si_t)}\Big)&\les\ep_0. \label{intro:Est:otherhigh}
\end{align}
Here $A_*$ and $\kl$ represent large enough fixed constants and $\ep_0>0$ is a small constant, which measures the size of perturbations of the initial data.

Regarding the Vlasov field $f$, we will establish
\begin{equation}\label{intro:Est:f}
    t^{\frac{1+q_M}{2}}\|\TT\|_{L^{\infty}(\Si_t)}+t^{A_*+\de q+\frac{1+q_M}{2}}\|\TT^{(k_*)}\|_{L^2(\Si_t)}\les \ep_0.
\end{equation}
Here $\TT$ and $\TT^{(k_*)}$ denote the second moment of $\sqrt{f}$ and its $k_*$-th order derivatives in the phase space. See \Cref{Subsec:defTT} for their explicit definitions.

As a consequence, employing the interpolation inequality, we also derive
\begin{equation}\label{intro:Est:low+1}
    \begin{aligned}
        &t\|\kc\|_{W^{2, \infty}(\Si_t)}+t^q\|\ga\|_{W^{2,\infty}(\Si_t)}+t^q\|\ev\psi\|_{W^{2,\infty}(\Si_t)}\\&+t\|\tpsic\|_{W^{2,\infty}(\Si_t)}+t^q\|F\|_{W^{2,\infty}(\Si_t)}+t^{\frac{1+q_M}{2}}\|\TT\|_{W^{1, \infty}(\Si_t)}\les \ep_0 t^{-\de q}.
    \end{aligned}
\end{equation}
\begin{rk}\label{rkA*}
    To conclude the proof and to validate the estimates \eqref{intro:Est:otherlow}, \eqref{intro:Est:otherhigh} and \eqref{intro:Est:f}, the order of constant choosing  is crucial. We first select the parameter $A_*$ to be sufficiently large, and then pick the number for top regularity $\kl\in \mathbb{N}$ such that $\frac{A_*}{\kl}\approx \dq$.\footnote{See \eqref{dfkl} for the precise choice of $A_*$ and $\kl$.} Finally, we let the size of the initial perturbation $\ep_0$ be sufficiently small relative to $A_*$ and $\kl$. It is worth noting that in \cite{FRS} Fournodavlos--Rodnianski--Speck are able to select $\frac{A_*}{\kl}$ to be arbitrarily small, whereas in our current paper, we only have that this ratio is close to $\de q$ (not necessarily small) due to the influence of the Vlasov field and we have to work with it. See \Cref{Subsec:Control of Vlasov Field} for more explicit explanations.
\end{rk}
\begin{rk}
    For geometric quantities and matter fields except the Vlasov field, the lower-order estimate \eqref{intro:Est:otherlow} is almost sharp, while the higher-order bound \eqref{intro:Est:otherhigh} is much more singular compared to their exact blow-up behaviors. On the other side, the lower-order $L^{\infty}$ estimate and the higher-order energy estimate of the Vlasov field are both optimal with respect to the power of $t$. Capturing the optimal blow-up rates of the Vlasov field is the key to our entire proof.
\end{rk}
\begin{rk}
Notice that compared to the lower-order estimates in \eqref{intro:Est:otherlow} and \eqref{intro:Est:f}, when adding one more derivative, there is a loss of factor $t^{-\de q}$ in \eqref{intro:Est:low+1}. This presents new difficulties when incorporating the Vlasov field. In contrast, in \cite{FRS} when adding the derivative the additional factor $t^{-A_*/k_*}$ is negligible since $A_*/k_*$ is small there.  This is the main reason why we cannot show the (past) stability of Kasner solutions for the full sub-critical regime as in \cite{FRS} when taking into account the Vlasov field.
\end{rk}
\begin{rk}
In fact, it is rather challenging to prove our main result \Cref{Rough Main thm 1} for the entire strong sub-critical regime. This is only achieved by our careful tracking of the optimal estimates for the Vlasov field as established in \eqref{intro:Est:f} and \eqref{intro:Est:low+1}, which are critical for ensuring the AVTD behavior. Meanwhile, these key estimates in \eqref{intro:Est:f}, \eqref{intro:Est:low+1} heavily rely on our key observations on the structure of the Vlasov equations and our utilization of weighted energy estimates and the equation for the conservation law in a new way. Further details are provided in \Cref{Subsec:Control of Vlasov Field}.
\end{rk}
\subsubsection{Control of Spacetime Geometry and Matter Fields Except Vlasov Field}\label{Subsec:EstexceptVlasov}
In this subsection, we demonstrate the ideas for estimating the geometric quantities and matter fields except for the Vlasov field. The arguments in this subsection are inspired by \cite{FRS} and we extend their approach to control the Maxwell field as well. Employing \eqref{EMS}, \eqref{Wave} and \eqref{Maxwell}, the evolution equations for these quantities can be written as follows:\footnote{Throughout Section \ref{overview}, we use $\cdots$ to denote the error terms.}
\begin{align}
     \pr_t\kc_{IJ}+\frac{1}{t}\kc_{IJ}&=e_C(\ga_{IJC})-e_I(\ga_{CJC})+\cdots, \nonumber\\
     \pr_tS_{IJK}+\frac{\qIt+\qJt-\qKt}{t}S_{IJK}&=\ev k+\cdots, \label{Eqn:S}\\  
       e_0(\widecheck{e_0\psi})+\frac{1}{t}\widecheck{e_0\psi}&=e_C(e_C\psi)+\cdots, \nonumber\\
    e_0(e_I\psi)+\frac{\qIt}{t}(e_I\psi)&=e_I(e_0\psi)+\cdots, \nonumber\\
    \pr_t(F_{0I})+\frac{1-\qIt}{t}F_{0I}&=e_C(F_{CI})+\cdots,\nonumber\\
\pr_t(F_{IJ})+\frac{\qIt+\qJt}{t}F_{IJ}&=e_I(F_{0J})-e_J(F_{0I})+\cdots.
\end{align}
Here $S_{IJK}\coloneqq\ga_{IJK}+\ga_{JKI}$.\footnote{The original evolution equation of $\ga$ is
\begin{equation*}
    \pr_t (\ga_{IJK})=e_K( k_{JI})-e_J(k_{KI})+\cdots.
\end{equation*}
This rewriting \eqref{Eqn:S} with $S_{IJK}$ is a key ingredient in \cite{FRS} by Fournodavlos--Rodnianski--Speck.}

We notice that the above system of equations exhibits the following important features:
\begin{enumerate}
    \item For $\kc$ and $\widecheck{e_0\psi}$, the coefficient in front of the linear term on the left is $1$, while for other quantities, the corresponding coefficient there is at most $q < 1$. This is consistent with the corresponding sharp lower-order estimates in \eqref{intro:Est:otherlow}. These estimates are obtained utilizing the standard evolution lemma.
    \item All quantities except the Vlasov field obey the structured equations of the \textit{Bianchi pairs} $(A, B)$, i.e., they satisfy schematically
    \begin{equation}\label{Bianchipair}
        \pr_t A=DB+\cdots, \qquad \qquad \pr_t B=-D^*A+\cdots
    \end{equation}
    with $D$ being an differential operator on $\Si_t$ and $D^*$ representing its $L^2$-dual. By integrating $\pr_t(|A|^2+|B|^2)$ over $\Si_t$, we obtain the following differential identity
    \begin{equation*}
        \pr_t \left(\int_{\Si_t}|A|^2+|B|^2\right)=2\int_{\Si_t} A\c DB-B\c D^*A+\cdots=\cdots.
    \end{equation*}
    Note that there is the exact cancellation $2\int_{\Si_t}A\c DB-B\c D^*A=0$, which avoids the loss of derivatives. This allows us to implement the $t$-weighted energy estimates to control the top orders.
    \item The error terms in $\cdots$ on the right satisfy the null structures. Roughly speaking, there is no $\Gaw\c\Gaw$ term appearing in $\cdots$, where $\Gaw\in\{\kc, \widecheck{e_0\psi} \}$.
\end{enumerate}
In below, we demonstrate an example on how to derive the lower-order and higher-order estimates for $k$ and $\ga$. Regarding the lower-order estimates, we use
\begin{align*}
    \pr_t\kc_{IJ}+\frac{1}{t}\kc_{IJ}&=O(\ev\ga)+\cdots,\\
    \pr_tS_{IJK}+\frac{\qIt+\qJt-\qKt}{t}S_{IJK}&=O(\ev k)+\cdots.
\end{align*}
From \eqref{AVTDintro} and \eqref{intro:Est:otherlow} we have
\begin{align*}
    \ev\ga\sim \ep_0t^{-q-q_M-\dq},\qquad\quad \ev k\sim\ep_0t^{-1-q_M-\dq}.
\end{align*}
Combining with \eqref{dfqintro}, \eqref{AVTDintr} and direct integrations, we deduce
\begin{align*}
    \|t\kc\|_{L^\infty(\Si_t)}&\les\ep_0+\int_t^1s\|\ev\ga\|_{L^\infty(\Si_s)}ds\les \ep_0+\ep_0\int_t^1 s^{1-2q+2\si}ds\les \ep_0,\\
    \|t^qS_{IJK}\|_{L^\infty(\Si_t)}&\les\ep_0+\int_t^1s^q\|\ev k\|_{L^\infty(\Si_s)}ds\les\ep_0+\ep_0\int_t^1s^{-1+2\si}ds\les\ep_0.
\end{align*}
Observing that the components of $\ga$ are linear combinations of $S_{IJK}$, we hence obtain
\begin{align}
    t\|\kc_{IJ}\|_{L^\infty(\Si_t)}&\les\ep_0,\label{kintrodecay}\\
    t^q\|\ga_{IJK}\|_{L^\infty(\Si_t)}&\les\ep_0. \nonumber
\end{align}
We remark that the estimates in \eqref{kintrodecay} are of particular importance, since they are needed for controlling various borderline terms in the energy estimates.

We proceed to control the top-order derivatives of $(k, \ga)$ and a Bianchi-pair structure will present. By commuting the evolution equations of $k$ and $\ga$ with $\pr^\io$ for $|\io|=\kl$, we obtain
\begin{align}
    \pr_t (\pr^\io k_{IJ})+\frac{1}{t}(\pr^\io k_{IJ})&=e_C(\pr^\io\ga_{IJC})-e_I(\pr^\io\ga_{CJC})+\cdots,\label{ekintro}\\
    \pr_t (\pr^\io \ga_{IJK})&=e_K(\pr^\io k_{JI})-e_J(\pr^\io k_{KI})+\cdots,\label{egaintro}
\end{align}
which obey the Bianchi pair structure as stated in \eqref{Bianchipair}. We then apply the $t^p$-weighted energy estimates for $(k,\ga)$.\footnote{The $t^p$--weighted estimates for $(k,\ga)$ used here are analogous to the $r^p$--weighted estimates for Bianchi pairs. See the discussions in Section 4.1 of \cite{Shen22}.
} We first multiply \eqref{ekintro} by $2t^{2A_*+2}(\pr^\io k_{IJ})$ and \eqref{egaintro} by $t^{2A_*+2}(\pr^\io\ga_{IJK})$. By adding the two identities, taking the sum for $I,J,K=1,2,3$ and integrating it on $\Si_t$, we obtain\footnote{We also utilize the momentum constraint equation \eqref{momentumconstraint}, which gives $e_C(k_{CI})=\cdots$.  See Proposition \ref{estkga} for more details.}
\begin{align}
\begin{split}\label{prtkgaintro}
&\pr_t\left(\int_{\Si_t}t^{2A_*+2}\sum_{I,J}|\pr^\io k_{IJ}|^2+\frac{1}{2}t^{2A_*+2}\sum_{I,J,K}|\pr^\io\ga_{IJK}|^2\right)\\
=&2A_*\sum_{I,J}\int_{\Si_t}t^{2A_*}|\pr^\io k_{IJ}|^2+(2A_*+2)\sum_{I,J,K}\int_{\Si_t}t^{2A_*+2}|\pr^\io\ga_{IJK}|^2+\cdots.
\end{split}
\end{align}
Then, conducting the integration for \eqref{prtkgaintro} from $t$ to $1$, we deduce
\begin{align*}
    &\sum_{I,J}\int_{\Si_t}t^{2A_*+2}|\pr^\io k_{IJ}|^2+\sum_{I,J,K}\int_{\Si_t}t^{2A_*+2}|\pr^\io\ga_{IJK}|^2\\
    +&A_*\sum_{I,J}\int_t^1\int_{\Si_t}t^{2A_*}|\pr^\io k_{IJ}|^2 ds+A_*\sum_{I,J,K}\int_t^1\int_{\Si_t}t^{2A_*+2}|\pr^\io\ga_{IJK}|^2 ds\\
    \les&\ep_0^2+\sum_{I,J}\int_t^1\int_{\Si_t}t^{2A_*}|\pr^\io k_{IJ}|^2 +\cdots,
\end{align*}
where the constant involved in $\les$ is independent of $A_*$ and $k_*$. Taking $A_*$ to be large enough, we can absorb the borderline bulk term  $\sum_{I,J}\int_t^1\int_{\Si_t}t^{2A_*}|\pr^\io k_{IJ}|^2$ on the right. 
Thus, we obtain the desired estimate 
\begin{equation*}
t^{A_*+1}\Big(\|k\|_{\Hdot^\kl(\Si_t)}+\|\ga\|_{\Hdot^\kl(\Si_t)}\Big)\les\ep_0.
\end{equation*}
\subsubsection{Control of the Vlasov Field}\label{Subsec:Control of Vlasov Field}
The treatment of the Vlasov field is quite different from the previous arguments for $ k, \ga, \psi$ and $F$, especially for the top-order energy estimates. This is because the Vlasov equation \eqref{Vlasov} does not exhibit the structure of the Bianchi pairs and an approach as above does not work. Here we develop a new method to deal with the Vlasov equation, which also enables us to allow the non-compact initial perturbation for the Vlasov field in the phase space.

Our new aim here is to prove the following estimates:
\begin{align}
    \|T\|_{L^\infty(\Si_t)}&\les\ep_0 t^{-1-q_M}, \label{IntroEst:T}\\
    \max_{|\io_1|+|\io_2|\leq\kl}\left\|(p^0)^\frac{1}{2}\f^{(\io_1,\io_2)}\right\|_{L^2(T\Si_t)}&\les\ep_0 t^{-A_*-\dq-\frac{q_M+1}{2}}, \label{IntroEst:f}
\end{align}
where for notational simplicity we write $T=T^{(V)}$ and denote\footnote{Here, $p\pr_p$ denotes all the vectorfield in the form of $p^J\pr_{p^K}$ with $J,K=1,2,3$ and $\pr\coloneqq\{\pr_{x^1},\pr_{x^2},\pr_{x^3}\}$.}
$$
\f^{(\io_1,\io_2)}\coloneqq\pr^{\io_1}(p\pr_p)^{\io_2}\f.
$$

We start with deriving the $L^{\infty}$ bound for $T$. Instead of utilizing the Vlasov equation \eqref{Vlasov}, we employ the conservation law for the Vlasov part of the energy momentum tensor $T$, i.e.,
 \begin{align}\label{conservationintro1}
    \D_\mu T^{\mu\nu}=0,
\end{align}
 and utilize the non-negativity of the diagonal entries of $T$.
 
 Specifically, from \eqref{dfT} we have that
\begin{align*}
    T_{\mu\mu}=\int_{P(t,x)}f(p_\mu)^2\dvol\geq 0,\qquad\quad T_{00}=\sum_{I=1}^3T_{II}.
\end{align*}
These imply
\begin{align}\label{Intro:Est sum TII}
    \sum_{I=1}^3\frac{\qIt}{t}T_{II}\leq \frac{\max_{I=1,2,3}\qIt}{t}\sum_{I=1}^3T_{II}\leq\frac{\max_{I=1,2,3}\qIt}{t}T_{00}.
\end{align}
    By rewriting \eqref{conservationintro1} as
\begin{equation*}
    \pr_t(T_{00})+\frac{1}{t}T_{00}+\sum_{I=1}^3\frac{\qIt}{t}T_{II}=e_C(T_{0C})+\cdots,
\end{equation*}
in view of \eqref{Intro:Est sum TII} we hence deduce
\begin{align*}
    \pr_t(T_{00})+\frac{1+\max_{I=1,2,3}\{\qIt\}}{t}T_{00}\geq e_C(T_{0C})+\cdots.
\end{align*}
Multiplying it by $t^{1+q_M}$ on both sides with $q_M=\max_{I=1,2,3}\{\qIt\}+\si$, we then obtain
\begin{align*}
    \pr_t(t^{1+q_M}T_{00})-\si t^{q_M}T_{00}\geq t^{1+q_M}e_C(T_{0C})+\cdots.
\end{align*}
Consequently, the integration of the above inequality from $t$ to $1$ yields
\begin{align*}
    t^{1+q_M}T_{00}+\si\int_t^1 s^{q_M} T_{00}ds\les\ep_0+\int_{t}^1 s^{1+q_M}\|\ev  \, T\|_{L^\infty(\Si_s)}ds\les\ep_0.
\end{align*}
Recalling from \eqref{dfT} that 
$$
|T_{\mu\nu}|\leq T_{00},\qquad\quad\text{for}\quad \mu,\nu=0,1,2,3,
$$
we hence derive the following desired lower-order estimate:
\begin{align*}
    |T|\les\ep_0t^{-1-q_M}.
\end{align*}
We note that, when examining the linear part of the null geodesic equations for $p^{\mu}$, the expected bound for the contribution of $p^{I}$ in $\supp_{P(t,x)} f$ is $t^{-\qIt}$ and the contribution of $p^0$ is like $t^{-q_M}$. As a result, the expected upper bound for $T$ is $\ep_0 t^{-q_M}\c t^{-\sum_I \qIt}=\ep_0 t^{-1-q_M}$,\footnote{Note that the volume form in the mass shell $P(t, x)$ is $\dvol=(p^0)^{-1}dp^1 dp^2 dp^3$.} indicating that the estimate \eqref{IntroEst:T} is sharp.
\vspace{2mm}

We proceed to establish the top-order estimate \eqref{IntroEst:f}. We consider a new form of the Vlasov equation, namely,\footnote{Since $f\ge0$, the square root of $f$ is well-defined.} the equation for $\f$, i.e.,
\begin{align}\label{sqfintro}
    X(\f)=0.
\end{align}
Notice that \eqref{sqfintro} can be expanded in the following form:
\begin{align}\label{sqfvlasovintro}
    \pr_t(\f)+\sum_{I=1}^3\frac{p^I}{p^0}t^{-\qIt}\pr_I(\f)-\sum_{I=1}^3\frac{\qIt}{t}p^I\pr_{p^I}(\f)=\cdots.
\end{align}
By a direct computation, for any $J,K\in\{1,2,3\}$ we have
\begin{align}\label{Intro:commute p pr p}
\left[p^J\pr_{p^K},\sum_{I=1}^3\frac{\qIt}{t}p^I\pr_{p^I}\right]=\frac{\qKt-\qJt}{t}p^J\pr_{p^K}.
\end{align}
We then commute \eqref{sqfvlasovintro} with $\pr^{\io_1}$ and $(p\pr_{p})^{\io_2}$. For $|\io_1|+|\io_2|\leq\kl$, we obtain
\begin{align}
\begin{split}\label{vlasovintrotop}
    &\pr_t\left(\f^{(\io_1,\io_2)}\right)+\frac{C_{\io_2}}{t}\f^{(\io_1,\io_2)}+\sum_{I=1}^3\frac{p^I}{p^0}t^{-\qIt}\pr_I\left(\f^{(\io_1,\io_2)}\right)
    -\sum_{I=1}^3\frac{\qIt}{t}p^I\pr_{p^I}\left(\f^{(\io_1,\io_2)}\right)=\cdots,
\end{split}
\end{align}
where $C_{\io_2}$ is a constant obeying $|C_{\io_2}|\leq |\io_2|\dq$. 

Multiplying \eqref{vlasovintrotop} by $2t^{2P}p^0\f^{(\io_1,\io_2)}$ (the choice of $P>0$ will be determined later), we deduce
    \begin{align*}
    &\pr_t\left(t^{P}p^0|\f^{(\io_1,\io_2)}|^2\right)+(2C_{\io_2}-P)t^{P-1}p^0|\f^{(\io_1,\io_2)}|^2\\
    +&\sum_{I=1}^3\frac{p^I}{p^0}t^{-\qIt}\pr_I\left(t^Pp^0|\f^{(\io_1,\io_2)}|^2\right)-\sum_{I=1}^3\frac{\qIt}{t}p^I\pr_{p^I}\left(t^{P}p^0|\f^{(\io_1,\io_2)}|^2\right)\\
    +&\sum_{I=1}^3\frac{\qIt}{t}\frac{(p^I)^2}{(p^0)^2}\left(t^Pp^0|\f^{(\io_1,\io_2)}|^2\right)=\cdots.
    \end{align*}
    Integrating it on $T\Si_t$, via integration by parts, we then obtain
    \begin{align*}
        \pr_t\left(t^{P}\int_{T\Si_t}p^0|\f^{(\io_1,\io_2)}|^2\right)+(2C_{\io_2}-P)t^{P-1}\int_{T\Si_t}p^0|\f^{(\io_1,\io_2)}|^2\geq\cdots.
    \end{align*}
Integrating from $t$ to $1$, for $P\coloneqq2A_*+2\dq+q_M+1>2C_{\io_2}$,\footnote{This can be ensured by taking $\frac{A_*}{\kl}\approx \dq$ and by applying the fact that $|C_{\io_2}|\leq|\io_2|\dq\leq\kl\dq$. See \eqref{dfkl} for the particular choice of $A_*$ and $\kl$.} we thus derive
\begin{align*}
t^{P}\int_{T\Si_t}p^0\left|\f^{(\io_1,\io_2)}\right|^2+\int_t^1s^{P-1}\int_{T\Si_s}p^0\left|\f^{(\io_1,\io_2)}\right|^2ds\les\ep_0^2.
\end{align*}
This implies the desired top-order estimate for $\sqrt{f}$, i.e.,
    \begin{align*}
        \max_{|\io_1|+|\io_2|\leq\kl}\int_{T\Si_t} p^0\left|\f^{(\io_1,\io_2)}\right|^2\les\frac{\ep_0^2}{t^P}=\frac{\ep_0^2}{t^{2A_*+2\dq+q_M+1}}.
    \end{align*}
   \begin{rk}\label{Rk:q}
        The above proof remains valid for the equation $X(f^q)=0$ with any power $q>0$. Here our above choice of $q=\frac12$ is consistent with the $\Hdot^{k_*}_x$-estimate for the energy-momentum tensor $T$. Specifically, to bound $\|T\|_{\Hdot^{k_*}_x}$ in terms of $\|f^q\|_{L^2_x L^2_p}$ and $\|(f^q)^{(k_*)}\|_{L^2_x L^2_p}$,\footnote{Here $\|\c\|_{L^2_x}$ and $\|\c\|_{L^2_x L^2_p}$ are $L^2$--norms on $\Si_t$ and on $T\Si_t$, and $f^{(k_*)}\coloneqq\{f^{(\io_1, \io_2)}: \ |\io_1|+|\io_2|\le k_* \}$.} by virtue of the heuristic that $p^0\sim t^{-q_M}$, by applying the Cauchy--Schwarz inequality, for $q\ge \frac12$ we obtain
        \begin{align*}
        \|T\|_{\Hdot^{k_*}_x}\les t^{-q_M}\|f^{1-q}\|_{L^{\infty}_xL^2_p}\|(f^q)^{(k_*)}\|_{L^2_xL^2_p}\les t^{-q_M} \|f^{q}\|^{\frac{1}{q}-1}_{L^{\infty}_xL^2_p}\|(f^q)^{(k_*)}\|_{L^2_xL^2_p}\c \sup\limits_{x\in\Si_t} V(t, x)^{1-\frac{1}{2q}}.
        \end{align*}
        Here $V(t, x)\coloneqq \int_{\supp_{P(t, x)} f} 1$ denotes the volume of $\supp f$ on the mass shell $P(t, x)$ and in principle it is hard to get the sharp bound for $V(t, x)$. By our method, via taking $q=\frac12$ we eliminate the need to control the size of $V(t, x)$ in deriving the estimate of $T$.
    \end{rk}
\begin{rk}
    Thanks to our new approach for controlling the Vlasov field, namely, our utilization of the weighted derivatives $p\pr_p$ and our choice of $q=\frac12$ as explained in \Cref{Rk:q}, we do not need to control the size of $p^{\mu}$ in $\supp_{P(t, x)} f$. At the level of initial data, we thus only need the $L^2$ control of $\f^{(k_*)}$ on the tangent bundle $T\Si_1$. This provides us the freedom that our initial perturbation of the Vlasov field is not necessarily restricted to a compact region of the mass shell $P(1,x)$.
\end{rk}
\begin{rk}
    The commutation formula \eqref{Intro:commute p pr p} suggests that  the top-order terms $(p^m \pr_{p^M})^{\kl}\f$ with $\widetilde{q_m}=\min\limits_{I} \widetilde{q_I}, \ \widetilde{q_M}=\max\limits_{I} \widetilde{q_I}$ are potentially the most singular. When estimating $(p^m \pr_{p^M})^{\kl}\f$, in order to absorb the linear term that comes from the commutation formula \eqref{Intro:commute p pr p}, we multiply the integrating factor $t^{2P}$ and we also need to impose that $P\ge 2\kl(\widetilde{q_M}-\widetilde{q_m})\approx 2\kl \de q$. Meanwhile, as the blow-up rate for the top-order energy estimates is $t^{-A_*+O(1)}$, we must require $P\le 2A_*+O(1)$, which implies that the parameters $A_*$ and $\kl$ need to obey
    \begin{equation*}
        A_*\ge \kl \de q+O(1).
    \end{equation*}
    Thus we cannot let $A_*/\kl$ to be arbitrarily small as the case in \cite{FRS}. In practice, we have to pick $A_*/\kl\approx \de q$, which indicates that our bounds of $\f$ are optimal for both lower-order estimates and top-order estimates. This is vastly different from the estimates for geometric quantities and for other matter fields.
\end{rk}
\subsubsection{Necessity of Strong Sub-Criticality for the Kasner Exponents}
Finally, we present the reason why we require the strong sub-critical condition as in \eqref{strongstability} to ensure the stability of Kasner solutions to the Einstein--Maxwell--scalar field--Vlasov system. We first observe that the evolution equation for $\pr k$ takes the form of
\begin{equation*}
    \pr_t(\pr\kc)+\frac{1}{t}(\pr\kc)=\pr T+\cdots.
\end{equation*}
Carrying out similar arguments as in \Cref{Subsec:EstexceptVlasov} and noting that $\pr T\sim \ep_0 t^{-1-q_M-\de q}$ (which is sharp as shown in \Cref{Subsec:Control of Vlasov Field}), we deduce
\begin{equation*}
\|t\pr \kc\|_{L^\infty(\Si_t)}\les\ep_0+\int_t^1s\|\pr T\|_{L^\infty(\Si_s)}ds +\cdots\les \ep_0+\ep_0\int_t^1 s^{-q_M-\de q}ds.
\end{equation*}
Therefore, to guarantee the integrability of $t^{-q_M-\de q}$ for $t\in(0, 1]$, we must require
\begin{equation*}
    q_M+\de q<1.
\end{equation*}
Recalling from \eqref{dfqintro} that 
\begin{equation*}
     q_M=\max_{I=1,2,3}\{\qIt\}+\si,\qquad\quad \dq=\max_{I,J=1,2,3}\{\qIt-\qJt\}+\si
\end{equation*}
and noting $\si>0$, we thus infer
\begin{equation*}
    1>\max_{I=1,2,3}\{\qIt\}+\max_{I,J=1,2,3}\{\qIt-\qJt\}=2\max_{I=1,2,3}\{\qIt\}-\min_{I=1,2,3}\{\qIt\}=\max_{I,J,K=1,2,3}\big\{\qIt+\qJt-\qKt\big\},
\end{equation*}
which is exactly the strong stability condition as in \eqref{strongstability}.

Although the necessity of the strong sub-critical condition has been explained in the preceding arguments, it is worth noting that proving stable Big Bang formation for all strong sub-critical Kasner solutions is still challenging. If the sharpness of the estimates for the Vlasov field is lost at any step, to ensure the AVTD behavior would require us to restrict the Kasner exponents  $\{\qIt\}_{I=1,2,3}$ to a proper subset of the strong sub-critical regime. In this paper, by leveraging new insights of the Vlasov equation and by employing the carefully designed weighted energy estimates, we ultimately succeed in establishing the desired result for the \textit{entire} strong sub-critical regime.
\subsection{Structure of the Paper}
\begin{itemize}
\item In Section \ref{secpreliminaries}, we introduce the geometry setup and derive the main equations. We also compute the precise values of geometric quantities for the exact Kasner solutions.
\item In Section \ref{secmain}, we state the main theorem and our bootstrap assumption.
\item In Section \ref{secbootass}, we prove the first consequences of the bootstrap assumption by using the interpolation inequality. These consequences are frequently used in the remaining sections of the paper.
\item In Section \ref{secelliptic}, we apply the maximum principle and derive energy estimates for the  elliptic equations to bound the lapse function.
\item In Section \ref{sectransport}, we control the lower-order $L^\infty$-norms of the geometric quantities and the matter fields by applying the transport estimates.
\item In Section \ref{secenergy}, we deduce the $L^2$-energy estimates to establish the top-order estimates of the geometric quantities and the matter fields.
\item In Section \ref{secconclusions}, we prove our main theorem and show the nonlinear stability of the Kasner Big Bang singularity for the Einstein--Maxwell--scalar field--Vlasov system.
\end{itemize}
\subsection{Acknowledgments}
The authors would like to thank Liam Urban for letting us know the helpful references \cite{Svedberg} and \cite{FU24}. XA is supported by MOE Tier 1 grants A-0004287-00-00, A-0008492-00-00 and MOE Tier 2 grant A-8000977-00-00.  TH acknowledges the support of NUS President Graduate Fellowship.
\section{Preliminaries}\label{secpreliminaries}
In this section, we introduce the geometric framework used to study perturbations of Kasner solutions based on the \textit{constant mean curvature} (CMC) foliation. This allows us to derive the corresponding reduced equations for the Einstein--Maxwell--scalar field--Vlasov system. This formalism is inspired by Fournodavlos--Rodnianski--Speck \cite{FRS}.
\subsection{Geometry Setup}
\subsubsection{Spacetime Metric}
Our spacetime $(\MM, \g)$ is equipped with the CMC-transported spatial coordinates on a slab $(t,x)\in(T,1]\times\mathbb{T}^3$ with $T\in [0, 1)$, where the spacetime metric takes the form of
\begin{equation}
    \g=-n^2dt\otimes dt+g_{ij}dx^i\otimes dx^j.
\end{equation}
Here $n>0$ is the lapse function,  $t$ is the time function and $g$ represents the induced (Riemannian) metric on the constant-time slice $\Si_t\coloneqq\{(s,x)\in(T,1]\times\mathbb{T}^3|\, s=t\}$. The spatial coordinates $\{x^i\}_{i=1,2,3}$ are said to be transported as $n^{-1}\pr_tx^i=0$, with $n^{-1}\pr_t$ being the future-directed unit normal to $\Si_t$. 

Since we frequently work with derivatives involving $n^{-1}(\pr_{t}, \pr_x) n$, for the sake of simplicity, instead of the lapse $n$, we introduce a modified lapse function.
\begin{align*}
    \vphi\coloneqq\log n.
\end{align*}
\subsubsection{The Orthonormal frame}\label{secortho}
Relative to $(t, x)$ coordinates on $(\MM, \g)$, in this paper we consider an associated orthonormal frame:
\begin{equation}
    e_0=n^{-1}\pr_t,\qquad e_I=e_I^c\pr_c,\qquad I=1,2,3,
\end{equation}
where $e_0$ is the future-directed unit normal to $\Si_t$, the spatial frame $\{e_I\}_{I=1,2,3}$ consists of $\Si_t$-tangent vectors that are normalized by
\begin{align*}
    g(e_I,e_J)=\de_{IJ},
\end{align*}
and the scalar functions $\{e^i_I\}_{i=1,2,3}$ are the components of $e_I$ relative to the transported spatial coordinates.

We now construct the desired spatial frame $\{e_I\}_{I=1,2,3}$ using the Fermi--Walker transport. Firstly, we pick an initial orthonormal spatial frame on $\Si_1$ with the help of the Gram--Schmidt process. Given this frame on $\Si_1$, we propagate it to the slab $(T,1]\times\mathbb{T}^3$ by solving the propagation equations:
\begin{align}\label{De0eI}
    \D_{e_0}e_I=(e_I\vphi)e_0,
\end{align}
where $\D$ is the Levi--Civita connection of $\g$. It is straightforward to check that
\begin{align*}
    \g(e_\a,e_\b)=\eta_{\a\b},\qquad \a,\b=0,1,2,3,\qquad e_I(t)=0,\qquad I=1,2,3
\end{align*}
with $\eta_{\a\b}=\mbox{diag}(-1, 1, 1, 1)$. We also have
\begin{align*}
    \g(\D_{e_\a}e_\b,e_\ga)=-\g(e_\b,\D_{e_\a}e_\ga).
\end{align*}
Note that a direct computation yields
\begin{align*}
\g\left(\D_{e_0}e_0,e_I\right)=-\g(e_0,\D_{e_0}e_I)=-\g(e_0,(e_I\vphi)e_0)=e_I\vphi.
\end{align*}
This further implies
\begin{align}\label{De0e0}
    \D_{e_0}e_0=(e_C\vphi)e_C.
\end{align}
\begin{rk}
The standard Fermi--Walker transport requires that
\begin{align*}
    \D_{e_0}e_I=(e_I\vphi)e_0-\g(e_I,e_0)(e_C\vphi)e_C.
\end{align*}
Compared to \eqref{De0eI} in this current paper, we omit the last term since we choose the initial frame $\{e_I\}$ to satisfy
\begin{equation*}
    \g(e_I, e_0)=0, \qquad\qquad \g(e_I, e_J)=\de_{IJ}.
\end{equation*}
Note this orthogonality property is preserved according to \eqref{De0eI}.
\end{rk}
\begin{rk}\label{rkconvention}
Throughout this paper, we use the Einstein summation for repeated indices $C,D,E$ and $c,i,j$. However, we will not use the Einstein summation convention for the indices $I,J,K$.
\end{rk}
\subsubsection{Second Fundamental Form and Curvature}
With the spatial frame $\{e_I\}_{I=1,2,3}$ as defined in \Cref{secortho}, we define the second fundamental form $k$ of $\Si_t$ as
\begin{align}\label{Def:k}
    k_{IJ}\coloneqq-\g(\D_{e_I}e_0,e_J).
\end{align}
This immediately implies
\begin{align}\label{DeIe0}
    \D_{e_I}e_0=-k_{IC}e_C.
\end{align}
We now normalize the time function $t$ according to the CMC condition:
\begin{align}\label{trk1t}
    \tr k\coloneqq k_{AA}=-\frac{1}{t}.
\end{align}
As a consequence, the condition \eqref{trk1t} leads to an elliptic equation for the lapse $n$, whose explicit form is given in \eqref{2.25}.

We also define the spatial connection coefficients of the frame $\{e_I\}_{I=1,2,3}$ as
\begin{align}\label{gaIJK}
    \ga_{IJK}\coloneqq\g(\D_{e_I}e_J,e_K)=g(\nab_{e_I}e_J,e_K),
\end{align}
with $\nab$ denoting the Levi--Civita connection of $g$. Using these definitions, we can write
\begin{align}\label{DeIeJ}
    \D_{e_I}e_J=-k_{IJ}e_0+\ga_{IJC}e_C,\qquad\quad\nab_{e_I}e_J=\ga_{IJC}e_C.
\end{align}
Differentiating the relation $\g(e_J,e_K)=\de_{JK}$ by $\D_{e_I}$, we deduce
\begin{align}\label{gaantisymmetric}
    \ga_{IJK}=-\ga_{IKJ}.
\end{align}
Furthermore,  we define the Riemann curvature $\R$, the Ricci curvature $\Ric$, and the scalar curvature $\R$, with respect to the spacetime metric $\g$ as follows:
\begin{align}
\begin{split}\label{dfcur}
    \R(e_\a,e_\b,e_\mu,e_\nu)&\coloneqq\g\left(\D_{e_\a}\D_{e_\b}e_\nu-\D_{e_\b}\D_{e_\a}e_\nu-\D_{[e_\a,e_\b]}e_\nu,e_\mu\right),\\
    \Ric(e_\a,e_\b)&\coloneqq\eta^{\mu\nu}\R(e_\a,e_\mu,e_\b,e_\nu),\\
    \R&\coloneqq\eta^{\mu\nu}\Ric(e_\mu,e_\nu).
\end{split}
\end{align}
And the curvature of tensors for the induced metric $g$ along $\Si_t$, namely its Riemann curvature $R$, Ricci curvature $\sRic$ and scalar curvature $R$, are analogous to the ones in \eqref{dfcur}.
\subsubsection{Mass Shell}\label{sssecmassshell}
Regarding the massless Vlasov field, the associated mass shell $P(t, x)\subseteq T_{(t,x)}\M$ is defined to be the set of future-pointing null vectors at the point $(t, x)\in \MM$, i.e.,
\begin{equation*}
    P(t,x)=\{p\in T_{(t,x)}\M: \ \g(p, p)=0\}.
\end{equation*}
And we define $P=\cup_{(t, x)\in \M} P(t, x)$.

For any vector $p\in T_{(t,x)}\M$, using the aforementioned orthonormal frame $\{e_\mu\}_{\mu=0,1,2,3}$, we can express
\begin{align*}
    p=p^\mu e_\mu.
\end{align*}
Thus, for any  $p\in P$ the following relation holds
\begin{align*}
    (p^0)^2=\sum_{I=1}^3(p^I)^2.
\end{align*}
We proceed to define the associated volume form on the mass shell $P(t, x)$, which is a null hypersurface in $T_{(t,x)\M}$. Utilizing the coordinates $(p^{\mu})$ with $\mu=0,1,2,3$, the spacetime metric on $\M$ induces a metric on $T_{(t,x)}\M$:
\begin{align*}
    -(dp^0)^2+\sum_{I=1}^3(dp^I)^2,
\end{align*}
which in turn defines a volume form on $T_{(t,x)}\M$:
\begin{align*}
    dp^0\wedge dp^1\wedge dp^2\wedge dp^3.
\end{align*}
Define the function $\La: T_{(t,x)}\M\to \mathbb{R}$
\begin{align*}
    \La(X)=\g(X,X)=-(p^0)^2+\sum_{I=1}^3(p^I)^2\qquad \text{for all} \quad X=p^\mu e_\mu\in T_{(t,x)}\M,
\end{align*}
which measures the length of the vector $X$. Then the canonical one-form normal to $P_{(t,x)}$ can be defined as the differential of $\La$:
\begin{align*}
    \La(X)=\g(X,X)=-(p^0)^2+\sum_{I=1}^3(p^I)^2,\qquad X=p^\mu e_\mu\in T_{(t,x)}\M.
\end{align*}
Hence, we define the volume form on $P(t,x)$ via
\begin{align*}
    \dvol\coloneqq\frac{1}{p^0}dp^1\wedge dp^2\wedge dp^3.
\end{align*}
This is the unique volume form on $P_{(t,x)}$ compatible with $-\frac{1}{2}d\La$ obeying
\begin{align*}
    -\frac{1}{2}d\La\wedge \left(\frac{1}{p^0}dp^1\wedge dp^2\wedge dp^3\right)=dp^0\wedge dp^1\wedge dp^2\wedge dp^3.
\end{align*}
With this choice, the energy momentum tensor as in \eqref{dfT} therefore takes the form
\begin{align}\label{expressionT}
    T_{\mu\nu}(t,x)=\int_{\mathbb{R}^3} \frac{f(t,x,p)p_{\mu}p_{\nu}}{p^0}dp^1dp^2dp^3.
\end{align}
\subsection{Main Reduced Equations}
In this subsection, we derive the differential equations  suited for analyzing perturbations of generalized Kasner solutions. Specifically, we have the following reduced Einstein--Maxwell--scalar field--Vlasov system relative to the aforementioned CMC-transported spatial coordinates $(t,x)$ and a Fermi--Walker transported orthonormal frame $\{e_{\mu}\}_{\mu=0,1,2,3}$.
\begin{prop}\label{basicequations}
The Einstein--Maxwell--scalar field--Vlasov system \eqref{EMSV}, \eqref{Wave}--\eqref{Vlasov} is equivalent to the below reduced system of equations for $k, \ga, e^i_I, \psi, F, f, \vphi$:
\begin{itemize}
\item \textbf{The evolution equations for $k,\ga$:}
\begin{align}
e_0(k_{IJ})&=-e_I(e_J\vphi)-(e_I\vphi)e_J\vphi+e_C(\ga_{IJC})-e_I(\ga_{CJC})-t^{-1}k_{IJ}+\ga_{IJC}e_C\vphi\nonumber\\
&-\ga_{DIC}\ga_{CJD}-\ga_{DDC}\ga_{IJC}-(e_I\psi)e_J\psi-2\left(F_{I\a}{F_J}^{\a}-\frac{1}{4}\de_{IJ}F_{\a\b}F^{\a\b}\right)-T_{IJ},\label{2.22a}\\
e_0(\ga_{IJK})&=e_K(k_{JI})-e_J(k_{KI})-\ga_{KJC}k_{CI}-\ga_{KIC}k_{JC}+\ga_{JKC}k_{IC}+\ga_{JIC}k_{KC}+k_{IC}\ga_{CJK}\nonumber\\
&-(e_J\vphi)k_{IK}+(e_K\vphi)k_{IJ}.\label{2.22b}
\end{align}
\item \textbf{The evolution equation for $e_I^i$:}
\begin{align}\label{2.23}
    e_0e_I^i=k_{IC}e_C^i.
\end{align}
\item \textbf{The wave equation for the scalar field $\psi$:}
\begin{align}\label{2.24}
    e_0(e_0\psi)=e_C(e_C\psi)-t^{-1}e_0\psi+(e_C\vphi)e_C\psi-\ga_{CCD}e_D\psi.
\end{align}
\item \textbf{The Maxwell equations for the electromagnetic field $F$:}
\begin{align}
e_0(F_{0I})+t^{-1}F_{0I}+k_{CI}F_{0C}&=e_C(F_{CI})+(e_C\vphi)F_{CI}-\ga_{CCD}F_{DI}-\ga_{CIB}F_{CD},\label{Maxwell1}\\
e_0(F_{IJ})+k_{IC}F_{JC}+k_{JC}F_{CI}
&=e_I(F_{0J})-e_J(F_{0I})-(\ga_{IJC}+\ga_{JCI})F_{0C}\nonumber \\
&+(e_I\vphi)F_{0J}+(e_J\vphi)F_{I0},\label{Maxwell2}\\
e_I(F_{JB})+e_J(F_{BI})+e_B(F_{IJ})&=(\ga_{IJC}+\ga_{JCI})F_{CB}+(\ga_{ICB}+\ga_{KIC})F_{CJ}\nonumber\\
&+(\ga_{JKC}+\ga_{BCJ})F_{CI}.\label{Maxwell3}
\end{align}
\item \textbf{The Vlasov equation for the distribution function $f$:}
\begin{align}\label{Vlasoveq}
e_0(f)+\frac{p^C}{p^0}e_C(f)-\frac{p^Dp^E}{p^0}\ga_{DEC}\pr_{p^C}(f)+p^Dk_{DC}\pr_{p^C}(f)-p^0e_C(\vphi)\pr_{p^C}(f)=0.
\end{align}
\item \textbf{The elliptic equation for the lapse $\vphi$:}
\begin{align}
\begin{split}\label{2.25}
e_C(e_C\vphi)-\frac{\vphi}{t^2}&=-(e_C\vphi)e_C\vphi+2e_C(\ga_{DDC})-(e_C\psi)e_C\psi+\frac{1-\vphi-e^{-\vphi}}{t^2}\\
&-\ga_{DEC}\ga_{CED}-\ga_{DDC}\ga_{EEC}+2\left(F_{0C}F_{0C}-\frac{1}{4}F_{\a\b}F^{\a\b}\right)-T_{00}.
\end{split}
\end{align}
\item \textbf{The Hamiltonian equation:}
\begin{align}
\begin{split}\label{2.26a}
&\quad\; 2e_C(\ga_{DDC})-\ga_{CDE}\ga_{EDC}-\ga_{CCD}\ga_{EED}-k_{CD}k_{CD}+t^{-2}\\
&=(e_0\psi)^2+(e_C\psi)e_C\psi+4\left(F_{0C}{F_{0C}}+\frac{1}{4}F_{\a\b}F^{\a\b}\right)+2T_{00},
\end{split}
\end{align}
and \textbf{the momentum constraint equation:}
\begin{equation}
e_Ck_{CI}=\ga_{CCD}k_{ID}+\ga_{CID}k_{CD}-(e_0\psi)e_I\psi-2F_{0C}F_{IC}-T_{0I}.\label{2.26b}
\end{equation}
\end{itemize}
\end{prop}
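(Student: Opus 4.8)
The plan is to carry out the by-now standard reduction of the Einstein equations to a first-order system relative to the CMC foliation and the orthonormal frame, exactly as in \cite{FRS}, and to supplement it with the frame expansions of the matter equations \eqref{Wave}, \eqref{Maxwell}, \eqref{Vlasov}. The only structural inputs are: the Fermi--Walker frame identities \eqref{De0eI}--\eqref{DeIeJ}, i.e. $\D_{e_0}e_0=(e_C\vphi)e_C$, $\D_{e_0}e_I=(e_I\vphi)e_0$, $\D_{e_I}e_0=-k_{IC}e_C$, $\D_{e_I}e_J=-k_{IJ}e_0+\ga_{IJC}e_C$; the torsion-free relation $[e_0,e_I]=(e_I\vphi)e_0+k_{IC}e_C$; the curvature definitions \eqref{dfcur} (applied to both $\g$ and $g$); the CMC normalization $\tr k=-t^{-1}$; and the Einstein equations in the form \eqref{EMS}.

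First, the ``soft'' equations. Writing $[e_0,e_I]$ in the transported coordinates $\{x^i\}$ and comparing the $\pr_c$-components immediately gives $e_0(e_I^i)=k_{IC}e_C^i$, which is \eqref{2.23}. For the scalar field, expand $\bo_\g\psi=\eta^{\mu\nu}\big(e_\mu(e_\nu\psi)-(\D_{e_\mu}e_\nu)\psi\big)$ and substitute $\D_{e_0}e_0=(e_C\vphi)e_C$ together with $\sum_C\D_{e_C}e_C=t^{-1}e_0+\ga_{CCD}e_D$ (the latter using $\tr k=-t^{-1}$); then $\bo_\g\psi=0$ becomes \eqref{2.24}. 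For the Maxwell field, expand $\eta^{\mu\nu}(\D_{e_\nu}F)(e_\mu,e_\b)=0$ and $\D_{[\a}F_{\b\ga]}=0$ in the frame, again inserting the connection coefficients read off from \eqref{De0eI}--\eqref{DeIeJ}: the $\b=I$ component of the first equation yields \eqref{Maxwell1}, the Bianchi identity with one index $0$ yields \eqref{Maxwell2} and with all spatial indices yields \eqref{Maxwell3} (the $\b=0$ component is the divergence constraint). For the Vlasov field, write the geodesic spray $X$ in the coordinates $(t,x,p^1,p^2,p^3)$ on the mass shell, using $p^0=(\sum_I(p^I)^2)^{1/2}$ and that the geodesic equations in the frame involve the connection coefficients $\g(\D_{e_\a}e_\b,e^\mu)$; dividing by $p^0$ so that $t$ parametrizes the flow converts $X(f)=0$ into \eqref{Vlasoveq}.

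The geometric heart is the evolution of $k$ and $\ga$ and the lapse equation. Differentiating $k_{IJ}=-\g(\D_{e_I}e_0,e_J)$ along $e_0$ and commuting covariant derivatives produces the curvature term $\mathbf{R}(e_0,e_I,e_0,e_J)$ together with a Hessian-of-$\vphi$ contribution $e_I(e_J\vphi)-\ga_{IJC}e_C\vphi+(e_I\vphi)(e_J\vphi)$ and a term quadratic in $k$ (the Mainardi/Riccati identity). The Gauss equation rewrites the remaining part of $\Ric(e_I,e_J)=-\mathbf{R}(e_I,e_0,e_J,e_0)+\sum_C\mathbf{R}(e_I,e_C,e_J,e_C)$ in terms of the intrinsic Ricci curvature of $\Si_t$ and $k^2$, and the intrinsic Ricci curvature is in turn expanded in $\ga$ and its frame derivatives via \eqref{dfcur} applied to $g$, producing the $e_C(\ga_{IJC})-e_I(\ga_{CJC})$ and $\ga\ga$ terms; substituting $\Ric(e_I,e_J)$ by the matter right-hand side of \eqref{EMS} and using $\tr k=-t^{-1}$ yields \eqref{2.22a}. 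Equation \eqref{2.22b} for $\ga$ is purely geometric: differentiate $\ga_{IJK}=\g(\D_{e_I}e_J,e_K)$ along $e_0$, commute derivatives, and use the Codazzi identity to express $\mathbf{R}(e_0,e_I,e_J,e_K)$ through $e_K k_{JI}-e_J k_{KI}$ plus the $\ga k$-corrections and the lapse terms $-(e_J\vphi)k_{IK}+(e_K\vphi)k_{IJ}$. The Hamiltonian and momentum constraints \eqref{2.26a}, \eqref{2.26b} are the $(00)$ and $(0I)$ contractions of \eqref{EMSV}, obtained from the twice-contracted Gauss equation and the contracted Codazzi equation after expanding the scalar curvature of $g$ in $\ga$. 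Finally, \eqref{2.25} follows by tracing \eqref{2.22a} over $I=J$, inserting $e_0(\tr k)=e^{-\vphi}\pr_t(-t^{-1})=e^{-\vphi}t^{-2}$, eliminating the quadratic $k$- and $\ga$-terms with the Hamiltonian constraint, and adding $-\vphi t^{-2}$ to both sides so as to isolate the linear operator $e_C(e_C\,\cdot\,)-t^{-2}(\cdot)$, which produces the term $\tfrac{1-\vphi-e^{-\vphi}}{t^2}$.

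The step I expect to be the main obstacle is the bookkeeping in the derivation of \eqref{2.22a}--\eqref{2.22b}: tracking signs and the many quadratic contributions $\ga\ga$, $k\ga$, $k^2$ coming from the Mainardi, Gauss and Codazzi identities, and verifying that they reorganize precisely into the stated right-hand sides --- in particular that the $k^2$ pieces hidden in $\mathbf{R}(e_0,e_I,e_0,e_J)$, in the Gauss term, and in $\Ric$ cancel, leaving only the $-t^{-1}k_{IJ}$ dictated by the CMC condition. Conceptually nothing new is needed beyond the Einstein--scalar field reduction already performed in \cite{FRS}; the Maxwell and Vlasov additions are routine frame computations. For the converse direction --- that the reduced system together with the constraints \eqref{2.26a}--\eqref{2.26b} reproduces \eqref{EMS} --- one invokes the standard constraint-propagation argument based on the twice-contracted second Bianchi identity $\mathbf{D}^\mu\mathbf{G}_{\mu\nu}=0$ and the matter conservation laws, which shows that the constraints, imposed at $t=1$, persist along the flow.
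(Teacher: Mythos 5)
Your proposal follows the same route as the paper: derive \eqref{2.23} from the commutator $[\pr_t,e_I]$, derive \eqref{2.22a} by computing $\R(e_0,e_I,e_0,e_J)$ two ways (via the definition with the Fermi--Walker frame, and via Gauss plus the Einstein equations), derive \eqref{2.22b} via Codazzi, read off the constraints from the Gauss and Codazzi equations, expand $\bo_\g\psi$, $\D_\a F^{\a\b}$, $\D_{[\a}F_{\b\ga]}$ and the geodesic spray in the frame, and obtain the lapse equation by tracing \eqref{2.22a} with the CMC condition. This is precisely what the paper does.

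Two small inaccuracies in your sketch of the lapse equation. First, you propose to ``eliminate the quadratic $k$- and $\ga$-terms with the Hamiltonian constraint,'' but the paper does not invoke the Hamiltonian constraint at this point: the $k_{IC}k_{CJ}$ contributions from the Mainardi identity and from the Gauss equation already cancel in \eqref{2.22a}, so tracing produces no $|k|^2$ term to eliminate, and the $\ga\ga$ terms are \emph{not} eliminated --- they remain explicitly on the right side of \eqref{2.25}. Using the Hamiltonian constraint here would give an equivalent but differently-shaped right-hand side, not the specific equation \eqref{2.25}. Second, when tracing \eqref{2.22a} you also need the tracelessness of the massless Vlasov energy-momentum tensor, $T_{00}=T_{CC}$, to convert the $-T_{CC}$ produced by the trace into the $-T_{00}$ appearing in \eqref{2.25}; you omit this step. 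Neither point changes the overall strategy, but both should be fixed to obtain the paper's exact form of \eqref{2.25}. Your remark on the converse direction (constraint propagation via the contracted Bianchi identity and matter conservation) is a standard addendum that the paper's proof leaves implicit.
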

\begin{proof}
This proof is inspired by Section 2.1.5 of \cite{FRS}. Throughout this proof, we frequently employ the relations \eqref{De0eI}, \eqref{De0e0}, \eqref{DeIe0} and \eqref{DeIeJ} without further reference. We start with deriving the equation of $k$. A straightforward calculation gives
\begin{equation}\label{Eqn:R0I0J}
\begin{aligned}
&\quad\;\R(e_0,e_I,e_0,e_J)\\
&=\g\left(\D_{e_0}\D_{e_I}e_J-\D_{e_I}\D_{e_0}e_J-\D_{[e_0,e_I]}e_J,e_0\right)\\
&=\g\left(\D_{e_0}\D_{e_I}e_J-\D_{e_I}\D_{e_0}e_J-\D_{(e_I\vphi)e_0+k_{IC}e_C}e_J,e_0\right)\\
&=e_0\g(\D_{e_I}e_J,e_0)-\g(\D_{e_0}e_0,\D_{e_I}e_J)-e_I\g(\D_{e_0}e_J,e_0)+\g(\D_{e_0}e_J,\D_{e_I}e_0)+(e_I\vphi)e_J\vphi-k_{IC}k_{CJ}\\
&=e_0(k_{IJ})-(e_C\vphi)\ga_{IJC}+e_I(e_J\vphi)+(e_I\vphi)e_J\vphi-k_{IC}k_{CJ}.
\end{aligned}
\end{equation}
We proceed to rewrite \eqref{Eqn:R0I0J} with the assistance of the Einstein field equations. Recall the Gauss equation, namely,
\begin{align}\label{Gauss}
    \R(e_C,e_I,e_D,e_J)=R(e_C,e_I,e_D,e_J)+k_{CD}k_{IJ}-k_{CJ}k_{ID}.
\end{align}
Combining with \eqref{EMS} and noting $\tr k=k_{CC}=-t^{-1}$, we obtain
\begin{align}
\begin{split}\label{2.28}
    \R(e_0,e_I,e_0,e_J)&=-\Ric(e_I,e_J)+\R(e_C,e_I,e_C,e_J)\\
    &=-(e_I\psi)e_J\psi-2\left(F_{I\a}{F_J}^{\a}-\frac{1}{4}\de_{IJ}F_{\a\b}F^{\a\b}\right)-T_{IJ}\\
    &+\sRic(e_I,e_J)-t^{-1}k_{IJ}-k_{IC}k_{JC}.
\end{split}
\end{align}
Then we compute the components of the Ricci tensor of $g$ adapted to the spatial frame $\{e_I\}_{I=1,2,3}$ and derive that
\begin{align}
\begin{split}\label{sRiceIeJ}
\sRic(e_I,e_J)&=R(e_C,e_I,e_C,e_J)\\
&=g\left(\nab_{e_C}\nab_{e_I}e_J-\nab_{e_I}\nab_{e_C}e_J-\nab_{[e_C,e_I]}e_J,e_C\right)\\
&=e_Cg(\nab_{e_I}e_J,e_C)-g(\nab_{e_I}e_J,\nab_{e_C}e_C)-e_Ig(\nab_{e_C}e_J,e_C)+g(\nab_{e_C}e_J,\nab_{e_I}e_C)\\
&-g(\nab_{\ga_{CID}e_D-\ga_{ICD}e_D}e_J,e_C)\\
&=e_C(\ga_{IJC})-\ga_{IJD}\ga_{CCD}-e_I(\ga_{CJC})+\ga_{CJD}\ga_{ICD}-\ga_{CID}\ga_{DJC}+\ga_{ICD}\ga_{DJC}\\
&=e_C(\ga_{IJC})-\ga_{IJD}\ga_{CCD}-e_I(\ga_{CJC})-\ga_{CID}\ga_{DJC}.
\end{split}
\end{align}
Here we utilize \eqref{gaantisymmetric}. Substituting it into \eqref{2.28} and comparing with \eqref{Eqn:R0I0J}, we hence deduce
\begin{align*}
e_0(k_{IJ})+e_I(e_J\vphi)+(e_I\vphi)e_J\vphi&=-(e_I\psi)e_J\psi-2\left(F_{I\a}{F_J}^{\a}-\frac{1}{4}\de_{IJ}F_{\a\b}F^{\a\b}\right)-T_{IJ}+(e_C\vphi)\ga_{IJC}\\
&+e_C(\ga_{IJC})-\ga_{IJD}\ga_{CCD}-e_I(\ga_{CJC})-\ga_{CID}\ga_{DJC}-t^{-1}k_{IJ},
\end{align*}
which implies \eqref{2.22a}. 
\vspace{3mm}

Next we turn to derive the equation for $\ga$. Observe that
\begin{align*}
e_0(\ga_{IJK})=&\D_{e_0}\g(\D_{e_I}e_J,e_K)\\
=&\g(\D_{e_0}\D_{e_I}e_J,e_K)+\g(\D_{e_I}e_J,\D_{e_0}e_K)\\
=&\R(e_0,e_I,e_K,e_J)+\g(\D_{e_I}\D_{e_0}e_J,e_K)+\g(\D_{[e_0,e_I]}e_J,e_K)+\g(\D_{e_I}e_J,\D_{e_0}e_K)\\
=&\R(e_0,e_I,e_K,e_J)+e_I\g(\D_{e_0}e_J,e_K)-\g(\D_{e_0}e_J,\D_{e_I}e_K)\\
&+\g(\D_{\D_{e_0}e_I-\D_{e_I}e_0}e_J,e_K)+\g(\D_{e_I}e_J,\D_{e_0}e_K)\\
=&\R(e_K,e_J,e_0,e_I)-\g((e_J\vphi)e_0,-k_{IK}e_0+\ga_{IKC}e_C)\\
&+\g(\D_{(e_I\vphi)e_0+k_{IC}e_C}e_J,e_K)+\g(-k_{IJ}e_0+\ga_{IJC}e_C,(e_K\vphi)e_0)\\
=&\R(e_K,e_J,e_0,e_I)-(e_J\vphi)k_{IK}+(e_K\vphi)k_{IJ}+k_{IC}\ga_{CJK}.
\end{align*}
Recalling the Codazzi equations, that is,
\begin{align}\label{Codazzi}
    \nab_Kk_{JI}-\nab_Jk_{KI}=\R(e_K,e_J,e_0,e_I),
\end{align}
we then infer that
\begin{align*}
    e_0(\ga_{IJK})&=\nab_Kk_{JI}-\nab_Jk_{KI}-(e_J\vphi)k_{IK}+(e_K\vphi)k_{IJ}+k_{IC}\ga_{CJK}\\
    &=e_K(k_{JI})-\ga_{KJC}k_{CI}-\ga_{KIC}k_{JC}-e_J(k_{KI})+\ga_{JKC}k_{IC}+\ga_{JIC}k_{KC}\\
    &-(e_J\vphi)k_{IK}+(e_K\vphi)k_{IJ}+k_{IC}\ga_{CJK}.
\end{align*}
This gives \eqref{2.22b}. 
\vspace{3mm}

Notice that the transport equation of $e_I^i$, namely \eqref{2.23} directly follows from
\begin{align}\label{eq2.35}
(\pr_te_I^c)\pr_c=[\pr_t,e_I^c\pr_c]=[\pr_t,e_I]=\D_{\pr_t}e_I-\D_{e_I}(ne_0)=nk_{IC}e_C=nk_{IC}e_C^c\pr_c.
\end{align} 
Regarding the transport equation of $e_I^i$, by contracting \eqref{2.22a} with $\g^{IJ}$, we obtain 
\begin{align*}
    e_0\tr k=&-e_C(e_C\vphi)-(e_C\vphi)e_C\vphi+e_C(\ga_{DDC})-e_D(\ga_{CDC})-\frac{1}{t}\tr k+\ga_{DDC}e_C\vphi\\
    &-\ga_{DEC}\ga_{CED}-\ga_{DDC}\ga_{EEC}-(e_C\psi)e_C\psi-2\left(F_{C\a}{F_C}^{\a}-\frac{3}{4}F_{\a\b}F^{\a\b}\right)-T_{CC}.
\end{align*}
By virtue of the CMC condition $\tr k=-\frac{1}{t}$ and the fact that $T_{00}=T_{CC}$,\footnote{Notice that 
\begin{equation*}
    -T_{00}+T_{CC}=\g^{\mu\nu}T_{\mu\nu}=\int_{P(t, x)} f\g^{\mu\nu}p_{\mu}p_{\nu}\dvol=0.
\end{equation*}} we infer
\begin{align*}
     e_C(e_C\vphi)+(e_C\vphi)e_C\vphi&=\frac{1-n^{-1}}{t^2}+e_C(\ga_{DDC})-e_D(\ga_{CDC})+\ga_{DDC}e_C\vphi-\ga_{DEC}\ga_{CED}-\ga_{DDC}\ga_{EEC}\\
    &-(e_C\psi)e_C\psi-2\left(F_{\b\a}{F_\b}^{\a}-F_{0\a}{F_0}^\a-\frac{3}{4}F_{\a\b}F^{\a\b}\right)-T_{00},
\end{align*}
which implies \eqref{2.25}. 
\vspace{3mm}

To derive the reduced Hamiltonian equation \eqref{2.26a} and the reduced momentum constraint equation \eqref{2.26b}, we appeal to the following Hamiltonian constraint equations and the momentum constraint equation along the spacelike hypersurface $\Si_t$:
\begin{align}
    R-|k|^2+(\tr k)^2=&(e_0\psi)^2+(e_C\psi)e_C\psi+4\left(F_{0\a}{F_{0}}^\a+\frac{1}{4}F_{\a\b}F^{\a\b}\right)+2T_{00}, \label{Hamiltonian}\\
    \nab_Ck_{IC}-\nab_Ik_{CC}=&-(e_0\psi)e_I\psi-2F_{0C}F_{IC}-T_{0I}. \label{momentum}
\end{align}
Incorporating \eqref{Hamiltonian} with \eqref{sRiceIeJ}, we have
\begin{align*}
    &e_C(\ga_{DDC})-\ga_{EED}\ga_{CCD}-e_D(\ga_{CDC})-\ga_{CED}\ga_{DEC}-k_{CD}k_{CD}+t^{-2}\\
    =&(e_0\psi)^2+(e_C\psi)e_C\psi+4\left(F_{0\a}{F_{0}}^\a+\frac{1}{4}F_{\a\b}F^{\a\b}\right)+2T_{00},
\end{align*}
which implies \eqref{2.26a}. 
\vspace{3mm}

Meanwhile, inserting the following identities
\begin{align*}
    \nab_Ck_{IC}=e_Ck_{IC}-\ga_{CID}k_{CD}-\ga_{CCD}k_{ID},\qquad \nab_Ik_{CC}=e_Ik_{CC}=0.
\end{align*}
into \eqref{momentum}, we deduce \eqref{2.26b}. 
\vspace{3mm}

It remains to derive the reduced equations for the matter fields.  In view of \eqref{Wave} we have
\begin{align*}
    -e_0(e_0\psi)+e_C(e_C\psi)&=\bo_\g\psi-(\D_{e_0}e_0)^\a\D_\a\psi+(\D_{e_C}e_C)^\a\D_\a\psi\\
    &=-(e_C\vphi)e_C\psi+t^{-1}e_0\psi+\ga_{CCD}e_D\psi,
\end{align*}
which gives the reduced wave equation for $\psi$ as in \eqref{2.24}. 

As for the Maxwell equations \eqref{Maxwell1}--\eqref{Maxwell3}, employing  \eqref{Maxwell} we get
\begin{align*}
    -\D_{e_0}F_{0I}+\D_{e_C}F_{CI}=0,
\end{align*}
Expanding its $I$-th component, we find
\begin{align*}
    -e_0(F_{0I})+F(\D_{e_0}e_0,e_I)+F(e_0,\D_{e_0}e_I)+e_C(F_{CI})-F(\D_{e_C}e_C,e_I)-F(e_C,\D_{e_C}e_I)=0.
\end{align*}
This consequently renders
\begin{align*}
    -e_0(F_{0I})+(e_C\vphi)F_{CI}+e_C(F_{CI})-t^{-1}F_{0I}-\ga_{CCD}F_{DI}+k_{CI}F_{C0}-\ga_{CID}F_{CD}=0,
\end{align*}
which is equivalent to \eqref{Maxwell1}. 
\vspace{3mm}

From \eqref{Maxwell} we also have
\begin{align}\label{dF=0}
    \D_{\a}F_{\b\ga}+\D_{\b}F_{\ga\a}+\D_{\ga}F_{\a\b}=0.
\end{align}
Setting $\a=0$, $\b=I$ and $\ga=J$ in \eqref{dF=0}, we deduce
\begin{align*}
    \D_{e_0}F_{IJ}+\D_{e_I}F_{J0}+\D_{e_J}F_{0I}=0,
\end{align*}
which yields
\begin{align*}
    &e_0(F_{IJ})-F(\D_{e_0}e_I,e_J)-F(e_I,\D_{e_0}e_J)+e_I(F_{J0})-F(\D_{e_I}e_J,e_0)-F(e_J,\D_{e_I}e_0)\\
    +&e_J(F_{0I})-F(\D_{e_J}e_0,e_I)-F(e_0,\D_{e_J}e_I)=0.
\end{align*}
Then we obtain
\begin{align*}
    e_0(F_{IJ})+k_{IC}F_{JC}+k_{JC}F_{CI}=e_I(F_{0J})-e_J(F_{0I})-(\ga_{IJC}+\ga_{JCI})F_{0C}+(e_I\vphi)F_{0J}+(e_J\vphi)F_{I0},
\end{align*}
which corresponds to \eqref{Maxwell2}. 
\vspace{3mm}

Finally, choosing mutually distinct indices $I,J,K$ for \eqref{dF=0}, we have
\begin{align*}
    \D_{e_I}F_{JK}+\D_{e_J}F_{KI}+\D_{e_K}F_{IJ}=0.
\end{align*}
It then follows
\begin{align*}
&e_I(F_{JK})+e_J(F_{KI})+e_K(F_{IJ})\\
=&(\ga_{IJC}+\ga_{JCI})F_{CK}+(\ga_{ICK}+\ga_{KIC})F_{CJ}
+(\ga_{JKC}+\ga_{KCJ})F_{CI},
\end{align*}
which implies \eqref{Maxwell3}.
\vspace{3mm}

Now consider the Vlasov equation \eqref{Vlasov}:
\begin{align*}
    p^\mu e_\mu(f)+\frac{dp^\mu}{ds}\pr_{p^\mu}(f)=0,
\end{align*}
where $s$ denotes the affine parameter along the geodesic spray $X$. Restricting it on the mass shell $P_{(t,x)}$, we deduce
\begin{align*}
    p^\mu e_\mu(f)+\frac{dp^C}{ds}\pr_{p^C}(f)=0.
\end{align*}
Employing the following geodesic equation of $p^I$ along the geodesic spray $X$:
\begin{align*}
    \frac{dp^I}{ds}+p^\mu p^\nu\g(\D_{e_\mu}e_\nu,e_I)=0,
\end{align*}
and noting that $p^0=\frac{dt}{ds}$, we thus derive
\begin{align*}
    e_0(f)+\frac{p^C}{p^0}e_C(f)-\frac{p^\mu p^\nu}{p^0}\g(\D_{e_\mu}e_\nu,e_C)\pr_{p^C}(f)=0.
\end{align*}
By expanding the expression of $\g(\D_{e_\mu}e_\nu,e_C)$, we then arrive at
\begin{align*}
    e_0(f)+\frac{p^C}{p^0}e_C(f)-\frac{p^Dp^E}{p^0}\ga_{DEC}\pr_{p^C}(f)+p^Dk_{DC}\pr_{p^C}(f)-p^0(e_C\vphi)\pr_{p^C}(f)=0,
\end{align*}
which gives the reduced Vlasov equation \eqref{Vlasoveq}. This concludes the proof of Proposition \ref{basicequations}.
\end{proof}
\subsection{Kasner Variables}
In the following proposition, we derive the corresponding reduced variables for the exact generalized Kasner solution as in \eqref{kasnerg} and \eqref{kasnerq}.
\begin{prop}\label{kasnerquantities}
The reduced variables of the generalized Kasner solution in \eqref{kasnerg}, \eqref{kasnerq} read
\begin{align*}
\nt&=1,\qquad\qquad\et_I^i=t^{-\widetilde{q_I}}\de_I^i,\qquad\quad\kt_{IJ}=-\frac{\qIt}{t}\de_{IJ},\qquad\qquad\gat_{IJK}=0,\\
\psit&=\widetilde{B}\log t,\quad\;\;\;\Ft=0,\qquad\qquad\quad\quad\widetilde{f}=0.
\end{align*}
\end{prop}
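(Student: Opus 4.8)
The plan is to verify directly that the background Kasner data in \eqref{kasnerg} induce exactly the claimed values of the reduced variables, working term by term through the definitions introduced in Section \ref{secpreliminaries}. First I would note that for the Kasner metric $\gt=-dt\otimes dt+\sum_I t^{2\qIt}dx^I\otimes dx^I$, the lapse is manifestly $\nt=1$, hence $\vphi=\log\nt=0$, and the future unit normal is $\et_0=\pr_t$. The natural $\g$-orthonormal spatial frame is $\et_I=t^{-\qIt}\pr_{x^I}$ (no sum), which gives at once $\et_I^i=t^{-\qIt}\de_I^i$; one should also check that this frame is Fermi--Walker transported in the sense of \eqref{De0eI}, i.e. $\D_{\et_0}\et_I=(\et_I\vphi)\et_0=0$, which follows because $\pr_t(t^{-\qIt})\pr_{x^I}=-\qIt t^{-\qIt-1}\pr_{x^I}$ is exactly cancelled by the Christoffel contribution $\D_{\pr_t}(\pr_{x^I})=\Ga^{x^I}_{t\,x^I}\pr_{x^I}=\qIt t^{-1}\pr_{x^I}$ once one renormalizes; doing this carefully is the one place requiring a short computation.

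Next I would compute the second fundamental form from \eqref{Def:k}: $\kt_{IJ}=-\gt(\D_{\et_I}\et_0,\et_J)$. Since $\D_{\et_I}\pr_t$ can be read off from the Christoffel symbols $\Ga^{x^J}_{x^I t}=\qIt t^{-1}\de_I^J$ of the diagonal metric, and $\et_I,\et_J$ carry the normalizing factors $t^{-\qIt},t^{-\qJt}$, one obtains $\kt_{IJ}=-\tfrac{\qIt}{t}\de_{IJ}$; this is consistent with the CMC normalization \eqref{trk1t} precisely because $\sum_I\qIt=1$ by \eqref{kasnerq}. For the spatial connection coefficients $\gat_{IJK}=g(\nab_{\et_I}\et_J,\et_K)$, I would observe that each $\et_I$ is a coordinate-aligned vector field with coefficient depending only on $t$ (hence spatially constant), and the spatial metric $g_{ij}=t^{2\qIt}\de_{ij}$ is flat on each $\Si_t$, so the induced Levi--Civita connection $\nab$ has $\nab_{\et_I}\et_J=0$, giving $\gat_{IJK}=0$. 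Finally, $\psit=\Bt\log t$ is immediate from \eqref{kasnerg}, and $\Ft=0$, $\widetilde f=0$ are assumed there.

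As a consistency check, and to make the proposition self-contained, I would then substitute these values back into the reduced system of Proposition \ref{basicequations} and confirm that all equations are satisfied: \eqref{2.23} reads $\pr_t(t^{-\qIt})=\kt_{IC}t^{-\qIt}\de_C^i$, i.e. $-\qIt t^{-\qIt-1}=-\tfrac{\qIt}{t}t^{-\qIt}$, which holds; \eqref{2.22a} with $\gat=0,\vphi=0,\psit'=\Bt/t,\,\Ft=0,\,T=0$ reduces to $e_0(\kt_{IJ})=-t^{-1}\kt_{IJ}-(\et_I\psit)(\et_J\psit)$, and one checks $\pr_t(-\qIt/t)=\qIt/t^2$ against $\tfrac{1}{t^2}\qIt$ using that $\et_I\psit=t^{-\qIt}\pr_{x^I}(\Bt\log t)=0$ — so actually the $\psi$-term drops and only the $\tfrac{\qIt}{t^2}$ balance remains; and the Hamiltonian equation \eqref{2.26a} becomes $t^{-2}-\sum_I(\qIt/t)^2 = (\Bt/t)^2$, i.e. $1-\sum_I\qIt^2=\Bt^2$, which is exactly the second Kasner relation in \eqref{kasnerq}. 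The main obstacle is not conceptual but bookkeeping: the only genuinely delicate point is verifying the Fermi--Walker transport condition \eqref{De0eI} for the proposed frame (equivalently, checking that the naive orthonormal frame built from the coordinate vector fields is already the transported one because $\vphi=0$), after which every remaining identity is a one-line substitution.
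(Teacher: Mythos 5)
Your proof is correct and follows essentially the same route as the paper: both compute $\kt_{IJ}$ and $\gat_{IJK}$ by expanding $\D_{\et_I}\et_0$ and $\D_{\et_I}\et_J$ via the Christoffel symbols (equivalently, the Koszul formula) of the diagonal Kasner metric and pulling out the normalizing factors $t^{-\qIt}$, while the remaining entries are read off directly from \eqref{kasnerg}. The one thing you add that the paper leaves implicit is the explicit check that the naive orthonormal frame $\et_I=t^{-\qIt}\pr_{x^I}$ actually satisfies the Fermi--Walker transport condition \eqref{De0eI} (which collapses to $\D_{\et_0}\et_I=0$ since $\vphi=0$); this, together with the back-substitution into \eqref{2.22a}, \eqref{2.23}, and \eqref{2.26a} recovering the Kasner relations \eqref{kasnerq}, is a worthwhile consistency supplement but not a different method.
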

\begin{proof}
A direct computation implies
\begin{align*}
    \kt_{IJ}&=-\gt(\D_{\et_I}\et_0,\et_J)=-t^{-\qIt-\qJt}\gt(\D_{\pr_I}\pr_t,\pr_J)\\
    &=-\frac{1}{2}t^{-\qIt-\qJt}\left(\frac{\pr\gt_{IJ}}{\pr t}+\frac{\pr\gt_{tJ}}{\pr x^I}-\frac{\pr\gt_{It}}{\pr x^J}\right)\\
    &=-\frac{1}{2}t^{-\qIt-\qJt}\pr_t\left(t^{2\qIt}\de_{IJ}\right)=-\frac{\qIt}{t}\de_{IJ}.
\end{align*}
Moreover, we have
\begin{align*}
    \gat_{IJK}&=\gt(\D_{\et_I}\et_J,\et_K)=t^{-\qIt-\qJt-\qKt}\gt(\D_{\pr_I}\pr_J,\pr_K)\\
    &=\frac{1}{2}t^{-\qIt-\qJt-\qKt}\left(\frac{\pr\gt_{IK}}{\pr x^J}+\frac{\pr\gt_{JK}}{\pr x^I}-\frac{\pr\gt_{IJ}}{\pr x^K}\right)=0.
\end{align*}
The remaining  equalities follow readily from \eqref{kasnerg}. This completes the proof of Proposition \ref{kasnerquantities}.
\end{proof}
\section{Main Theorem}\label{secmain}
In this paper, we aim to establish our main theorems by a continuity argument for the reduced system in Proposition \ref{basicequations}. We begin by imposing bootstrap assumptions for various norms of the perturbed solution over a time interval $(T_*,1)$ with bootstrap time $T_*\in(0,1)$. The core step is to derive a priori estimates for the perturbed solution. We hope these estimates lead to a strict improvement of the bootstrap assumptions on $(T_*,1]$, which allows us, via standard arguments, to extend the perturbed solution beyond the bootstrap time interval $(T_*,1]$. Iteratively, we can show that this solution must exist on $(0,1]\times \mathbb{T}^3$ and satisfy the desired a priori estimates on $(0,1]$.  As a consequence, using the existence result and the precise controls of reduced variables as in the a priori estimates, we are able to prove various curvature blow-ups as $t\to 0$. This reveals the quantitative information of the Big Bang singularity, and the details on this are given in Section \ref{secconclusions}.
\subsection{Sobolev Norms}\label{secSobolev}
Given a scalar function $v$ on $\Si_t$, we define its $L^2$-norm as
\begin{align}\label{dfL2}
    \|v\|_{L^2(\Si_t)}^2\coloneqq\int_{\Tt^3}v^2(t,x)dx^1dx^2dx^3,
\end{align}
where $dx^1dx^2dx^3$ denotes the Euclidean volume form on $\Si_t$.

Similarly, for a scalar function $h$ on $T\Si_t$, we define its $L^2$--norm by
\begin{align}\label{dfL2f}
\|h\|_{L^2(T\Si_t)}^2\coloneqq\int_{\Tt^3\times\mathbb{R}^3}h^2(t,x,p)dx^1dx^2dx^3dp^1dp^2dp^3.
\end{align}
We now introduce the following schematic differential operators:
\begin{align*}
    \pr\coloneqq\{\pr_{x^1},\pr_{x^2},\pr_{x^3}\},\qquad\qquad p\pr_p\coloneqq\bigcup_{i,j=1,2,3}\left\{p^i\pr_{p^j}\right\}.
\end{align*}
For any triplet $\io\coloneqq(\io^1,\io^2,\io^3)$, we define
\begin{align*}
    \pr^\io\coloneqq\pr_{x^1}^{\io^1}\pr_{x^2}^{\io^2}\pr_{x^3}^{\io^3}.
\end{align*}
And for any $3\times 3$ matrix $\io\coloneqq(\io^{ij})_{i,j=1,2,3}$, we define
\begin{align*}
    (p\pr_p)^{\io}\coloneqq\prod_{i,j=1,2,3}(p^i\pr_{p^j})^{\io^{ij}}.
\end{align*}
It is also convenient to introduce the following conventions:
\begin{align}\label{io1io2df}
h^{(\io_1)}\coloneqq\pr^{\io_1}h,\qquad\qquad h^{(\io_1,\io_2)}\coloneqq\pr^{\io_1}(p\pr_p)^{\io_2}h.
\end{align}
where $\io_1$ is a triplet and $\io_2$ is a $3\times 3$ matrix. With the above notations, we define the standard $H^M(\Si_t)$, $\dot{H}^M(\Si_t)$, $W^{M,\infty}(\Si_t)$ and $\dot{W}^{M,\infty}(\Si_t)$ norms of a scalar function $v$ as follows:
\begin{align*}
\|v\|_{H^M(\Si_t)}&\coloneqq\max_{|\io|\leq M}\|\pr^\io v\|_{L^2(\Si_t)},\qquad\qquad\|v\|_{\dot{H}^M(\Si_t)}\coloneqq\max_{|\io|=M}\|\pr^\iota v\|_{L^2(\Si_t)},\\
\|v\|_{W^{M,\infty}(\Si_t)}&\coloneqq\max_{|\io|\leq M}\|\pr^\io v\|_{L^\infty(\Si_t)},\qquad\;\;\|v\|_{\dot{W}^{M,\infty}(\Si_t)}\coloneqq\max_{|\io|=M}\|\pr^\iota v\|_{L^\infty(\Si_t)}.
\end{align*}
Furthermore, if $v$ is a $\Si_t$--tangent tensorfield, then we regard it as the vector-valued function with components relative to the spatial frame $\{e_{I}\}_{I=1,2,3}$. And we define its $L^2(\Si_t)$, $H^M(\Si_t)$, $\dot{H}^M(\Si_t)$ and $W^{M,\infty}(\Si_t)$ and $\dot{W}^{M,\infty}(\Si_t)$ norms, by summing over all frame indices.
\subsection{Choice of Parameters}\label{seckey}
This section is devoted to choosing the key parameters that are involved in our bootstrap assumptions. Notice that the strong sub-critical condition \eqref{strongstability} gives
\begin{align}\label{strongq}
0<\max_{I,J,K=1,2,3}\big\{1-\qIt,\qIt+\qJt-\qKt\big\}<1,\qquad \max_{I=1,2,3}\{\qIt\}<\frac{3}{5}.
\end{align}
We first define $q,\si\in (0, 1)$ by the following relation:
\begin{align}\label{dfqsi}
    \max_{I,J,K=1,2,3}\big\{1-\qIt,\qIt+\qJt-\qKt\big\}=q-4\si=1-6\si.
\end{align}
We then set
\begin{align}\label{dfqM}
    q_M\coloneqq\max_{I=1,2,3}\{\qIt\}+\si,\qquad \de q\coloneqq \max_{I,J=1,2,3}\{\qIt-\qJt\}+\si.
\end{align}
By definition this implies
\begin{align}\label{AVTD}
    q_M+\de q=\max_{I=1,2,3}\{\qIt\}+\max_{I,J=1,2,3}\{\qIt-\qJt\}+2\si\leq q-2\si.
\end{align}

Next, we select parameters $A_*, \kl$ such that
\begin{align}\label{dfkl}
    \max_{I,J=1,2,3}\{\qIt-\qJt\}<\frac{A_*-5}{\kl+5}<\frac{A_*+5}{\kl-5}<\de q.
\end{align}
This can be achieved by taking $A_*$ large enough and then choosing $\kl\in\mathbb{N}$. It is worth mentioning that $A_*$ and $\kl$ represent respectively the blow-up rate in $t$ (of order $t^{-A_*}$) and the top-order regularity of the perturbed solution in $L^2$.
\begin{rk}
    Throughout this paper, we denote $A\les B$ for $A\leq CB$ with $C$ being a constant that depends only on $q$, $q_M$, $\de q$, and $\si$. Moreover, $A\ll B$ stands for $CA<B$ with $C$ being the largest universal constant among all the constants involved in the proof by $\les$. Similarly, we denote $A\les_*B$ for $A\leq C_*B$, where $C_*$ is a constant depending on $A_*$, $\kl$ and $C$. Moreover, $A\ll_*B$ represents that $C_*A<B$, where $C_*$ is the largest constant involved in the proof through $\les_*$.
\end{rk}
Finally,  we pick two smallness constants $\ep_0,\ep>0$ satisfying
\begin{align*}
    \ep_0\ll_*\ep\ll_*\frac{1}{\kl}<\frac{1}{A_*}\ll\si.
\end{align*}
Here $\ep_0$ measures the size of the initial perturbation and $\ep$ corresponds to the bootstrap bounds that will be improved.
\subsection{Auxiliary Function \texorpdfstring{$\TT$}{}}\label{Subsec:defTT}
To control the energy-momentum tensor $T$ associated with the distribution function $f=f(t,x,p)$, we introduce the following function of $(t,x)$:
    \begin{equation}\label{dfTT}
        \TT(t,x)\coloneqq \left\|(p^0)^\frac{1}{2}\f(t,x,p)\right\|_{L^2_p(\mathbb{R}^3)}.
    \end{equation}
Moreover, we denote\footnote{Since $\TT$ is a function only depends on $(t,x)$, the notation $\TT^{(\io_1,\io_2)}$ will not cause confusion with that in \eqref{io1io2df}.}
\begin{equation}\label{dfTTiota}
    \TT^{(\io_1,\io_2)}(t,x)\coloneqq \left\|(p^0)^\frac{1}{2}\f^{(\io_1,\io_2)}(t,x,p)\right\|_{L^2_p(\mathbb{R}^3)},
\end{equation}
where $\f^{(\io_1,\io_2)}$ is defined according to \eqref{io1io2df}. We also define
\begin{align}\label{dfTTk}
    \TT^{(k)}(t,x)\coloneqq \max_{|\io_1|+|\io_2|\leq k}\TT^{(\io_1,\io_2)}(t,x).
\end{align}
\begin{rk}\label{rkTT}
    As an immediate consequence of \eqref{dfTT}, we obtain
    \begin{align*}
        |T_{\mu\nu}(t,x)|\leq T_{00}(t,x)=\int_{\mathbb{R}^3}fp^0dp^1dp^2dp^3=\TT^2(t,x),\qquad \forall \mu,\nu=0,1,2,3.
    \end{align*}
\end{rk}
\subsection{Fundamental Norms}
In this subsection, we introduce the fundamental $t$-weighted norms for the reduced quantities that we work with throughout the rest of the paper. These norms also indicate the desired blow-up rates for all reduced quantities, which allow us to close the bootstrap argument.
\begin{df}\label{Def:Norm}
For any $\Si_t$--tangent $k$--tensor field $X_{I_1...I_k}$, we denote the associated linearized quantity (in components) as follows:
\begin{align*}
    \Xc_{I_1...I_k}\coloneqq X_{I_1...I_k}-\Xt_{I_1...I_k}.
\end{align*}
Here $\Xt_{I_1...I_k}\coloneqq \Xt(\et_{I_1},...,\et_{I_k})$ corresponds to the value of the exact generalized Kasner solution present in \eqref{kasnerquantities}.

We define the lower-order norms:
\begin{align*}
    \Ll_{e}(t)&\coloneqq t^{q_M}\|\ec\|_{W^{1,\infty}(\Si_t)},\\
    \Ll_n(t)&\coloneqq t^{-2\si}\|\vphi\|_{W^{1,\infty}(\Si_t)}+t^{q_M-2\si}\|\ev\vphi\|_{L^{\infty}(\Si_t)},\\
    \Ll_{\ga}(t)&\coloneqq t^q\|\ga\|_{W^{1,\infty}(\Si_t)},\\
    \Ll_k(t)&\coloneqq t\|\kc\|_{W^{1,\infty}(\Si_t)},\\
    \Ll_\psi(t)&\coloneqq t^q\|\ev\psi\|_{W^{1,\infty}(\Si_t)}+t\|\tpsic\|_{W^{1,\infty}(\Si_t)},\\
    \Ll_F(t)&\coloneqq t^q\|F\|_{W^{1,\infty}(\Si_t)},\\
    \Ll_\TT(t)&\coloneqq t^{\frac{1+q_M}{2}}\|\TT\|_{L^{\infty}(\Si_t)}.
\end{align*}
and the higher-order norms:
\begin{align*}
\Hh_{e}(t)&\coloneqq t^{A_*+q_M}\|\ec\|_{\Hdot^\kl(\Si_t)},\\
\Hh_n(t)&\coloneqq t^{A_*}\|\vphi\|_{\Hdot^\kl(\Si_t)}+t^{A_*+1}\|\ev\vphi\|_{\Hdot^\kl(\Si_t)},\\
\Hh_{\ga}(t)&\coloneqq t^{A_*+1}\|\ga\|_{\Hdot^\kl(\Si_t)}\\
\Hh_k(t)&\coloneqq t^{A_*+1}\|k\|_{\Hdot^\kl(\Si_t)},\\
\Hh_\psi(t)&\coloneqq t^{A_*+1}\|\vec{e}\psi\|_{\dot{H}^\kl(\Si_t)}+t^{A_*+1}\|e_0\psi\|_{\dot{H}^\kl(\Si_t)},\\
\Hh_F(t)&\coloneqq t^{A_*+1}\|F\|_{\Hdot^\kl(\Si_t)},\\
\Hh_\TT(t)&\coloneqq t^{A_*+\de q+\frac{1+q_M}{2}}\|\TT^{(k_*)}\|_{L^2(\Si_t)},
\end{align*}
where $\TT^{(k_*)}$ is defined as in \eqref{dfTTk}. We also define the following total norms for the dynamic variables\footnote{The dynamic variables contain all the geometric quantities and matter fields, except the lapse function $n$, since there is no evolution equation for $n$.}
\begin{align*}
    \Ll(t)&\coloneqq \Ll_{e}(t)+\Ll_{\ga}(t)+\Ll_{k}(t)+\Ll_\psi(t)+\Ll_F(t)+\Ll_\TT(t),\\
    \Hh(t)&\coloneqq \Hh_{e}(t)+\Hh_{\ga}(t)+\Hh_k(t)+\Hh_\psi(t)+\Hh_F(t)+\Hh_\TT(t).
\end{align*}
and
\begin{align*}
    \Dd(t)\coloneqq \Ll(t)+\Hh(t).
\end{align*}
\end{df}
\subsection{Local Well-Posedness of EMSVS}
To prove the stable Big Bang formation of Kasner solutions toward $t=0$, we need to utilize the local existence result for the Einstein--Maxwell--scalar field--(massless) Vlasov system. In this current paper, we mainly adapt to the local existence results \cite{Svedberg} by Svedberg in harmonic gauge and \cite{FU24} by Fajman--Urban in CMC gauge. These results are summarized as below:\footnote{Although they only considered the Einstein--scalar field--Vlasov system, it is not difficult to include the Maxwell field in the associated local well-posedness result based on their proofs.}
\begin{thm}[local well-posedness for the EMSVS \cite{Svedberg,FU24}]\label{Thm:local}
    Let $l\in \mathbb{N}, l\ge 3$ and $\mu\ge 3$, and let $(\Si_{t_0},\go,\ko,\psio,\phio,\Fo,\fo)$ be an initial data set for the Einstein--Maxwell--scalar field--Vlasov system \eqref{EMS}--\eqref{Vlasov} satisfying
    \begin{equation*}
        \|\go\|_{H^{l+1}(\Si_{t_0})}+ \|\ko\|_{H^{l}(\Si_{t_0})}+ \|\psio\|_{H^{l}(\Si_{t_0})}+ \|\phio\|_{H^{l}(\Si_{t_0})}+\|\Fo\|_{H^{l}(\Si_{t_0})}+\|\fo\|_{H^l_{\mu}(T\Si_{t_0})}<\infty.
    \end{equation*}
    Here for any $\xi=\xi(t, x, p)$,
    \begin{equation*}
        \|\xi\|_{H^l_{\mu}(T\Si_t)}\coloneqq \max\limits_{|\io_1|+|\io_2|\le l}\left\|\frac{\jp^{{\mu+l}}}{|p|{^l}} \pr_x^{\io_1} (p\pr_{p})^{\io_2}\xi\right\|_{L_x^{2}L^{2}_p(T\Si_t)}
    \end{equation*}
    with $\jp=\sqrt{|p|^2+1}$. Then, this data gives rise to a unique solution $(\MM, g, k, \psi, F ,f)$ to the EMSVS \eqref{EMS}--\eqref{Vlasov} in a short time interval $J=(t_0-h, t_0]$ for some $h>0$, with
    \begin{equation}\label{Cond:local}
        g\in C\big(J; H^{l+1}(\Si_t)\big), \qquad k,\pr_t\psi, \nab\psi, F\in C\big(J; H^{l}(\Si_t)\big), \qquad  f\in C\big(J; H^{l}_{\mu}(T\Si_t)\big).
    \end{equation}
    Furthermore, for the past-maximal existence interval $(T_-,t_0]$, we have either $T_-=0$ or 
    \begin{equation}\label{Cond:extend}
        \limsup\limits_{t\searrow T_-} \left( \|g\|_{C^2(\Si_t)}+\|(k,\pr_t\psi, \nab\psi, F)\|_{C^1(\Si_t)}+\left\|\jp^{\mu+l} |p|^{-l} (\pr_x, p\pr_{p})^{\le 1}f\right\|_{L^{\infty}_xL^2_p(T\Si_t)}\right)=\infty.
    \end{equation}
\end{thm}
\begin{rk}
    Compared to the local existence result for the massive Vlasov field, one has to work with the singular weight $(\frac{\jp}{|p|})^{l}$ for the particle distribution function $f$. This is due to the singularity of the mass shell $P$ on the apex of the cone.
\end{rk}
\subsection{Statement of the Main Theorem}
Our main goal is to establish the following theorem.
\begin{thm}\label{mainestimates}
Consider an initial data set $(\Si_1,g,k,\psi,\mathring{\phi},\Fo,\fo)$ for the Einstein--Maxwell--scalar field--Vlasov system \eqref{EMS}--\eqref{Vlasov}. There exists a constant $\ep_0>$ sufficiently small, so that if the initial data for the reduced variables $(k,\ga, e, \psi, F, f)$ along $\Si_1$ satisfy $\fo\in H^{k_*}_{\mu}(T\Si_1)$ with $\mu\ge 3$ and
\begin{align}\label{initial}
\Dd(1)\le \ep_0,
\end{align}
then the reduced system in Proposition \ref{basicequations} admits a unique solution on the slab $(0,1]\times\Tt^3$. Moreover, the following estimate holds for all $t\in(0,1]$:
\begin{equation}\label{finalestimates}
    \Dd(t)+\Ll_n(t)+\Hh_n(t)\les\ep_0.
\end{equation}
\end{thm}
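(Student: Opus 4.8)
The proof proceeds by a bootstrap/continuity argument on the reduced system of Proposition \ref{basicequations}. The first step is to set up the bootstrap: fix a time $T_*\in(0,1)$ and assume that on $(T_*,1]$ one has $\Dd(t)\le\ep$ together with a corresponding bootstrap bound $\Ll_n(t)+\Hh_n(t)\le\ep$ for the lapse. The goal of the a priori estimates is to upgrade these to $\Dd(t)+\Ll_n(t)+\Hh_n(t)\les\ep_0$, strictly improving the bootstrap since $\ep_0\ll_*\ep$; standard local well-posedness for the quasilinear symmetric-hyperbolic--transport system plus this improvement then forces $T_*=0$ and yields the global-in-$(0,1]$ solution with the stated bounds. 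Before touching the dynamical variables one first extracts, via Section \ref{secbootass}-type interpolation, the consequences of the bootstrap assumption that are used everywhere (for instance the one-extra-derivative bounds \eqref{intro:Est:low+1}, the pointwise control $e^0\sim t^{-q_M}$ along $\supp f$, and lower bounds on the metric components guaranteeing nondegeneracy of $g$).

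\textbf{Order of the estimates.} I would then run the estimates in the following order. (1) \emph{Lapse}: using the CMC elliptic equation \eqref{2.25} for $\vphi$, apply the maximum principle to bound $\|\vphi\|_{L^\infty}$ in terms of the right-hand side (which is controlled by $\Dd$ and the already-known lower-order bounds), and run $t$-weighted elliptic $L^2$-estimates after commuting with $\pr^\io$ to control $\|\vphi\|_{\dot H^{k_*}}$ and $\|\ev\vphi\|_{\dot H^{k_*}}$; this produces $\Ll_n+\Hh_n\les\ep_0$ in terms of $\Dd$. (2) \emph{Lower-order $L^\infty$ bounds for $k,\ga,\psi,F$} (Section \ref{sectransport}): use the evolution equations \eqref{2.22a}--\eqref{2.22b}, \eqref{2.24}, \eqref{Maxwell1}--\eqref{Maxwell2}, the Bianchi-pair rewriting with $S_{IJK}=\ga_{IJK}+\ga_{JKI}$, the momentum constraint \eqref{2.26b}, and Gronwall-type integration against the appropriate $t$-weights, exactly as in the sketch, to get $\Ll_e+\Ll_\ga+\Ll_k+\Ll_\psi+\Ll_F\les\ep_0$. (3) \emph{Lower-order bound for the Vlasov field}: using the conservation law $\D_\mu T^{\mu\nu}=0$, the nonnegativity $T_{\mu\mu}\ge0$ and $T_{00}=\sum_I T_{II}$, derive the differential inequality for $t^{1+q_M}T_{00}$ and integrate to get $\Ll_\TT\les\ep_0$ (equivalently $|T|\les\ep_0 t^{-1-q_M}$, with the extra $\si$-room used to absorb the $\sum_I (\qIt/t)T_{II}$ term). (4) \emph{Top-order energy estimates for $k,\ga,\psi,F$}: commute the evolution equations with $\pr^\io$, $|\io|=k_*$, exploit the exact Bianchi-pair cancellation $2\int_{\Si_t}A\cdot DB-B\cdot D^*A=0$, run the $t^{2A_*+2}$-weighted energy identity, and absorb the borderline bulk term $\int_t^1 s^{2A_*}\|\pr^\io k\|_{L^2}^2$ by choosing $A_*$ large; this yields $\Hh_k+\Hh_\ga+\Hh_\psi+\Hh_F\les\ep_0$, and an analogous (or transport-based) argument gives $\Hh_e\les\ep_0$. (5) \emph{Top-order energy estimate for the Vlasov field}: work with $\sqrt f$, commute the reduced equation with $\pr^{\io_1}(p\pr_p)^{\io_2}$, $|\io_1|+|\io_2|\le k_*$, use the commutator identity \eqref{Intro:commute p pr p}, multiply by $2t^{2P}p^0\f^{(\io_1,\io_2)}$ with $P=2A_*+2\dq+q_M+1$, integrate by parts in $x$ and $p$ on $T\Si_t$, and integrate in $t$; the condition $P>2C_{\io_2}$ (guaranteed by \eqref{dfkl}, i.e. $A_*/k_*\approx\dq$) makes the linear term have the good sign, giving $\Hh_\TT\les\ep_0$. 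Finally, combining (1)--(5) closes the bootstrap; the blow-up statements (Kasner-type curvature blow-up as $t\to0$) follow from the controlled reduced variables and the explicit Kasner values in Proposition \ref{kasnerquantities}.

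\textbf{Main obstacle.} The crux is step (5), the top-order control of the Vlasov field, and more precisely the interplay between it and the parameter choice in \eqref{dfkl}. Because the commutator \eqref{Intro:commute p pr p} feeds a linear term of size $\le|\io_2|\dq\,t^{-1}$ into the energy identity for $\f^{(\io_1,\io_2)}$, the weight exponent $P$ must satisfy $P\ge 2k_*\dq+O(1)$; but the top-order energy blows up no faster than $t^{-2A_*+O(1)}$, forcing $P\le 2A_*+O(1)$, hence $A_*\gtrsim k_*\dq$. This is exactly why, unlike in \cite{FRS}, one cannot take $A_*/k_*$ arbitrarily small, and it is the reason the argument only closes in the \emph{strong} sub-critical regime \eqref{strongstability}: one needs $q_M+\dq<q<1$, i.e. $\max\{\qIt+\qJt-\qKt\}<1$, for the $t$-weighted integrals $\int_t^1 s^{-q_M-\dq}\,ds$ arising when estimating $\pr\kc$ against $\pr T\sim\ep_0 t^{-1-q_M-\dq}$ to converge. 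A secondary difficulty, threaded through all the energy estimates, is the bookkeeping of the constant hierarchy $\ep_0\ll_*\ep\ll_*1/k_*<1/A_*\ll\si$: the borderline bulk terms must be absorbed using largeness of $A_*$ while the nonlinear error terms (which satisfy the null structure, i.e. no $\Gaw\cdot\Gaw$ with $\Gaw\in\{\kc,\tpsic\}$) must be absorbed using smallness of $\ep$, and one must verify that adding one derivative costs only $t^{-\dq}$ as in \eqref{intro:Est:low+1} so that the lower-order borderline products remain integrable.
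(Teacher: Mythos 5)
Your proposal follows essentially the same route as the paper: a bootstrap/continuity argument in which one first controls the lapse via the maximum principle and $t$-weighted elliptic $L^2$-estimates (Theorem \ref{M1}), then derives lower-order $L^\infty$ bounds for $k,\ga,\psi,F$ from the evolution equations (rewritten via $S_{IJK}$) and for $T$ from the conservation law with the nonnegativity trick (Theorem \ref{M2}), then top-order $t^{2A_*}$-weighted $L^2$ energy estimates using the Bianchi-pair cancellations for $(k,\ga),(e_0\psi,\ev\psi),(F_{0I},F_{IJ})$ with $A_*$ large to absorb the borderline bulk terms, and the separate $\sqrt{f}$ estimate with $p\pr_p$-commutators and the weight $P=2A_*+2\dq+q_M+1$ for the Vlasov field (Theorem \ref{M3}); you also correctly identify the parameter constraint $A_*\gtrsim k_*\dq$ as the reason for restricting to the strong sub-critical regime. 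The only structural difference is cosmetic: the paper closes the three intermediate theorems into the integral inequality $\Dd(t)^2\les_*\ep_0^2+\int_t^1 s^{-1+2\si}\Dd(s)^2\,ds$ and applies Gr\"onwall once at the end, whereas your write-up phrases each intermediate step as directly yielding $\les\ep_0$; this is a presentational shorthand, not a gap.
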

\begin{rk}
We do not need the initial assumption for $\Ll_n(1)$ and $\Hh_n(1)$ along $\Si_1$, as both quantities can be controlled by $\Dd(1)$ via the lapse equation. And it is straightforward to verify that the corresponding initial data set satisfies \eqref{Cond:local} with $l=k_*$.
\end{rk}
\begin{rk}
The strong stability condition \eqref{strongstability} is only used when deriving the $L^2$-energy estimates for the Vlasov equation.
\end{rk}
\subsection{Bootstrap Assumptions and Main Intermediate Results}\label{ssecintermdediate}
For a small $\ep>0$, we make the following bootstrap assumption
\begin{align}\label{B1}
    \Ll_n(t)+\Hh_n(t)+\Dd(t)\leq\ep
\end{align}
for all $t\in(T_*, 1]$ with $T_*\in[0, 1)$ being a bootstrap time. To improve this bootstrap bound, we aim to establish three main intermediate results as stated below.
\begin{thm}\label{M1}
Under the initial condition \eqref{initial} in Theorem \ref{mainestimates} and the bootstrap assumption \eqref{B1}, for the lapse $n$ the following estimate holds
\begin{align*}
    \Ll_n(t)+\Hh_n(t)\les \Dd(t).
\end{align*}
\end{thm}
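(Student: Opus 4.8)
The plan is to treat the lapse equation \eqref{2.25} as an elliptic equation for $\vphi$ on each slice $\Si_t$ and to extract the two quantities $\Ll_n$ and $\Hh_n$ from it. Rewrite \eqref{2.25} schematically as $\De_g\vphi-t^{-2}(1+R_0(\vphi))\vphi = \mathcal{S}$, where $\De_g=e_C e_C$ (up to lower-order connection terms), $R_0(\vphi)=(e^{-\vphi}-1+\vphi)/\vphi$ is a smooth bounded nonlinearity that is $O(\vphi)$ by Taylor expansion, and the source $\mathcal{S}$ collects the remaining right-hand side of \eqref{2.25}: $2e_C(\ga_{DDC})$, $-(e_C\psi)e_C\psi$, quadratic $\ga\cdot\ga$ terms, the Maxwell quadratic terms, and $-T_{00}=-\TT^2$. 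The key point is that each piece of $\mathcal{S}$ can be bounded in $L^\infty(\Si_t)$ and in $\Hdot^{\kl-1}(\Si_t)$ (or the relevant negative-order analogue for the lowest terms) by appropriate $t$-weighted combinations of $\Dd(t)$; in particular $\|T_{00}\|_{L^\infty}\les t^{-1-q_M}\Ll_\TT(t)^2/t^{q_M}$-type bounds follow from \Cref{rkTT} and the definitions of $\Ll_\TT,\Hh_\TT$, and crucially the $t$-weights line up so that $t^{-2\si}$ and $t^{q_M-2\si}$ are the right normalizations in $\Ll_n$, and $t^{A_*}$, $t^{A_*+1}$ the right ones in $\Hh_n$.

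First I would establish the $L^\infty$ bound, i.e.\ the $\Ll_n$-part, via a maximum-principle / barrier argument exactly as in the companion step in \cite{FRS}: at an interior maximum of $\vphi$ one has $\De_g\vphi\le 0$, so $t^{-2}(1+R_0(\vphi))\vphi \le -\mathcal{S}$, and since $1+R_0(\vphi)$ is bounded below by a positive constant once $\|\vphi\|_{L^\infty}$ is small (guaranteed by the bootstrap \eqref{B1}), this yields $\|\vphi\|_{L^\infty(\Si_t)}\les t^2\|\mathcal{S}\|_{L^\infty(\Si_t)}$; the symmetric argument at the minimum gives the matching lower bound. Then $t^2\|\mathcal{S}\|_{L^\infty}$ must be shown to be $\les t^{2\si}\Dd(t)$, which is where the arithmetic of the parameters $q, q_M, \de q, \si$ from \eqref{dfqsi}--\eqref{AVTD} enters: e.g.\ $t^2\cdot t^{-1-q_M}$ from the $T_{00}$ term is $t^{1-q_M}$, and one checks $1-q_M\ge 2\si$ using $q_M+\de q\le q-2\si\le 1-2\si$. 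For the gradient piece $t^{q_M-2\si}\|\ev\vphi\|_{L^\infty}$ I would differentiate the equation once (or use interior elliptic $L^\infty$ gradient estimates / the transport-elliptic hybrid as in \cite{FRS}) and repeat the weight bookkeeping.

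For the higher-order part $\Hh_n$ I would commute \eqref{2.25} with $\pr^\io$ for $|\io|\le\kl$, treat the commutators and the $R_0(\vphi)\vphi/t^2$ term as errors controlled by the bootstrap together with product/Moser estimates in $\Hdot^\kl$, and apply standard $\Hdot^{\kl}$ elliptic estimates for the operator $\De_g - t^{-2}$. The operator $\De_g-c\,t^{-2}$ with $c>0$ bounded below is uniformly elliptic on $\Si_t$ with a coercive zeroth-order term, so one gets $\|\vphi\|_{\Hdot^{\kl}(\Si_t)}\les t^2\|\mathcal{S}\|_{\Hdot^{\kl-2}(\Si_t)} + (\text{lower order})$ and similarly one derivative more for $\ev\vphi$; again the $t$-weights must be matched against $\Hh_e,\Hh_\ga,\Hh_k,\Hh_\psi,\Hh_F,\Hh_\TT$, using that the top-order norms carry the uniform blow-up rate $t^{-A_*+O(1)}$. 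A subtlety here is that the source contains $e_C(\ga_{DDC})$, so the worst term in $\mathcal{S}$ has one derivative of $\ga$ at top order, i.e.\ $\|\mathcal{S}\|_{\Hdot^{\kl-1}}$ involves $\|\ga\|_{\Hdot^{\kl}}\les_* t^{-A_*-1}\Dd(t)$; multiplying by $t^2$ gives $t^{-A_*+1}$, which is consistent with $\Hh_n$ requiring $t^{A_*+1}\|\ev\vphi\|_{\Hdot^{\kl}}$ and $t^{A_*}\|\vphi\|_{\Hdot^{\kl}}$.

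\textbf{Main obstacle.} The analytic core — max principle plus elliptic $\Hdot^{\kl}$ estimates for $\De_g-c\,t^{-2}$ — is routine. The real work, and the step most likely to be delicate, is the \emph{bookkeeping of $t$-weights}: one must verify that \emph{every} term appearing on the right of \eqref{2.25}, after multiplication by the elliptic gain $t^2$ and after accounting for the degeneracy of $e_C = e_C^c\pr_c$ (so that $\ev$ costs extra negative powers of $t$ via $\ec_I^i \sim t^{-\qIt}$), lands below the $\Ll_n$ or $\Hh_n$ weight. The genuinely new ingredient compared to \cite{FRS} is the Vlasov source $T_{00}=\TT^2$: one has to confirm that $\|\TT^2\|_{L^\infty}$ and $\|\TT^2\|_{\Hdot^{\kl-2}}$ (the latter via the product rule distributing derivatives on $\TT$, controlled by $\Ll_\TT$ and $\Hh_\TT$ with an interpolation loss of $t^{-\de q}$ as in \eqref{intro:Est:low+1}) fit inside the claimed $\les\Dd(t)$, which requires precisely the inequality $q_M+\de q\le q-2\si < 1$ coming from the strong sub-critical condition \eqref{strongstability}. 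Establishing these weight inequalities — cleanly, uniformly in $A_*,\kl$ — is the heart of \Cref{M1}.
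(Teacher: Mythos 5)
Your proposal follows essentially the same route as the paper: Lemma \ref{maximalprinciple} (a maximum principle for $e_C e_C - t^{-2}$) gives the $\Ll_n$ bound, and the top-order $\Hh_n$ bound comes from testing the $\pr^\io$-commuted lapse equation against $\pr^\io\vphi$ and integrating by parts (which is the energy-method incarnation of the "standard elliptic estimate" you invoke), followed by exactly the $t$-weight bookkeeping and the Vlasov-source treatment you describe. The only cosmetic difference is that the paper keeps the nonlinearity $t^{-2}(1-\vphi-e^{-\vphi})$ as a quadratic source rather than folding it into the elliptic coefficient as your $R_0(\vphi)$, and it records the intermediate coupling $\Ll_n\les\Dd+\ep\Hh_n$ before closing with the separately-proved $\Hh_n\les\Dd$.
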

We will prove Theorem \ref{M1} in Section \ref{secelliptic}. The main idea is to apply the maximum principle and $L^2$-energy estimate for the elliptic lapse equation \eqref{2.25}.
\begin{thm}\label{M2}
Under the same assumptions in \Cref{M1}, for the lower-order dynamical variables, the following estimate holds
\begin{align*}
    \Ll(t)^2\les \ep_0^2+\int_t^1s^{-1+2\si}\Dd(s)^2ds.
\end{align*}
\end{thm}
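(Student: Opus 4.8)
The plan is to derive the $L^\infty$-bounds contained in $\Ll(t)$ from the evolution equations of Proposition \ref{basicequations}, treating each of the six constituent norms $\Ll_e,\Ll_\ga,\Ll_k,\Ll_\psi,\Ll_F,\Ll_\TT$ separately but by the same mechanism: a weighted Gr\"onwall/transport estimate along $e_0 = n^{-1}\pr_t$. For the quantities governed by genuine Bianchi-pair transport equations — namely $\ga$ (through $S_{IJK}$), $k$ (through $\kc$), $e$ (through $\ec$), $\psi$ (through $\ev\psi$ and $\tpsic$) and $F$ — I would commute the relevant equation with one spatial derivative $\pr$ when needed for the $W^{1,\infty}$ part, write it schematically as $\pr_t(t^\a h) = t^\a(\text{spatial derivative of another dynamic variable}) + t^\a(\text{error})$ with $\a$ the appropriate Kasner-type weight (e.g. $\a=1$ for $\kc$, $\a = \qIt+\qJt-\qKt$ for $S_{IJK}$, $\a=q$ for $F$, etc., chosen so that the linear coefficient matches the norm weight), and then integrate from $t$ to $1$. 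The leading principal term on the right is always a spatial derivative of some dynamic variable, which by the AVTD heuristic \eqref{AVTDintro} and the definitions in Definition \ref{Def:Norm} is bounded by $t^{-q_M-\de q}\Dd(s)$ times the appropriate weight; since $q_M+\de q \le q-2\si$ by \eqref{AVTD}, after multiplying by the weight the integrand is $\les s^{-1+2\si}\Dd(s)$, which is exactly the claimed form. The error terms, being quadratic in the dynamic variables and satisfying the null structure (no $\Gaw\cdot\Gaw$), are strictly better: one factor is an $L^\infty$ norm controlled by $\ep$ (bootstrap) and the other contributes another $\Dd(s)$, giving a contribution $\les \ep \int_t^1 s^{-1+2\si}\Dd(s)\,ds$, which is absorbed. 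The lapse contributions $e_I\vphi$, $e_I e_J\vphi$ appearing in the $k$-equation are handled by invoking Theorem \ref{M1}, which bounds $\Ll_n(s)+\Hh_n(s)\les\Dd(s)$, so they too feed into the same integral.

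The genuinely different term is $\Ll_\TT$, the Vlasov contribution. Here I would follow the strategy outlined in Section \ref{Subsec:Control of Vlasov Field}: rather than working with the transport equation for $f$ directly, use the conservation law $\D_\mu T^{\mu\nu}=0$ together with the non-negativity of the diagonal entries $T_{\mu\mu}\ge0$ and the identity $T_{00}=\sum_I T_{II}$. Writing the $\nu=0$ component as $\pr_t T_{00} + t^{-1}T_{00} + \sum_I \tfrac{\qIt}{t}T_{II} = e_C(T_{0C}) + \cdots$ and bounding $\sum_I\tfrac{\qIt}{t}T_{II}\le \tfrac{\max_I\qIt}{t}T_{00}$, one obtains the differential inequality $\pr_t(t^{1+q_M}T_{00}) - \si t^{q_M}T_{00} \ge t^{1+q_M}e_C(T_{0C}) + \cdots$. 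Recalling from Remark \ref{rkTT} that $T_{00} = \TT^2$, integrating from $t$ to $1$ gives $t^{1+q_M}\TT^2(t) + \si\int_t^1 s^{q_M}\TT^2(s)\,ds \les \ep_0^2 + \int_t^1 s^{1+q_M}\|\ev T\|_{L^\infty(\Si_s)}\,ds + \cdots$. The spatial-derivative term $\ev T$ and the error terms (which involve $\ga, k, \vphi$ contracted with $T$) are bounded using the $W^{1,\infty}$ control of $T$ — i.e. $\|\ev T\|_{L^\infty}$ is part of $\Ll_\TT$ after the interpolation bounds of Section \ref{secbootass}, or more precisely one uses the already-available $W^{1,\infty}$-type control coming from $\Dd(s)$ — so that the right side is $\les\ep_0^2 + \int_t^1 s^{-1+2\si}\Dd(s)^2\,ds$ as well. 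Squaring $\Ll_\TT(t) = t^{\frac{1+q_M}{2}}\|\TT\|_{L^\infty}$ and noting $\|\TT^2\|_{L^\infty}\le \|T_{00}\|_{L^\infty}$ then yields $\Ll_\TT(t)^2\les\ep_0^2 + \int_t^1 s^{-1+2\si}\Dd(s)^2\,ds$.

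Combining the six bounds, summing, and using the elementary inequality $(\sum_i a_i)^2 \le 6\sum_i a_i^2$ gives $\Ll(t)^2 \les \ep_0^2 + \int_t^1 s^{-1+2\si}\Dd(s)^2\,ds$, which is the claim.

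The main obstacle I anticipate is the Vlasov piece: one must be careful that the spatial-derivative and error terms in the conservation-law-derived inequality for $T_{00}$ are genuinely controlled by $\Dd(s)$ with the right weight and with no loss of a power of $t$ — this is exactly where the sharp lower-order $L^\infty$ estimate \eqref{IntroEst:T} for $T$ (blow-up rate $t^{-1-q_M}$, hence $\TT\sim t^{-\frac{1+q_M}{2}}$) is being bootstrapped, and the presence of the $+\si$ room built into $q_M$ (so that $q_M > \max_I\qIt$) is essential to convert $\pr_t(t^{1+q_M}T_{00})$ into something with a favorable $-\si t^{q_M}T_{00}$ sink term rather than a dangerous source. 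A secondary subtlety is that the $W^{1,\infty}$ parts of $\Ll_k,\Ll_\ga,\Ll_\psi,\Ll_F$ require commuting the transport equations with $\pr$, which generates additional lower-order products; verifying that these still obey the null structure (and that the extra $\pr(e^i_I)$-type factors arising from $\pr$ hitting the frame coefficients are harmless, being controlled by $\Ll_e$) is the kind of bookkeeping that, while routine in spirit, is where an error could hide.
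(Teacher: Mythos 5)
Your proposal follows the paper's own proof in structure and substance: separate transport estimates for $\ga$ via $S_{IJK}$, and for $e$, $k$, $\psi$, $F$, each integrated from $t$ to $1$ after commuting with $\pr^\io$ for $|\io|\le1$, together with a conservation-law argument for the Vlasov energy-momentum tensor exploiting $T_{\mu\mu}\ge0$, $T_{00}=\sum_I T_{II}$, and $|T_{\mu\nu}|\le T_{00}=\TT^2$. Your Vlasov paragraph reproduces the paper's Proposition \ref{estT} essentially line by line, including the sink term $\si t^{q_M}T_{00}$ furnished by the extra $\si$ built into $q_M$.

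One inconsistency in the transport step deserves flagging. The schematic form $\pr_t(t^\a h)=t^\a V$ with the linear residual dropped is exact only when $\a$ equals the critical exponent $\la_0$ of the equation, and your example ``$\a=\qIt+\qJt-\qKt$ for $S_{IJK}$'' is precisely that $\la_0$; however, this critical choice does not close. The leading source of the $S$-equation (essentially $\ev k$) after one spatial derivative is of size $t^{-q_M-1-\dq}\Dd(t)$ by Lemma \ref{interpolation}, and multiplying by $t^{\la_0}$ leaves a non-integrable power in the worst case $\la_0=q_M+\dq-2\si$. Your stated rule ``match the norm weight'' is the right one and conflicts with this example: one must take $\la=q>\la_0$, retain the residual $(\la-\la_0)t^{\la-1}h$ term (which has the favorable sign), and invoke Lemma \ref{evolution} in its subcritical form \eqref{subcritialevolution}, where the bulk term $(\la-\la_0)\int_t^1 s^{2\la-1}|U|^2$ and the built-in Cauchy--Schwarz convert the linear-in-$\Dd$ source into the quadratic $\int_t^1 s^{-1+2\si}\Dd(s)^2\,ds$ form of the claim. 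The critical form \eqref{criticalevolution} (where $\la=\la_0=1$) is used only for $\kc$ and $\tpsic$, whose sources are the structurally better $t^{-2q}\Dd$, so no slack is needed there. The same correction applies to $e_I\psi$, $\ec$, and $F$, for which the paper likewise chooses $\la$ strictly above the respective $\la_0=\qIt$, $\qIt$, $1-\qIt$ or $\qIt+\qJt$.
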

The proof of Theorem \ref{M2} is provided in Section \ref{sectransport}, by treating the main reduced equations as evolution equations and integrating it from $t$ to $1$.
\begin{thm}\label{M3}
   Under the same assumptions in \Cref{M1}, for the higher-order dynamical variables, the following estimate holds
    \begin{align*}
        \Hh(t)^2\les_*\ep_0^2+\int_t^1s^{-1+2\si}\Dd(s)^2ds.
    \end{align*}
\end{thm}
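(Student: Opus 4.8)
\textbf{Proof proposal for Theorem \ref{M3}.}
The plan is to establish top-order energy estimates for each dynamical variable $k,\ga,e,\psi,F$ and the Vlasov moment $\TT$, treating the geometric/field quantities and the Vlasov field by two genuinely different mechanisms. For the Bianchi-pair variables $(k,\ga)$, $(\ev\psi,e_0\psi)$ and $F$, I would commute the evolution equations \eqref{2.22a}, \eqref{2.22b}, \eqref{2.24}, \eqref{Maxwell1}--\eqref{Maxwell3} with $\pr^\io$ for $|\io|=\kl$, exploit the schematic Bianchi-pair structure $\pr_t A=DB+\cdots$, $\pr_t B=-D^*A+\cdots$ (as sketched in Section \ref{overview}) so that the top-order flux terms cancel after integration over $\Si_t$, and then run the $t^{2A_*+2}$-weighted energy identity. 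The momentum constraint \eqref{2.26b} is used to rewrite $e_C(\pr^\io k_{CI})$ as lower-order, which is what removes the would-be derivative loss; the Hamiltonian constraint \eqref{2.26a} plays the analogous role for $\ga$. After integrating from $t$ to $1$, the positive bulk term $A_*\int_t^1 s^{2A_*}\|\pr^\io k\|^2$ absorbs the borderline term $\int_t^1 s^{2A_*}\|\pr^\io k\|^2$ once $A_*$ is chosen large; the remaining error integrals are estimated using the lower-order bounds from Theorems \ref{M1}, \ref{M2} and the bootstrap assumption \eqref{B1}, producing $\les_*\ep_0^2+\int_t^1 s^{-1+2\si}\Dd(s)^2\,ds$ after tracking the $t$-weights (the null structure of the $\cdots$ terms, i.e.\ absence of $\Gaw\cdot\Gaw$, is what keeps the integrand at worst $s^{-1+2\si}\Dd^2$). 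I would also need top-order estimates for $e$ via commuting \eqref{2.23}, which is a pure transport equation and is harmless, and for $\vphi$ one invokes Theorem \ref{M1}.

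For the Vlasov contribution $\Hh_\TT$, the Bianchi-pair trick is unavailable, so I would instead work with $\f=\sqrt f$, which also solves the transport equation $X(\f)=0$ (equation \eqref{sqfvlasovintro}), commute with $\pr^{\io_1}(p\pr_p)^{\io_2}$ for $|\io_1|+|\io_2|\le\kl$, and use the commutation identity \eqref{Intro:commute p pr p} to produce the model equation \eqref{vlasovintrotop} with a linear coefficient $C_{\io_2}$ satisfying $|C_{\io_2}|\le|\io_2|\,\dq\le\kl\dq$. Multiplying by $2t^{2P}p^0\f^{(\io_1,\io_2)}$ with $P=2A_*+2\dq+q_M+1$, integrating over $T\Si_t$ and integrating by parts in $x$ and $p$ (the transport and $p\pr_p$ terms become boundary/divergence terms; the extra $\sum_I \tfrac{\qIt}{t}\tfrac{(p^I)^2}{(p^0)^2}$ term has a favorable sign), I would obtain $\pr_t(t^P\int p^0|\f^{(\io_1,\io_2)}|^2)+(2C_{\io_2}-P)t^{P-1}\int p^0|\f^{(\io_1,\io_2)}|^2\ge\cdots$, where $P>2C_{\io_2}$ is guaranteed precisely by the choice \eqref{dfkl} with $A_*/\kl\approx\dq$. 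Integrating from $t$ to $1$ and recalling $\TT^{(k_*)}(t,x)^2=\max_{|\io_1|+|\io_2|\le\kl}\int (p^0)|\f^{(\io_1,\io_2)}|^2\,dp$ yields $\Hh_\TT(t)^2\les_*\ep_0^2+\int_t^1 s^{-1+2\si}\Dd(s)^2\,ds$ after handling the error terms $\cdots$ coming from commuting the lower-order coefficients $\ga,k,\ev\vphi$ of \eqref{Vlasoveq} — here one uses the lower-order $W^{1,\infty}$ and $W^{2,\infty}$ bounds \eqref{intro:Est:otherlow}, \eqref{intro:Est:low+1} together with the $L^2_p$-control of $\f^{(\io_1,\io_2)}$, and crucially the factor $p^0$ in the weight lets one dispense with any control on the $p$-support.

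The main obstacle is the Vlasov top-order estimate, and within it the interplay of three constraints: the commutator \eqref{Intro:commute p pr p} forces the integrating power $P\gtrsim 2\kl\dq$, the energy method inevitably produces blow-up $t^{-A_*+O(1)}$ forcing $P\lesssim 2A_*+O(1)$, and the error terms from commuting the geometry into the transport operator must be reabsorbed — which only works because the lower-order norms carry only a $t^{-\de q}$ loss (not worse) when one extra derivative is added, and this $\de q$ loss is exactly matched by the extra $\dq$ sitting in the exponent of $\Hh_\TT$. Getting every $t$-power to balance, so that all error integrands are $\les s^{-1+2\si}\Dd(s)^2$ and hence integrable against the eventual Grönwall argument in Section \ref{secconclusions}, is the delicate bookkeeping; the strong sub-criticality condition \eqref{strongstability} enters exactly here (and, as the remark after Theorem \ref{mainestimates} notes, only here) to guarantee $q_M+\dq<q<1$ so that these powers close. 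A secondary, more routine difficulty is verifying that all error terms in the commuted Bianchi-pair equations genuinely satisfy the null structure (no two factors simultaneously from $\{\kc,\tpsic\}$ at borderline weight), which is needed so that the energy estimates for $(k,\ga,\psi,F)$ do not lose a power of $t$.
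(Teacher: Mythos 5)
Your proposal follows essentially the same route as the paper's proof: $t^{2A_*+2}$-weighted Bianchi-pair energy estimates for $(k,\ga)$, $(e_0\psi,\ev\psi)$ and $F$ with the momentum constraint \eqref{2.26b} removing the would-be derivative loss, a direct transport energy estimate for $e^i_I$, and a weighted energy estimate for $\f^{(\io_1,\io_2)}$ with exponent $P=2A_*+2\dq+q_M+1$ chosen so that $P>2C_{\io_2}$ for all $|\io_2|\le\kl$, followed by summing and taking $A_*$ large to absorb the $\int_t^1 s^{-1}\Hh(s)^2\,ds$ bulk term. One small correction: in Proposition \ref{estkga} only the momentum constraint is invoked — the Hamiltonian constraint \eqref{2.26a} plays no role in the $\ga$ estimate, since the total-derivative cancellation built into the Bianchi-pair structure already eliminates the top-order flux there.
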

The proof of Theorem \ref{M3} is postponed in Section \ref{secenergy}, based on the $t$-weighted $L^2$-energy estimates.
\vspace{2mm}

Furthermore, to extend the bootstrap spacetime beyond $t=T_*$, we have to verify the continuation criterion for the EMSVS as stated in \eqref{Cond:extend} in Theorem \ref{Thm:local}, namely,
\begin{thm}\label{M4}
    Under the same assumptions in \Cref{M1}, we have 
    \begin{equation}\label{Est:localcriterion}
        \sup\limits_{t\in (T_*, 1]} \left( \|g\|_{C^2(\Si_t)}+\|(k,\pr_t\psi, \nab\psi, F)\|_{C^1(\Si_t)}+\left\|\jp^{\mu+l} |p|^{-l} (\pr_x, p\pr_{p})^{\le 1}f\right\|_{L^{\infty}_xL^2_p(T\Si_t)}\right)<\infty.
    \end{equation}
\end{thm}
Theorem \ref{M4} is proved in a similar manner to that of Theorem \ref{M3}. The proof details are put in Section \ref{Sec:M4proof}.

\subsection{Proof of the Main Theorem}\label{ssecmainproof}
Building on the main intermediate results, namely Theorems \ref{M1}--\ref{M3}, we are ready to prove \Cref{mainestimates}. Let $\mathcal{U}$ be the set of all $T_*\in[0, 1)$ such that \eqref{Est:localcriterion} and the following bootstrap bound hold
\begin{align}
    \Ll_n(t)+\Hh_n(t)+\Dd(t)\leq\ep  \qquad \qquad \text{for all} \quad t\in(T_*, 1].\label{B2}
\end{align}
First, utilizing the initial condition \eqref{initial} and Theorem \ref{M1}, we infer that \eqref{B2} holds if $T_*$ is sufficiently close to $1$. Consequently, we have $\inf\,\mathcal{U}<1$. 

Assume that
\begin{align}
    T_0\coloneqq \inf\mathcal{U}\ne 0.
\end{align}
Then, we have $T_0\in \mathcal{U}$. Within the spacetime slab $(T_0,1]\times\mathbb{T}^3$, from Theorem \ref{M2} and Theorem \ref{M3} we obtain
\begin{align*}
    \Dd(t)^2\les_*\ep_0^2+\int_t^1 s^{-1+2\si}\Dd(s)^2ds.
\end{align*}
Employing Gr\"onwall's inequality, we then derive
\begin{align*}
    \Dd(t)\les_*\ep_0.
\end{align*}
Combining with Theorem \ref{M1},  this yields
\begin{align*}
    \Ll_n(t)+\Hh_n(t)+\Dd(t)\les_*\ep_0 \qquad \qquad \text{for all} \quad t\in(T_0, 1].
\end{align*}
According to our choice of $\ep_0,\ep$, i.e., $\ep_0\ll_*\ep$ and Theorem \ref{M4},  utilizing the local existence result in Theorem \ref{Thm:local}, we deduce that for $\de>0$ small enough, $T_0-\de\in\mathcal{U}$.  This contradicts the definition of $T_0$. Therefore, it holds $T_0=0$, which completes the proof of Theorem \ref{mainestimates}.
\section{First Consequences of Bootstrap Assumptions}\label{secbootass}
In this section, we provide some basic inequalities and commutation formulae that we will frequently use in Sections \ref{secelliptic}--\ref{secenergy}.
\subsection{Blow-up Properties of Bootstrap Assumptions}
Recall our bootstrap assumption
\begin{align}\label{boot}
    \Dd(t)+\Ll_n(t)+\Hh_n(t)\leq \ep,\qquad \forall\; t\in(T_*,1].
\end{align}
We first introduce the following conventions for the reduced variables, which are rather helpful when estimating the error terms present in the reduced system from Proposition \ref{basicequations}.
\begin{df}\label{gammag}
We group the dynamic variables as below:
\begin{align*}
    \Gag\coloneqq \{\ec\},\qquad \Gab\coloneqq \{\ga,\,\ev\psi,\, F\},\qquad\Gaw\coloneqq \{\kc,\,\tpsic\}.
\end{align*}
For $k\in\mathbb{N}$, we also denote
\begin{align*}
\Gag^{(k)}\coloneqq \max_{|\io|\leq k}|\pr^{\io}\Gag|,\qquad\Gab^{(k)}\coloneqq \max_{|\io|\leq k}|\pr^{\io}\Gab|,\qquad\Gaw^{(k)}\coloneqq \max_{|\io|\leq k}|\pr^{\io}\Gaw|.
\end{align*}
\end{df}
\begin{lem}\label{decayGagGabGaw}
Under the bootstrap assumption \eqref{boot}, we have the following blow-up estimates:
\begin{align*}
\|\Gag^{(1)}\|_{L^\infty(\Si_t)}&\leq\frac{\Dd(t)}{t^{q_M}}\leq\frac{\ep}{t^{q_M}},\qquad\qquad\;\,\|\Gag^{(\kl)}\|_{L^2(\Si_t)}\leq\frac{\Dd(t)}{t^{A_*+q_M}}\leq\frac{\ep}{t^{A_*+q_M}},\\
\|\Gab^{(1)}\|_{L^\infty(\Si_t)}&\leq\frac{\Dd(t)}{t^q}\leq\frac{\ep}{t^q},\qquad\, \quad\qquad\|\Gab^{(\kl)}\|_{L^2(\Si_t)}\leq\frac{\Dd(t)}{t^{A_*+1}}\leq\frac{\ep}{t^{A_*+1}},\\
\|\Gaw^{(1)}\|_{L^\infty(\Si_t)}&\leq\frac{\Dd(t)}{t}\leq\frac{\ep}{t},\qquad\;\; \quad\qquad\|\Gaw^{(\kl)}\|_{L^2(\Si_t)}\leq\frac{\Dd(t)}{t^{A_*+1}}\leq\frac{\ep}{t^{A_*+1}},\\
\|\vphi^{(1)}\|_{L^\infty(\Si_t)}&\leq\Ll_n(t)t^{2\si}\leq\ep t^{2\si},\qquad\quad\,\|\vphi^{(\kl)}\|_{L^2(\Si_t)}\leq\frac{\Hh_n(t)}{t^{A_*}}\leq\frac{\ep}{t^{A_*}},\\
\|(\ev\vphi)^{(1)}\|_{L^\infty(\Si_t)}&\leq\frac{\Ll_n(t)}{t^{q-2\si}}\leq\frac{\ep}{t^{q-2\si}},\quad\quad\;\|(\ev\vphi)^{(\kl)}\|_{L^2(\Si_t)}\leq\frac{\Hh_n(t)}{t^{A_*+1}}\leq\frac{\ep}{t^{A_*+1}},\\
\|\TT\|_{L^\infty(\Si_t)}&\leq\frac{\Dd(t)}{t^\frac{1+q_M}{2}}\leq\frac{\ep}{t^\frac{1+q_M}{2}},\qquad\;\;\;\|\TT^{(\kl)}\|_{L^2(\Si_t)}\leq\frac{\Dd(t)}{t^{A_*+\dq+\frac{1+q_M}{2}}}\leq\frac{\ep}{t^{A_*+\dq+\frac{1+q_M}{2}}}.
\end{align*}
\end{lem}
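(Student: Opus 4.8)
The statement of \Cref{decayGagGabGaw} is essentially a dictionary that translates the $t$-weighted norms packaged into $\Dd(t)$, $\Ll_n(t)$, $\Hh_n(t)$ in \Cref{Def:Norm} into pointwise (in $t$) blow-up rates for the grouped variables $\Gag,\Gab,\Gaw$, the lapse $\vphi$, and the Vlasov quantity $\TT$. The plan is simply to unwind the definitions line by line. For the $L^\infty$ bounds on $\Gag^{(1)}$, note that $\Gag=\{\ec\}$ and $\Gag^{(1)}=\max_{|\io|\le 1}|\pr^\io\ec|$ is precisely the integrand of $\|\ec\|_{W^{1,\infty}(\Si_t)}$, so $\|\Gag^{(1)}\|_{L^\infty(\Si_t)}=\|\ec\|_{W^{1,\infty}(\Si_t)}=t^{-q_M}\Ll_e(t)\le t^{-q_M}\Ll(t)\le t^{-q_M}\Dd(t)\le t^{-q_M}\ep$, using $\Ll_e\le\Ll\le\Dd$ and the bootstrap bound \eqref{boot}. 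For $\Gab=\{\ga,\ev\psi,F\}$ one combines the three norms $\Ll_\ga=t^q\|\ga\|_{W^{1,\infty}}$, the $t^q\|\ev\psi\|_{W^{1,\infty}}$ piece of $\Ll_\psi$, and $\Ll_F=t^q\|F\|_{W^{1,\infty}}$: taking the max over the three, $\|\Gab^{(1)}\|_{L^\infty}\le t^{-q}(\Ll_\ga+\Ll_\psi+\Ll_F)\le t^{-q}\Dd(t)$. Likewise $\Gaw=\{\kc,\tpsic\}$ is handled by $\Ll_k=t\|\kc\|_{W^{1,\infty}}$ and the $t\|\tpsic\|_{W^{1,\infty}}$ part of $\Ll_\psi$, giving $\|\Gaw^{(1)}\|_{L^\infty}\le t^{-1}\Dd(t)$.

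For the top-order $L^2$ bounds one proceeds identically, now reading off the higher-order norms: $\Hh_e=t^{A_*+q_M}\|\ec\|_{\Hdot^\kl}$ gives $\|\Gag^{(\kl)}\|_{L^2(\Si_t)}=\|\ec\|_{\Hdot^\kl(\Si_t)}=t^{-A_*-q_M}\Hh_e(t)\le t^{-A_*-q_M}\Dd(t)$; the bounds for $\Gab^{(\kl)}$ and $\Gaw^{(\kl)}$ follow from $\Hh_\ga,\Hh_\psi,\Hh_F$ and $\Hh_k,\Hh_\psi$ respectively, all of which carry the weight $t^{A_*+1}$, yielding the common rate $t^{-A_*-1}$. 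One small point to address: $\|\cdot\|_{\Hdot^\kl}$ is defined as $\max_{|\io|=\kl}\|\pr^\io\cdot\|_{L^2}$ while $\Gag^{(\kl)}=\max_{|\io|\le\kl}|\pr^\io\Gag|$ ranges over $|\io|\le\kl$; so strictly one should note that the $W^{1,\infty}$ (equivalently $H^1$-type) control of lower orders together with interpolation — or more simply the fact that the full norm $\Dd$ also controls intermediate orders through the results of this very section — dominates the $|\io|<\kl$ contributions, so the stated $\|\Gag^{(\kl)}\|_{L^2}$ bound with the top-order weight is the governing one. In practice the paper's convention is that the $|\io|=\kl$ term is the worst, and one simply cites \Cref{Def:Norm}.

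The lapse bounds are read off from $\Ll_n(t)=t^{-2\si}\|\vphi\|_{W^{1,\infty}}+t^{q_M-2\si}\|\ev\vphi\|_{L^\infty}$ and $\Hh_n(t)=t^{A_*}\|\vphi\|_{\Hdot^\kl}+t^{A_*+1}\|\ev\vphi\|_{\Hdot^\kl}$. Since both summands in each definition are nonnegative, each individual term is bounded by the whole: $\|\vphi^{(1)}\|_{L^\infty}=\|\vphi\|_{W^{1,\infty}}\le t^{2\si}\Ll_n(t)\le\ep t^{2\si}$, $\|(\ev\vphi)^{(1)}\|_{L^\infty}=\|\ev\vphi\|_{L^\infty}\le t^{-(q_M-2\si)}\Ll_n(t)$; here one uses $q-2\si\ge q_M-2\si$ (which holds since $q_M\le q$, a consequence of \eqref{dfqsi}–\eqref{dfqM}) to weaken the exponent to the stated $t^{-(q-2\si)}$. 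Similarly $\|\vphi^{(\kl)}\|_{L^2}\le t^{-A_*}\Hh_n(t)$ and $\|(\ev\vphi)^{(\kl)}\|_{L^2}\le t^{-A_*-1}\Hh_n(t)$, and then $\Hh_n\le\ep$ by \eqref{boot}. Finally the Vlasov lines: by \Cref{rkTT} and the definitions \eqref{dfTT}, \eqref{dfTTk}, $\|\TT\|_{L^\infty(\Si_t)}=t^{-\frac{1+q_M}{2}}\Ll_\TT(t)\le t^{-\frac{1+q_M}{2}}\Dd(t)$ and $\|\TT^{(\kl)}\|_{L^2(\Si_t)}=t^{-A_*-\dq-\frac{1+q_M}{2}}\Hh_\TT(t)\le t^{-A_*-\dq-\frac{1+q_M}{2}}\Dd(t)$, both then $\le\ep\cdot(\text{same power})$ by \eqref{boot}.

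\textbf{Main obstacle.} There is essentially no analytic obstacle — this lemma is bookkeeping. The only place requiring a moment of care is the $|\io|<\kl$ versus $|\io|=\kl$ discrepancy between the definition of $\Gag^{(\kl)},\Gab^{(\kl)},\Gaw^{(\kl)}$ (all orders $\le\kl$) and the homogeneous Sobolev norms $\Hdot^\kl$ (exactly order $\kl$), and symmetrically making sure the \emph{worst} power of $t$ is the one that survives when several norm pieces with different weights are combined under a single max (as in the $\Gab$ and lapse lines, where one deliberately weakens $q_M$ to $q$, or $q_M-2\si$ to $q-2\si$). Once one fixes the convention that lower-order derivatives are no more singular than top-order ones — which is guaranteed by the structure of $\Dd$ and will be made rigorous by the interpolation estimates of this section — the proof is a one-line invocation of \Cref{Def:Norm}, \Cref{rkTT}, and the bootstrap bound \eqref{boot}.
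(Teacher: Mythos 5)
Your proposal is correct and takes the same approach as the paper, whose entire proof reads ``This follows readily from \eqref{boot} and Definition \ref{gammag}''; you have just unwound the definitions explicitly. You also correctly flag the small mismatch between $\Gag^{(\kl)}=\max_{|\io|\le\kl}|\pr^\io\Gag|$ and $\|\cdot\|_{\Hdot^{\kl}}$ (which sees only $|\io|=\kl$): the intermediate orders $1\le|\io|<\kl$ are indeed not read off directly from a single entry of Definition \ref{Def:Norm}, but are dominated after a standard Gagliardo--Nirenberg interpolation between the $L^\infty$ control of $\Ll$ and the $\dot H^{\kl}$ control of $\Hh$ on the compact torus, exactly as you indicate (the ``results of this very section'' phrasing should be replaced by ordinary interpolation, since Lemma \ref{interpolation} itself invokes Lemma \ref{decayGagGabGaw}; but the surrounding reasoning makes clear that is what you mean).

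One caution, shared by the paper's own formulation: you write $\|(\ev\vphi)^{(1)}\|_{L^\infty}=\|\ev\vphi\|_{L^\infty}$, which drops the $|\io|=1$ contribution $\pr(\ev\vphi)$. That term involves $\pr^2\vphi$ through $e_I^c\pr\pr_c\vphi$, which is not controlled by $\Ll_n(t)=t^{-2\si}\|\vphi\|_{W^{1,\infty}}+t^{q_M-2\si}\|\ev\vphi\|_{L^{\infty}}$ alone; one needs the same interpolation against $\Hh_n$ you used for the $L^2$ lines, with the exponent arithmetic $q_M+\dq\le q-2\si$ and $A_*/\kl<\dq$ from \eqref{AVTD}, \eqref{dfkl} ensuring the result lands at $t^{-(q-2\si)}$. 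Consistent with this, every place the paper actually \emph{uses} the $(\ev\vphi)^{(1)}$ bound (e.g.\ in the displays following \eqref{Eqn:pr1n} and in \eqref{Est:S L Right}) it writes the right-hand side as $\Ll_n+\Hh_n$, not $\Ll_n$. So your reading is faithful to the statement as written, and the slip is in the lemma's bookkeeping rather than in your argument, but it is worth noting that a fully rigorous justification of that line requires $\Hh_n$ as well.
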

\begin{proof}
    This follows readily from \eqref{boot} and Definition \ref{gammag}.
\end{proof}
\begin{rk}
For two quantities $X_1$ and $X_2$, we write
\begin{align*}
    X_1\preceq X_2
\end{align*}
if $X_1^{(\io)}$ blows up slower than $X_2^{(\io)}$ for any $|\io|\leq\kl$. In particular, Lemma \ref{decayGagGabGaw} implies
\begin{align*}
    \Gag\preceq\Gab\preceq\Gaw\qquad\quad\mbox{ and }\qquad\quad n\c\DD_i\preceq\DD_i\quad\;\mbox{ for }\; i=g,b,w.
\end{align*}
\end{rk}
\begin{rk}\label{Gawrk}
    In the sequel, we adopt the following notations:
\begin{itemize}
    \item For a quantity $h$ that exhibits the same blow-up behavior as $\DD_i$ with $i\in \{g,b,w\}$ in Lemma \ref{decayGagGabGaw}, we also write
    \begin{equation*}
        h\in\DD_i,\qquad i=g,b,w.
    \end{equation*}
    \item If $X_{(1)}\preceq X_{(2)}$, we schematically write
    \begin{align*}
        X_{(1)}+X_{(2)}=X_{(2)}.
    \end{align*}
    For example,
    \begin{equation*}
        \Gag+\Gab=\Gab,\qquad\Gag\c\Gab+\Gag\c\Gaw=\Gag\c\Gaw.
    \end{equation*}
\end{itemize}
\end{rk}
\subsection{Interpolation and Product Inequalities}
In this subsection, we list several useful inequalities on $\Si_t$ and on $T\Si_t$. These are helpful when controlling various error terms for the reduced EMSVS.
\begin{lem}\label{Sobolevinterpolation}
Consider $\Si_t$--tangent tensorfields $v, v_1, \cdots, v_R$ with $R\ge 1$. Let $M,M_1,M_2\geq 0$ and $\iota_1,...,\iota_R$ be multi-indices such that $\sum_{r=1}^R|\io_r|=M$. Then the following inequalities hold
\begin{align}
    \|v\|_{W^{M_1,\infty}(\Si_t)}&\les_{M_1,M_2}\|v\|_{L^\infty(\Si_t)}^{1-\frac{M_1}{M_2}}\|v\|_{\dot{W}^{M_2,\infty}(\Si_t)}^\frac{M_1}{M_2}+\|v\|_{L^\infty(\Si_t)},\quad\;\quad M_2\geq M_1,\label{eqinterpolation}\\
    \|v\|_{W^{M_1,\infty}(\Si_t)}&\les_{M_1,M_2}\|v\|_{H^{M_1+2}(\Si_t)}\nonumber\\
    &\les_{M_1,M_2}\|v\|_{L^\infty(\Si_t)}+\|v\|_{\dot{H}^{M_2}(\Si_t)},\qquad\qquad\qquad\quad\;\, M_2\geq M_1+2,\label{eqsobolev}\\
\|\pr^{\iota_1}v_1...\pr^{\iota_R}v_R\|_{L^2(\Si_t)}&\les_{M_1,M_2}\sum_{r=1}^R\|v_r\|_{\Hdot^{M}(\Si_t)}\prod_{s\ne r}\|v_s\|_{L^\infty(\Si_t)}.\label{eqproduct}
\end{align}
\end{lem}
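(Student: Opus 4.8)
The final statement to prove is Lemma~\ref{Sobolevinterpolation}, which collects three standard inequalities on $\Si_t = \Tt^3$: a Gagliardo--Nirenberg-type interpolation bound \eqref{eqinterpolation} between $L^\infty$ and a higher $\dot W^{M_2,\infty}$ norm, a Sobolev embedding \eqref{eqsobolev} in the form $W^{M_1,\infty}\hookrightarrow H^{M_1+2}$, and a product (or ``Leibniz'') estimate \eqref{eqproduct} for a product of $R$ factors differentiated a total of $M$ times. Since $\Tt^3$ is a compact manifold without boundary and the norms are defined componentwise with respect to the fixed frame $\{e_I\}$, all three reduce to classical statements on $\Tt^3$ (or $\RRR^3$ via periodicity) applied to the scalar components of the tensors. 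The plan is to prove each of the three inequalities in turn, in the order $\eqref{eqinterpolation} \to \eqref{eqsobolev} \to \eqref{eqproduct}$.

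\textbf{Interpolation \eqref{eqinterpolation}.} The plan is to reduce to the one-dimensional Landau--Kolmogorov inequality $\|u'\|_{L^\infty(\RRR)} \les \|u\|_{L^\infty}^{1/2}\|u''\|_{L^\infty}^{1/2}$, applied coordinate by coordinate and iterated. First one treats the case $M_2 = M_1+1$: writing $\pr^\io v$ with $|\io| = M_1$, I would freeze all but one coordinate direction and apply the 1D estimate to the function of that single variable, which gives $\|\pr^{\io}v\|_{L^\infty}$ controlled by $\|\pr^{\io'}v\|_{L^\infty}^{1/2}\|\pr^{\io''}v\|_{L^\infty}^{1/2}$ with $|\io'| = M_1-1$, $|\io''| = M_1+1$; a finite induction on $M_1$ then yields the $M_2 = M_1+1$ case with the exponent $M_1/(M_1+1)$. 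The general case $M_2 \geq M_1 + 1$ follows either by iterating this one-step bound or by a direct scaling/convexity argument (interpolate the log-convex sequence $a_j := \|v\|_{\dot W^{j,\infty}}$ and absorb lower-order terms into $\|v\|_{L^\infty}$ via Young's inequality, producing the additive $\|v\|_{L^\infty}$ term on the right). For tensorfields one simply applies this to each frame component and sums, noting $\|v\|_{\dot W^{j,\infty}} = \max_{|\io|=j}\|\pr^\io v\|_{L^\infty}$ by definition.

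\textbf{Sobolev embedding \eqref{eqsobolev}.} The first inequality $\|v\|_{W^{M_1,\infty}} \les \|v\|_{H^{M_1+2}}$ is the standard Sobolev embedding $H^s(\Tt^3) \hookrightarrow L^\infty(\Tt^3)$ for $s > 3/2$, applied to $\pr^\io v$ with $|\io| \le M_1$ (so $s = M_1+2 > M_1 + 3/2$); on the torus this is immediate from summing the Fourier series and Cauchy--Schwarz, or from the classical embedding on $\RRR^3$ after a partition of unity. The second inequality $\|v\|_{H^{M_1+2}} \les \|v\|_{L^\infty} + \|v\|_{\dot H^{M_2}}$ for $M_2 \ge M_1+2$ is the $L^2$-version of the interpolation argument above: on a bounded domain the intermediate norms $\|v\|_{\dot H^j}$ for $0 \le j \le M_2$ are controlled by $\|v\|_{L^2}^{1-j/M_2}\|v\|_{\dot H^{M_2}}^{j/M_2}$ (again Landau--Kolmogorov, now in $L^2$), and $\|v\|_{L^2(\Tt^3)} \les \|v\|_{L^\infty(\Tt^3)}$ since $\Tt^3$ has finite volume; combining and using Young's inequality to absorb the lower powers gives the claim.

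\textbf{Product inequality \eqref{eqproduct}.} This is the Gagliardo--Nirenberg product estimate: for the product $\pr^{\io_1}v_1 \cdots \pr^{\io_R}v_R$ with $\sum_r |\io_r| = M$, I would bound it in $L^2$ by, for each $r$, putting $\pr^{\io_r}v_r$ in an $L^{p_r}$ space and the rest in $L^\infty$; interpolating $\|\pr^{\io_r}v_r\|_{L^{p_r}} \les \|v_r\|_{L^\infty}^{1-|\io_r|/M}\|v_r\|_{\dot W^{M,?}}$-type bounds and matching Hölder exponents $\sum 1/p_r = 1/2$ reduces everything to the inequalities $\|\pr^{\io_r}v_r\|_{L^{2M/|\io_r|}} \les \|v_r\|_{L^\infty}^{1-|\io_r|/M}\|v_r\|_{\dot H^M}^{|\io_r|/M}$, which is the standard interpolation $\dot W^{j,2M/j} \supseteq L^\infty \cap \dot H^M$ on $\Tt^3$. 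Equivalently, and more cleanly, one invokes the well-known Moser/Kato--Ponce-type estimate $\|\pr^{\io_1}v_1\cdots\pr^{\io_R}v_R\|_{L^2} \les \sum_r \|v_r\|_{\dot H^M}\prod_{s\ne r}\|v_s\|_{L^\infty}$ directly. The main obstacle here is purely bookkeeping: correctly choosing the Hölder exponents and the interpolation weights so that the powers of $\|v_r\|_{\dot H^M}$ and $\|v_s\|_{L^\infty}$ come out to exactly $1$ and the stated form; there is no analytic difficulty since everything is on the compact manifold $\Tt^3$. Indeed, across all three parts the only ``real'' content is the one-dimensional Landau--Kolmogorov inequality plus the compactness of $\Tt^3$ (finite volume, so $L^\infty \hookrightarrow L^2$ and lower-order terms are harmless), and I expect the write-up to consist mainly of citing these classical facts and recording the elementary reductions above.
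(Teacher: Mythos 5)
Your proposal is correct and fills in, at a standard level of detail, exactly the classical facts the paper relies on: the paper's proof is a one-line citation that \eqref{eqinterpolation} and \eqref{eqsobolev} ``follow directly from Sobolev and interpolation'' and that \eqref{eqproduct} is Lemma 6.16 of \cite{Ringstrom09}. Your Landau--Kolmogorov argument for \eqref{eqinterpolation}, the Sobolev embedding $H^{M_1+2}(\Tt^3)\hookrightarrow W^{M_1,\infty}(\Tt^3)$ plus $L^2$-interpolation for \eqref{eqsobolev}, and the Gagliardo--Nirenberg/Moser derivation of \eqref{eqproduct} are precisely the standard proofs of those cited results, so this is the same approach made explicit rather than a different one.
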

\begin{proof}
    Note that \eqref{eqinterpolation} and \eqref{eqsobolev} follow directly from Sobolev and interpolation. For the proof of \eqref{eqproduct}, see Lemma 6.16 in \cite{Ringstrom09}.
\end{proof}
\begin{lem}\label{Generalinterpolation}
Let $h, h_2$ be scalar functions defined on $T\Si_t$ and $h_1$ be a scalar function defined on $\Si_t$, and let $M_2\ge M_1\geq 0$. Then, the following inequalities hold:
\begin{align*}
\|h^{(M_1)}\|_{L^{\infty}_x(\Si_t)L^2_p(\mathbb{R}^3)}&\les_{M_1, M_2} \|h\|_{L^{\infty}_x(\Si_t)L^2_p(\mathbb{R}^3)}^{1-\frac{M_1}{M_2}}\|h^{(M_2)}\|_{L^{\infty}_x(\Si_t)L^2_p(\mathbb{R}^3)}^{\frac{M_1}{M_2}},\\
\|h^{(M_1)}\|_{L^{\infty}_x(\Si_t)L^2_p(\mathbb{R}^3)}&\les_{M_1,M_2} \|h\|_{L^{\infty}_x(\Si_t)L^2_p(\mathbb{R}^3)}+\|h\|_{\Hdot^{M_1+2}_{x,p}(T\Si_t)},\\
\left\|h_1^{(M_1)}\c h_2^{(M_2)}\right\|_{L^2_{x,p}(T\Si_t)}&\les_{M_1,M_2}\left\|h_1\right\|_{H^{M_1+M_2}_{x}(\Si_t)} \|h_2\|_{L^{\infty}_x(\Si_t)L^2_p(\mathbb{R}^3)}\\
&+ \|h_1\|_{L^{\infty}_x(\Si_t)}\left\|h_2\right\|_{H^{M_1+M_2}_{x,p}(T\Si_t)}.
\end{align*}
Here $h^{(M)}\coloneqq \{h^{(\io_1, \io_2)}: \ |\io_1|+|\io_2|\le M\}$.
\end{lem}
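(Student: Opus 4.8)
The statement to be proven is Lemma~\ref{Generalinterpolation}, which collects three functional inequalities on the tangent bundle $T\Si_t$: a Gagliardo--Nirenberg-type interpolation inequality for derivatives of a function on $T\Si_t$ measured in the mixed norm $L^\infty_x L^2_p$, a Sobolev-embedding version of the same, and a product (Leibniz-type) estimate mixing a factor $h_1$ on $\Si_t$ with a factor $h_2$ on $T\Si_t$.

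\medskip

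\textbf{Overall strategy.} The plan is to reduce all three inequalities to the standard Euclidean Gagliardo--Nirenberg, Sobolev, and product inequalities on the torus $\Tt^3$ (in the $x$-variable) and on $\RRR^3$ (in the $p$-variable), exactly as in Lemma~\ref{Sobolevinterpolation}, and then to keep careful track of which derivatives act on which variable. The key observation is that the operators appearing in $h^{(\io_1,\io_2)}=\pr^{\io_1}(p\pr_p)^{\io_2}h$ are of two kinds: the $\pr^{\io_1}$ are plain coordinate derivatives in $x$, while $(p\pr_p)^{\io_2}$ is a composition of the first-order operators $p^i\pr_{p^j}$. A single application of $p^i\pr_{p^j}$ has smooth (indeed polynomial, degree $\le 1$) coefficients in $p$, so iterated applications expand, via the Leibniz rule, into finite sums of terms of the form (polynomial in $p$ of controlled degree) times (pure $p$-derivatives $\pr_p^\alpha h$ with $|\alpha|\le|\io_2|$). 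For the interpolation and Sobolev parts, one freezes $x$ and applies the one-variable (in $p\in\RRR^3$) inequalities; for the $L^\infty_x$ part one then takes the supremum over $x\in\Tt^3$. For the Sobolev-embedding version one additionally uses the $x$-Sobolev embedding $H^2(\Tt^3)\hookrightarrow L^\infty(\Tt^3)$ after the $p$-estimate, together with the fact that $x$-derivatives commute with $p$-derivatives. For the product inequality one applies the bilinear estimate \eqref{eqproduct} (the one already cited from Ringström) separately in $x$ and $p$, distributing the $M_1+M_2$ derivatives between $h_1$ and $h_2$ and absorbing the spurious $p$-weights generated by the $(p\pr_p)$ operators into the $p$-integration.

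\medskip

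\textbf{Step order.} First I would establish the expansion lemma: for any $3\times3$ matrix multi-index $\io_2$, $(p\pr_p)^{\io_2}h$ is a finite linear combination $\sum_{|\alpha|\le|\io_2|} c_\alpha(p)\,\pr_p^\alpha h$ with each $c_\alpha$ a polynomial in $p$ of degree at most $|\io_2|$; this is a direct induction on $|\io_2|$ using $[\,p^i\pr_{p^j},\,c(p)\pr_p^\alpha\,]=$ (polynomial)$\cdot\pr_p^{\le|\alpha|}$. Since on $\supp_p$-considerations we are on the mass shell where $p^0\sim|p|$ and the weights are harmless on compact $p$-sets, and more importantly since the final norms already carry the needed $(p^0)^{1/2}$-type weights in the definition of $\TT$ but here $h$ is a generic function, I would in fact just observe that the polynomial weights $c_\alpha(p)$ combined with the $L^2_p$ integration only affect constants — the inequalities are stated for generic $h$ on $T\Si_t$ without claiming uniform-in-$p$ decay, so one should read them with the convention that all norms are finite; the $c_\alpha(p)$ factors are then absorbed because both sides of each inequality contain the same top-order object $h^{(M_2)}$ (resp.\ $h^{(M_1+M_2)}$) which dominates the lower-order weighted pieces. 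Concretely: (1) Freeze $x$; apply the Euclidean Gagliardo--Nirenberg inequality in $p\in\RRR^3$ to $h(x,\cdot)$, noting that $\pr_p^{j}h$ for $|j|\le M_1$ is controlled by interpolation between $h$ and $\pr_p^{M_2}h$; then bound the $x$-derivative pieces $\pr^{\io_1}$ with $|\io_1|\le M_1$ the same way in $x$ after noting $\pr^{\io_1}$ commutes with everything in $p$; finally take $\sup_x$. (2) Same, but replace the $p$-interpolation/$x$-interpolation by Sobolev embedding $\dot H^{M+2}$-type bounds, i.e.\ invoke \eqref{eqsobolev} in each variable. (3) For the product inequality, write $h_1^{(\io_1)}h_2^{(\io_2)}$ with $|\io_1|+|\io_2|\le M_1+M_2$, apply \eqref{eqproduct} with $R=2$ in the $x$-variable to get $\|h_1\|_{H^{M_1+M_2}_x}\|h_2\|_{L^\infty_x L^2_p}$-type terms plus $\|h_1\|_{L^\infty_x}\|h_2\|_{H^{M_1+M_2}_{x,p}}$-type terms, handling the $p$-derivatives inside $h_2$ by first applying \eqref{eqproduct} (or just Leibniz plus the expansion lemma) in $p$ and then $L^2_p$-integrating; since $h_1$ is independent of $p$, it only ever carries $x$-derivatives, which is why the first term on the right has $h_1$ in an $x$-Sobolev norm and the full $(M_1+M_2)$ derivative count lands on the appropriate factor.

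\medskip

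\textbf{Expected main obstacle.} The only genuinely delicate point is the bookkeeping in the product inequality: one must make sure that when $x$-derivatives hit $h_2$ and $p$-derivatives also hit $h_2$, the combined count does not exceed $M_1+M_2$ and that the pure $x$-derivatives which hit $h_1$ are counted in the $H^{M_1+M_2}_x$-norm of $h_1$ rather than a mixed norm — this is exactly the reason the statement distinguishes $H^{M_1+M_2}_x(\Si_t)$ for $h_1$ from $H^{M_1+M_2}_{x,p}(T\Si_t)$ for $h_2$. Beyond that, verifying that the $p$-polynomial coefficients generated by $(p\pr_p)^{\io_2}$ do not destroy the $L^2_p$-integrability requires the (implicit) convention, consistent with the paper's later use for $\TT$ and $f=\f^2$, that one works with functions decaying suitably in $p$ on the mass shell; with that convention in force, all weights are absorbed into the $\les_{M_1,M_2}$ constants and the three inequalities follow. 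None of the remaining computations go beyond the standard interpolation/Sobolev/product toolbox already invoked in Lemma~\ref{Sobolevinterpolation}.
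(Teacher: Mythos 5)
The paper's own ``proof'' of this lemma is a single line (``This follows directly from the mixed Sobolev inequality and interpolation''), so you are supplying detail the authors omitted rather than matching a written argument. Your high-level plan — reduce to Euclidean interpolation/Sobolev/product estimates in $x$ and $p$ separately, noting that $\sup_x(a(x)^{1-\theta}b(x)^\theta)\le(\sup_x a)^{1-\theta}(\sup_x b)^\theta$ handles the $L^\infty_x$ layer and that Landau--Kolmogorov holds for Banach-valued ($L^2_p$-valued) functions — is in the same spirit as what the authors presumably have in mind.

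There is, however, a genuine gap in your ``expansion lemma'' step. After rewriting $(p\pr_p)^{\io_2}h=\sum_{|\alpha|\le|\io_2|}c_\alpha(p)\,\pr_p^\alpha h$ with $\deg c_\alpha\le|\io_2|$, you claim the polynomial weights ``are absorbed because both sides contain the same top-order object $h^{(M_2)}$.'' This does not follow: the left side carries weights of degree $\le M_1$ and the object on the right, $h^{(M_2)}$, after the same expansion carries weights of degree up to $M_2>M_1$, so the degrees do not match, and the standard Gagliardo--Nirenberg inequality (which interpolates $\pr_p^\alpha h$ between $h$ and $\pr_p^\beta h$ \emph{without} weights) gives no control of the intermediate $c_\alpha(p)\pr_p^\alpha h$ in terms of the unweighted $\|h\|_{L^2_p}$ and the differently-weighted top piece when the weights grow at infinity. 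The clean fix — and evidently the route consistent with how the paper manipulates $(p\pr_p)$ everywhere else, including the commutation identity \eqref{commutationpp} and the energy estimate of Proposition \ref{topf} — is not to expand at all. Treat the $L_{ij}:=p^i\pr_{p^j}$ as the primitive derivations: each $L_{ij}$ is first order with bounded divergence ($L_{ij}^*=-L_{ij}-\de_{ij}$ on $L^2(dp)$), and their mutual commutators $[L_{ij},L_{kl}]$ are again linear combinations of the $L$'s. With those two properties the usual integration-by-parts proof of the Kolmogorov--Landau/Gagliardo--Nirenberg interpolation goes through verbatim for the family $\{\pr_{x^i}\}\cup\{L_{ij}\}$, directly producing the first two inequalities (the zeroth-order divergence contribution and the lower-order commutator terms are exactly what gives the harmless $+\|h\|$-type remainder, which matches the inhomogeneous form actually used in Lemma \ref{interpolation}). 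The product inequality then follows from the same interpolation together with Hölder in $(x,p)$, distributing derivatives as in \eqref{eqproduct}; since $h_1$ is $p$-independent, no $p\pr_p$ ever lands on it, which is precisely why its norm on the right is the pure-$x$ Sobolev norm. With the expansion step replaced by this direct argument, your proposal is sound.
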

\begin{proof}
This follows directly from the mixed Sobolev inequality and interpolation.
\end{proof}
The next lemma is a direct implication of Lemma \ref{Sobolevinterpolation} and Lemma \ref{Generalinterpolation}. It establishes the quantitative decaying estimates for the reduced variables with one more derivative than is used at lower orders in Definition \ref{Def:Norm}.
\begin{lem}\label{interpolation}
The following estimates hold for $t\in(T_*,1]$ and $\kl$ large enough:
   \begin{align*}
        \|\Gag\|_{W^{2,\infty}(\Si_t)}&\les t^{-q}\Dd(t),\\
        \|\Gab\|_{W^{2,\infty}(\Si_t)}&\les t^{-q-\dq}\Dd(t),\\
        \|\Gaw\|_{W^{2,\infty}(\Si_t)}&\les t^{-1-\dq}\Dd(t),\\
        \|\Gaw\|_{W^{3,\infty}(\Si_t)}&\les t^{-1-2\dq}\Dd(t),\\
        \|\ev\ga\|_{W^{1,\infty}(\Si_t)}+\|\ev(\ev\psi)\|_{W^{1,\infty}(\Si_t)}+\|\ev F\|_{W^{1,\infty}(\Si_t)}&\les t^{-2q}\Dd(t),\\
        \|\TT\|_{W^{1,\infty}(\Si_t)}&\les t^{-\frac{1+q_M}{2}-\dq}\Dd(t).
    \end{align*}
\end{lem}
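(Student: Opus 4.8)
The strategy is to combine the lower-order $L^\infty$ bounds and the top-order $L^2$ bounds encoded in the norm $\Dd(t)$ with the interpolation inequalities of Lemmas \ref{Sobolevinterpolation} and \ref{Generalinterpolation}, so that adding one or two derivatives beyond what is controlled at lower order costs only a factor $t^{-\de q}$ (via the ratio $A_*/\kl$, which by \eqref{dfkl} is close to but below $\de q$). First I would treat $\Gag$: from Lemma \ref{decayGagGabGaw} we have $\|\Gag\|_{W^{1,\infty}(\Si_t)}\les t^{-q_M}\Dd(t)$ and $\|\Gag\|_{\Hdot^{\kl}(\Si_t)}\les t^{-A_*-q_M}\Dd(t)$, hence by the Sobolev embedding \eqref{eqsobolev} also $\|\Gag\|_{W^{\kl-2,\infty}(\Si_t)}\les t^{-A_*-q_M}\Dd(t)$. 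Applying the interpolation inequality \eqref{eqinterpolation} with $M_1=2$, $M_2=\kl-2$ gives
\begin{align*}
\|\Gag\|_{W^{2,\infty}(\Si_t)}\les_* \big(t^{-q_M}\Dd(t)\big)^{1-\frac{2}{\kl-2}}\big(t^{-A_*-q_M}\Dd(t)\big)^{\frac{2}{\kl-2}}+t^{-q_M}\Dd(t)\les t^{-q_M-\frac{2A_*}{\kl-2}}\Dd(t),
\end{align*}
and since $\frac{2A_*}{\kl-2}$ can be made smaller than $q-q_M$ by first taking $A_*$ large and $\kl$ correspondingly large (recall $q_M<q$ and $\frac{A_*}{\kl}<\de q$ with margins to spare in \eqref{dfkl}), the exponent is $\le -q$, which yields $\|\Gag\|_{W^{2,\infty}(\Si_t)}\les t^{-q}\Dd(t)$. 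The same scheme applied to $\Gab$ (lower order $t^{-q}$, top order $t^{-A_*-1}$) and to $\Gaw$ (lower order $t^{-1}$, top order $t^{-A_*-1}$), interpolating between first derivatives and order $\kl-2$ derivatives, produces a gain of $t^{-\frac{2A_*}{\kl-2}}\le t^{-\de q}$, giving the second and third displayed bounds $\|\Gab\|_{W^{2,\infty}}\les t^{-q-\de q}\Dd(t)$ and $\|\Gaw\|_{W^{2,\infty}}\les t^{-1-\de q}\Dd(t)$.

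For the fourth line, I observe that $\ev$ applied to any of $\ga$, $\ev\psi$, $F$ is, schematically, a first spatial derivative of a $\Gab$-type quantity up to lower-order terms involving $\ec$, so that $\ev\ga,\ \ev(\ev\psi),\ \ev F$ have the same lower-order behavior $t^{-q}\cdot\|\Gab\|_{W^{1,\infty}}\sim t^{-2q}$-ish and one fewer available derivative at top order; applying \eqref{eqinterpolation} again between the (already controlled) first-derivative $L^\infty$ norm $\les t^{-q}\cdot t^{-q}\Dd(t)$ and the top-order $L^2$/Sobolev norm $\les t^{-A_*-1}\Dd(t)$ and absorbing the resulting $t^{-\frac{A_*}{\kl}}$-type factor into the crude exponent $-2q$ (using $q_M+\de q<q$, \eqref{AVTD}) gives $\les t^{-2q}\Dd(t)$. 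Here one must be slightly careful to write out that $\ev$ of a frame component is $e_I^c\pr_c$ of that component, so a factor $\|\ec\|_{L^\infty}+1\les 1$ and a first coordinate derivative appear, and then the already-established $W^{2,\infty}$ bounds on $\Gab$ feed the estimate cleanly.

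For the last line, concerning $\TT=\|(p^0)^{1/2}\f\|_{L^2_p}$, I would use Lemma \ref{Generalinterpolation} instead: from $\Ll_\TT$ and $\Hh_\TT$ in Definition \ref{Def:Norm} we have $\|\TT\|_{L^\infty(\Si_t)}\les t^{-\frac{1+q_M}{2}}\Dd(t)$ and $\|\TT^{(\kl)}\|_{L^2(\Si_t)}\les t^{-A_*-\de q-\frac{1+q_M}{2}}\Dd(t)$; since $\TT^{(\io_1)}$ controls the $x$-derivatives of $\TT$ (by Minkowski's integral inequality, $\pr^{\io_1}_x$ commutes past the $L^2_p$ norm up to terms where $\pr_x$ hits $p^0$, which is harmless as $p^0$ is $x$-independent in the frame variables), the mixed Sobolev inequality plus the interpolation in Lemma \ref{Generalinterpolation} between $M_1=1$ and $M_2=\kl$ gives $\|\TT\|_{W^{1,\infty}(\Si_t)}\les_* (t^{-\frac{1+q_M}{2}}\Dd(t))^{1-\frac1\kl}(t^{-A_*-\de q-\frac{1+q_M}{2}}\Dd(t))^{\frac1\kl}+\ldots\les t^{-\frac{1+q_M}{2}-\frac{A_*+\de q}{\kl}}\Dd(t)$, and $\frac{A_*+\de q}{\kl}\le \de q$ for $\kl$ large (again by \eqref{dfkl}), producing the claimed bound $\|\TT\|_{W^{1,\infty}(\Si_t)}\les t^{-\frac{1+q_M}{2}-\de q}\Dd(t)$.

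The main obstacle, and the point requiring the most care, is bookkeeping the loss exponents: the interpolation gain is governed by $A_*/\kl$, which in this paper is \emph{not} small (it is $\approx\de q$, see Remark \ref{rkA*}), so one cannot be cavalier and must verify at each step that the produced factor $t^{-cA_*/\kl}$ with the relevant small integer $c$ still lands below the stated crude exponent — this is exactly where the slack in the double inequality \eqref{dfkl} ($\max\{\qIt-\qJt\}<\frac{A_*-5}{\kl+5}<\frac{A_*+5}{\kl-5}<\de q$) is consumed, and where the ordering "$A_*$ large first, then $\kl$, then $\ep_0$ tiny" from Remark \ref{rkA*} is essential. A secondary, purely technical point is that $\Gab$, $\ev\ga$ etc.\ are frame components, so before interpolating one should convert frame derivatives $\ev$ to coordinate derivatives $\pr$ modulo harmless factors of $1+\|\ec\|_{W^{1,\infty}}$, which is permissible precisely because the $\Gag$ bound just proved keeps those factors $\les 1$.
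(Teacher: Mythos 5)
The proposal's approach for the first three displayed bounds contains a genuine error in the interpolation scheme that doubles the loss exponent, and this is fatal here precisely because, as you yourself note, $A_*/\kl$ is \emph{not} small in this paper.

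Concretely: to bound $\|\Gag\|_{W^{2,\infty}}$, you apply \eqref{eqinterpolation} with $M_1=2$, $M_2=\kl-2$, i.e.\ you interpolate the second-derivative $L^\infty$ norm between the \emph{zeroth}-order $L^\infty$ norm and the top-order $\dot W^{\kl-2,\infty}$ norm, which puts the weight $\frac{2}{\kl-2}$ on the top-order factor and thus costs $t^{-\frac{2A_*}{\kl-2}}\approx t^{-2\de q}$. But \eqref{dfkl} forces $\frac{A_*}{\kl}>\max_{I,J}\{\qIt-\qJt\}=\de q-\si$, so $\frac{2A_*}{\kl-2}$ is necessarily $>2(\de q-\si)$ and cannot be shrunk by choosing $A_*,\kl$; meanwhile, the available budget is $q-q_M=1-q_{\max}-3\si$, which for generic strong sub-critical exponents is \emph{much} less than $2\de q$. (For instance, with $(\qit)=(0.5,0.4,0.1)$ one has $\si=1/60$, $q-q_M=0.45$ while $2\max\{\qIt-\qJt\}=0.8$, so your inequality $\frac{2A_*}{\kl-2}\le q-q_M$ is impossible.) The correct move, which the paper makes, is to interpolate between the \emph{first}-order $W^{1,\infty}$ bound (which is in the definition of the lower-order norms) and the top-order norm, effectively applying \eqref{eqinterpolation} to $\pr v$ rather than $v$. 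This puts the weight $\frac{1}{\kl-3}$ on the top-order factor, giving a cost of only $t^{-\frac{A_*}{\kl-3}}\approx t^{-\de q}$, and then $q_M+\de q\le q-2\si<q$ from \eqref{AVTD} closes the estimate. The same factor-of-two overshoot appears in your treatment of $\Gab$ and $\Gaw$: the claimed bounds have exactly a $t^{-\de q}$ degradation over the lower-order norm, and your scheme would deliver $t^{-2\de q}$, which is strictly worse and does not prove the statement.

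Your treatment of $\|\TT\|_{W^{1,\infty}}$ and the sketch for $\ev\ga$, $\ev(\ev\psi)$, $\ev F$ are in the right spirit (the $\TT$ case uses $M_1=1$ and so pays only $\approx\de q$, matching the paper), but the argument for the fourth displayed line is too vague to check and would also need the correct (first-to-top) interpolation for the underlying $\|\ga\|_{W^{2,\infty}}$, etc. The conversion of frame derivatives $\ev$ to coordinate derivatives $\pr$ at cost $t^{-q_M}$ is correctly identified, though you misstate the resulting lower-order exponent as $t^{-2q}$ instead of the sharper $t^{-q_M-q}$. In short, the main missing ingredient is that one must always interpolate \emph{from one derivative below the target}, not from zero, because in this paper the interpolation loss per derivative is $\approx\de q$ and the exponent budget between lower-order and the claim is only one such unit.
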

\begin{proof}
Employing Lemma \ref{decayGagGabGaw} and Lemma \ref{Sobolevinterpolation}, we obtain
\begin{align*}
    \|\Gag\|_{W^{2,\infty}(\Si_t)}&\les\|\Gag\|_{W^{1,\infty}(\Si_t)}\|^{1-\frac{1}{k_*-3}}\|\Gag\|_{\dot{W}^{k_*-2,\infty}(\Si_t)}^\frac{1}{k_*-3}+\|\Gag\|_{W^{1,\infty}(\Si_t)}\\
    &\les\|\Gag\|_{W^{1,\infty}(\Si_t)}\|^{1-\frac{1}{k_*-3}}\|\Gag\|_{\Hdot^{k_*,\infty}(\Si_t)}^\frac{1}{k_*-3}+\|\Gag\|_{W^{1,\infty}(\Si_t)}\\
    &\les\left(\frac{\Dd(t)}{t^{q_M}}\right)^{1-\frac{1}{k_*-3}}\left(\frac{\Dd(t)}{t^{A_*+q_M}}\right)^{\frac{1}{\kl-3}}+\frac{\Dd(t)}{t^{q_M}}\\
    &\les\frac{\Dd(t)}{t^{\frac{A_*}{\kl-3}+q_M}}+\frac{\Dd(t)}{t^{q_M}}\les\frac{\Dd(t)}{t^{q_M+\dq}}\les t^{-q}\Dd(t),
\end{align*}
where in the last line we use \eqref{AVTD} and \eqref{dfkl}. Proceeding in a similar manner as above, we also derive
\begin{align*}
    \|\Gab\|_{W^{2,\infty}(\Si_t)}\les t^{-q-\dq}\Dd(t),\quad\;\|\Gaw\|_{W^{2,\infty}(\Si_t)}\les t^{-1-\dq}\Dd(t),\quad\;\|\Gaw\|_{W^{3,\infty}(\Si_t)}\les t^{-1-2\dq}\Dd(t).
\end{align*}
We move to estimate $\ev\ga, \ev(\ev\psi)$ and $\ev F$. For $\ev\ga$, by utilizing  Lemma \ref{decayGagGabGaw}, Lemma \ref{Sobolevinterpolation}, and noting \eqref{AVTD} and \eqref{dfkl} again, we deduce
\begin{align*}
    \|\ev\ga\|_{W^{1,\infty}(\Si_t)}&\les t^{-q_M}\|\ga\|_{W^{2,\infty}(\Si_t)}\\
    &\les t^{-q_M}\|\ga\|_{W^{1,\infty}(\Si_t)}^{1-\frac{1}{\kl-3}}\|\ga\|_{\Hdot^{\kl}(\Si_t)}^\frac{1}{\kl-3}+t^{-q_M}\|\ga\|_{W^{1,\infty}(\Si_t)}\\
    &\les\frac{1}{t^{q_M}}\left(\frac{\Dd(t)}{t^{q}}\right)^{1-\frac{1}{k_*-3}}\left(\frac{\Dd(t)}{t^{A_*+1}}\right)^{\frac{1}{\kl-3}}+\frac{\Dd(t)}{t^{q_M+q}}\\
    &\les\frac{\Dd(t)}{t^{\frac{A_*+1-q}{\kl-3}+q_M+q}}\les t^{-2q}\Dd(t).
\end{align*}
The similar argument also gives
\begin{align*}
    \|\ev(\ev\psi)\|_{W^{1,\infty}(\Si_t)}+\|\ev F\|_{W^{1,\infty}(\Si_t)}\les t^{-2q}\Dd(t).
\end{align*}
Finally, applying Lemma \ref{Generalinterpolation}, together with \eqref{dfkl}, we get the desired bound of $\|\TT\|_{W^{1,\infty}(\Si_t)}$ as below:
\begin{align*}
    \|\TT\|_{W^{1,\infty}(\Si_t)}&\les\|\TT\|_{L^{\infty}(\Si_t)}^{1-\frac{1}{\kl-2}}\|\TT\|_{\dot{W}^{\kl-2,\infty}(\Si_t)}^{\frac{1}{\kl-2}}+\|\TT\|_{L^{\infty}(\Si_t)}\\
    &\les\|\TT\|_{L^{\infty}(\Si_t)}^{1-\frac{1}{\kl-2}}\|\TT\|_{\Hdot^{\kl,\infty}(\Si_t)}^{\frac{1}{\kl-2}}+\|\TT\|_{L^{\infty}(\Si_t)}\\
    &\les\left(\frac{\Dd(t)}{t^\frac{1+q_M}{2}}\right)^{1-\frac{1}{\kl-2}}\left(\frac{\Dd(t)}{t^{A_*+\dq+\frac{1+q_M}{2}}}\right)^{\frac{1}{\kl-2}}+\frac{\Dd(t)}{t^\frac{1+q_M}{2}}\\
    &\les\frac{\Dd(t)}{t^{\frac{A_*+\dq}{\kl-2}+\frac{1+q_M}{2}}}+\frac{\Dd(t)}{t^\frac{1+q_M}{2}}\\
    &\les t^{-\frac{1+q_M}{2}-\dq}\Dd(t).
\end{align*}
This finishes the proof of Lemma \ref{interpolation}.
\end{proof}
Based on the decay estimate of $\TT$ as shown in Lemma \ref{decayGagGabGaw} and Lemma \ref{interpolation}, we are able to prove the following bound for the energy-momentum tensor $T$.
\begin{lem}\label{topT}
For the energy-momentum tensor $T$, the following estimates hold true
\begin{align}
t^{1+q}\|T\|_{W^{1,\infty}(\Si_t)}&\les\Dd(t)^2,\label{T1}\\
t^{A_*+1+q}\|T^{(k_*)}\|_{L^2(\Si_t)}&\les_*\Dd(t)^2.\label{Tk*}
\end{align}
\end{lem}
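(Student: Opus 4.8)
\textbf{Proof plan for Lemma \ref{topT}.}

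The plan is to reduce both estimates to the already-established controls on $\TT$ and $\TT^{(k_*)}$ from \Cref{decayGagGabGaw} and \Cref{interpolation}, using the representation \eqref{expressionT} together with Cauchy--Schwarz in the momentum variable. The starting point is \Cref{rkTT}, which gives the pointwise bound $|T_{\mu\nu}(t,x)|\leq T_{00}(t,x)=\TT^2(t,x)$. Differentiating \eqref{expressionT} in $x$, every $\pr_x$-derivative either hits the weight $\tfrac{p_\mu p_\nu}{p^0}$ — which in the orthonormal frame involves $e_I^i$ and hence factors of $\ec$, already controlled in $W^{1,\infty}$ and $\Hdot^{k_*}$ — or hits $f=(\sqrt f)^2$, producing $2\sqrt f\,\pr_x\sqrt f$. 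Writing $p_\mu p_\nu/p^0$ schematically as a polynomial in $p^I/p^0$ with coefficients built from the frame components, one sees that $\pr_x T$ is a sum of terms of the form $\int_{\mathbb{R}^3}(\text{frame factor})\,\sqrt f\,(\sqrt f)^{(\io_1,\io_2)}\,p^0\,dp$ with $|\io_1|+|\io_2|\le 1$; here the $p\pr_p$-derivatives enter because $\pr_x(p^I/p^0)=0$ but the chain rule through the frame reshuffles momentum weights. Estimating by Cauchy--Schwarz in $p$ then yields a bound by $\|\TT\|_{L^\infty}\|\TT^{(1)}\|_{L^\infty}$ (times a harmless power of the frame factors, which are $\lesssim 1+\ep$ by the bootstrap), and inserting $\|\TT\|_{L^\infty(\Si_t)}\le t^{-\frac{1+q_M}{2}}\Dd(t)$ and $\|\TT\|_{W^{1,\infty}(\Si_t)}\lesssim t^{-\frac{1+q_M}{2}-\dq}\Dd(t)$ from \Cref{interpolation} gives
\begin{align*}
t\|T\|_{W^{1,\infty}(\Si_t)}\lesssim t\cdot t^{-(1+q_M)-\dq}\Dd(t)^2 = t^{-q_M-\dq}\Dd(t)^2 \lesssim t^{-q}\Dd(t)^2,
\end{align*}
where the last step uses $q_M+\dq\le q-2\si< q$ from \eqref{AVTD}. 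This is \eqref{T1}.

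For the top-order bound \eqref{Tk*} I would proceed identically but track the distribution of derivatives. Applying $\pr^\io$ with $|\io|\le k_*$ to \eqref{expressionT} and using the Leibniz rule, each resulting term is (frame factors with total $\le k_*$ derivatives) times $\int_{\mathbb{R}^3} (\sqrt f)^{(\io_1,\io_2)}(\sqrt f)^{(\io_1',\io_2')}p^0\,dp$ with $|\io_1|+|\io_2|+|\io_1'|+|\io_2'|\le k_*$. By Cauchy--Schwarz in $p$ this is $\le \TT^{(\io_1,\io_2)}\TT^{(\io_1',\io_2')}$ pointwise in $x$, and then by the product inequality \eqref{eqproduct} (or the mixed version in \Cref{Generalinterpolation}) together with the interpolation estimates one distributes: the highest derivative count goes on one factor, controlled in $L^2$ by $\|\TT^{(k_*)}\|_{L^2(\Si_t)}\le t^{-A_*-\dq-\frac{1+q_M}{2}}\Dd(t)$, while the remaining low-order factor is taken in $L^\infty$ and bounded by $\|\TT\|_{W^{1,\infty}(\Si_t)}\lesssim t^{-\frac{1+q_M}{2}-\dq}\Dd(t)$ (or simply $\|\TT\|_{L^\infty}$ when no derivative lands there); the frame factors contribute at worst $\|\ec\|_{\Hdot^{k_*}}$-type terms, which are controlled by $\Dd(t)$ with a strictly milder weight $t^{-A_*-q_M}$ and hence are absorbed. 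Collecting the powers of $t$ gives a bound $\lesssim_* t^{-A_*-2\dq-(1+q_M)}\Dd(t)^2$, and since $2\dq + q_M \le \dq + q - 2\si < \dq + q$ — more than enough room — this is $\lesssim_* t^{-A_*-1-q}\Dd(t)^2$, yielding \eqref{Tk*} after multiplying by $t^{A_*+1+q}$.

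The main obstacle, and the step deserving the most care, is bookkeeping the momentum weights: one must verify that every $x$-derivative falling on the integrand \eqref{expressionT} can genuinely be re-expressed in terms of the $p\pr_p$-derivatives $\f^{(\io_1,\io_2)}$ appearing in the definitions \eqref{dfTTiota}--\eqref{dfTTk}, rather than producing uncontrolled weights like $p^I$ or $1/p^0$ standing alone. This is exactly the point of working with $T_{\mu\nu}=\int \tfrac{f p_\mu p_\nu}{p^0}dp$ in the orthonormal frame: $p_\mu/p^0$ and $p_\mu p_\nu/(p^0)^2$ are \emph{bounded} homogeneous-degree-zero functions on the mass shell, so all momentum factors organize into the scale-invariant combinations $p^i\pr_{p^j}$, and no growth in $|p|$ is ever needed — consistent with the non-compact-support feature emphasized in the introduction. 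Once this structural point is in place, the remaining estimates are a routine combination of Cauchy--Schwarz in $p$, the product/interpolation inequalities of \Cref{Sobolevinterpolation} and \Cref{Generalinterpolation}, the bootstrap bounds of \Cref{decayGagGabGaw}, and the refined bounds of \Cref{interpolation}, with the power-counting closing thanks to the slack $q_M+\dq\le q-2\si$ from \eqref{AVTD}.
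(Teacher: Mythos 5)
Your approach is fundamentally the same as the paper's: start from $|T_{\mu\nu}|\le T_{00}$, apply Cauchy--Schwarz in $p$ to write pointwise bounds in terms of $\TT$ and its derivatives via \eqref{dfTT}--\eqref{dfTTiota}, and then feed in the estimates of \Cref{decayGagGabGaw} and \Cref{interpolation}. Your treatment of \eqref{T1} is correct, and your structural observation (that $p_\mu/p^0$ is a bounded homogeneous-degree-zero weight on the mass shell, so no $|p|$-growth is needed) is exactly the right reason the argument works with non-compact momentum support. The aside about frame factors is harmless: in the orthonormal-frame formulation \eqref{expressionT} the derivatives $\pr^\io$ act only on $f(t,x,p)$, so no $\ec$ factors appear.

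However, your power counting for \eqref{Tk*} contains a genuine error. You allow the ``remaining low-order factor'' to be bounded by $\|\TT\|_{W^{1,\infty}(\Si_t)}\lesssim t^{-\frac{1+q_M}{2}-\dq}\Dd(t)$ (one derivative on it), arriving at $t^{-A_*-2\dq-(1+q_M)}\Dd(t)^2$, and then claim ``$2\dq+q_M\le\dq+q-2\si<\dq+q$ --- more than enough room'' to conclude $\lesssim_* t^{-A_*-1-q}\Dd(t)^2$. That chain establishes only $\dq+q_M<q$, not the inequality you actually need, namely $2\dq+q_M\le q$; and the latter is \emph{false} in general. Indeed \eqref{AVTD} gives $q-q_M-\dq\ge 2\si$, but $q-q_M-2\dq\ge 2\si-\dq$ can be negative, since $\dq=\max_{I,J}\{\qIt-\qJt\}+\si$ is not small for anisotropic exponents near the boundary of the strong sub-critical regime (e.g.\ $(\qIt)=(0.55,\,0.25,\,0.20)$ gives $\si=\tfrac{1}{60}$, $\dq\approx 0.37\gg 2\si$, and $2\dq+q_M\approx 1.3>q\approx 0.97$). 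So your bound would not close the required estimate for all admissible Kasner data.

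The fix is precisely what the paper does: the product inequality \eqref{eqproduct} (Moser/Gagliardo--Nirenberg) shifts \emph{all} the $x$-derivatives to one factor and leaves the other with \emph{no} derivatives, so the correct bound is
\begin{align*}
\|T^{(k_*)}\|_{L^2(\Si_t)}\les_*\|\TT^{(k_*)}\|_{L^2(\Si_t)}\,\|\TT\|_{L^\infty(\Si_t)}\les_* t^{-A_*-\dq-\frac{1+q_M}{2}}\cdot t^{-\frac{1+q_M}{2}}\Dd(t)^2=t^{-A_*-1-q_M-\dq}\Dd(t)^2,
\end{align*}
and since $q_M+\dq\le q-2\si<q$ by \eqref{AVTD}, multiplying by $t^{A_*+1+q}$ gives \eqref{Tk*}. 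There is no case in which a derivative genuinely lands on the ``low'' factor after the product inequality is applied, so the extra $t^{-\dq}$ never arises.
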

\begin{proof}
In view of the definitions of $T$ and $\TT$ in \eqref{dfT} and \eqref{dfTT}, via applying Lemma \ref{decayGagGabGaw} and Lemma \ref{interpolation}, we derive
\begin{align*}
    |T^{(1)}|\les\left\|\f^{(1)}(p^0)^\frac{1}{2}\right\|_{L^2_p(\mathbb{R}^3)}\left\|(p^0)^\frac{1}{2}\f\right\|_{L^2_p(\mathbb{R}^3)}\les|\TT^{(1)}|\c|\TT|\les\frac{\Dd(t)}{t^{\frac{1+q_M}{2}+\dq}}\frac{\Dd(t)}{t^{\frac{1+q_M}{2}}}\les\frac{\Dd(t)^2}{t^{1+q}},
\end{align*}
which implies \eqref{T1}. Next, from \eqref{dfT}, \eqref{dfTT} again, and using \eqref{eqproduct} in Lemma \ref{Sobolevinterpolation}, we deduce
\begin{align*}
|T^{(\kl)}|\les_*\left\|\f^{(\kl)}(p^0)^\frac{1}{2}\right\|_{L^2_p(\mathbb{R}^3)}\left\|(p^0)^\frac{1}{2}\f\right\|_{L^2_p(\mathbb{R}^3)}\les_*|\TT^{(k_*)}|\c|\TT|.
\end{align*}
Combining with Lemma \ref{decayGagGabGaw} and \eqref{AVTD}, we thus arrive at
\begin{align*}
    \|T^{(k_*)}\|_{L^2(\Si_t)}\les_* \|\TT^{(k_*)}\|_{L^2(\Si_t)}\|\TT\|_{L^\infty(\Si_t)}\les_*\frac{\Dd(t)}{t^{A_*+\dq+\frac{q_M+1}{2}}}\frac{\Dd(t)}{t^{\frac{q_M+1}{2}}}\les_*\frac{\Dd(t)^2}{t^{A_*+q+1}},
\end{align*}
which gives \eqref{Tk*} as desired.
\end{proof}
\subsection{Commutation Formulae}
To derive estimates for the derivatives of reduced variables, we need to commute the reduced equations with the transported spatial coordinate derivative $\{\pr_i\}_{i=1,2,3}$ up to the top order. Note that by the expansion $e_I=e_I^c\pr_c$, we readily get the following commutation identity
\begin{align}\label{commutation1}
    [\pr^\io,e_I]=\sum_{\io_1\cup\io_2=\io,|\io_2|<|\io|}(\pr^{\io_1}e_I^c)\pr^{\io_2}\pr_c.
\end{align}
We also make use of the commutation identity involving $\partial_t$ as established in \eqref{eq2.35}, i.e.,
\begin{equation}\label{commutation2}
    [\pr_t,e_I]=nk_{IC}e_C^c\pr_c,
\end{equation}
For the commutator of $e_I$ with higher-order spatial coordinate derivatives, by employing \eqref{commutation1} repeatedly, we obtain
\begin{prop}\label{commutation}
The following schematic commutation formula holds true
\begin{align*}
    [\pr^\io,e_I]v=\left(\Gag^{(1)}\c v^{(1)}\right)^{(\io-1)}.
\end{align*}
\end{prop}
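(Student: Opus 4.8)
The plan is to prove the schematic commutation formula
\[
[\pr^\io,e_I]v=\left(\Gag^{(1)}\c v^{(1)}\right)^{(\io-1)}
\]
by induction on $|\io|$, peeling off one coordinate derivative at a time and repeatedly invoking the first-order identity \eqref{commutation1}. The base case $|\io|=1$ is exactly \eqref{commutation1}: since $e_I=e_I^c\pr_c$, we have $[\pr_i,e_I]v=(\pr_i e_I^c)\pr_c v$, and recalling $e_I^c=\et_I^c+\ec_I^c$ with the Kasner frame component $\et_I^c=t^{-\qIt}\de_I^c$ being $x$-independent (Proposition \ref{kasnerquantities}), we get $\pr_i e_I^c=\pr_i\ec_I^c$, which is a component of $\Gag^{(1)}$; thus $[\pr_i,e_I]v$ is of the schematic form $\Gag^{(1)}\c v^{(1)}$, matching the claimed form with $\io-1=0$.

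For the inductive step, write $\pr^\io=\pr_i\pr^{\io'}$ with $|\io'|=|\io|-1$ and expand
\[
[\pr^\io,e_I]v=\pr_i\big([\pr^{\io'},e_I]v\big)+[\pr_i,e_I]\big(\pr^{\io'}v\big).
\]
The second term is, by the base case, $\big(\pr_i\ec\big)\c\pr_c\pr^{\io'}v$, i.e. of the form $\Gag^{(1)}\c v^{(\io'+1)}$, which is subsumed in $\left(\Gag^{(1)}\c v^{(1)}\right)^{(\io-1)}$ after distributing the $|\io'|=|\io|-1$ derivatives (all of them landing on $v^{(1)}$). For the first term, the induction hypothesis gives $[\pr^{\io'},e_I]v=\left(\Gag^{(1)}\c v^{(1)}\right)^{(\io'-1)}$, and applying $\pr_i$ and using the Leibniz rule produces a sum of terms in which the single extra derivative falls either on a factor of $\Gag^{(1)}$ (raising it to $\Gag^{(2)}$, still $\preceq\Gag^{(\io-1)}$) or on $v^{(1)}$; either way the result is schematically of the form $\left(\Gag^{(1)}\c v^{(1)}\right)^{(\io-1)}$, since the total number of derivatives distributed over the product $\Gag^{(1)}\c v^{(1)}$ is now $|\io'-1|+1=|\io|-1$. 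Here one keeps in mind the schematic conventions of Definition \ref{gammag} and Remark \ref{Gawrk}: lower-derivative factors are absorbed into higher-derivative ones, and constants depending only on combinatorics of the multi-index are suppressed.

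The only mildly delicate bookkeeping point — and the step I would be most careful about — is verifying that the $t$-dependence hidden in the Kasner part $\et_I^c=t^{-\qIt}\de_I^c$ genuinely drops out of every spatial commutator, so that only the perturbation $\ec$ (and never the background frame) appears in $\Gag^{(1)}$; this is why the formula can be stated purely in terms of $\Gag=\{\ec\}$ rather than $e$. Since $\pr=\{\pr_{x^1},\pr_{x^2},\pr_{x^3}\}$ annihilates the $x$-independent background frame components, this is automatic, but it is the one place where the schematic notation could obscure an error. Once this is in hand, the induction closes and the proof of Proposition \ref{commutation} is complete.
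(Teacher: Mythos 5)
Your proof is correct, and in substance it is the same argument the paper has in mind: a Leibniz expansion of the commutator together with the observation that the spatial derivatives $\pr$ annihilate the $x$-independent Kasner background frame components $\et_I^c=t^{-\qIt}\de_I^c$, so that only $\ec\in\Gag$ survives. The paper simply reads the schematic form directly off the explicit Leibniz formula \eqref{commutation1} (in which the constraint $|\io_2|<|\io|$ guarantees $|\io_1|\geq 1$, so every coefficient $\pr^{\io_1}e_I^c=\pr^{\io_1}\ec_I^c$ is a genuine $\Gag$-term), whereas you recover the same statement by induction on $|\io|$; this is a cosmetic packaging difference, not a different argument.
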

\section{Estimates for the Lapse Function \texorpdfstring{$n$}{}}\label{secelliptic}
In this section, we prove Theorem \ref{M1} using elliptic estimates for the lapse equation \eqref{2.25}.
\subsection{Decay Estimates for \texorpdfstring{$\vphi$}{} (Maximum Principle)}
The following basic lemma of elliptic theory will be useful to estimate $\vphi$ in lower order.
\begin{lem}\label{maximalprinciple}
Let $u$ be the solution of the following elliptic equation on $\Si_t$:
\begin{align}\label{ellipticeq}
    e_C(e_Cu)-t^{-2}u=f.
\end{align}
Then we have
\begin{align}\label{ellipticconclusion}
    \|u\|_{L^\infty(\Si_t)}\leq t^2\|f\|_{L^\infty(\Si_t)}.
\end{align}
\end{lem}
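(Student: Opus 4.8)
The plan is to establish the $L^\infty$ bound \eqref{ellipticconclusion} via the classical maximum principle argument for the operator $L := e_C e_C - t^{-2}$, treating $t$ as a fixed positive parameter throughout. First I would note that at fixed $t$, the expression $e_C(e_C u)$ is a genuine second-order elliptic operator on the compact manifold $\Si_t \cong \Tt^3$: writing $e_C = e_C^c \pr_c$, the principal part is $\sum_C e_C^c e_C^d \pr_c \pr_d$ plus lower-order terms $(e_C e_C^d)\pr_d$, and the coefficient matrix $\sum_C e_C^c e_C^d$ is positive definite since $\{e_C\}$ is an orthonormal frame for the Riemannian metric $g$ on $\Si_t$. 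Hence $L$ satisfies the hypotheses of the standard weak/strong maximum principle, and since the zeroth-order coefficient $-t^{-2}$ is strictly negative, $L$ is invertible on $C^2(\Tt^3)$ with a sign-definite comparison principle: if $Lv \le 0$ then $v$ cannot attain a negative interior minimum unless constant, and on a closed manifold this forces $v \ge 0$ everywhere when combined with the zeroth-order term.

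The key step is the comparison argument. Let $N := \|f\|_{L^\infty(\Si_t)}$ and consider the constant barrier function $w := t^2 N$. Then $L w = e_C(e_C w) - t^{-2} w = 0 - t^{-2}\cdot t^2 N = -N \le f = Lu$ pointwise, so $L(w - u) \le 0$, i.e. $e_C(e_C(w-u)) - t^{-2}(w-u) \le 0$. At a point $x_0 \in \Tt^3$ where $w - u$ attains its minimum we have $e_C(e_C(w-u))(x_0) \ge 0$ (the Hessian in any frame is positive semidefinite at an interior minimum on a manifold without boundary), hence $-t^{-2}(w-u)(x_0) \le 0$, giving $(w-u)(x_0) \ge 0$; since $x_0$ is the minimizer, $w - u \ge 0$ everywhere, i.e. $u \le t^2 N$. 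Applying the identical argument to $-u$, which solves $L(-u) = -f$ with $\|-f\|_{L^\infty} = N$, gives $-u \le t^2 N$, and the two bounds combine to $\|u\|_{L^\infty(\Si_t)} \le t^2 N$, which is precisely \eqref{ellipticconclusion}.

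The main obstacle, such as it is, is not conceptual but a matter of care about the degenerate structure: the frame components $e_C^c$ appearing in $L$ are themselves dynamical quantities that degenerate as $t \to 0$ (recall $\et_I^i = t^{-\qIt}\de_I^i$), so one must be careful that the argument is applied at each fixed $t$ where ellipticity is uniform, and that the constant in \eqref{ellipticconclusion} is genuinely $1$ (i.e. independent of the frame and of $t$) — which it is, precisely because the barrier $w = t^2 N$ is chosen to exactly cancel the zeroth-order term, so that no information about the principal coefficients enters the final bound. A secondary point to verify is that $u$ is regular enough ($C^2$, or at least $W^{2,p}$ with $p$ large so that the maximum principle applies in the strong or Aleksandrov sense); this is guaranteed by the bootstrap regularity of the reduced variables, so I would simply invoke it. Everything else is the textbook maximum principle applied to a clean constant barrier.
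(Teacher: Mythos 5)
Your proof is correct and uses essentially the same approach as the paper: both invoke the maximum principle, exploiting that at an interior extremum the second-order part $e_C(e_C\cdot)$ has a definite sign (which follows, as you note, from positive-definiteness of $\sum_C e_C^c e_C^d$ and vanishing of the first-derivative term at the critical point), so the zeroth-order coefficient $-t^{-2}$ yields the bound. The only cosmetic difference is that you run the argument via comparison with the constant barrier $w = t^2\|f\|_{L^\infty}$, whereas the paper evaluates the equation directly at the maximizer $x_M$ and minimizer $x_m$ of $u$; these are equivalent up to trivial algebra.
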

\begin{proof}
Assume that $u$ reaches its maximum and minimum, respectively, at $x_M$ and $x_m$. Hence it holds\footnote{Note that $\Si_t\simeq \mathbb{T}^3$.}
\begin{equation*}
    e_C e_C u(x_M)\le 0\le e_C e_C u(x_m).
\end{equation*}
Utilizing \eqref{ellipticeq}, we then obtain
\begin{align*}
    t^{-2}u(x_M)\leq& t^{-2}u(x_M)-e_Ce_Cu(x_M)\leq|f(x_M)|, \\
    t^{-2}u(x_m)\geq& t^{-2}u(x_m)-e_Ce_Cu(x_m)\geq-|f(x_m)|.
\end{align*}
Therefore, for all $x\in \Si_t$ we conclude
\begin{align*}
   -\sup_{\Si_t}|f|\le t^{-2}u(x_m) \le t^{-2}u(x) \le t^{-2}u(x_M)\leq\sup_{\Si_t}|f|.
\end{align*}
This concludes the proof of Lemma \ref{maximalprinciple}.
\end{proof}
We are now ready to control the lower-order norm of the lapse $n$ as follows.
\begin{prop}\label{decayn}
For the lower-order norm $\Ll_n(t)$, we have the following estimate:
\begin{align*}
    \Ll_n(t)\les\Dd(t)+\ep\Hh_n(t).
\end{align*}
\end{prop}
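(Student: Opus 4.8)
The plan is to apply the maximum principle of Lemma~\ref{maximalprinciple} to the elliptic lapse equation \eqref{2.25}, which is already in the required form $e_C(e_C\vphi)-t^{-2}\vphi=f$, and then to its $\pr_i$-differentiated version; the $\ev\vphi$-part of $\Ll_n$ will follow from the $W^{1,\infty}$-bound since $\|\ev\vphi\|_{L^\infty}\lesssim t^{-q_M}\|\pr\vphi\|_{L^\infty}$.

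First I would bound $\|\vphi\|_{L^\infty(\Si_t)}$. Writing $f$ for the right-hand side of \eqref{2.25}, the only term carrying a derivative of a dynamical variable is $2e_C(\ga_{DDC})$, the term $\tfrac{1-\vphi-e^{-\vphi}}{t^2}$ is quadratic in $\vphi$, and $T_{00}=\TT^2$ by Remark~\ref{rkTT}. Lemma~\ref{maximalprinciple} gives $\|\vphi\|_{L^\infty}\le t^2\|f\|_{L^\infty}$, and I would estimate $f$ term by term via Lemma~\ref{decayGagGabGaw}, Lemma~\ref{interpolation}, Lemma~\ref{topT} and the bootstrap \eqref{boot}: each $\vphi$-quadratic contribution is split so that one factor carries the bootstrap bound $\|\vphi\|_{W^{1,\infty}}\le\ep t^{2\si}$ and the other the norm $\Ll_n$, producing terms $\lesssim\ep\Ll_n$; the remaining terms (dominated by $e_C(\ga_{DDC})$, of size $t^{-q_M-q}\Dd$, and by $T_{00}$, of size $t^{-1-q_M}\Dd^2$) give $t^2\|f\|_{L^\infty}\lesssim t^{1-q_M+2\si}\Dd+\cdots$. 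Since $q=1-2\si$ by \eqref{dfqsi} and $\max_{I}\qIt<3/5$ by \eqref{strongq}, this yields $t^{-2\si}\|\vphi\|_{L^\infty}\lesssim\Dd+\ep\Ll_n$.

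Next I would treat the first derivatives. Differentiating \eqref{2.25} by $\pr_i$ and using Proposition~\ref{commutation} gives $e_C(e_C(\pr_i\vphi))-t^{-2}(\pr_i\vphi)=\pr_i f-[\pr_i,e_Ce_C]\vphi$, so Lemma~\ref{maximalprinciple} applies again. The term $\pr_i f$ is handled as $f$ above, now using $\|\ga\|_{W^{2,\infty}}\lesssim t^{-q-\dq}\Dd$ and $\|T\|_{W^{1,\infty}}\lesssim t^{-1-q}\Dd^2$ from Lemma~\ref{interpolation} and Lemma~\ref{topT}; all resulting $t$-powers are nonnegative after the $t^{-2\si}$ weight, precisely because the margin $1-(2\max_I\qIt-\min_I\qIt)\ge6\si$ is built into the choice of $\si$ in \eqref{dfqsi}. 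The commutator $[\pr_i,e_Ce_C]\vphi$ is a second-order operator with coefficients equal to one or two $\pr$-derivatives of $\ec$; schematically it is $\ec^{(1)}\cdot\pr^2\vphi+\big((\ec^{(1)})^2+\ec^{(2)}\big)\cdot\pr\vphi$ dressed by background frame factors $\lesssim t^{-q_M}$. The $\pr\vphi$-pieces contribute $\lesssim\ep\,t^{-2\si}\|\pr\vphi\|_{L^\infty}$ and are absorbed into the left-hand side. For the genuine $\pr^2\vphi$-piece, which cannot be absorbed, I would interpolate $\|\vphi\|_{W^{2,\infty}}\lesssim\|\vphi\|_{W^{1,\infty}}^{1-\frac{1}{\kl-3}}\|\vphi\|_{\Hdot^{\kl}}^{\frac{1}{\kl-3}}+\|\vphi\|_{W^{1,\infty}}$ via Lemma~\ref{Sobolevinterpolation}, and then use Young's inequality to peel off $\tfrac12\,t^{-2\si}\|\vphi\|_{W^{1,\infty}}$ into the left-hand side; the leftover is $\lesssim 2^{\kl}\big(t^{2-2q_M}\Dd\big)^{\kl-3}t^{-A_*}\Hh_n$, and since $2-2q_M>0$, $(2-2q_M)(\kl-3)>A_*$ for $\kl$ as in \eqref{dfkl}, and $\Dd\le\ep\ll1/\kl$ forces $2^{\kl}\Dd^{\kl-3}\lesssim\ep$, this contributes $\lesssim\ep\Hh_n$. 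Altogether $t^{-2\si}\|\pr\vphi\|_{L^\infty}\lesssim\Dd+\ep\Hh_n+\ep\Ll_n$, and hence so is $t^{q_M-2\si}\|\ev\vphi\|_{L^\infty}$.

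Finally I would collect the bounds and absorb the $\ep\Ll_n$ terms into the left-hand side, which is legitimate since $\Ll_n$ is finite by \eqref{boot} and $\ep$ is small, obtaining $\Ll_n(t)\lesssim\Dd(t)+\ep\Hh_n(t)$. The hard part is the derivative estimate: the genuine $\pr^2\vphi$ term of $[\pr_i,e_Ce_C]\vphi$ forces the interpolation against the top-order norm, and getting it to yield a clean $\ep\Hh_n$ (instead of a $1/\kl$-size coefficient) depends entirely on the high power $\Dd^{\kl-3}$ produced by the Young step beating both the combinatorial factor $2^{\kl}$ and the singular weight $t^{-A_*}$ -- which is exactly why the hierarchy $\ep\ll1/\kl$ and $A_*/\kl\approx\dq$ is needed. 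The other delicate (but routine) point is checking that every $t$-exponent coming from $f$ and $\pr_i f$ remains nonnegative after the $t^{-2\si}$ weight, the step where $\max_I\qIt<3/5$ and the defining relation for $\si$ are used.
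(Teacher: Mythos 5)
Your proposal is correct and takes essentially the same approach as the paper: apply the maximum principle of Lemma~\ref{maximalprinciple} to the lapse equation \eqref{2.25} and to its $\pr_i$-differentiated version, handle the $[\pr_i,e_Ce_C]\vphi$ commutator by interpolating $\vphi$ at two derivatives against the top-order norm $\Hh_n$, and absorb the resulting $\ep\,\Ll_n$ at the end. The only remark is that your Young step is more elaborate than needed --- the plain weighted AM--GM $a^{1-\theta}b^\theta\leq a+b$, applied to $a=t^{-2\si}\|\vphi\|_{W^{1,\infty}}$ and $b=t^{-A_*}\|\vphi\|_{\Hdot^{\kl}}$ together with the prefactor $t^{2-2q_M-(2\si+A_*)/(\kl-3)}\Dd(t)\lesssim\ep$, already produces the clean $\ep\Ll_n+\ep\Hh_n$ leftover without invoking $2^{\kl}\Dd^{\kl-3}$ --- and note that what gets peeled off at that stage is $\|\pr\vphi\|_{L^\infty}$ rather than the full $\|\vphi\|_{W^{1,\infty}}$, but neither point changes the argument.
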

\begin{proof}
We rewrite \eqref{2.25} in the below systematic form
\begin{align}\label{Eqn:elliptic n 1}
e_C(e_C\vphi)-t^{-2}\vphi&=\ev\vphi\c\ev\vphi+2e_C(\ga_{DDC})+t^{-2}\vphi\c\vphi+\Gab\c\Gab+T.
\end{align}
Commuting with $\pr^\io$ for $|\io|\leq 1$ and applying Proposition \ref{commutation}, we deduce
\begin{align}\label{Eqn:pr1n}
e_C(e_C(\pr^\io\vphi))-t^{-2}\pr^\io\vphi&=\ev\left(\Gag^{(1)}\c\vphi^{(1)}\right)+\Gag^{(1)}\c(\ev\vphi)^{(1)}+(\ev\vphi\c\ev\vphi)^{(1)}\\
&+(\ev\ga)^{(1)}+t^{-2}(\vphi\c\vphi)^{(1)}+(\Gab\c\Gab)^{(1)}+T^{(1)}.
\end{align}
By utilizing Lemma \ref{decayGagGabGaw}, Lemma \ref{interpolation} and Lemma \ref{topT}, we estimate the terms on the right of \eqref{Eqn:pr1n} and hence get
\begin{align*}
    \left\|e_C(e_C(\pr^\io\vphi))-t^{-2}\pr^\io\vphi\right\|_{L^\infty(\Si_t)}\les\frac{\Dd(t)}{t^{1+q}}+\frac{\ep(\Ll_n(t)+\Hh_n(t))}{t^{2q}}.
\end{align*}
Consequently, applying Lemma \ref{maximalprinciple} for $\pr^\io\vphi$, we deduce
\begin{align*}
    \|\vphi\|_{W^{1,\infty}(\Si_t)}\les t^{1-q}(\Dd(t)+\ep(\Ll_n(t)+\Hh_n(t)))\les t^{2\si}\Dd(t)+\ep t^{2\si}(\Ll_n(t)+\Hh_n(t)).
\end{align*}
Combining with Lemma \ref{decayGagGabGaw}, this implies
\begin{align*}
    \|\ev\vphi\|_{L^\infty(\Si_t)}&\les t^{-q_M+2\si}\Big(\Dd(t)+\ep\big(\Ll_n(t)+\Hh_n(t)\big)\Big).
\end{align*}
This concludes the proof of Proposition \ref{decayn}.
\end{proof}
\subsection{Top-order Estimates for \texorpdfstring{$\vphi$}{} (Energy Estimates)}
Then we turn to establish the desired bound for the higher-order norm of the lapse $n$.
\begin{prop}\label{topn}
For the higher-order norm $\Hh_n(t)$, we have the following estimate:
\begin{align*}
    \Hh_n(t)\les\Dd(t).
\end{align*}
\end{prop}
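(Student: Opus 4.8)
The plan is to run a $t$-weighted $L^2$-energy estimate for the elliptic lapse equation \eqref{2.25} commuted with $\pr^\io$ for $|\io| = \kl$, mirroring the structure of the lower-order argument in Proposition \ref{decayn} but replacing the maximum principle with integration by parts. First I would commute \eqref{2.25} (in the schematic form \eqref{Eqn:elliptic n 1}) with $\pr^\io$, $|\io|=\kl$, using Proposition \ref{commutation} to handle $[\pr^\io, e_C]$, obtaining an equation of the form $e_C(e_C(\pr^\io\vphi)) - t^{-2}\pr^\io\vphi = \mathrm{RHS}^{(\io)}$, where the right-hand side collects $\ev(\Gag^{(1)}\c\vphi^{(1)})$, $(\ev\vphi\c\ev\vphi)^{(\io)}$, $(\ev\ga)^{(\io)}$, $t^{-2}(\vphi\c\vphi)^{(\io)}$, $(\Gab\c\Gab)^{(\io)}$ and $T^{(\io)}$. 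Then I would multiply by $\pr^\io\vphi$ (with an appropriate weight $t^{2A_*}$ built in, or track the weight at the end) and integrate over $\Si_t \simeq \Tt^3$. Integration by parts on the $e_C e_C$ term produces $-\int_{\Si_t}|\ev\pr^\io\vphi|^2$ plus commutator terms $\int (\ga\text{-type})\c\ev\pr^\io\vphi\c\pr^\io\vphi$ from $e_C$ not being a coordinate derivative; together with the good sign of $-t^{-2}\int|\pr^\io\vphi|^2$, this gives coercivity:
\begin{align*}
\int_{\Si_t}\Big(|\ev\pr^\io\vphi|^2 + t^{-2}|\pr^\io\vphi|^2\Big) \les \Big|\int_{\Si_t}\mathrm{RHS}^{(\io)}\c\pr^\io\vphi\Big| + \text{(commutator errors)}.
\end{align*}

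The main work is then estimating the right-hand side. For the pure-derivative source $(\ev\ga)^{(\io)}$ at top order $|\io|=\kl$, one genuinely has $\kl+1$ derivatives on $\ga$, which is not directly controlled by $\Hh_\ga(t) = t^{A_*+1}\|\ga\|_{\Hdot^\kl}$; this is the step I expect to be the main obstacle. The standard device (as in \cite{FRS}) is to avoid putting the top derivative on $\ga$ by integrating by parts in $x$ once more, moving the extra $\ev$ onto $\pr^\io\vphi$ so that it is absorbed into the coercive term $\int|\ev\pr^\io\vphi|^2$ against $\|\ga\|_{\Hdot^\kl}$ — i.e. one uses $\int (\ev\ga)^{(\io)}\pr^\io\vphi \sim \int (\ga)^{(\io)}\ev\pr^\io\vphi$ and then Cauchy--Schwarz, paying $\|\ga\|_{\dot H^\kl}$, which by Lemma \ref{decayGagGabGaw} is $\les t^{-A_*-1}\Dd(t)$. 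The quadratic terms $(\ev\vphi\c\ev\vphi)^{(\io)}$, $(\vphi\c\vphi)^{(\io)}$, $(\Gab\c\Gab)^{(\io)}$ are handled with the product inequality \eqref{eqproduct} of Lemma \ref{Sobolevinterpolation}, distributing derivatives so one factor sits in $L^\infty$ (controlled via Lemma \ref{interpolation} / the bootstrap \eqref{boot}) and the other in $\Hdot^\kl$; these produce $\ep\cdot(\text{norm})/t^{A_*+\ldots}$ factors with a small $\ep$ gain. The matter term $T^{(\kl)}$ is bounded by Lemma \ref{topT}, $\|T^{(\kl)}\|_{L^2} \les_* t^{-A_*-1-q}\Dd(t)^2$. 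The commutator errors from Proposition \ref{commutation} are of the form $(\Gag^{(1)}\c\vphi^{(1)})^{(\kl-1)}$ and similarly absorbed with an $\ep$-small factor against the coercive terms.

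Finally I would collect powers of $t$: the coercive left side, after inserting the weight, controls $t^{2A_*}\|\vphi\|_{\dot H^\kl}^2$ and $t^{2A_*+2}\|\ev\vphi\|_{\dot H^\kl}^2$; the right side is bounded by $\Dd(t)^2 + \ep^2(\Ll_n(t)+\Hh_n(t))^2$ up to the appropriate $t$-weights, and after dividing through one obtains $\Hh_n(t) \les \Dd(t) + \ep(\Ll_n(t)+\Hh_n(t))$. Combining this with Proposition \ref{decayn}, which gives $\Ll_n(t)\les \Dd(t)+\ep\Hh_n(t)$, and absorbing the $\ep$-small terms (valid since $\ep \ll 1$), yields $\Ll_n(t)+\Hh_n(t)\les\Dd(t)$, in particular $\Hh_n(t)\les\Dd(t)$, which proves the proposition (and simultaneously completes Theorem \ref{M1}). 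A subtlety to watch is that all implicit constants on the coercive side are $A_*$- and $\kl$-independent so that the absorption of the borderline $(\ev\ga)$-term does not require $A_*$ to be large here — unlike the $(k,\ga)$ energy estimate, the $-t^{-2}\vphi$ term already supplies the needed coercivity for free.
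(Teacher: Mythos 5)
Your proposal matches the paper's proof in all essential respects: commuting the schematic lapse equation with $\pr^\io$ for $|\io|=\kl$, multiplying by $\pr^\io\vphi$ (up to sign) and integrating to extract the coercive quantity $\int_{\Si_t}|\pr^\io(\ev\vphi)|^2+t^{-2}|\pr^\io\vphi|^2$, handling the borderline $(\ev\ga)^{(\io)}$ source by integrating by parts once more so the extra derivative falls on $\pr^\io\vphi$ and is absorbed into the coercive term against $\|\ga\|_{\Hdot^\kl}\les t^{-A_*-1}\Dd(t)$, controlling the remaining quadratic and matter terms via \eqref{eqproduct}, Lemma \ref{interpolation} and Lemma \ref{topT}, and finally closing with $\ep$-smallness together with Proposition \ref{decayn}. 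Your closing remark that no large-$A_*$ absorption is needed here (the $-t^{-2}\vphi$ term already supplies coercivity, unlike the hyperbolic $(k,\ga)$ estimate) is also exactly the mechanism at work in the paper.
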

\begin{proof}
Recall \eqref{Eqn:elliptic n 1}
\begin{align*}
e_C(e_C\vphi)-t^{-2}\vphi&=\ev\vphi\c\ev\vphi+2e_C(\ga_{DDC})+t^{-2}\vphi\c\vphi+\Gab\c\Gab+T.
\end{align*}
Differentiating  by $\pr^\io$ with $|\io|=\kl$ and applying Proposition \ref{commutation}, we deduce
\begin{align*}
e_C\pr^\io e_C\vphi-t^{-2}\pr^\io\vphi&=2e_D\pr^\io\ga_{CCD}+\left(\Gab^{(1)}\c(\ev\vphi)^{(1)}\right)^{(\io-1)}+t^{-2}(\vphi\c\vphi)^{(\io)}\\
&+(\ev\vphi\c\ev\vphi)^{(\io)}+\left(\Gab^{(1)}\c\Gab^{(1)}\right)^{(\io-1)}+T^{(\io)}.
\end{align*}
Multiplying by $-\pr^\io\vphi$, we obtain
\begin{align*}
-\pr^\io\vphi\left(e_C\pr^\io e_C\vphi\right)+t^{-2}(\pr^\io\vphi)^2&=-2(\pr^\io\vphi)e_D\pr^\io\ga_{CCD}+\vphi^{(\io)}\c\left(\Gab^{(1)}\c\Gab^{(1)}+\Gab^{(1)}\c(\ev\vphi)^{(1)}\right)^{(\io-1)}\\
&+\vphi^{(\io)}\c(\ev\vphi\c\ev\vphi)^{(\io)}+t^{-2}\vphi^{(\io)}\c(\vphi\c\vphi)^{(\io)}+T^{(\io)}.
\end{align*}
Integrating over $\Si_t$, we thus infer
\begin{align*}
\int_{\Si_t}|\pr^\io(\ev\vphi)|^2+t^{-2}|\pr^\io\vphi|^2&=\int_{\Si_t}2(\pr^\io e_D\vphi)\pr^\io\ga_{CCD}+\vphi^{(\io)}\c(\ev\vphi\c\ev\vphi)^{(\io)}+\vphi^{(\io)}\c(\vphi\c\vphi)^{(\io)}\\
&+\int_{\Si_t}\vphi^{(\io)}\c\left(\Gab^{(1)}\c\Gab^{(1)}+\Gab^{(1)}\c(\ev\vphi)^{(1)}\right)^{(\io-1)}+T^{(\io)}.
\end{align*}
Note that from Lemma \ref{decayGagGabGaw} and \eqref{eqproduct} we have
\begin{align*}
    \left\|\left(\Gab^{(1)}\c\Gab^{(1)}\right)^{(\io-1)}\right\|_{L^2(\Si_t)}\les_*\frac{\Dd(t)}{t^{A_*+1}}\frac{\Dd(t)}{t^{q}}\les_*\frac{\Dd(t)^2}{t^{A_*+q+1}}.
\end{align*}
Also, employing Lemma \ref{decayGagGabGaw}, Proposition \ref{decayn} and \eqref{eqproduct}, we estimate
\begin{align*}
    \left\|\left(\Gab^{(1)}\c(\ev\vphi)^{(1)}\right)^{(\io-1)}\right\|_{L^2(\Si_t)}&\les_*\frac{\Dd(t)}{t^{A_*+1}}\frac{\Ll_n(t)+\Hh_n(t)}{t^{q}}\les_*\frac{\Dd(t)(\Dd(t)+\Hh_n(t))}{t^{A_*+q+1}},\\
    t^{-2}\left\|(\vphi\c\vphi)^{(\io)}\right\|_{L^2(\Si_t)}&\les_*\frac{(\Ll_n(t)+\Hh_n(t))^2}{t^{A_*+2-2\si}}\les_*\frac{(\Dd(t)+\Hh_n(t))^2}{t^{A_*+q+1}},\\
    \left\|(\ev\vphi\c\ev\vphi)^{(\io)}\right\|_{L^2(\Si_t)}&\les_*\frac{(\Ll_n(t)+\Hh_n(t))^2}{t^{A_*+1+q-2\si}}\les_*\frac{(\Dd_n(t)+\Hh_n(t))^2}{t^{A_*+1+q-2\si}},\\
    \left\|\vphi^{(\io)}\right\|_{L^2(\Si_t)}&\les\frac{\ep}{t^{A_*}},\\
    \|T^{(\io)}\|_{L^2(\Si_t)}&\les\frac{\ep}{t^{A_*+q+1}}.
\end{align*}
Combining all above estimates, we arrive at\footnote{Here and below, we use $C_*$ to denote a constant that depends on $\kl$.}
\begin{align*}\int_{\Si_t}|\pr^\io(\ev\vphi)|^2+t^{-2}|\pr^\io\vphi|^2\les\int_{\Si_t}(\pr^\io e_D\vphi)\pr^\io\ga_{CCD}+\ep C_*\c\frac{\Hh_n(t)\Dd(t)+\Dd(t)^2}{t^{2A_*+q+1}}.
\end{align*}
Multiplying this by $t^{2A_*+2}$ yields
\begin{align*}
\int_{\Si_t}t^{2A_*+2}|\pr^\io(\ev\vphi)|^2+t^{2A_*}|\pr^\io\vphi|^2\les\int_{\Si_t}t^{2A_*+2}(\pr^\io e_D\vphi)\pr^\io\ga_{CCD}+\ep C_*t^{2\si}\left(\Hh_n(t)\Dd(t)+\Dd(t)^2\right).
\end{align*}
As a result, it follows from Cauchy--Schwarz inequality that
\begin{align*}
\int_{\Si_t}t^{2A_*+2}|\pr^\io(\ev\vphi)|^2+t^{2A_*}|\pr^\io\vphi|^2&\les t^{2A_*+2}\|\pr^\io\ga\|_{L^2(\Si_t)}^2+\ep C_*t^{2\si}\left(\Hh_n(t)^2+\Dd(t)^2\right).
\end{align*}
Therefore, by summing over $|\io|=\kl$, we conclude
\begin{align*}
\Hh_n(t)^2\les\Hh_\ga(t)^2+\ep C_*(\Hh_n(t)^2+\Dd(t)^2).
\end{align*}
Choosing $\ep$ small enough, this implies
\begin{align*}
    \Hh_n(t)\les\Dd(t).
\end{align*}
as stated.
\end{proof}
Finally, incorporating Proposition \ref{decayn} with Proposition \ref{topn}, we conclude the proof of Theorem \ref{M1}.
\begin{rk}
    As a consequence of Theorem \ref{M1}, in the sequel we can systematically write
    \begin{align*}
        t^{-1}\vphi\in\Gab,\qquad\qquad\ev\vphi\in\Gab.
    \end{align*}
\end{rk}
\section{Blow-up Estimates for \texorpdfstring{$\Dd(t)$}{}}\label{sectransport}
The goal of this section is to prove Theorem \ref{M2}, namely, the lower-order estimates for the dynamical reduced variables.
\subsection{Evolution Lemma}
The following evolution lemma enables us to deal with the transport equations for dynamical variables, except the Vlasov field.
\begin{lem}\label{evolution}
Let $U$ and $V$ be two functions satisfying the following evolution equation:
\begin{align}\label{evolutioneq}
    \pr_tU+\frac{\la_0}{t}U=V.
\end{align}
Then, for $\la>\la_0$ the following estimate holds
\begin{equation}\label{subcritialevolution}
    |t^\la U(t)|^2+(\la-\la_0)\int_t^1s^{2\la-1}|U(s)|^2ds\leq|U|^2(1)+\frac{1}{\la-\la_0}\int_t^1s^{2\la+1}|V(s)|^2ds,
\end{equation}
Moreover, in the case $\la_0=1$, we have
\begin{align}\label{criticalevolution}
    |tU(t)|^2\leq |U(1)|^2+2\int_t^1 s^2|U(s)V(s)|ds.
\end{align}
\end{lem}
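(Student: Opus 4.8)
The plan is to run a weighted energy argument directly on the scalar evolution equation \eqref{evolutioneq}, converting it into an exact differential identity and then estimating a single cross term. First I would multiply \eqref{evolutioneq} by $2t^{2\la}U$ and substitute $\pr_tU=V-\tfrac{\la_0}{t}U$ to obtain the pointwise identity
\begin{align*}
\pr_t\big(t^{2\la}U^2\big)=2(\la-\la_0)\,t^{2\la-1}U^2+2\,t^{2\la}UV.
\end{align*}
Integrating this from $t$ to $1$ (recalling $t\le 1$) yields the master identity
\begin{align*}
t^{2\la}|U(t)|^2+2(\la-\la_0)\int_t^1 s^{2\la-1}|U(s)|^2\,ds=|U(1)|^2-2\int_t^1 s^{2\la}U(s)V(s)\,ds,
\end{align*}
from which both claims follow.

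For the subcritical case $\la>\la_0$, the bulk term $2(\la-\la_0)\int_t^1 s^{2\la-1}U^2\,ds$ carries a favorable sign and is kept on the left; the cross term is controlled by Young's inequality with the weight split $s^{2\la}=s^{\la-1/2}\cdot s^{\la+1/2}$,
\begin{align*}
2\Big|\int_t^1 s^{2\la}U(s)V(s)\,ds\Big|\le(\la-\la_0)\int_t^1 s^{2\la-1}|U(s)|^2\,ds+\frac{1}{\la-\la_0}\int_t^1 s^{2\la+1}|V(s)|^2\,ds.
\end{align*}
Absorbing the first term on the right into the coercive bulk term on the left leaves precisely \eqref{subcritialevolution}. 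For the critical case $\la_0=1$ I would simply take $\la=1$, so that the bulk term $2(\la-\la_0)\int_t^1 s^{2\la-1}U^2\,ds$ vanishes identically and the master identity reduces to $|tU(t)|^2=|U(1)|^2-2\int_t^1 s^2U(s)V(s)\,ds$; bounding the remaining integral by its absolute value gives \eqref{criticalevolution}.

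There is no substantial obstacle here, as the statement is a routine $t$-weighted energy estimate; the only points requiring attention are the sign bookkeeping when integrating from $t$ to $1$ and the choice of the Young split so that exactly half of the coercive term $2(\la-\la_0)\int_t^1 s^{2\la-1}U^2$ is consumed, which is what produces the sharp constant $\tfrac{1}{\la-\la_0}$ in \eqref{subcritialevolution}. I would also remark that the argument extends verbatim to the case where $U$ and $V$ are components of $\Si_t$-tangent tensorfields, by applying the scalar identity componentwise and summing over frame indices, which is the form in which the lemma is invoked in the subsequent sections.
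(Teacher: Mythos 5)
Your proof is correct and follows essentially the same approach as the paper: you multiply by the $t^{2\la}U$ weight to obtain an exact differential identity for $\pr_t(t^{2\la}U^2)$, integrate from $t$ to $1$, and control the cross term by Young's inequality with parameter $\la-\la_0$, absorbing half of the coercive bulk term. The only cosmetic difference is that the paper first rewrites the equation in terms of $\pr_t(t^\la U)$ before multiplying by $t^\la U$, whereas you substitute for $\pr_t U$ directly, but the resulting master identity, the Young split producing the sharp constant $\tfrac{1}{\la-\la_0}$, and the treatment of the critical case $\la=\la_0=1$ are all identical.
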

\begin{proof}
From \eqref{evolutioneq} we obtain
\begin{align*}
    \pr_t(t^{\la}U)+(\la_0-\la)t^{\la-1}U=t^{\la}V.
\end{align*}
Multiplying both sides by $t^\la U$, we infer
    \begin{align}\label{prtUV}
    \frac{1}{2}\pr_t\left(|t^\la U|^2\right)+(\la_0-\la)(t^\la U)(t^{\la-1}U)=(t^\la U)t^\la V.
    \end{align}
Then the integration from $t$ to $1$ gives
\begin{align*}
    |U|^2(1)-|t^\la U|^2+2(\la_0-\la)\int_t^1s^{-1}|s^\la U|^2ds=2\int_t^1(s^\la U)s^\la Vds.
\end{align*}
Applying Cauchy--Schwarz inequality and noting that $\la-\la_0>0$, we deduce
\begin{align*}
    &|t^\la U(t)|^2+2(\la-\la_0)\int_t^1s^{-1}|s^\la U|^2ds\\
    \leq&|U|^2(1)+(\la-\la_0)\int_t^1s^{-1}|s^\la U|^2ds+\frac{1}{\la-\la_0}\int_t^1s^{2\la+1}|V|^2ds,
\end{align*}
which readily implies the desired inequality \eqref{subcritialevolution}. 

Next, by selecting $\la=\la_0=1$ in \eqref{prtUV}, we derive
\begin{align*}
    \pr_t\left(|tU(t)|^2\right)=2t^2UV.
\end{align*}
Integrating it from $t$ to $1$, we have
\begin{align*}
    |tU(t)|^2\leq |U|^2(1)+2\int_t^1s^2|UV|ds,
\end{align*}
which is exactly \eqref{criticalevolution}. This completes the proof of Lemma \ref{evolution}.
\end{proof}

\subsection{Estimate for \texorpdfstring{$\ga$}{}}
In this subsection, we control the $L^\infty$-norm of the connection coefficients $\ga_{IJK}$. Rather than using their evolution equations directly, we work with the evolution equations for the structure coefficients
\begin{align*}
    S_{IJK}\coloneqq \ga_{IJK}+\ga_{JKI}=\g([e_I,e_J],e_K).
\end{align*}
The next lemma provides the evolution equations for $S_{IJK}$, which are decoupled at the linear level.
\begin{lem}\label{gaeq}
    The structure coefficient $S_{IJK}$ obeys the following evolution equation
    \begin{align*}
        e_0(S_{IJK})+\frac{\qIt+\qJt-\qKt}{t}S_{IJK}=t^{-1}O(\ev\vphi)+t^{-q_M}\Gaw^{(1)}+\Gag\c\Gaw^{(1)}+\Gab\c\Gaw.
    \end{align*}
\end{lem}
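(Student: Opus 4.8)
The plan is to derive the evolution equation for $S_{IJK}$ directly from the evolution equation \eqref{2.22b} for $\ga_{IJK}$ established in Proposition \ref{basicequations}, keeping careful track of the linear (in $\ga$, $k$) structure and relegating everything of quadratic or better type to the error notation. First I would write, using $e_0 = n^{-1}\pr_t = e^{-\vphi}\pr_t$ and the definition $S_{IJK} = \ga_{IJK} + \ga_{JKI}$,
\begin{align*}
    e_0(S_{IJK}) = e_0(\ga_{IJK}) + e_0(\ga_{JKI}),
\end{align*}
and substitute \eqref{2.22b} for each of the two summands (the second obtained from the first by cycling $I\to J\to K\to I$). On the right-hand side of \eqref{2.22b} the principal terms are $e_K(k_{JI}) - e_J(k_{KI})$; the key observation — this is exactly the antisymmetrization trick of \cite{FRS} — is that when one forms $e_0(\ga_{IJK}) + e_0(\ga_{JKI})$, the spatial-derivative-of-$k$ terms recombine, and using the momentum constraint \eqref{2.26b} (which expresses $e_C k_{CI}$ in terms of $\ga\cdot k$, $(e_0\psi)e_I\psi$, $F_{0C}F_{IC}$ and $T_{0I}$, i.e. purely error-type quantities in the sense of Definition \ref{gammag} and Lemma \ref{topT}) one converts the bad $\ev k$ contributions into either $\ev\vphi$-type terms or genuine quadratic errors. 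The remaining $\ev k$ pieces that do survive are of the form $\Gag^{(1)}\cdot\Gaw^{(1)}$-type after using $e_I = e_I^c\pr_c$ and Lemma \ref{decayGagGabGaw}; but one must be careful, since $\ev\kc$ alone is not in any $\DD_i$ class. Here the precise bookkeeping is: the genuinely linear surviving term is the $\frac{1}{t}\ev\vphi$ term (coming from the $-(e_J\vphi)k_{IK}+(e_K\vphi)k_{IJ}$ terms in \eqref{2.22b} together with $\tr\kt = -1/t$), plus the $t^{-q_M}\Gaw^{(1)}$ term (which is the $e_0$ of $\ga$ expressed through $\ev\kc$ — note $\ev\kc = e_I^c\pr_c\kc$ and $e_I^c = t^{-\qIt}\de_I^c + \ec$, so $\ev\kc = t^{-\qIt}\pr\kc + \ec\cdot\pr\kc$, giving the $t^{-q_M}\Gaw^{(1)}$ and the $\Gag\cdot\Gaw^{(1)}$ contributions).

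Next I would identify the linear coefficient on the left. The terms in \eqref{2.22b} that are linear in $\ga$ and proportional to $\kt$ (i.e. the background second fundamental form) are $-\ga_{KJC}k_{CI} - \ga_{KIC}k_{JC} + \ga_{JKC}k_{IC} + \ga_{JIC}k_{KC} + k_{IC}\ga_{CJK}$; replacing each $k$ by its Kasner value $\kt_{IJ} = -\frac{\qIt}{t}\de_{IJ}$ and summing the two cyclic versions, the diagonal structure forces the net coefficient of $S_{IJK}$ to be $\frac{\qIt+\qJt-\qKt}{t}$ — this is the computation that explains the weight in \eqref{Eqn:S} in the introduction, and it is precisely why $S_{IJK}$ (and not $\ga_{IJK}$ itself) diagonalizes the linear flow. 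Every place where $k$ is replaced by $\kc = k - \kt$ in these terms produces a $\Gag$- or $\Gab$-times-$\Gaw$ error (recall $\ga\in\DD_b$, $\kc\in\DD_w$), which lands in the $\Gab\cdot\Gaw$ bucket of the claimed right-hand side; similarly the $\ga_{IJC}\ga_{\cdots}$-type quadratic terms in the unreduced equation, and the $e^{-\vphi}-1$ factor from $e_0 = e^{-\vphi}\pr_t$, all contribute to $\Gab\cdot\Gaw$ or better after the remark at the end of Section \ref{secelliptic} ($t^{-1}\vphi, \ev\vphi\in\DD_b$).

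The main obstacle, I expect, is not any single estimate but the careful organization of the algebra: one must verify that after the cyclic symmetrization $\ga_{IJK}\to\ga_{IJK}+\ga_{JKI}$ and the substitution of the momentum constraint, \emph{no} term of the form $\ev k$ with an $O(1)$ or $t^{-q}$ coefficient survives — only the weighted $t^{-q_m}\le t^{-q_M}$ pieces tied to the frame-component corrections $\ec$, plus the genuinely linear $\frac{1}{t}\ev\vphi$. Concretely I would (i) expand $e_0(\ga_{IJK}) + e_0(\ga_{JKI})$ term by term from \eqref{2.22b}; (ii) collect all terms containing a spatial derivative of $k$, apply \eqref{2.26b} to the trace-type combination, and use $e_I = t^{-\qIt}\pr_I + \ec^i\pr_i$ together with the remark after Lemma \ref{decayGagGabGaw} to sort the residue into $t^{-q_M}\Gaw^{(1)} + \Gag\cdot\Gaw^{(1)}$; (iii) collect the $\ga$-times-$k$ terms, split $k = \kt + \kc$, use $\kt_{IJ} = -\frac{\qIt}{t}\de_{IJ}$ to extract the weight $\frac{\qIt+\qJt-\qKt}{t}$ and push the $\kc$-part into $\Gab\cdot\Gaw$; (iv) collect the $(e\vphi)k$ terms, split $k$ again, use $\kt$ to get the $\frac{1}{t}O(\ev\vphi)$ term and push the rest into $\Gab\cdot\Gaw$ (via $\ev\vphi\in\DD_b$); and (v) note the $e^{-\vphi}$ prefactor of $e_0$ differs from $1$ by a $\DD_b$ quantity, which only generates $\Gab\cdot(\text{linear})\subseteq\Gab\cdot\Gaw$-type errors. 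Assembling (i)--(v) yields exactly
\begin{align*}
    e_0(S_{IJK})+\frac{\qIt+\qJt-\qKt}{t}S_{IJK}=t^{-1}O(\ev\vphi)+t^{-q_M}\Gaw^{(1)}+\Gag\c\Gaw^{(1)}+\Gab\c\Gaw,
\end{align*}
which is the assertion of the lemma.
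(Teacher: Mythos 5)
Your overall architecture matches the paper's proof: start from \eqref{2.22b}, add the cyclic copy for $\ga_{JKI}$, split $k=\kt+\kc$, read off the linear coefficient $\frac{\qIt+\qJt-\qKt}{t}$ of $S_{IJK}$ from $\kt_{IJ}=-\frac{\qIt}{t}\de_{IJ}$, and bucket everything else into the claimed error classes. Your steps (iii), (iv), (v) and the frame-component decomposition $e_I = t^{-\qIt}\pr_I + \ec_I^c\pr_c$ leading to $\ev k = t^{-q_M}\Gaw^{(1)} + \Gag\c\Gaw^{(1)}$ are exactly what the paper does.

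However, there is a genuine misconception at the point you call the ``key observation'': the momentum constraint \eqref{2.26b} plays \emph{no} role in this lemma, and if you try to use it here you will be stuck. When you form $e_0(\ga_{IJK})+e_0(\ga_{JKI})$, the principal spatial-derivative terms from \eqref{2.22b} are
\begin{align*}
e_K(k_{JI}) - e_J(k_{KI}) + e_I(k_{KJ}) - e_K(k_{IJ}) = e_I(k_{JK}) - e_J(k_{IK}),
\end{align*}
where the $e_K$ terms cancel by symmetry of $k$. What survives is a curl-type combination with \emph{free} indices $I,J,K$ (recall the paper does not Einstein-sum over $I,J,K$); it is not the trace $e_C k_{CI}$ that the momentum constraint controls, and no amount of rearranging produces that trace. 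The paper simply leaves $e_K(k_{JI})-e_J(k_{KI})$ as $O(\ev k)$ from the start, and the decisive point is that $\ev k$ is \emph{already} in the claimed error class without any constraint substitution: since $\pr_c\kt = 0$, one has $\ev k = e_I^c\pr_c\kc = t^{-\qIt}\pr_I\kc + \ec_I^c\pr_c\kc = t^{-q_M}\Gaw^{(1)} + \Gag\c\Gaw^{(1)}$, which, by the strong sub-criticality $q_M+\dq\le q-2\si$ of \eqref{AVTD}, integrates correctly against the weight $\la=q$ in Lemma \ref{evolution}. (You seem to have folded in the role the momentum constraint plays in the \emph{higher-order} energy estimate for $(k,\ga)$, Proposition \ref{estkga}, where, after integrating by parts against $e_I(\pr^\io\ga_{CJC})$, the genuine trace $e_C(\pr^\io k_{CI})$ does appear and \eqref{2.26b} is then essential; that is a different part of the argument.) Once the momentum-constraint claim is dropped, the rest of your bookkeeping reproduces the paper's computation.
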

\begin{proof}
In light of \eqref{2.22b}, we get
\begin{align*}
    e_0(\ga_{IJK})&=-\ga_{KJC}k_{CI}-\ga_{KIC}k_{JC}+\ga_{JKC}k_{IC}+\ga_{JIC}k_{BC}+\ga_{CJK}k_{IC}+O(\ev k)+\ev\vphi\c k.
\end{align*}
Applying Proposition \ref{kasnerquantities}, we convert the RHS of the above equation into the form
\begin{equation}\label{Eqn:e0ga1}
\begin{aligned}
    e_0(\ga_{IJK})&=\frac{\qIt}{t}\ga_{KJI}+\frac{\qJt}{t}\ga_{KIJ}-\frac{\qIt}{t}\ga_{JKI}-\frac{\qKt}{t}\ga_{JIK}-\frac{\qIt}{t}\ga_{IJK}+\err\\
    &=\frac{\qIt-\qJt}{t}\ga_{KJI}-\frac{\qIt-\qKt}{t}\ga_{JKI}-\frac{\qIt}{t}\ga_{IJK}+\err,
\end{aligned}
\end{equation}
where the error terms $\err$ have the expression
\begin{align*}
    \err\coloneqq t^{-1}O(\ev\vphi)+t^{-q_M}\Gaw^{(1)}+\Gag\c\Gaw^{(1)}+\Gab\c\Gaw.
\end{align*}
Similarly, we deduce the transport equation for $S_{JKI}$:
\begin{align}\label{Eqn:e0ga2}
    e_0\ga_{JKI}=\frac{\qJt-\qKt}{t}\ga_{IKJ}-\frac{\qJt-\qIt}{t}\ga_{KIJ}-\frac{\qJt}{t}\ga_{JKI}+\err.
\end{align}
Then adding \eqref{Eqn:e0ga1} and \eqref{Eqn:e0ga1} renders
\begin{align*}
    e_0S_{IJK}&=-\frac{\qIt-\qKt}{t}\ga_{JKI}-\frac{\qIt}{t}\ga_{IJK}+\frac{\qKt-\qJt}{t}\ga_{IJK}-\frac{\qJt}{t}\ga_{JKI}+\err\\
    &=\frac{-\qIt-\qJt+\qKt}{t}\ga_{IJK}+\frac{-\qIt-\qJt+\qKt}{t}\ga_{JKI}+\err\\
    &=\frac{-\qIt-\qJt+\qKt}{t}S_{IJK}+\err.
\end{align*}
This concludes the proof of Lemma \ref{gaeq}.
\end{proof}
We proceed to derive the lower-order estimates for $S_{IJK}$ and $\ga_{IJK}$.
\begin{prop}\label{estga}
For the connection coefficients $\ga$, the following estimate holds
\begin{align*}
    |t^q\ga^{(1)}|^2+\int_t^1s^{2q-1}|\ga^{(1)}|^2ds\les\ep_0^2+\ep^2\int_t^1s^{2q-1}\left|(\Gag,\Gab)^{(1)}\right|^2ds+\int_t^1s^{-1+2\si}\Dd(s)^2ds.
\end{align*}
\end{prop}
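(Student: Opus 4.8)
The plan is to replace the coupled system for $\ga$ by the decoupled transport equations for the structure coefficients $S_{IJK}=\ga_{IJK}+\ga_{JKI}$ established in Lemma \ref{gaeq}, and to recover $\ga$ linearly: by the antisymmetry \eqref{gaantisymmetric} one has $\ga_{IJK}=\tfrac12(S_{IJK}-S_{JKI}+S_{KIJ})$, so it suffices to prove the asserted bound for each $S_{IJK}$ and each $\pr^{\io}S_{IJK}$ with $|\io|\le 1$, and then take the maximum over $I,J,K$ and $\io$, using $|\ga^{(1)}|\les\max_{I,J,K}|S_{IJK}^{(1)}|$.

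First I would pass from the $e_0$-equation to a $\pr_t$-equation, working at a fixed transported coordinate $x\in\Tt^3$ (so that $e_0=n^{-1}\pr_t$ is an ODE in $t$). Multiplying the equation of Lemma \ref{gaeq} by $n=e^{\vphi}$ and writing $n=1+(n-1)$, with $\|(n-1)^{(1)}\|_{L^{\infty}(\Si_t)}\les\|\vphi\|_{W^{1,\infty}(\Si_t)}\les\ep t^{2\si}$ by the bootstrap and Theorem \ref{M1}, one gets
\begin{equation*}
\pr_t S_{IJK}+\frac{\qIt+\qJt-\qKt}{t}S_{IJK}=\widetilde V,\qquad \widetilde V:=n\,\err-\frac{(n-1)(\qIt+\qJt-\qKt)}{t}S_{IJK},
\end{equation*}
where $\err$ denotes the right-hand side of Lemma \ref{gaeq}. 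Since $\pr^{\io}$ commutes with $\pr_t$ and the coefficient $\qIt+\qJt-\qKt$ is a constant, commuting with $\pr^{\io}$ ($|\io|\le 1$) gives the same equation for $\pr^{\io}S_{IJK}$ with source $\pr^{\io}\widetilde V$. The strong sub-critical condition, through \eqref{dfqsi}, yields $\qIt+\qJt-\qKt\le q-4\si<q$, so I would apply the evolution lemma (Lemma \ref{evolution}) with $\la=q$, $\la_0=\qIt+\qJt-\qKt$, hence $\la-\la_0\ge 4\si$; this produces, pointwise in $x$,
\begin{equation*}
|t^q\pr^{\io}S_{IJK}(t)|^2+\int_t^1 s^{2q-1}|\pr^{\io}S_{IJK}(s)|^2\,ds\les \ep_0^2+\int_t^1 s^{2q+1}|\pr^{\io}\widetilde V(s)|^2\,ds,
\end{equation*}
the initial term being $\les\|\ga(1)\|_{W^{1,\infty}(\Si_1)}^2\les\Dd(1)^2\les\ep_0^2$ by \eqref{initial}.

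It then remains to estimate $\int_t^1 s^{2q+1}|\pr^{\io}\widetilde V|^2\,ds$ term by term, splitting into two types. For the terms $\Gag\c\Gaw^{(1)}$ and $\Gab\c\Gaw$ in $\err$, and for $\tfrac{n-1}{t}S_{IJK}$: after the Leibniz rule each one contains a factor that is (a derivative of) a component of $\Gaw$ or of $n-1$; bounding that factor by the bootstrap ($\|\Gaw^{(1)}\|_{L^{\infty}}\le\ep/t$, $\|(n-1)^{(1)}\|_{L^{\infty}}\les\ep t^{2\si}$) and keeping the remaining factor, which is (a derivative of) a component of $\Gag$ or $\Gab$, bounds these contributions by $\ep^2 s^{2q-1}|(\Gag,\Gab)^{(1)}|^2$, i.e.\ by the second term on the right of the proposition. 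For the remaining terms $t^{-1}O(\ev\vphi)$ and $t^{-q_M}\Gaw^{(1)}$ (and their products with $n$, where the extra $n-1$ only helps), I would use the \emph{sharp} lower-order bounds: Theorem \ref{M1} and Lemma \ref{decayGagGabGaw} give $\|\ev\vphi\|_{L^{\infty}}\les t^{2\si-q_M}\Dd$ and $\|(\ev\vphi)^{(1)}\|_{L^{\infty}}\les t^{2\si-q}\Dd$, while Lemma \ref{interpolation} gives $\|\Gaw^{(2)}\|_{L^{\infty}}\les t^{-1-\dq}\Dd$; combined with the AVTD relation $q_M+\dq\le q-2\si$ from \eqref{AVTD} and $q=1-2\si$, each such contribution is $\les s^{-1+4\si}\Dd(s)^2\le s^{-1+2\si}\Dd(s)^2$, hence bounded by $\int_t^1 s^{-1+2\si}\Dd(s)^2\,ds$. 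Collecting the two types and maximizing over $I,J,K,\io$ yields the claim.

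The delicate point is this last step: making sure no error term destroys the $s^{2\si}$ margin needed to match $\int_t^1 s^{-1+2\si}\Dd^2\,ds$. This is exactly why the improved elliptic estimate for the lapse (the $t^{2\si}$ room built into $\Ll_n$ via the maximum principle in Theorem \ref{M1}) is essential rather than merely $\ev\vphi\in\Gab$, and why the commutator term $\pr^{\io}(t^{-q_M}\Gaw^{(1)})$ must be controlled through the interpolated bound of Lemma \ref{interpolation} together with the \emph{strict} inequality $q_M+\dq<q$ supplied by the strong sub-critical condition.
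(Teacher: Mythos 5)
Your proposal is correct and follows essentially the same strategy as the paper's proof: decoupling via $S_{IJK}$ (Lemma \ref{gaeq}), commuting with $\pr^\io$ for $|\io|\le 1$, applying Lemma \ref{evolution} with $\la=q$ and $\la_0=\qIt+\qJt-\qKt$ (so $\la-\la_0\ge 4\si$), splitting the error terms exactly as in the paper's \eqref{Est:S L Right}, and recovering $\ga$ from $S$ by a Koszul-type identity (your formula $\ga_{IJK}=\tfrac12(S_{IJK}-S_{JKI}+S_{KIJ})$ is the same as the paper's $\tfrac12(S_{IJK}+S_{KJI}+S_{KIJ})$ since $S_{KJI}=-S_{JKI}$). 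The only stylistic difference is that you make the $e_0\to\pr_t$ conversion and the resulting $(n-1)$ correction term explicit, whereas the paper suppresses it schematically; your observations about the sharp lapse bound $\|(\ev\vphi)^{(1)}\|_{L^\infty}\les t^{2\si-q}\Dd$ and the role of $q_M+\dq\le q-2\si$ match the paper's bookkeeping.
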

\begin{proof}
Employing Lemma \ref{gaeq}, we have
\begin{align*}
    \pr_t(S_{IJK})+\frac{\qIt+\qJt-\qKt}{t}S_{IJK}=t^{-1}O(\ev\vphi)+t^{-q_M}\Gaw^{(1)}+\Gag\c\Gaw^{(1)}+\Gab\c\Gaw.
\end{align*}
Commuting with $\pr^\io$ for $|\io|\leq 1$ and applying Proposition \ref{commutation}, we then deduce
\begin{align}\label{Eqn:prS}
    &\pr_t(\pr^\io S_{IJK})+\frac{\qIt+\qJt-\qKt}{t}(\pr^\io S_{IJK})\\
    =\;&t^{-q_M}\Gaw^{(2)}+\Gag\c\Gaw^{(2)}+\Gag^{(1)}\c\Gaw^{(1)}+(\Gab\c\Gaw)^{(1)}+t^{-1}(\ev\vphi)^{(1)}.
\end{align}
Note that from Lemma \ref{interpolation} and Theorem \ref{M1} we obtain
\begin{equation}\label{Est:S L Right}
\begin{aligned}
    \int_t^1 s^{2q+1}\left(s^{-q_M}\Gaw^{(2)}+\Gag\c\Gaw^{(2)}\right)^2ds&\les \int_t^1 s^{2q+1-2q_M-2\de q-2}\Dd(s)^2ds\\
    &\les\int_t^1s^{-1+2\si}\Dd(s)^2ds,\\
    \int_t^1 s^{2q+1}\left(\Gag^{(1)}\c\Gaw^{(1)}+(\Gab\c\Gaw)^{(1)}\right)^2ds&\les \ep^2\int_t^1s^{2q-1}\left|(\Gag,\Gab)^{(1)}\right|^2ds,\\
    \int_t^1 s^{2q-1}\left|(\ev\vphi)^{(1)}\right|^2&\les \int_t^1s^{2q-1-2q+4\si}(\Ll_n(s)+\Hh_n(s))^2ds\\
    &\les \int_t^1s^{-1+2\si}\Dd(s)^2ds.
\end{aligned}
\end{equation}
Thus, adapting Lemma \ref{evolution} to \eqref{Eqn:prS}  with $\la=q$ and $\la_0=\qIt+\qJt-\qKt$ and injecting the estimates \eqref{Est:S L Right}, we derive
\begin{align*}
    |t^q(\pr^\io S_{IJK})|^2+\int_t^1s^{2q-1}|\pr^\io S_{IJK}|^2ds&\les\ep_0^2+\ep^2\int_t^1s^{2q-1}\left|(\Gag,\Gab)^{(1)}\right|^2ds+\int_t^1s^{-1+2\si}\Dd(s)^2ds.
\end{align*}
Taking the maximum for $|\io|\leq 1$ and combining with the following Koszul formula
\begin{align*}
    \ga_{IJK}=\frac{1}{2}(S_{IJK}+S_{KJI}+S_{KIJ}),
\end{align*}
we arrive at the desired lower-order estimate for $\ga$.
\end{proof}

\subsection{Estimate for \texorpdfstring{$\ec$}{}}
Next we derive the lower-order estimate of $\ec$.
\begin{prop}\label{esteom}
    The following estimate holds for $\ec$:
    \begin{align*}
       |t^{q_M}\ec^{(1)}|^2+\int_t^1s^{2q_M-1}|\ec^{(1)}|^2ds\les\ep_0^2+\ep^2\int_t^1s^{2q_M-1}|\Gag^{(1)}|^2ds+\int_t^1s^{-1+2\si}\Dd(s)^2ds.
    \end{align*}
\end{prop}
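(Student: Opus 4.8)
The strategy mirrors the treatment of $\ga$ in Proposition \ref{estga}: I would first derive a decoupled evolution equation for the linearized frame components $\ec_I^i$, then commute with spatial derivatives $\pr^\io$ for $|\io|\le 1$, and finally apply the evolution lemma (Lemma \ref{evolution}) with the appropriate exponents. First I would start from the transport equation \eqref{2.23}, namely $e_0 e_I^i = k_{IC} e_C^i$. Writing $k_{IC} = \kc_{IC} + \kt_{IC} = \kc_{IC} - \frac{\qIt}{t}\de_{IC}$ and $e_C^i = \ec_C^i + \et_C^i = \ec_C^i + t^{-\qIt}\de_C^i$, and using $e_0 = n^{-1}\pr_t = (1 + O(\vphi))\pr_t$, I would extract the linear-in-$t^{-1}$ part of the right-hand side. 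The background piece already satisfies $\pr_t \et_I^i = -\frac{\qIt}{t}\et_I^i$ (which is consistent with $\gat = 0$ and Proposition \ref{kasnerquantities}), so after subtraction the equation for $\ec_I^i$ should take the schematic form
\begin{align*}
    \pr_t(\ec_I^i) + \frac{\qIt}{t}\ec_I^i = t^{-q_M}\Gaw^{(1)} + \Gag\c\Gaw + \Gab\c\Gag + \cdots,
\end{align*}
where the first term on the right comes from $\kc_{IC}\et_C^i$ (which behaves like $t^{-1}\cdot t^{-\qIt}\cdot\kc$, and $\qIt \le q_M$), and the remaining terms are genuinely quadratic. The key structural observation to verify is that there is \emph{no} $t^{-1}\Gaw^{(1)}$ term with a coefficient $1$ (i.e. no ``critical'' linear source); the linear source is at worst of size $t^{-q_M}\Gaw$, which is sub-critical relative to the integrating weight $t^{q_M}$.

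Next I would commute the evolution equation with $\pr^\io$ for $|\io|\le 1$. Using Proposition \ref{commutation}, the commutator $[\pr^\io, e_I]$ applied to any reduced variable produces terms of the form $(\Gag^{(1)}\c v^{(1)})^{(\io-1)}$, so the commuted equation becomes
\begin{align*}
    \pr_t(\pr^\io \ec_I^i) + \frac{\qIt}{t}(\pr^\io \ec_I^i) = t^{-q_M}\Gaw^{(2)} + \Gag\c\Gaw^{(2)} + \Gag^{(1)}\c\Gaw^{(1)} + (\Gab\c\Gag)^{(1)} + \cdots.
\end{align*}
Then I would bound the right-hand side in $L^\infty(\Si_t)$ (for $|\io|\le1$) using Lemma \ref{decayGagGabGaw}, Lemma \ref{interpolation}, and Theorem \ref{M1}, exactly as in the estimate \eqref{Est:S L Right}. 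The crucial point is the exponent count: the linear term $t^{-q_M}\Gaw^{(2)}$ contributes, after multiplying by the weight $s^{2q_M+1}$ required by Lemma \ref{evolution} with $\la = q_M$, a factor $s^{2q_M+1-2q_M}\cdot s^{-2-2\de q}\Dd(s)^2 = s^{-1-2\de q}\cdots$ — wait, more carefully: $\|\Gaw^{(2)}\|_{L^\infty}\les t^{-1-\de q}\Dd(t)$ by Lemma \ref{interpolation}, so $s^{2q_M+1}(s^{-q_M}s^{-1-\de q}\Dd)^2 = s^{2q_M+1-2q_M-2-2\de q}\Dd^2 = s^{-1+2q_M-2\de q}\cdots$; since $q_M + \de q \le q - 2\si$ (by \eqref{AVTD}) we have $q_M - \de q \le q - 2\si - 2\de q < q$, but I actually need $2q_M - 2\de q \ge 2\si$, which follows because... hmm, this requires $q_M \ge \de q + \si$, i.e. $\max\qIt \ge \max(\qIt - \qJt)$, i.e. $\min\qJt \ge 0$; since $\max\qIt < 3/5 < 1$ the sum rule $\sum\qIt = 1$ forces some $\qIt$ possibly negative, so $\min\qJt$ can be negative and this does NOT immediately hold. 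The correct bound for the linear term uses instead the $L^\infty$ bound $\|\Gaw^{(1)}\|_{L^\infty}\les t^{-1}\Dd$ from Lemma \ref{decayGagGabGaw} directly (no interpolation), giving $s^{2q_M+1}(s^{-q_M-1}\Dd)^2 = s^{-1+2q_M-2q_M}\cdots$ — that's wrong too since $\Gaw^{(2)}$ not $\Gaw^{(1)}$ appears after commutation. The resolution is that the commutator term $\Gag\c\Gaw^{(2)}$ carries an extra $\Gag$ factor of size $t^{-q_M}\ep$, hence is lower order, while the genuine linear term is only $t^{-q_M}\Gaw^{(1)}$ (no second derivative) since the source $k_{IC}\et_C^i$ is already at top allowed order; so the relevant estimate is $s^{2q_M+1}(s^{-q_M}\cdot s^{-1}\Dd)^2 = s^{-1}\Dd^2$, which fails integrability by exactly the $\si$-margin one normally has. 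I therefore expect that the actual linear source is $t^{-q_M+2\si}$ or better (coming with a $\vphi$ or $\ec$ factor), or that one uses the improved weight; the precise bookkeeping here is \textbf{the main obstacle} and is resolved by the same $2\si$-room mechanism used throughout Section \ref{sectransport}.

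Finally, I would apply Lemma \ref{evolution} with $\la = q_M$ and $\la_0 = \qIt \le q_M$ (noting $q_M = \max_I\qIt + \si > \qIt$ strictly, so $\la > \la_0$), obtaining
\begin{align*}
    |t^{q_M}(\pr^\io\ec_I^i)|^2 + (q_M - \qIt)\int_t^1 s^{2q_M-1}|\pr^\io\ec_I^i|^2\,ds \les \ep_0^2 + \frac{1}{q_M-\qIt}\int_t^1 s^{2q_M+1}|\text{RHS}|^2\,ds.
\end{align*}
Splitting the right-hand side contributions: the genuinely quadratic terms involving two ``small'' factors yield $\ep^2\int_t^1 s^{2q_M-1}|\Gag^{(1)}|^2\,ds$ (absorbing the extra derivatives via Lemma \ref{interpolation} into a $\Dd(s)^2$ with enough $s$-power to spare), the linear and mixed terms yield $\int_t^1 s^{-1+2\si}\Dd(s)^2\,ds$, and the initial-data term is $\ep_0^2$ by \eqref{initial}. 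Taking the maximum over $|\io|\le 1$ and summing over $I,i$ then gives precisely the claimed estimate. The only subtlety beyond routine exponent-chasing is confirming that the frame equation genuinely has no critical $t^{-1}$-linear source term for $\ec$ (which is why the weight $t^{q_M}$, not $t^1$, is the right one here) — this is exactly analogous to, but structurally simpler than, the rewriting-via-$S_{IJK}$ trick needed for $\ga$.
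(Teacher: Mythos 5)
Your strategy is the same as the paper's (derive the linearized transport equation for $\ec_I^i$, commute with $\pr^\io$ for $|\io|\le 1$, apply Lemma~\ref{evolution} with $\la=q_M$, $\la_0=\qIt$), and your structural observation that there is no critical $t^{-1}$ linear source is correct. However, there is a genuine gap in the exponent bookkeeping that you explicitly flag but do not resolve. The linear source arising from $\kc_{Ii}\,\et_i^i$ is $t^{-\widetilde{q_i}}\Gaw$, where $\widetilde{q_i}$ is the Kasner exponent of the \emph{coordinate} index $i$, and by definition $\widetilde{q_i}\le\max_I\qIt = q_M-\si$. You replace $t^{-\widetilde{q_i}}$ by the lossy bound $t^{-q_M}$, which destroys the $t^{\si}$ margin; you then correctly compute that this makes the bulk integral critical, $\int s^{2q_M+1}\,(s^{-q_M}\c s^{-1}\Dd)^2\,ds = \int s^{-1}\Dd^2\,ds$, and remark that this "fails integrability by exactly the $\si$-margin." Keeping the precise exponent instead gives $\int s^{2q_M+1-2\widetilde{q_i}}|\Gaw^{(1)}|^2\,ds \les \int s^{-1+2\si}\Dd(s)^2\,ds$, which is exactly what is needed. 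Your proposed fix ("the actual linear source is $t^{-q_M+2\si}$ or better") has the wrong power of $\si$ and is not derived; the correct improvement is $t^{-q_M+\si}$, and after squaring this yields the $2\si$ room. Without identifying where that $\si$ comes from, the argument as written does not close.

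A secondary slip: after commuting the transport equation with $\pr^\io$, $|\io|\le 1$, the linear source becomes $t^{-\widetilde{q_i}}\Gaw^{(1)}$ (one derivative), not $\Gaw^{(2)}$; the right-hand side of \eqref{2.23} contains no spatial derivatives, so Proposition~\ref{commutation} produces no extra loss of derivatives here. You eventually catch this yourself, but it causes some of the detours in your exponent-chasing (in particular the aside about whether $q_M\ge\de q+\si$, which is not needed in this estimate).
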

\begin{proof}
    Utilizing \eqref{2.23} and Proposition \ref{kasnerquantities}, we obtain
    \begin{align*}
        \pr_t(e_I^i)&=nk_{IC}e_C^i=-\frac{\qIt}{t}e_I^i+n\kc_{IC}\left(t^{-\qit}\de_C^i+\ec_C^i\right)=-\frac{\qIt}{t}e_I^i+t^{-\qit}\Gaw+\Gag\c\Gaw,\\
        \pr_t(\et_I^i)&=-\frac{\qIt}{t}(\et_I^i).
    \end{align*}
    Taking the difference, we then get the evolution equation for the linearized reduced variable $\ec_I^i$, i.e.,
    \begin{align}\label{Eqn:linear e}
    \pr_t(\ec_I^i)+\frac{\qIt}{t}(\ec_I^i)=t^{-\qit}\Gaw+\Gag\c\Gaw.
    \end{align}
    Commuting the above equation with $\pr^\io$ for $|\io|\leq 1$ and applying Proposition \ref{commutation}, we deduce
    \begin{align*}
    \pr_t(\pr^\io\ec_I^i)+\frac{\qIt}{t}(\pr^\io\ec_I^i)=t^{-\qit}\Gaw^{(1)}+(\Gag\c\Gaw)^{(1)}.
    \end{align*}
    Hence, employing Lemma \ref{evolution} with $\la=q_M$ and $\la_0=\qIt$ and combining with Lemma \ref{decayGagGabGaw} and \eqref{dfqM}, we conclude
    \begin{align*}
    |t^{q_M}\pr^\io\ec_I^i|^2+\int_t^1s^{2q_M-1}|\pr^\io\ec_I^i|^2ds&\les\ep_0^2+\int_t^1 s^{2q_M+1-2\qit}|\Gaw^{(1)}|^2ds+\int_t^1s^{2q_M+1}|(\Gag\c\Gaw)^{(1)}|^2ds\\
    &\les\ep_0^2+\int_t^1s^{-1+2\si}\Dd(s)^2ds+\ep^2\int_t^1s^{2q_M-1}|\Gag^{(1)}|^2ds.
    \end{align*}
This finishes the proof of Proposition \ref{esteom}.
\end{proof}

\subsection{Estimate for \texorpdfstring{$k$}{}}
We turn to control $k$ at the lower order.
\begin{prop}\label{estk}
For the second fundamental form $k$, the following estimate holds
\begin{align}\label{kcesteq}
    |t\kc^{(1)}|^2\les\ep_0^2+\int_t^1s^{-1+2\si}\Dd(s)^2ds.
\end{align}
\end{prop}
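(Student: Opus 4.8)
The plan is to regard \eqref{2.22a} as a transport equation for the linearized variable $\kc_{IJ}$ and to apply the critical case \eqref{criticalevolution} of Lemma \ref{evolution}. First I would multiply \eqref{2.22a} by $n$ (so that $n e_0 = \pr_t$) and subtract the corresponding identity for the exact Kasner solution of Proposition \ref{kasnerquantities}; since $\nt=1$ and $\kt_{IJ}=-t^{-1}\qIt\de_{IJ}$ satisfy $\pr_t\kt_{IJ}+t^{-1}\kt_{IJ}=0$, this produces a schematic evolution equation
\begin{equation*}
\pr_t\kc_{IJ}+\frac{1}{t}\kc_{IJ}=O(\ev\ga)+O(\ev(\ev\vphi))+(n-1)\,t^{-1}k+\Gag\c\Gaw^{(1)}+\Gab\c(\Gab,\Gaw)+T ,
\end{equation*}
in which the coefficient of the linear term is exactly $1$ (the critical value) and the right-hand side consists entirely of genuine error terms --- in particular no $\Gaw\c\Gaw$ product occurs. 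Commuting with $\pr^{\io}$ for $|\io|\le 1$ and absorbing $[\pr^{\io},e_I]$ via Proposition \ref{commutation} preserves this schematic structure, so for each such $\io$ one gets $\pr_t(\pr^{\io}\kc_{IJ})+t^{-1}(\pr^{\io}\kc_{IJ})=V^{(\io)}_{IJ}$ with $V^{(\io)}_{IJ}$ the once-differentiated error.

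Next I would apply \eqref{criticalevolution} with $\la_0=1$, which gives
\begin{equation*}
|t\,\pr^{\io}\kc_{IJ}(t)|^2\le |\pr^{\io}\kc_{IJ}(1)|^2+2\int_t^1 s^2\,\big|\pr^{\io}\kc_{IJ}(s)\,V^{(\io)}_{IJ}(s)\big|\,ds .
\end{equation*}
The initial term is $\les\ep_0^2$ because $|\pr^{\io}\kc(1)|\le\|\kc\|_{W^{1,\infty}(\Si_1)}=\Ll_k(1)\le\Dd(1)\le\ep_0$. For the bulk term I would use the bootstrap bound $|\pr^{\io}\kc(s)|\le s^{-1}\Dd(s)$ together with the claim that every error term satisfies $|V^{(\io)}_{IJ}(s)|\les s^{-2+2\si}\Dd(s)$; then $s^2|\pr^{\io}\kc\,V^{(\io)}|\les s^{-1+2\si}\Dd(s)^2$, and integrating from $t$ to $1$ produces exactly the right-hand side of \eqref{kcesteq}. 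Taking the maximum over $|\io|\le 1$ and over $I,J$ then finishes the proof.

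The real content is the error estimate $|V^{(\io)}|\les s^{-2+2\si}\Dd$, and this is where sub-criticality enters. The borderline (linear) contributions are the once-differentiated second spatial derivatives of the lapse, $O(\ev(\ev\vphi))$, and the lapse correction $(n-1)t^{-1}k$: using Theorem \ref{M1} (together with the elliptic lapse equation \eqref{2.25} and the interpolation bounds of Lemma \ref{interpolation}) one has $\|\vphi\|_{W^{1,\infty}}\les t^{2\si}\Dd$ and $\ev(\ev\vphi)\les t^{-2+6\si}\Dd\le t^{-2+2\si}\Dd$ for $t\le 1$, while $(n-1)t^{-1}k\les t^{2\si}\Dd\cdot t^{-2}=t^{-2+2\si}\Dd$. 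The term $\ev\ga$ is controlled by Lemma \ref{interpolation}, $\|\ev\ga\|_{W^{1,\infty}}\les t^{-2q}\Dd=t^{-2+4\si}\Dd\le t^{-2+2\si}\Dd$ (here $q=1-2\si<1$ is precisely what makes $-2q>-2$), and the matter term obeys $\|T\|_{W^{1,\infty}}\les t^{-1-q}\Dd^2=t^{-2+2\si}\Dd^2$ by Lemma \ref{topT}. All remaining quadratic error terms built from $\Gag,\Gab,\Gaw$ and $\ev\vphi$ are dominated by $t^{-2+2\si}\Dd$ up to an extra positive power of $t$ or an extra factor of $\ep$ --- a routine check via Lemma \ref{decayGagGabGaw}, Lemma \ref{interpolation} and Proposition \ref{estga}. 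I expect the only delicate point to be tracking these borderline powers and confirming that no $\Gaw\c\Gaw$ term hides in $V^{(\io)}$; once that is settled, the evolution lemma does the rest.
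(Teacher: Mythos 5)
Your proposal is correct and follows essentially the same route as the paper: write the $\partial_t$-form of \eqref{2.22a}, subtract the exact Kasner relation $\partial_t\kt_{IJ}+t^{-1}\kt_{IJ}=0$, commute once with $\partial^\io$, apply the critical case \eqref{criticalevolution} of the evolution lemma, and bound the once-differentiated error terms using Lemmas \ref{decayGagGabGaw}, \ref{interpolation} and \ref{topT} to obtain $|V^{(\io)}|\lesssim s^{-2+2\si}\Dd$ (the paper records this as $s^{-2q}\Dd + s^{-1-q}\Dd$, which since $q=1-2\si$ is the same thing). The observation that the borderline contributions are the lapse terms and $T$ and that no $\Gaw\cdot\Gaw$ term appears is exactly the null-structure point the paper relies on; the cited Proposition \ref{estga} is not actually needed in this step, but that is immaterial.
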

\begin{proof}
   From \eqref{2.22a} in Proposition \ref{basicequations}, we have
    \begin{align}\label{Eqn:prt k}
        \pr_t(k_{IJ})+\frac{1}{t}k_{IJ}&=-e_I(e_J\vphi)+e_C(\ga_{IJC})-e_I(\ga_{CJC})+O(t^{-1})\Gab+\Gab\c\Gaw+T.
    \end{align}
    Also, by Proposition \ref{kasnerquantities}, there holds
    \begin{align*}
        \pr_t(\kt_{IJ})+\frac{1}{t}\kt_{IJ}=0.
    \end{align*}
    Taking the difference, commuting with $\pr^\io$ for $|\io|\leq 1$ and applying Proposition \ref{commutation}, we thus obtain
    \begin{align*}
        \pr_t(\pr^\io\kc_{IJ})+\frac{1}{t}(\pr^\io\kc_{IJ})=-\left(\ev(\ev\vphi)\right)^{(1)}+(\ev\ga)^{(1)}+t^{-1}\Gab^{(1)}+(\Gab\c\Gaw)^{(1)}+T^{(1)}.
    \end{align*}
    Hence, applying \eqref{criticalevolution} in Lemma \ref{evolution}, along with Lemma \ref{decayGagGabGaw}, Lemma \ref{interpolation} and Lemma \ref{topT}, we deduce
    \begin{align*}
    |t\pr^\io\kc_{IJ}|^2&\les\ep_0^2+\int_t^1s^2\frac{\Dd(s)}{s}\left(\frac{\Dd(s)}{s^{2q}}+\frac{\Dd(s)}{s^{1+q}}\right)ds\les\ep_0^2+\int_t^1s^{-1+2\si}\Dd(s)^2ds.
    \end{align*}
    This concludes the proof of Proposition \ref{estk}.
\end{proof}
\subsection{Estimates for \texorpdfstring{$e_0\psi$}{} and \texorpdfstring{$\ev\psi$}{}}
We now establish the lower-order estimates for the time and spatial derivatives of the scalar field $\psi$.
\begin{prop}\label{estpsi}
The following estimates hold for $\widecheck{e_0\psi}$ and $\ev\psi$:
\begin{align}
\left|t\left(\widecheck{e_0\psi}\right)^{(1)}\right|^2&\les\ep_0^2+\int_t^1s^{-1+2\si}\Dd(s)^2ds,\label{este0psi}\\
\left|t^q\left(\ev\psi\right)^{(1)}\right|^2+\int_t^1s^{2q-1}\left|\left(\ev\psi\right)^{(1)}\right|^2ds&\les\ep_0^2+\ep^2\int_t^1s^{2q-1}|\Gab^{(1)}|^2ds+\int_t^1s^{-1+2\si}\Dd(s)^2ds.\label{estevpsi}
\end{align}
\end{prop}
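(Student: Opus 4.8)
The plan is to treat the two wave-related variables $\widecheck{e_0\psi}$ and $\ev\psi$ through the evolution lemma (Lemma \ref{evolution}), exactly in the spirit of the preceding propositions for $\ga$, $\ec$ and $k$. First I would record the two relevant evolution equations. From the reduced wave equation \eqref{2.24}, after subtracting the exact Kasner value $\widetilde{e_0\psi}=\widetilde B/t$ and using $\pr_t(\widetilde B/t)+t^{-1}(\widetilde B/t)=0$, one gets schematically
\begin{align*}
    e_0\big(\widecheck{e_0\psi}\big)+\frac1t\,\widecheck{e_0\psi}=e_C(e_C\psi)+t^{-1}O(\ev\vphi)+\Gab\c\Gab,
\end{align*}
and similarly, commuting the relation $e_0(e_I\psi)+\frac{\qIt}{t}e_I\psi=e_I(e_0\psi)+\cdots$ (which follows from differentiating $\psi$ and using $\D_{e_0}e_I=(e_I\vphi)e_0$ together with Proposition \ref{kasnerquantities}), one gets for the spatial derivatives
\begin{align*}
    e_0(e_I\psi)+\frac{\qIt}{t}e_I\psi=e_I\big(\widecheck{e_0\psi}\big)+t^{-1}\Gab+\Gag\c\Gaw+\Gab\c\Gaw.
\end{align*}
(Here I replace $e_0\psi$ by $\widecheck{e_0\psi}$ up to the explicit term $\widetilde B/t$, which is absorbed into the $t^{-1}\Gab$-type error after using $\ev(\widetilde B/t)=0$.) The point is that both equations have the decoupled-at-linear-level structure $\pr_t U+\frac{\la_0}{t}U=V$ with $\la_0=1$ in the first case and $\la_0=\qIt$ in the second.

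Next I would commute each equation with $\pr^\io$ for $|\io|\le 1$, using the commutation formula in Proposition \ref{commutation} to convert $[\pr^\io,e_I]$-terms into $(\Gag^{(1)}\c v^{(1)})^{(\io-1)}$ contributions; all such terms, along with $(\ev\ga)^{(1)}$-type terms coming from $e_C(e_C\psi)$ and from the curvature/lapse errors, are controlled by Lemma \ref{decayGagGabGaw}, Lemma \ref{interpolation} and Theorem \ref{M1}. For \eqref{este0psi} I apply the critical case \eqref{criticalevolution} of Lemma \ref{evolution} with $\la=\la_0=1$: the right-hand side $V$ satisfies $\|V\|_{L^\infty(\Si_t)}\les t^{-1}\Dd(t)(t^{-2q}+t^{-q+2\si}+\cdots)$ after the interpolation bounds on $\ev(\ev\psi)$ and $\ev\vphi$, so $\int_t^1 s^2|UV|\,ds\les\int_t^1 s^{-1+2\si}\Dd(s)^2\,ds$, which gives \eqref{este0psi}. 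For \eqref{estevpsi} I apply the sub-critical case \eqref{subcritialevolution} with $\la=q$ and $\la_0=\qIt\le q_M<q$ (so $\la-\la_0>0$); the terms $e_I(\widecheck{e_0\psi})$ and the genuine nonlinear errors get split, following the pattern of \eqref{Est:S L Right} in the proof of Proposition \ref{estga}, into pieces bounded either by $\ep^2\int_t^1 s^{2q-1}|\Gab^{(1)}|^2\,ds$ (the borderline terms, treated via the bootstrap smallness $\ep$ since $\ev\psi$ itself belongs to $\Gab$) or by $\int_t^1 s^{-1+2\si}\Dd(s)^2\,ds$ (everything with enough decay to spare, using $q+\frac12-q_M>\si$ and $A_*/\kl\approx\de q$). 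Taking the maximum over $|\io|\le 1$ then yields \eqref{estevpsi}.

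The main obstacle is the careful bookkeeping of the coupling term between the two equations — $e_C(e_C\psi)$ appearing in the $\widecheck{e_0\psi}$-equation and $e_I(\widecheck{e_0\psi})$ in the $\ev\psi$-equation — since a naive estimate would produce a derivative loss. The resolution is the same structural observation used throughout: one never differentiates this coupling term to the top order here (we are only at lower order, $|\io|\le 1$), and the one-derivative version $\ev(\ev\psi)$ is controlled by the interpolation bound $\|\ev(\ev\psi)\|_{W^{1,\infty}(\Si_t)}\les t^{-2q}\Dd(t)$ from Lemma \ref{interpolation}, whose exponent $2q$ is exactly compensated by the weight $s^{2}$ (resp.\ $s^{2q+1}$) in the evolution lemma together with the available $\si$-room $q_M+\de q\le q-2\si$. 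A secondary point requiring attention is that $e_0\psi$ does \emph{not} vanish on the Kasner background, so one must consistently work with $\widecheck{e_0\psi}=e_0\psi-\widetilde B/t$ and check that replacing $e_0\psi$ by $\widecheck{e_0\psi}$ in the $\ev\psi$-equation only generates harmless $t^{-1}\Gab$ errors, which it does because $\ev(\widetilde B/t)=0$.
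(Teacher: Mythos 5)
Your proposal is correct and follows essentially the same route as the paper: subtract the exact Kasner value, commute with $\pr^\io$ for $|\io|\le 1$ using Proposition \ref{commutation}, and then apply Lemma \ref{evolution} in the critical case $\la_0=1$ for $\widecheck{e_0\psi}$ and in the sub-critical case $\la=q>\la_0=\qIt$ for $e_I\psi$, with the interpolation bound $\|\ev(\ev\psi)\|_{W^{1,\infty}(\Si_t)}\les t^{-2q}\Dd(t)$ and the $\si$-room from \eqref{AVTD} closing the integrals. The only (harmless) deviation is the spurious $t^{-1}O(\ev\vphi)$ in your $\widecheck{e_0\psi}$-equation — in the reduced wave equation \eqref{2.24} every $\ev\vphi$ appears multiplied by $\ev\psi$, so these errors are already of type $\Gab\c\Gab$; since $t^{-1}\ev\vphi$ is anyhow absorbable by $\int_t^1 s^{-1+2\si}\Dd(s)^2\,ds$, this conservatism does not affect the conclusion.
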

\begin{proof}
In view of \eqref{2.24} in Proposition \ref{basicequations}, we obtain
\begin{align*}
    e_0(e_0\psi)+\frac{1}{t}e_0\psi=e_C(e_C\psi)+\Gab\c\Gab=e_C(e_C\psi)+\Gab\c\Gab.
\end{align*}
This implies
\begin{align*}
    \pr_t(e_0\psi)+\frac{n}{t}e_0\psi=O(\ev(\ev\psi))+\Gab\c\Gab.
\end{align*}
Note that from Proposition \ref{kasnerquantities} we have
\begin{align*}
    \pr_t\left(\widetilde{e_0\psi}\right)+\frac{1}{t}\widetilde{e_0\psi}=\pr_t\left(\frac{\Bt}{t}\right)+\frac{\Bt}{t^2}=0.
  \end{align*}
Subtracting these two equations, then commuting with $\pr^\io$ for $|\io| \leq 1$ and applying Proposition \ref{commutation}, we infer
\begin{align}\label{prtprtpsi}
    \pr_t(\pr^\io\widecheck{e_0\psi})+\frac{1}{t}(\pr^\io\widecheck{e_0\psi})=\left(\ev(\ev\psi)\right)^{(1)}+(\Gab\c\Gaw)^{(1)}.
\end{align}
Applying \eqref{criticalevolution} in Lemma \ref{evolution}, together with Lemma \ref{decayGagGabGaw} and Lemma \ref{interpolation}, we hence derive
\begin{align*}
    |t\widecheck{e_0\psi}|^2\les\ep_0^2+\int_t^1 s^2\frac{\Dd(s)}{s}\left(\frac{\Dd(s)}{s^{2q}}+\frac{\ep\Dd(s)}{s^{q+1}}\right)ds\les\ep_0^2+\int_t^1s^{-1+2\si}\Dd(s)^2,
\end{align*}
which is exactly \eqref{este0psi}. 
\vspace{2mm}

Next, using \eqref{commutation2} we compute
\begin{align*}
    \pr_t(e_I\psi)&=e_I(\pr_t\psi)+[\pr_t,e_I]\psi=e_I(\pr_t\psi)+nk_{IC}e_C\psi\\
    &=e_I(ne_0\psi)-\frac{n\qIt}{t}e_I\psi+\Gab\c\Gaw\\
    &=-\frac{\qIt}{t}e_I\psi+t^{-1}O(\ev\vphi)+t^{-q_M}\Gaw^{(1)}+t^{-q+2\si}\Gaw+\Gab\c\Gaw.
\end{align*}
Commuting with $\pr^\io$ for $|\io|\leq 1$ and employing Proposition \ref{commutation}, we then deduce
\begin{align*}
    \pr_t(\pr^\io e_I\psi)+\frac{\qIt}{t}(\pr^\io e_I\psi)=t^{-1}(\ev\vphi)^{(1)}+t^{-q_M}\Gaw^{(2)}+t^{-q+2\si}\Gaw^{(1)}+(\Gab\c\Gaw)^{(1)}.
\end{align*}
Consequently, utilizing Lemma \ref{evolution} with $\la=q$ and $\la_0=\qIt$, and incorporating with Lemma \ref{decayGagGabGaw} and Lemma \ref{interpolation}, we arrive at
\begin{align*}
    |t^q\pr^\io e_I\psi|^2+\int_t^1s^{2q-1}|\pr^\io e_I\psi|^2ds&\les\ep_0^2+\int_t^1s^{2q+1}\left(\frac{\Dd(s)^2}{s^{2q_M+2+2\dq}}+\frac{\Dd(s)^2}{s^{2q+2-2\si}}+\frac{\ep^2|\Gab^{(1)}|^2}{s^{2}}\right)ds\\
    &\les\ep_0^2+\ep^2\int_t^1s^{2q-1}|\Gab^{(1)}|^2ds+\int_t^1s^{-1+2\si}\Dd(s)^2ds,
\end{align*}
as stated in \eqref{estevpsi}.
\end{proof}
\subsection{Estimate for \texorpdfstring{$F$}{}}
In a similar manner to the previous analysis, we proceed to estimate the lower-order norm of the Maxwell field $F$.
\begin{prop}\label{estF}
    For the Maxwell field $F$, the following estimate holds
    \begin{align*}
        |t^qF^{(1)}|^2+\int_t^1s^{2q-1}|F^{(1)}|^2ds&\les\ep_0^2+\ep^2\int_t^1s^{2q-1}|\Gab^{(1)}|^2ds+\int_t^1s^{-1+2\si}\Dd(s)^2ds.
    \end{align*}
\end{prop}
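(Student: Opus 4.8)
The plan is to follow the template already used for Propositions \ref{estga} and \ref{estpsi}, treating the Maxwell system as a family of transport equations for the components $F_{0I}$ and $F_{IJ}$ and invoking the evolution lemma \ref{evolution}. Starting from \eqref{Maxwell1}--\eqref{Maxwell2}, I first absorb the Kasner parts of the $k$-terms into the linear coefficients on the left: using $\kt_{IJ}=-\frac{\qIt}{t}\de_{IJ}$ from Proposition \ref{kasnerquantities}, the term $k_{CI}F_{0C}$ produces $-\frac{\qIt}{t}F_{0I}+\Gaw\c\Gab$ while $k_{IC}F_{JC}+k_{JC}F_{CI}$ produces $\frac{\qIt+\qJt}{t}F_{IJ}+\Gaw\c\Gab$. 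Since $\Ft=0$, the linearized Maxwell variable is $F$ itself, so $F\in\Gab$, and also $\ev\vphi,\,t^{-1}\vphi\in\Gab$; after replacing $e_0$ by $n^{-1}\pr_t$ with $n=1+O(\vphi)$, equations \eqref{Maxwell1}--\eqref{Maxwell2} thus take the schematic form
\begin{align*}
    \pr_t(F_{0I})+\frac{1-\qIt}{t}F_{0I}&=O(\ev F)+\Gab\c\Gab+\Gaw\c\Gab,\\
    \pr_t(F_{IJ})+\frac{\qIt+\qJt}{t}F_{IJ}&=O(\ev F)+\Gab\c\Gab+\Gaw\c\Gab.
\end{align*}

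The key structural observation is that both linear coefficients lie strictly below $q$: indeed $1-\qIt\le q-4\si$ is immediate from \eqref{dfqsi}, while $\qIt+\qJt=1-\qKt\le\max_{I=1,2,3}\{1-\qIt\}\le q-4\si$ using $\sum_{I=1}^3\qIt=1$. (Notice that only quantities of the form $1-\qIt$ enter here, which is why coupling in the Maxwell field is compatible with the weaker sub-critical condition \eqref{kasnerstability} and does not necessitate the strong sub-critical condition.) Hence Lemma \ref{evolution} applies with $\la=q$ and $\la_0\in\{1-\qIt,\,\qIt+\qJt\}$, and the gain $q-\la_0\ge4\si>0$ keeps all constants in \eqref{subcritialevolution} harmless.

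Next I would commute the two equations with $\pr^\io$ for $|\io|\le1$ via Proposition \ref{commutation} and apply \eqref{subcritialevolution} with $\la=q$. For the linear source $O(\ev F)^{(1)}$, Lemma \ref{interpolation} gives $\|\ev F\|_{W^{1,\infty}(\Si_s)}\les s^{-2q}\Dd(s)$, which contributes $\int_t^1 s^{2q+1}\cdot s^{-4q}\Dd(s)^2\,ds=\int_t^1 s^{-1+4\si}\Dd(s)^2\,ds\les\int_t^1 s^{-1+2\si}\Dd(s)^2\,ds$ (here $q=1-2\si$ by \eqref{dfqsi}); the commutator terms of type $\Gag^{(1)}\c\Gab^{(1)}$ are absorbed using $\|\Gag^{(1)}\|_{L^\infty}\les s^{-q_M}\Dd$ together with $s^{2-2q_M}\le 1$. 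The quadratic error terms $\Gab\c\Gab$ and $\Gaw\c\Gab$, once differentiated, are handled exactly as the corresponding terms in the proof of Proposition \ref{estpsi}: estimating the more singular factor in $L^\infty$ through Lemma \ref{decayGagGabGaw} and retaining a $\Gab^{(1)}$ factor yields contributions controlled by $\ep^2\int_t^1 s^{2q-1}|\Gab^{(1)}|^2\,ds+\int_t^1 s^{-1+2\si}\Dd(s)^2\,ds$. Taking the maximum over $|\io|\le1$ and over the components $F_{0I},\,F_{IJ}$ then gives the stated estimate, since $F\in\Gab$.

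The step I expect to be most delicate is the bookkeeping of the error terms created by the interaction of $F$ with $\kc$ (and with $\ga$, $\ev\vphi$): after commutation these produce $\Gaw^{(1)}\c\Gab$ and $\Gaw\c\Gab^{(1)}$, the same borderline products already present for the scalar field, and one must exploit the product structure (no $\Gaw\c\Gaw$ interaction appears), the smallness of $\ep$, and the $s^{2q+1}$ weight simultaneously so that they fit into the right-hand side of the asserted inequality rather than into a non-integrable term $\int_t^1 s^{-1}\Dd(s)^2\,ds$. By contrast, the spatial coupling between the electric part $F_{0I}$ and the magnetic part $F_{IJ}$, visible through the $O(\ev F)$ source in each equation, is harmless at this lower-order level: it is absorbed uniformly through the interpolation bound on $\|\ev F\|_{W^{1,\infty}}$.
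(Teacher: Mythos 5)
Your proposal is correct and follows essentially the same route as the paper: rewrite \eqref{Maxwell1}--\eqref{Maxwell2} as transport equations with linear coefficients $1-\qIt$ and $\qIt+\qJt$ (both $\le q-4\si<q$), commute with $\pr^\io$ for $|\io|\le 1$, invoke Lemma \ref{evolution} with $\la=q$, and bound the sources $O(\ev F)$ via Lemma \ref{interpolation} and the quadratic errors $\Gab\c\Gaw$ by putting the $\Gaw$-factor in $L^\infty$ while keeping a $\Gab^{(1)}$ factor. The only minor bookkeeping slip is that interactions of $F$ with $\ga$ and $\ev\vphi$ produce $\Gab\c\Gab$ rather than $\Gaw\c\Gab$ terms after commutation, but these are strictly less singular and do not affect the argument.
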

\begin{proof}
We first write \eqref{Maxwell1} in Proposition \ref{basicequations} as
\begin{align*}
    \pr_t(F_{0I})+\frac{1-\qIt}{t}F_{0I}=O(\ev F)+\Gab\c\Gaw.
\end{align*}
Commuting with $\pr^\io$ for $|\io|\leq 1$ and applying Proposition \ref{commutation}, we deduce
\begin{align*}
    \pr_t(\pr^\io F_{0I})+\frac{1-\qIt}{t}(\pr^\io F_{0I})=(\ev F)^{(1)}+(\Gab\c\Gaw)^{(1)}.
\end{align*}
Now, employing Lemma \ref{evolution} with $\la=q$ and $\la_0=1-\qIt$, in view of Lemma \ref{decayGagGabGaw} and Lemma \ref{interpolation}, we thus derive
\begin{align*}
|t^qF_{0I}|^2+\int_t^1s^{2q-1}|F_{0I}|^2ds&\les\ep_0^2+\int_t^1s^{2q+1}\left(\frac{\Dd(s)^2}{s^{4q}}+\frac{\ep^2|\Gab^{(1)}|^2}{s^2}\right)ds\\
&\les\ep_0^2+\int_t^1s^{-1+2\si}\Dd(s)^2ds+\ep^2\int_t^1s^{2q-1}|\Gab^{(1)}|^2ds.
\end{align*}
This gives the desired estimate for $F_{0I}$. 

Note that from \eqref{Maxwell2} in Proposition \ref{basicequations},  $F_{IJ}$ also satisfies the equation in the form
\begin{equation*}
     \pr_t(F_{IJ})+\frac{\qIt+\qJt}{t}F_{IJ}=O(\ev F)+\Gab\c\Gaw.
\end{equation*}
The corresponding estimate for $F_{IJ}$ hence follows analogously.
\end{proof}
\subsection{Estimate for \texorpdfstring{$\TT$}{}}
The lower-order norm of the Vlasov part $\TT$ is controlled in a different way, based on a new approach via the conservation law for the associated energy-momentum tensor $T$.
\begin{prop}\label{estT}
The following estimate holds for $\TT$:
\begin{align*}
t^{1+q_M}|\TT|^2+\int_{t}^1s^{q_M}|\TT|^2ds\les\ep_0^2+\int_t^1s^{-1+2\si}\Dd(s)^2ds.
\end{align*}
\end{prop}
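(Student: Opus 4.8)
The plan is to bypass the Vlasov transport equation entirely and instead work with the conservation law $\D_\mu T^{\mu\nu}=0$ for the Vlasov energy--momentum tensor (a consequence of the Vlasov equation \eqref{Vlasov}), exploiting the non-negativity of its diagonal components, exactly as announced in Section~\ref{Subsec:Control of Vlasov Field}. I would first record from Remark~\ref{rkTT} that $T_{00}=\TT^2$ and $|T_{\mu\nu}|\le T_{00}$ for all $\mu,\nu$. Then, writing $\D_\mu T^{\mu 0}=0$ relative to the orthonormal frame with the help of \eqref{De0e0}, \eqref{DeIe0}, \eqref{DeIeJ}, the CMC condition $\tr k=-t^{-1}$, and $e_0=n^{-1}\pr_t$ with $n=e^{\vphi}$, one arrives at an identity of the schematic form
\begin{align*}
\pr_t(T_{00})+\frac1t T_{00}+\sum_{I=1}^3\frac{\qIt}{t}\,T_{II}=e_C(T_{0C})+\err,
\end{align*}
where $\err$ is a finite sum of terms $h\c T_{\mu\nu}$ with $h$ ranging over $\Gab$, $\Gaw$, $\ev\vphi$ and $t^{-1}\vphi$, the last coming from expanding $n$ around $1$ after multiplying through by $n$.

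The decisive step is the sign argument. Since $T_{II}\ge 0$ and $T_{00}=\sum_I T_{II}$, we have $0\le\sum_I\tfrac{\qIt}{t}T_{II}\le\tfrac{\max_I\qIt}{t}T_{00}$, which converts the identity into the differential inequality $\pr_t(T_{00})+\tfrac{1+\max_I\qIt}{t}T_{00}\ge e_C(T_{0C})+\err$. Multiplying by $t^{1+q_M}$ with $q_M=\max_I\qIt+\si$ produces a coercive lower-order term, using $q_M-\max_I\qIt=\si$:
\begin{align*}
\pr_t\!\big(t^{1+q_M}T_{00}\big)\ge \si\,t^{q_M}T_{00}+t^{1+q_M}e_C(T_{0C})+t^{1+q_M}\err.
\end{align*}
I would then absorb $\err$ into the coercive term: by Lemma~\ref{decayGagGabGaw} each $h$ above obeys $\|h\|_{L^\infty(\Si_t)}\les\ep/t$ (the $\Gab$, $\ev\vphi$ and $t^{-1}\vphi$ pieces being even less singular), so with $|T_{\mu\nu}|\le T_{00}$ we get $|t^{1+q_M}\err|\les\ep\,t^{q_M}T_{00}\le\tfrac12\si\,t^{q_M}T_{00}$ since $\ep\ll\si$. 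For the genuine spatial-derivative term, the frame coefficients satisfy $|e_I^c|\les t^{-q_M}$ (from Proposition~\ref{kasnerquantities} and the bound on $\ec$), and $\|\pr T\|_{L^\infty(\Si_t)}\les t^{-1-q}\Dd(t)^2$ by Lemma~\ref{topT}, hence $t^{1+q_M}\|\ev T\|_{L^\infty(\Si_t)}\les t^{-q}\Dd(t)^2=t^{-1+2\si}\Dd(t)^2$, where I used $q=1-2\si$ from \eqref{dfqsi}. Thus, pointwise in $x$,
\begin{align*}
\pr_t\!\big(t^{1+q_M}T_{00}\big)\ge\frac{\si}{2}\,t^{q_M}T_{00}-C\,t^{-1+2\si}\Dd(t)^2.
\end{align*}

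The final step is to integrate this from $t$ to $1$, bound $T_{00}(1,\cdot)\le\|\TT(1)\|_{L^\infty}^2\le\Dd(1)^2\le\ep_0^2$ using \eqref{initial}, and absorb the fixed factor $\si/2$ into the implicit constant, yielding
\begin{align*}
t^{1+q_M}T_{00}(t,x)+\int_t^1 s^{q_M}T_{00}(s,x)\,ds\les\ep_0^2+\int_t^1 s^{-1+2\si}\Dd(s)^2\,ds,
\end{align*}
and then recalling $T_{00}=\TT^2$ gives the stated bound (taking $\sup_x$ in the first term for the later control of $\Ll_\TT$). I expect the main obstacle to be two-fold: first, extracting the evolution identity for $T_{00}$ from the conservation law in precisely the above form — the crucial point being that the non-negativity $T_{II}\ge 0$ and the trace relation $T_{00}=\sum_I T_{II}$ substitute for any control of the sign of the indefinite quantity $\sum_I\qIt T_{II}$; and second, checking that every error contribution is genuinely absorbable, which rests on the fixed gap between the constant $\si>0$ and the smallness of $\ep,\Dd$, together with the exact identity $q=1-2\si$ that lands the $e_C(T_{0C})$ contribution at the borderline integrable rate $s^{-1+2\si}$.
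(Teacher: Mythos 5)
Your proposal is correct and follows essentially the same route as the paper's proof: both replace the Vlasov transport equation by the conservation law $\D_\mu T^{\mu 0}=0$, use $T_{II}\geq 0$ together with $T_{00}=\sum_I T_{II}$ to convert the identity into the differential inequality $e_0(T_{00})+\tfrac{1+\max_I\qIt}{t}T_{00}\geq e_C(T_{0C})+\Gaw\cdot T$, multiply by $nt^{1+q_M}$ to extract the coercive $\si t^{q_M}T_{00}$ term, bound the $e_C(T_{0C})$ contribution via Lemma~\ref{topT}, and integrate from $t$ to $1$. The only cosmetic difference is that you absorb the $\ep\,t^{q_M}T_{00}$ error term pointwise before integrating, whereas the paper absorbs it after integration by shrinking $\ep$; the two are equivalent.
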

\begin{proof}
We start with the conservation law for $T_{\mu\nu}=T^{(V)}_{\mu\nu}$:
    \begin{align*}
        \D_0T_{00}=\D_CT_{C0}.
    \end{align*}
   This can be expanded to
    \begin{align*}
    e_0(T_{00})-2T(\D_{e_0}e_0,e_0)=e_C(T_{C0})-T(\D_{e_C}e_C,e_0)-T(e_C,\D_{e_C}e_0).
    \end{align*}
   Combining with \eqref{De0e0}, \eqref{DeIe0} and \eqref{DeIeJ}, we obtain
    \begin{align*}
        e_0(T_{00})-2(e_C\vphi)T_{0C}=e_C(T_{0C})+(\tr k)T_{00}-\ga_{CCD} T_{0D}+k_{CD}T_{CD},
    \end{align*}
   which can be systematically written as
    \begin{align}\label{conservationlaw1}
        e_0(T_{00})+\frac{1}{t}T_{00}+\sum_{I=1}^3\frac{\qIt}{t}T_{II}=e_C(T_{0C})+\Gaw\c T.
    \end{align}
    Observe that from \eqref{dfT} we have that
\begin{align*}
    T_{\mu\mu}=\int_{P(t,x)}f(p_\mu)^2\dvol\geq 0,\qquad\quad T_{00}=\sum_{I=1}^3T_{II}.
\end{align*}
Together with \eqref{conservationlaw1}, this implies
    \begin{align}\label{conservationlaw}
        e_0(T_{00})+\frac{1+\max_{I=1,2,3}\{\qIt\}}{t}T_{00}\geq e_C(T_{0C})+\Gaw\c T.
    \end{align}
    Moreover, using Lemma \ref{topT} we estimate
    \begin{align*}
        |e_C(T_{0C})|\les t^{-q_M}\|T\|_{W^{1,\infty}(\Si_t)}\les\frac{\Dd(t)^2}{t^{1+q+q_M}}.
    \end{align*}
    Now, multiplying \eqref{conservationlaw} by $nt^{1+q_M}$ and integrating it from $t$ to 1, we hence infer
    \begin{align*}
        t^{1+q_M}T_{00}+\int_{t}^1 s^{q_M}T_{00}ds&\les\ep_0^2+\int_t^1\left(s^{-q}\Dd(s)^2+\ep s^{q_M}T_{00}\right) ds\\
        &\les\ep_0^2+\int_t^1\left(s^{-1+2\si}\Dd(s)^2+\ep s^{q_M}T_{00}\right)ds.
    \end{align*}
    Here we employ Lemma \ref{decayGagGabGaw}, Lemma \ref{interpolation} and Remark \ref{rkTT} to control the terms on the right. 

    Thus, by picking $\ep>0$ small enough, we deduce
    \begin{align*}
        t^{1+q_M}T_{00}+\int_{t}^1 s^{q_M}T_{00}ds\les \ep_0^2+\int_t^1s^{-1+2\si}\Dd(s)^2.
    \end{align*}
    Recalling from Remark \ref{rkTT} that $T_{00}=\TT^2$, this concludes the proof of Proposition \ref{estT}.
\end{proof}
\subsection{End of the Proof of Theorem \ref{M2}}
We are prepared to establish Theorem \ref{M2}. Collecting Propositions \ref{estga}--\ref{estT} above, we derive
    \begin{align*}
        &\Ll(t)^2+\int_t^1\left(s^{2q_M-1}|\Gag^{(1)}|^2+s^{2q-1}|\Gab^{(1)}|^2\right)ds\\
        \les\;&\ep_0^2+\ep^2\int_t^1\left(s^{2q_M-1}|\Gag^{(1)}|^2+s^{2q-1}|\Gab^{(1)}|^2\right)ds+\int_t^1s^{-1+2\si}\Dd(s)^2ds,
    \end{align*}
Therefore, by choosing $\ep>0$ sufficiently small, we conclude the desired inequality
    \begin{align*}
        \Ll(t)^2+\int_t^1\left(s^{2q_M-1}|\Gag^{(1)}|^2+s^{2q-1}|\Gab^{(1)}|^2\right)ds\les\ep_0^2+\int_t^1s^{-1+2\si}\Dd(s)^2ds.
    \end{align*}
This completes the proof of Theorem \ref{M2}.
\section{Top-Order Estimates for \texorpdfstring{$\Dd(t)$}{}}\label{secenergy}
The goal of this section is to prove the top-order energy estimates stated in Theorem \ref{M3}. To this end, we conduct the $t$-weighted energy estimates for the Bianchi pairs $(k, \ga), (e_0\psi, \ev \psi), (F_{0I}, F_{IJ})$. The Vlasov field will be handled separately. In addition, we establish \Cref{M4}, which validates the continuation criterion for the EMSVS in Theorem \ref{Thm:local}.
\subsection{Estimates for \texorpdfstring{$k$}{} and \texorpdfstring{$\ga$}{}}
We begin with estimating the Bianchi pair $(k, \ga)$. 
\begin{prop}\label{estkga}
The following estimate holds for $k$ and $\ga$:
\begin{align*}
&\max_{1\leq|\io|\leq\kl}\sum_{I,J,K}\int_{\Si_t}t^{2A_*+2}\left(|\pr^\io k_{IJ}|^2+|\pr^\io\ga_{IJK}|^2\right)\\
+&A_*\max_{1\leq|\io|\leq\kl}\sum_{I,J,K}\int_t^1\int_{\Si_s}s^{2A_*+1}\left(|\pr^\io k_{IJ}|^2+|\pr^\io\ga_{IJK}|^2\right)ds\\
\les\;&\ep_0^2+\int_t^1\int_{\Si_s}s^{-1}\Hh(s)^2ds+\int_{t}^1s^{-1+2\si}\Dd(s)^2ds.
\end{align*}
\end{prop}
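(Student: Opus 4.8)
The plan is to run the $t$-weighted energy estimate for the Bianchi pair $(k,\ga)$ exactly along the lines sketched in Section~\ref{Subsec:EstexceptVlasov}, with careful bookkeeping of the error terms produced by commuting the evolution equations \eqref{2.22a}--\eqref{2.22b} with $\pr^\io$. First I would commute \eqref{2.22a} and \eqref{2.22b} with $\pr^\io$ for $1\le|\io|\le\kl$, using Proposition~\ref{commutation} to control the commutators $[\pr^\io,e_I]$. This produces, schematically,
\begin{align*}
\pr_t(\pr^\io\kc_{IJ})+\tfrac1t\pr^\io\kc_{IJ}&=e_C(\pr^\io\ga_{IJC})-e_I(\pr^\io\ga_{CJC})+\mathrm{Err}_k,\\
\pr_t(\pr^\io\ga_{IJK})&=e_K(\pr^\io k_{JI})-e_J(\pr^\io k_{KI})+\mathrm{Err}_\ga,
\end{align*}
where $\mathrm{Err}_k,\mathrm{Err}_\ga$ collect lapse terms $\ev(\ev\vphi)$, lower-order products of $\Gag,\Gab,\Gaw$, the curvature/matter terms $(\ev\psi)^2$, $F\c F$, $T$, and the commutator terms $(\Gag^{(1)}\c(k,\ga)^{(1)})^{(\io-1)}$. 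The momentum constraint \eqref{2.26b} is used to replace the ``bad'' divergence term $e_C(\pr^\io k_{CI})$ arising in the $e_I(\pr^\io\ga_{CJC})$ contraction by lower-order expressions, so that after taking the full sum over $I,J,K$ the principal terms pair up into a genuine Bianchi structure.

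Next I would form the weighted energy: multiply the $k$-equation by $2t^{2A_*+2}\pr^\io k_{IJ}$ and the $\ga$-equation by $t^{2A_*+2}\pr^\io\ga_{IJK}$, sum over $I,J,K$, and integrate over $\Si_t$. The key point is the exact cancellation $\int_{\Si_t}\pr^\io k\c e(\pr^\io\ga)-\pr^\io\ga\c e^*(\pr^\io k)$ up to terms where the derivative falls on the frame coefficients $e_I^c$ (these are lower order and absorbed into errors), so no derivative loss occurs. The $\pr_t$-weight and the $\tfrac1t$-coefficient in the $k$-equation generate the bulk terms $2A_*\int t^{2A_*}|\pr^\io k|^2+(2A_*+2)\int t^{2A_*+2}|\pr^\io\ga|^2$ as in \eqref{prtkgaintro}. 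Integrating from $t$ to $1$, the spacetime term $A_*\int_t^1\int_{\Si_s}s^{2A_*+1}(|\pr^\io k|^2+|\pr^\io\ga|^2)$ appears with a good sign on the left, while on the right one has $\ep_0^2$ (from the data bound $\Dd(1)\le\ep_0$), the borderline bulk term $\int_t^1\int_{\Si_s}s^{2A_*}|\pr^\io k|^2\,ds$, and the error integrals. The borderline term is of the form $\int_t^1 s^{-1}\cdot s^{2A_*+1}|\pr^\io k|^2\,ds$, i.e. bounded by $\int_t^1 s^{-1}\Hh_k(s)^2\,ds\le\int_t^1 s^{-1}\Hh(s)^2\,ds$, so it is retained in the stated inequality rather than absorbed (the absorption by large $A_*$ happens later, at the level of the full Grönwall argument in Section~\ref{ssecmainproof}).

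The main work is verifying that every error integral is dominated by $\int_t^1 s^{-1}\Hh(s)^2\,ds+\int_t^1 s^{-1+2\si}\Dd(s)^2\,ds$. For the quadratic-error terms I would use the product inequality \eqref{eqproduct} together with the blow-up bounds of Lemma~\ref{decayGagGabGaw} and Lemma~\ref{interpolation}, splitting each product so that one factor sits in $L^\infty$ (at lower order, with its sharp rate) and the other in $\dot H^\kl$ (contributing an $\Hh$, i.e. an $s^{-A_*-1}$ weight); the point is that the total power of $s$ always beats $s^{2A_*+1-1+2\si}$ because the lower-order $L^\infty$ rates satisfy $q_M+\dq\le q-2\si<1$ (cf. \eqref{AVTD}) and similar inequalities. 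The lapse errors $\ev(\ev\vphi)$ are handled via Theorem~\ref{M1}, which lets us treat $t^{-1}\vphi$ and $\ev\vphi$ as elements of $\Gab$. The matter term $T^{(\io)}$ is controlled by \eqref{Tk*} of Lemma~\ref{topT}, which gives $t^{A_*+1+q}\|T^{(\kl)}\|_{L^2}\les_*\Dd^2$, and $2A_*+2+2q>2A_*+2-2\si$ so this contributes to the $s^{-1+2\si}\Dd^2$ integral. I expect the main obstacle to be the careful treatment of the top-order commutator terms $(\Gag^{(1)}\c(k,\ga)^{(1)})^{(\kl-1)}$ from Proposition~\ref{commutation}: after one distributes the $\kl-1$ derivatives one must ensure that the worst case—all derivatives on $\Gag=\ec$—is still controlled, which requires the $\dot H^\kl$ bound on $\ec$ (i.e. $\Hh_e$) and the sharp $L^\infty$ bound on $(k,\ga)^{(1)}$, and checking that the resulting weight is exactly of borderline type $s^{-1}\Hh^2$ or better. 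Once all errors are classified this way, summing over $1\le|\io|\le\kl$ and taking the maximum yields the claimed inequality.
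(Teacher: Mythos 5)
Your proposal follows the paper's proof essentially verbatim: commute the evolution equations \eqref{2.22a}--\eqref{2.22b} with $\pr^\io$, exploit the Bianchi-pair cancellation after multiplying by $t^{2A_*+2}(2\pr^\io k,\pr^\io\ga)$ and integrating by parts, use the differentiated momentum constraint \eqref{kdiv} to eliminate $e_C(\pr^\io k_{CI})$, and bound the remaining error integrals via Lemma~\ref{decayGagGabGaw}, \eqref{eqproduct} and Lemma~\ref{topT}. The only minor slip is attributing the $A_*$-absorption of $\int_t^1 s^{-1}\Hh^2\,ds$ to the Gr\"onwall step in Section~\ref{ssecmainproof}, whereas in the paper it occurs at the end of Section~\ref{secenergy} after summing over all Propositions~\ref{estkga}--\ref{topf}; this has no bearing on the correctness of the present estimate.
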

\begin{proof}
Recall from Proposition \ref{basicequations} the evolution equations for $k$ and $\ga$, namely \eqref{2.22a}, \eqref{2.22b}, and the momentum constraint \eqref{2.26b}, which schematically read:
\begin{align*}
    e_0(k_{IJ})+t^{-1}k_{IJ}&=-e_I(e_J\vphi)+e_C(\ga_{IJC})-e_I(\ga_{CJC})+\Gab\c\Gab+T,\\
    e_0(\ga_{IJK})&=e_K(k_{JI})-e_J(k_{KI})+t^{-1}\Gab+\Gab\c\Gaw,\\
    e_Ck_{CI}&=t^{-1}\Gab+\Gab\c\Gaw+T.
\end{align*}
Commuting these equations with $\pr^\io$ and applying Proposition \ref{commutation}, we infer
\begin{align}
    e_0(\pr^\io k_{IJ})+t^{-1}(\pr^\io k_{IJ})&=-e_I(\pr^\io e_J\vphi)+e_C(\pr^\io\ga_{IJC})-e_I(\pr^\io\ga_{CJC})\nonumber\\
    &+t^{-1}\Gab^{(\io)}+(\Gab^{(1)}\c\Gaw^{(1)})^{(\io-1)}+T^{(\io)},\label{kenid}\\
    e_0(\pr^\io\ga_{IJK})&=e_K(\pr^\io k_{JI})-e_J(\pr^\io k_{KI})+t^{-1}\Gab^{(\io)}+(\Gab^{(1)}\c\Gaw^{(1)})^{(\io-1)},\label{gaenid}\\
    e_C(\pr^\io k_{CI})&=t^{-1}\Gab^{(\io)}+(\Gab^{(1)}\c\Gaw^{(1)})^{(\io-1)}+T^{(\io)}\label{kdiv}.
\end{align}
Multiplying \eqref{kenid} and \eqref{gaenid} by $2\pr^\io k_{IJ}$ and $\pr^\io\ga_{IJK}$ respectively yields
\begin{align*}
e_0\left(|\pr^\io k_{IJ}|^2\right)+\frac{2}{t}|\pr^\io k_{IJ}|^2&=-2\pr^\io k_{IJ}e_I(\pr^\io e_J\vphi)+2\pr^\io k_{IJ}e_C(\pr^\io\ga_{IJC})-2\pr^\io k_{IJ}e_I(\pr^\io\ga_{CJC})\\
&+t^{-1}\Gab^{(\io)}\c\Gaw^{(\io)}+\Gaw^{(\io)}\c(\Gab^{(1)}\c\Gaw^{(1)})^{(\io-1)}+T^{(\io)},\\
\frac{1}{2}e_0\left(|\pr^\io\ga_{IJC}|^2\right)&=(\pr^\io\ga_{IJC})e_C(\pr^\io k_{IJ})+(\pr^\io\ga_{ICJ})e_J(\pr^\io k_{IC})\\
&+t^{-1}\Gab^{(\io)}\c\Gab^{(\io)}+\Gab^{(\io)}\c(\Gab^{(1)}\c\Gaw^{(1)})^{(\io-1)}+T^{(\io)}.
\end{align*}
Summing the above two equations and utilizing \eqref{kdiv}, we deduce
\begin{align*}
&e_0\left(|\pr^\io k_{IJ}|^2+\frac{1}{2}|\pr^\io\ga_{IJK}|^2\right)+\frac{2}{t}|\pr^\io k_{IJ}|^2\\
=&-2e_I\left(\pr^\io k_{IJ}\pr^\io e_J\vphi+\pr^\io k_{IJ}\pr^\io\ga_{CJC}\right)+2e_I(\pr^\io k_{IJ})\left(\pr^\io e_J\vphi+\pr^\io\ga_{CJC}\right)+2e_C(\pr^\io k_{IJ}\pr^\io\ga_{IJC})\\
&+t^{-1}\Gab^{(\io)}\c\Gaw^{(\io)}+\Gaw^{(\io)}\c(\Gab^{(1)}\c\Gaw^{(1)})^{(\io-1)}+\Gaw^{(\io)}\c T^{(\io)}\\
=&-2e_I\left(\pr^\io k_{IJ}\pr^\io e_J\vphi+\pr^\io k_{IJ}\pr^\io\ga_{CJC}\right)+2e_C(\pr^\io k_{IJ}\pr^\io\ga_{IJC})\\
&+t^{-1}\Gab^{(\io)}\c\Gaw^{(\io)}+\Gaw^{(\io)}\c(\Gab^{(1)}\c\Gaw^{(1)})^{(\io-1)}+\Gaw^{(\io)}\c T^{(\io)}.
\end{align*}
Multiplying by $t^{2A_*+2}n$ and then integrating over $\Si_t$, we thus derive
\begin{align}
\begin{split}\label{prtkga}
&\pr_t\left(\int_{\Si_t}t^{2A_*+2}|\pr^\io k_{IJ}|^2+\frac{1}{2}t^{2A_*+2}|\pr^\io\ga_{IJK}|^2\right)\\
-&2A_*\int_{\Si_t}t^{2A_*+1}|\pr^\io k_{IJ}|^2-(2A_*+2)\int_{\Si_t}t^{2A_*+1}|\pr^\io\ga_{IJK}|^2\\
=&\int_{\Si_t}t^{2A_*+1}\Gab^{(\io)}\c\Gaw^{(\io)}+t^{2A_*+2}\Gaw^{(\io)}\c(\Gab^{(1)}\c\Gaw^{(1)})^{(\io-1)}+t^{2A_*+2}\Gaw^{(\io)}\c T^{(\io)}.
\end{split}
\end{align}
To bound the terms on the right, we appeal to Lemma \ref{decayGagGabGaw} and Lemma \ref{topT} and obtain
\begin{align*}
    \int_{\Si_t}t^{2A_*+1}\Gab^{(\io)}\c\Gaw^{(\io)}&\les t^{-1}\|t^{A_*+1}\Gab^{(\io)}\|_{L^2(\Si_t)}\|t^{A_*+1}\Gaw^{(\io)}\|_{L^2(\Si_t)}\\
    &\les t^{-1}\Hh(t)^2,\\
    \int_{\Si_t}t^{2A_*+2}\Gaw^{(\io)}\c(\Gab^{(1)}\c\Gaw^{(1)})^{(\io-1)}&\les_*\ep t^{-q}\|t^{A_*+1}\Gaw^{(\io)}\|_{L^2(\Si_t)}\|t^{A_*+1}\Gaw^{(\io)}\|_{L^2(\Si_t)}\\
    &+\ep t^{-1}\|t^{A_*+1}\Gab^{(\io)}\|_{L^2(\Si_t)}\|t^{A_*+1}\Gaw^{(\io)}\|_{L^2(\Si_t)},\\
    &\les t^{-1}\Hh(t)^2\\
    \int_{\Si_t}t^{2A_*+2}\Gaw^{(\io)}\c T^{(\io)}&\les t^{-q}\|t^{A_*+1}\Gaw^{(\io)}\|_{L^2(\Si_t)}\|t^{A_*+1+q}T^{(\io)}\|_{L^2(\Si_t)}\\
    &\les_* t^{-q}\Hh(t)\Dd(t)^2\\
    &\les t^{-1+2\si}\Dd(t)^2.
\end{align*}
Consequently, it follows
\begin{align}\label{Est:kgnonlinear}
&\int_{\Si_t}t^{2A_*+1}\Gab^{(\io)}\c\Gaw^{(\io)}+t^{2A_*+2}\Gaw^{(\io)}\c(\Gab^{(1)}\c\Gaw^{(1)})^{(\io-1)}+t^{2A_*+2}\Gaw^{(\io)}\c T^{(\io)}\\
\les\;& t^{-1}\Hh(t)^2+t^{-1+2\si}\Dd(t)^2.
\end{align}
Finally, integrating \eqref{prtkga} from $t$ to $1$ and taking the maximum for $1\leq|\io|\leq\kl$, combining with \eqref{Est:kgnonlinear}, we conclude
\begin{align*}
&\max_{1\leq|\io|\leq\kl}\sum_{I,J}\int_{\Si_t}t^{2A_*+2}|\pr^\io k_{IJ}|^2+\max_{1\leq|\io|\leq\kl}\sum_{I,J,K}\int_{\Si_t}t^{2A_*+2}|\pr^\io\ga_{IJK}|^2\\
&+A_*\max_{1\leq|\io|\leq\kl}\int_t^1\int_{\Si_s}s^{2A_*+1}\left(\sum_{I,J}|\pr^\io k_{IJ}|^2+\sum_{I,J,K}|\pr^\io\ga_{IJK}|^2\right)ds\\
\les\;&\ep_0^2+\int_t^1s^{-1}\Hh(s)^2ds+\int_t^1s^{-1+2\si}\Dd(s)^2ds.
\end{align*}
as desired.
\end{proof}
\subsection{Estimate for \texorpdfstring{$\ec$}{}}
We proceed to control the top-order norm of $e^i_I$. Since the evolution equation for $e^i_I$ does not involve any loss of derivatives, we can directly perform the standard weighted energy estimate for $e^i_I$.
\begin{prop}\label{estecomc}
The following estimate holds for $e^i_I$:
\begin{align*}
&\max_{1\leq|\io|\leq\kl}\sum_{I,i=1}^3\int_{\Si_t}|t^{A_*+q_M}\pr^\io\ec_I^i|^2+A_*\max_{1\leq|\io|\leq\kl}\sum_{I,i=1}^3\int_t^1\int_{\Si_s}s^{-1}|s^{A_*+q_M}\pr^\io\ec_I^i|^2ds\\
\les\;&\ep_0^2+\int_t^1s^{-1}\Hh(s)^2ds+\int_t^1s^{-1+2\si}\Dd(s)^2ds.
\end{align*}
\end{prop}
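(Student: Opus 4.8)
The plan is to run the standard $t$-weighted energy estimate directly on the linearized transport equation \eqref{Eqn:linear e}, exactly in the spirit of the lower-order argument in Proposition \ref{esteom} but now integrated over $\Si_t$ and with the top-order weight $t^{2A_*+2q_M}$. First I would commute \eqref{Eqn:linear e} with $\pr^\io$ for $1\le|\io|\le\kl$; invoking Proposition \ref{commutation} to handle the commutator $[\pr^\io,e_I]$, this gives an equation of the schematic form
\begin{align*}
\pr_t(\pr^\io\ec_I^i)+\frac{\qIt}{t}(\pr^\io\ec_I^i)=t^{-\qit}\Gaw^{(\io)}+\big(\Gag^{(1)}\c\Gaw^{(1)}\big)^{(\io-1)}+\big(\Gag^{(1)}\c(\ev\ec)^{(\io-1)}\big),
\end{align*}
where the last group collects the commutator contributions, all of which are lower order in $\ec$ and can be absorbed. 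Since $\qIt\le q_M-\si<q_M$, the left side has a sub-critical coefficient, so no borderline difficulty arises from the linear term.

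Next I would multiply by $2nt^{2A_*+2q_M}\pr^\io\ec_I^i$, sum over $I,i$, and integrate over $\Si_t$. Using $n=e^\vphi=1+O(\ep t^{2\si})$ from Theorem \ref{M1}, the time derivative produces the energy $\int_{\Si_t}t^{2A_*+2q_M}|\pr^\io\ec_I^i|^2$ up to harmless lower-order corrections, while differentiating the weight yields a good (coercive, since $A_*+q_M>\qIt$) spacetime bulk term $\gtrsim A_*\int t^{2A_*+2q_M-1}|\pr^\io\ec_I^i|^2$ after integration from $t$ to $1$. For the right-hand side contributions I would pair each term against $t^{A_*+q_M}\pr^\io\ec_I^i$ via Cauchy--Schwarz: the term $t^{-\qit}\Gaw^{(\io)}$ contributes $\int_t^1 s^{A_*+q_M-\qit}\|s^{A_*+1}\Gaw^{(\io)}\|_{L^2(\Si_s)}\,\|s^{-q_M-1+\qit}\cdot(\text{energy}^{1/2})\|\,ds$; bounding $\|s^{A_*+1}\Gaw^{(\kl)}\|_{L^2}\lesssim\Hh(s)$ by Lemma \ref{decayGagGabGaw} and noting $q_M-\qit\ge\si>0$ together with $\frac{A_*}{\kl}$-type bookkeeping from \eqref{dfkl}, this is dominated by $\int_t^1 s^{-1+2\si}\Dd(s)^2\,ds$ plus $\int_t^1 s^{-1}\Hh(s)^2\,ds$. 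The quadratic terms $\big(\Gag^{(1)}\c\Gaw^{(1)}\big)^{(\io-1)}$ are estimated with the product inequality \eqref{eqproduct}, putting one factor in $L^\infty$ (at one derivative, hence governed by $\Ll\lesssim\ep$) and the other in $L^2$ (at top order, hence $\Hh$), which again produces $\ep\, t^{-1}\Hh(t)^2$ after weighting; the commutator terms involving $(\ev\ec)^{(\io-1)}$ are strictly lower order in $\ec$ and so controlled by the same energy with room to spare. Collecting everything and taking the maximum over $1\le|\io|\le\kl$ gives the stated inequality, with the $A_*$-weighted bulk term retained on the left since $A_*$ is large.

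The main obstacle is purely bookkeeping: one must check that every weighted integral on the right is integrable near $t=0$ and, more importantly, that the power of $s$ in the worst term never falls below $-1+2\si$, so that it is absorbed into $\int_t^1 s^{-1+2\si}\Dd(s)^2\,ds$ rather than producing a logarithmic or divergent loss. The delicate point is the factor $t^{-\qit}$ in the source: its exponent is offset exactly by the $\si$-gap $q_M-\qit\ge\si$ built into the definition \eqref{dfqM} of $q_M$, which is precisely why the norm $\Ll_e$ carries the weight $t^{q_M}$ (with the extra $\si$) rather than $t^{\max\qIt}$. Apart from this, the argument is routine and parallels Proposition \ref{esteom} and the $(k,\ga)$ estimate in Proposition \ref{estkga}; there is no derivative loss because the evolution equation for $\ec_I^i$ is already closed in $\ec$ at the same order, so no Bianchi-pair cancellation is needed here.
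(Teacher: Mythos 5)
Your proposal follows essentially the same route as the paper: commute the linearized transport equation \eqref{Eqn:linear e} with $\pr^\io$ via Proposition \ref{commutation}, multiply by $t^{2A_*+2q_M}\pr^\io\ec_I^i$, integrate over $\cup_{s\in[t,1]}\Si_s$, and estimate the resulting source integrals by Cauchy--Schwarz and Lemma \ref{decayGagGabGaw}, using that $A_*+q_M-\qIt>0$ to keep the bulk term coercive. Two small inaccuracies are worth flagging, though neither affects correctness. First, the evolution equation \eqref{Eqn:linear e} is already phrased in $\pr_t$ rather than $e_0$, so the extra factor of $n$ you insert before testing is unnecessary (the paper tests against $t^{2A_*+2q_M}\pr^\io\ec_I^i$ directly and gets a clean total derivative without lapse corrections); including it is harmless but only adds bookkeeping you then have to discard using $n=1+O(\ep t^{2\si})$. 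Second, the "$A_*/\kl$-type bookkeeping from \eqref{dfkl}" you invoke is not actually needed for this estimate — the constraint \eqref{dfkl} is only forced by the Vlasov energy estimate in Proposition \ref{topf}; here the only structural input is $q_M-\qIt\ge\si$, which you correctly identify as the offset that makes the $t^{-\qit}$ factor harmless.
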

\begin{proof}
Recall \eqref{Eqn:linear e} as derived in the proof of Proposition \ref{esteom}:
 \begin{align*}
    \pr_t(\ec_I^i)+\frac{\qIt}{t}(\ec_I^i)=t^{-\qit}\Gaw+\Gag\c\Gaw.
    \end{align*}
Commuting this with $\pr^\io$ and applying Proposition \ref{commutation}, we obtain
\begin{align*}
    \pr_t(\pr^\io\ec_I^i)+\frac{\qIt}{t}(\pr^\io\ec_I^i)=t^{-q_M}\Gab^{(\io)}+(\Gag\c\Gaw)^{(\io)}.
\end{align*}
Multiplying by $t^{2A_*+2q_M}\pr^\io\ec_I^i$ gives
\begin{align}
\begin{split}\label{prte}
&\frac{1}{2}\pr_t\left(|t^{A_*+q_M}\pr^\io\ec_I^i|^2\right)-\frac{A_*+q_M-\qIt}{t}|t^{A_*+q_M}\pr^\io\ec_I^i|^2\\
=\;&t^{2A_*+q_M}\Gag^{(\io)}\c\Gab^{(\io)}+t^{2A_*+2q_M}\Gag^{(\io)}\c(\Gag\c\Gaw)^{(\io)}.
\end{split}
\end{align}
Thus, integrating over $\Si_t$ and then in $s$ from $t$ to $1$, we deduce
\begin{align*}
&\sum_{I,i=1}^3\left(\int_{\Si_t}|t^{A_*+q_M}\pr^\io\ec_I^i|^2+A_*\int_t^1\int_{\Si_s}s^{-1}|s^{A_*+q_M}\pr^\io\ec_I^i|^2ds\right)\\
\les\;&\ep_0^2+\int_t^1 \left(s^{2A_*+q_M}\Gag^{(\io)}\c\Gab^{(\io)}+s^{2A_*+2q_M}\Gag^{(\io)}\c(\Gag\c\Gaw)^{(\io)}\right)ds.
\end{align*}
Note that from Lemma \ref{decayGagGabGaw} we have
\begin{equation}\label{Est:e nonlinear 2}
\begin{aligned}
\int_{\Si_t}t^{2A_*+q_M}\Gag^{(\io)}\c\Gab^{(\io)}&\les t^{-q}\|t^{A_*+q_M}\Gag^{(\io)}\|_{L^2(\Si_t)}\|t^{A_*+q}\Gab^{(\io)}\|_{L^2(\Si_t)},\\
&\les t^{-1+2\si}\Dd(t)^2,\\
\int_{\Si_t}t^{2A_*+2q_M}\Gag^{(\io)}\c(\Gag\c\Gaw)^{(\io)}&\les_*\ep t^{-1}\|t^{A_*+q_M}\Gag^{(\io)}\|_{L^2(\Si_t)}\|t^{A_*+q_M}\Gag^{(\io)}\|_{L^2(\Si_t)}\\
&+\ep t^{-q_M}\|t^{A_*+q_M}\Gag^{(\io)}\|_{L^2(\Si_t)}\|t^{A_*+1}\Gaw^{(\io)}\|_{L^2(\Si_t)}\\
&\les t^{-1}\Hh(t)^2.
\end{aligned}
\end{equation}
Combining with \eqref{Est:e nonlinear 2}, this renders
\begin{align*}
&\sum_{I,i=1}^3\left(\int_{\Si_t}|t^{A_*+q_M}\pr^\io\ec_I^i|^2+A_*\int_t^1\int_{\Si_s}s^{-1}|s^{A_*+q_M}\pr^\io\ec_I^i|^2ds\right)\\
\les\;&\ep_0^2+\int_t^1s^{-1}\Hh(s)^2ds+\int_t^1s^{-1+2\si}\Dd(s)^2ds.
\end{align*}
Taking the maximum for $1\leq|\io|\leq\kl$, we finish thus the proof of Proposition \ref{estecomc}.
\end{proof}
\subsection{Estimates for \texorpdfstring{$e_0\psi$}{} and \texorpdfstring{$\ev\psi$}{}}
Next we move to derive the top-order estimates for the scalar field $\psi$, which satisfies the wave equation $\square_{\g}\psi=0$.
\begin{prop}\label{estepsi}
The following estimate holds for $e_0\psi$ and $\ev\psi$:
\begin{align*}
    &\max_{1\leq|\io|\leq\kl}\int_{\Si_t}|t^{A_*+1}\pr^\io e_0\psi|^2+|t^{A_*+1}\pr^\io\ev\psi|^2+A_*\max_{1\leq|\io|\leq\kl}\int_t^1\int_{\Si_s}s^{2A_*+1}\left(|\pr^\io e_0\psi|^2+|\pr^\io\ev\psi|^2\right)ds\\
    \les\;&\ep_0^2+\int_t^1s^{-1}\Hh(s)^2ds+\int_t^1s^{-1+2\si}\Dd(s)^2ds.
\end{align*}
\end{prop}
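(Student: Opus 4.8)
The plan is to treat the wave equation $\square_{\g}\psi=0$ as a Bianchi pair $(e_0\psi, \ev\psi)$ in exact parallel with the treatment of $(k,\ga)$ in Proposition \ref{estkga}. First I would recall from Proposition \ref{basicequations} the reduced equations \eqref{2.24} for $e_0\psi$ and the spatial gradient evolution derived in the proof of Proposition \ref{estpsi}, and write them schematically as
\begin{align*}
    e_0(e_0\psi)+t^{-1}e_0\psi&=e_C(e_C\psi)+\Gab\c\Gab,\\
    e_0(e_I\psi)+\frac{\qIt}{t}e_I\psi&=e_I(e_0\psi)+t^{-1}\Gab+t^{-q_M}\Gaw^{(1)}+\Gab\c\Gaw.
\end{align*}
The crucial structural feature is that the principal terms $e_C(e_C\psi)$ and $e_I(e_0\psi)$ form a Bianchi pair: the operator $e_C(\cdot)_C$ on the spatial gradient and its $L^2$-adjoint $-e_I(\cdot)$ appear, so that after commuting with $\pr^\io$ for $1\le|\io|\le\kl$ (using Proposition \ref{commutation} to handle the commutators $[\pr^\io,e_I]$, which produce acceptable $(\Gag^{(1)}\c v^{(1)})^{(\io-1)}$ error terms) and forming the combination $2(\pr^\io e_0\psi)\cdot(\text{eq for }e_0\psi)+\sum_I 2(\pr^\io e_I\psi)\cdot(\text{eq for }e_I\psi)$, the top-order cross terms $2\sum_I(\pr^\io e_0\psi)e_C(\pr^\io e_C\psi)+2\sum_I(\pr^\io e_I\psi)e_I(\pr^\io e_0\psi)$ combine into a pure spatial divergence $2e_C(\pr^\io e_0\psi\,\pr^\io e_C\psi)$, which integrates to zero over $\Si_t\simeq\Tt^3$. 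This avoids the loss of derivatives.

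Next I would multiply the resulting differential identity by $t^{2A_*+2}n$ and integrate over $\Si_t$. The term $\frac{2}{t}|\pr^\io e_0\psi|^2$ on the left, together with the $\pr_t(t^{2A_*+2})$ factors, produces the good spacetime bulk terms $A_*\int t^{2A_*+1}(|\pr^\io e_0\psi|^2+|\pr^\io\ev\psi|^2)$ after integrating from $t$ to $1$ (one should check carefully that the coefficients $\frac{\qIt}{t}$ for the $e_I\psi$ pieces contribute $(2A_*+2-2\qIt)t^{2A_*+1}|\pr^\io e_I\psi|^2$, which is still of the favorable sign and size since $\qIt<1$; note here $n$ is harmless since $n=e^\vphi\sim 1$ and $\pr_t n$ is controlled by Theorem \ref{M1}, with $t^{-1}\vphi,\ev\vphi\in\Gab$). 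The right-hand side consists of the nonlinear terms $t^{2A_*+1}\Gab^{(\io)}\c\Gab^{(\io)}$, $t^{2A_*+2}\Gab^{(\io)}\c(\text{quadratic error})^{(\io-1)}$, and $t^{2A_*+2-q_M}\Gab^{(\io)}\c\Gaw^{(\io)}$, plus boundary contributions from the spatial divergence that vanish on $\Tt^3$. These are estimated exactly as in \eqref{Est:kgnonlinear}: by Lemma \ref{decayGagGabGaw}, Lemma \ref{interpolation}, and the product inequality \eqref{eqproduct}, each is bounded by $t^{-1}\Hh(t)^2+t^{-1+2\si}\Dd(t)^2$, using $\Gab\preceq\Gaw$, the definitions of the higher-order norms, and the key inequality $q_M+\dq\le q-2\si$ from \eqref{AVTD} together with $A_*/\kl\approx\dq$ from \eqref{dfkl} to absorb the one-lower-order factor.

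Finally, integrating the differential inequality from $t$ to $1$, using the initial bound $\Dd(1)\le\ep_0$, and taking the maximum over $1\le|\io|\le\kl$ yields the claimed estimate. I expect the main obstacle to be the bookkeeping for the weight exponents in the spatial-gradient pair: unlike $(k,\ga)$ where the linear coefficient is the clean $t^{-1}$, here $e_I\psi$ carries the coefficient $\qIt/t$, and one must confirm that multiplying by $t^{2A_*+2}$ (rather than some $I$-dependent power) still produces a coercive bulk term — this works precisely because $\qIt<1<A_*+1$, so $A_*+1-\qIt>0$ with room to spare, but it requires care to state cleanly. A secondary subtlety is that the nonlinear term $t^{-q_M}\Gaw^{(1)}$ in the $e_I\psi$ equation is more singular than its counterpart in the $\ga$ equation, so when commuted to top order it gives $t^{2A_*+2-q_M}\Gab^{(\io)}\c\Gaw^{(\io)}$; this is still controlled by $t^{-q_M}\cdot t^{-1}\Hh(t)^2\le t^{-1}\Hh(t)^2$ only after using $q_M<1$, and one should double-check it does not instead force a factor $t^{-1-q_M+\cdots}$ that fails to be absorbed — but since $\|t^{A_*+q_M}\Gaw^{(\io)}\|$ is not the relevant norm here (rather $\|t^{A_*+1}\Gaw^{(\io)}\|$ is), the extra $t^{1-q_M}>0$ power saves it.
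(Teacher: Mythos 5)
Your proposal is correct and takes essentially the same approach as the paper: both exploit the Bianchi-pair/divergence structure to convert the derivative-loss term $e_C(\pr^\io e_C\psi)$ into a time derivative of $|\pr^\io\ev\psi|^2$ plus a pure spatial divergence, then close with $t^{2A_*+2}$-weighted energy estimates and absorption via large $A_*$. The only cosmetic difference is that you write down both evolution equations (for $e_0\psi$ and $e_I\psi$) and add the two energy identities, while the paper multiplies only the $e_0\psi$-equation and then rewrites $-ne_C(\pr^\io e_0\psi)\pr^\io e_C\psi$ as $-\tfrac12\pr_t(|\pr^\io\ev\psi|^2)$ plus errors; also your schematic error terms $t^{-1}\Gab+t^{-q_M}\Gaw^{(1)}$ in the $e_I\psi$-equation are overcounted (once $e_I(e_0\psi)$ is kept as a principal term the genuine remainder is just $\Gab\c\Gaw$), but since you show they are absorbable anyway this does not affect the argument.
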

\begin{proof}
From \eqref{2.24} in Proposition \ref{basicequations} we have
\begin{align*}
 ne_0(e_0\psi)+t^{-1}ne_0\psi=ne_C(e_C\psi)+\Gab\c\Gab.
\end{align*}
    Commuting with $\pr^\io$ implies
    \begin{align*}
        \pr_t(\pr^\io e_0\psi)+t^{-1}n(\pr^\io e_0\psi)=ne_C(\pr^\io e_C\psi)+\left(\Gab^{(1)}\c\Gaw^{(1)}\right)^{(\io-1)}.
    \end{align*}
    Multiplying it by $t^{2A_*+2}\pr^\io e_0\psi$, we get
    \begin{align*}
        &\quad\;\,\frac{1}{2}\pr_t\left(|t^{A_*+1}\pr^\io e_0\psi|^2\right)+t^{2A_*+1}|\pr^\io e_0\psi|^2\\
        &=t^{2A_*+2}(\pr^\io e_0\psi)(ne_C(\pr^\io e_C\psi))+t^{2A_*+2}\Gaw^{(\io)}\c\left(\Gab^{(1)}\c\Gaw^{(1)}\right)^{(\io-1)}\\
        &=t^{2A_*+2}ne_C\left(\pr^\io e_0\psi(\pr^\io e_C\psi)\right)-t^{2A_*+2}ne_C(\pr^\io e_0\psi)\pr^\io e_C\psi+t^{2A_*+2}\Gaw^{(\io)}\c\left(\Gab^{(1)}\c\Gaw^{(1)}\right)^{(\io-1)}\\
        &=t^{2A_*+2}ne_C\left(\pr^\io e_0\psi(\pr^\io e_C\psi)\right)-\frac{1}{2}t^{2A_*+2}\pr_t(|\pr^\io\ev\psi|^2)+t^{2A_*+2}\Gaw^{(\io)}\c\left(\Gab^{(1)}\c\Gaw^{(1)}\right)^{(\io-1)}.
    \end{align*}
Integrating over $\cup_{s\in[t, 1]} \{s\}\times\Si_s$, we hence deduce
\begin{align*}
&\int_{\Si_t}|t^{A_*+1}\pr^\io e_0\psi|^2+|t^{A_*+1}\pr^\io\ev\psi|^2+A_*\int_t^1\int_{\Si_s}s^{2A_*+1}\left(|\pr^\io e_0\psi|^2+|\pr^\io\ev\psi|^2\right)ds\\
\les\;&\ep_0^2+\int_t^1s^{2A_*+2}\Gaw^{(\io)}\c\left(\Gab^{(1)}\c\Gaw^{(1)}\right)^{(\io-1)}ds\\
\les\;&\ep_0^2+\ep C_*\int_t^1s^{-1+2\si}\Dd(s)^2ds+\ep C_*\int_t^1\int_{\Si_s}s^{2A_*+1}\Gaw^{(\io)}\c\Gab^{(\io)}ds\\
\les\;&\ep_0^2+\int_t^1s^{-1+2\si}\Dd(s)^2ds+\int_t^1s^{-1}\|s^{A_*+1}\Gaw^{(\io)}\|_{L^2(\Si_s)}\|s^{A_*+1}\Gab^{(\io)}\|_{L^2(\Si_s)}ds\\
\les\;&\ep_0^2+\int_t^1s^{-1+2\si}\Dd(s)^2ds+\int_t^1s^{-1}\Hh(s)^2ds.
\end{align*}
This completes the proof of Proposition \ref{estepsi} by taking the maximum for $1\leq|\io|\leq\kl$.
\end{proof}
\subsection{Estimate for \texorpdfstring{$F$}{}}
Now we estimate the last Bianchi pair $(F_{0I}, F_{IJ})$ via the Maxwell equation.
\begin{prop}\label{topF}
    The following estimate holds for the Maxwell field $F$:
    \begin{align*}
    &\max_{1\leq|\io|\leq\kl}\int_{\Si_t}|t^{A_*+1}\pr^\io F|^2+A_*\max_{1\leq|\io|\leq\kl}\int_t^1\int_{\Si_s}|t^{A_*+1}\pr^\io F|^2ds\\
    \les\;&\ep_0^2+\int_t^1s^{-1}\Hh(s)^2ds+\int_t^1s^{-1+2\si}\Dd(s)^2ds.
    \end{align*}
\end{prop}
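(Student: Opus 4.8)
The plan is to run, for the Bianchi pair $(F_{0I},F_{IJ})$, the same $t$-weighted energy argument already carried out for $(k,\ga)$ in Proposition \ref{estkga} and for $(e_0\psi,\ev\psi)$ in Proposition \ref{estepsi}. First I would rewrite the Maxwell evolution equations \eqref{Maxwell1}--\eqref{Maxwell2} in schematic form: absorbing the Kasner part $\kt$ of the terms $k_{CI}F_{0C}$ and $k_{IC}F_{JC}+k_{JC}F_{CI}$ into the linear coefficients via Proposition \ref{kasnerquantities}, and recalling that $\ev\vphi,\,t^{-1}\vphi\in\Gab$, they become
\begin{align*}
 e_0(F_{0I})+\frac{1-\qIt}{t}F_{0I}&=e_C(F_{CI})+\Gab\c\Gab+\Gab\c\Gaw,\\
 e_0(F_{IJ})+\frac{\qIt+\qJt}{t}F_{IJ}&=e_I(F_{0J})-e_J(F_{0I})+\Gab\c\Gab+\Gab\c\Gaw,
\end{align*}
where the quadratic error terms have the null structure (no $\Gaw\c\Gaw$), and in particular there is no coupling to the Vlasov tensor $T$. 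I would then commute with $\pr^\io$ for $1\le|\io|\le\kl$ via Proposition \ref{commutation}, noting that $[\pr^\io,e_C]F$ produces only $(\Gag^{(1)}\c F^{(1)})^{(\io-1)}$-type contributions, so the differentiated equations keep the same principal structure with error terms $(\Gab^{(1)}\c\Gab^{(1)})^{(\io-1)}$ and $(\Gab^{(1)}\c\Gaw^{(1)})^{(\io-1)}$.

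Next, multiply the $\pr^\io F_{0I}$-equation by $2t^{2A_*+2}n\,\pr^\io F_{0I}$ and the $\pr^\io F_{IJ}$-equation by $t^{2A_*+2}n\,\pr^\io F_{IJ}$, add them, and sum over all frame indices. The principal interaction terms are $2\,\pr^\io F_{0I}\,e_C(\pr^\io F_{CI})+\pr^\io F_{IJ}\big(e_I(\pr^\io F_{0J})-e_J(\pr^\io F_{0I})\big)$; because $(F_{0I},F_{IJ})$ is a genuine adjoint/Bianchi pair (schematically $\pr_t E=\curl B$, $\pr_t B=-\curl E$), after integrating by parts over $\Si_t\simeq\Tt^3$ these combine into the spatial divergence of the flux of "$E\times B$'', which integrates to zero, leaving only lower-order $\ga$-commutator terms of the form $\Gag^{(\io)}\c\Gab^{(\io)}$ --- so there is no loss of derivatives. (If a leftover divergence term appeared it would be controlled by the differentiated spatial Maxwell constraint \eqref{Maxwell3}, exactly as the momentum constraint \eqref{2.26b} is used in Proposition \ref{estkga}.) Collecting and multiplying out, one obtains for each $\io$ a differential identity of the form
\begin{align*}
&\pr_t\!\left(\int_{\Si_t}t^{2A_*+2}\sum_{I,J}\big(|\pr^\io F_{0I}|^2+|\pr^\io F_{IJ}|^2\big)\right)-\big(2A_*+O(1)\big)\int_{\Si_t}t^{2A_*+1}\sum_{I,J}\big(|\pr^\io F_{0I}|^2+|\pr^\io F_{IJ}|^2\big)\\
&=\int_{\Si_t}t^{2A_*+1}\Gab^{(\io)}\c\Gab^{(\io)}+t^{2A_*+2}\Gab^{(\io)}\c\big(\Gab^{(1)}\c\Gaw^{(1)}\big)^{(\io-1)},
\end{align*}
the $O(1)$ coming from the (bounded) Kasner coefficients $1-\qIt$, $\qIt+\qJt$.

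Then I would estimate the right-hand side exactly as in Propositions \ref{estkga} and \ref{estepsi}: by Lemma \ref{decayGagGabGaw} the first term is $\les t^{-1}\Hh(t)^2$, and by Lemma \ref{decayGagGabGaw} together with the product inequality \eqref{eqproduct} the second is $\les \ep t^{-q}\Hh(t)^2+\ep t^{-1}\big(\Hh(t)^2+\Dd(t)^2\big)\les t^{-1}\Hh(t)^2+t^{-1+2\si}\Dd(t)^2$, using $t\le1$, $q=1-6\si<1$ and the smallness of $\ep$. Integrating from $t$ to $1$, using the initial bound $\Dd(1)\le\ep_0$, taking $A_*$ large enough to keep the positive bulk term $A_*\int_t^1\!\int_{\Si_s}s^{2A_*+1}(\ldots)\,ds$ on the left, and taking the maximum over $1\le|\io|\le\kl$ yields the claimed inequality. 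The only genuinely structural point --- and hence the step I expect to be the main obstacle --- is verifying the exact cancellation of the top-order curl-curl interaction, i.e.\ the Bianchi-pair property of $(F_{0I},F_{IJ})$; once that is in place the remainder is a routine repetition of the weighted-energy bookkeeping already done for $(k,\ga)$ and $(e_0\psi,\ev\psi)$, with the bonus that no $T$-term appears here.
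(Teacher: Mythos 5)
Your proposal is correct and matches the paper's proof essentially step for step: schematically rewrite \eqref{Maxwell1}--\eqref{Maxwell2}, commute with $\pr^\io$, multiply by $t^{2A_*+2}$ times $2\pr^\io F_{0I}$ resp.\ $\pr^\io F_{IJ}$, exploit the antisymmetry of $F_{IJ}$ so that the curl--curl interaction assembles into the exact divergence $2e_C\big(t^{2A_*+2}\pr^\io F_{0D}\,\pr^\io F_{CD}\big)$, and then run the same weighted-energy bookkeeping as for $(k,\ga)$ and $(e_0\psi,\ev\psi)$, with no Vlasov term present. One small remark: unlike the $(k,\ga)$ case where the momentum constraint \eqref{2.26b} is genuinely invoked to convert $e_C(\pr^\io k_{CI})$, the Bianchi-pair cancellation for $(F_{0I},F_{IJ})$ is exact and the spatial constraint \eqref{Maxwell3} is never needed; your parenthetical fallback is therefore superfluous, but you already flagged it as merely hypothetical.
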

\begin{proof}
Employing \eqref{Maxwell1} and \eqref{Maxwell2} in Proposition \ref{basicequations}, we get
\begin{align*}
e_0(F_{0I})+\frac{1-\qIt}{t}F_{0I}&=e_C(F_{CI})+\Gab\c\Gaw,\\
e_0(F_{IJ})+\frac{\qIt+\qJt}{t}F_{IJ}&=e_I(F_{0J})-e_J(F_{0I})+\Gab\c\Gaw.
\end{align*}
Commuting with $\pr^\io$, we deduce
\begin{align*}
e_0(\pr^\io F_{0I})+\frac{1-\qIt}{t}(\pr^\io F_{0I})&=e_C(\pr^\io F_{CI})+\left(\Gab^{(1)}\c\Gaw^{(1)}\right)^{(\io-1)},\\
e_0(\pr^\io F_{IJ})+\frac{\qIt+\qJt}{t}(\pr^\io F_{IJ})&=e_I(\pr^\io F_{0J})-e_J(\pr^\io F_{0I})+\left(\Gab^{(1)}\c\Gaw^{(1)}\right)^{(\io-1)}.
\end{align*}
Multiplying these two equations by $2\pr^\io F_{0I}$ and $\pr^\io F_{IJ}$ and taking the sum in $I$ and $I,J$ respectively, we obtain\footnote{Note here we utilize the fact that $F_{IJ}=-F_{JI}$.}
\begin{align*}
\sum_{I} e_0\left(|\pr^\io F_{0I}|^2\right)+\sum_{I}\frac{2(1-\qIt)}{t}|\pr^\io F_{0I}|^2&=\sum_{I} 2(\pr^\io F_{0I})e_C(\pr^\io F_{CI})+\Gab^{(\io)}\c\left(\Gab^{(1)}\c\Gaw^{(1)}\right)^{(\io-1)},\\
\sum_{I, J}\frac{1}{2}e_0(|\pr^\io F_{IJ}|^2)+\sum_{I, J}\frac{\qIt+\qJt}{t}|\pr^\io F_{IJ}|^2&=\sum_{I, J}2(\pr^\io F_{IJ})e_I(\pr^\io F_{0J})+\Gab^{(\io)}\c\left(\Gab^{(1)}\c\Gaw^{(1)}\right)^{(\io-1)}.
\end{align*}
Multiplying by $t^{2A_*+2}$ then implies
\begin{align*}
&\sum_{I}e_0\left(|t^{A_*+1}\pr^\io F_{0I}|^2\right)-\sum_{I}\frac{2(A_*+1-(1-\qIt))}{t}|t^{A_*+1}\pr^\io F_{0I}|^2\\
=\;&\sum_{I}2(t^{A_*+1}\pr^\io F_{0I})e_C(t^{A_*+1}\pr^\io F_{CI})+t^{2A_*+2}\Gab^{(\io)}\c\left(\Gab^{(1)}\c\Gaw^{(1)}\right)^{(\io-1)},\\
&\sum_{I, J}\frac{1}{2}e_0(|t^{A_*+1}\pr^\io F_{IJ}|^2)-\sum_{I, J}\frac{A_*+1-(\qIt+\qJt)}{t}|t^{A_*+1}\pr^\io F_{IJ}|^2\\
=\;&\sum_{I, J}2(t^{A_*+1}\pr^\io F_{IJ})e_I(t^{A_*+1}\pr^\io F_{0J})+t^{2A_*+2}\Gab^{(\io)}\c\left(\Gab^{(1)}\c\Gaw^{(1)}\right)^{(\io-1)}.
\end{align*}
Adding them up, we thus derive
\begin{align*}
    &e_0\left(\sum_{I}|t^{A_*+1}\pr^\io F_{0I}|^2+\sum_{I,J}\frac{1}{2}|t^{A_*+1}\pr^\io F_{IJ}|^2\right)\\
    -&\sum_{I}\frac{2(A_*+1-(1-\qIt))}{t}|t^{A_*+q}\pr^\io F_{0I}|^2-\sum_{I,J}\frac{A_*+1-(\qIt+\qJt)}{t}|t^{A_*+1}\pr^\io F_{IJ}|^2\\
    =\;&2e_C\left(t^{2A_*+2}\pr^\io F_{0D}(\pr^\io F_{CD})\right)+t^{2A_*+2}\Gab^{(\io)}\c\left(\Gab^{(1)}\c\Gaw^{(1)}\right)^{(\io-1)}.
\end{align*}
Therefore, integrating it first on $\Si_t$ and then in $s$ from $t$ to $1$, we obtain
\begin{align*}
&\sum_{I,J}\int_{\Si_t}|t^{A_*+1}\pr^\io F_{0I}|^2+|t^{A_*+1}\pr^\io F_{IJ}|^2+A_*\sum_{I,J}\int_t^1\int_{\Si_s}|t^{A_*+1}\pr^\io F_{0I}|^2+|t^{A_*+1}\pr^\io F_{IJ}|^2ds\\
\les\;&\ep_0^2+\int_t^1\int_{\Si_s}s^{2A_*+2}\Gag^{(\io)}\c\left(\Gab^{(1)}\c\Gaw^{(1)}\right)^{(\io-1)}ds\\
\les\;&\ep_0^2+\ep C_*\int_t^1s^{-1}\Hh(s)^2ds+\ep C_*\int_t^1s^{-1+2\si}\Dd(s)^2ds\\
\les\;&\ep_0^2+\int_t^1s^{-1}\Hh(s)^2ds+\int_t^1s^{-1+2\si}\Dd(s)^2ds.
\end{align*}
Taking the maximum for $1\leq|\io|\leq\kl$ then concludes the proof of Proposition \ref{topF}.
\end{proof}
\subsection{Estimate for \texorpdfstring{$\TT$}{}}
In this subsection, we aim to derive the top-order estimate for the Vlasov field, i.e., $\TT^{(k_*)}$. Here we recall from \eqref{io1io2df} and \eqref{dfTTiota} that 
\begin{align}\label{Def:TT new}
f^{(\io_1,\io_2)}\coloneqq \pr_x^{\io_1}(p\pr_p)^{\io_2}f, \qquad \quad
    \TT^{(\io_1,\io_2)}(t,x)\coloneqq &\left\|(p^0)^\frac{1}{2}\f^{(\io_1,\io_2)}(t,x,p)\right\|_{L^2_p(\mathbb{R}^3)}.
\end{align}
\begin{prop}\label{topf}
The following estimate holds for $\TT^{(k_*)}$:
\begin{align*}
    &t^{2A_*+2\de q+q_M+1}\int_{\Si_t}|\TT^{(k_*)}|^2+A_*\int_t^1s^{2A_*+2\de q+q_M}|\TT^{(k_*)}|^2ds\\
    \les_*\;&\ep_0^2+\ep\int_t^1s^{-1}\Hh(s)^2ds+\int_t^1s^{-1+2\si}\Dd(s)^2ds.
\end{align*}
\end{prop}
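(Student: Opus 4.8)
The plan is to mimic the weighted energy estimate outlined in Section \ref{Subsec:Control of Vlasov Field} of the introduction, applied to the square root $\f$ rather than $f$ itself. First I would start from the reduced Vlasov equation \eqref{Vlasoveq}. Since $f\ge 0$, the function $\f$ solves the same first-order equation $X(\f)=0$, which in the orthonormal frame reads
\begin{align*}
    e_0(\f)+\frac{p^C}{p^0}e_C(\f)-\frac{p^Dp^E}{p^0}\ga_{DEC}\pr_{p^C}(\f)+p^Dk_{DC}\pr_{p^C}(\f)-p^0e_C(\vphi)\pr_{p^C}(\f)=0.
\end{align*}
Separating the linear Kasner part (using $\et_I^i=t^{-\qIt}\de_I^i$, $\kt_{IJ}=-\frac{\qIt}{t}\de_{IJ}$ from Proposition \ref{kasnerquantities}) from the perturbative part, this becomes schematically
\begin{align*}
    \pr_t(\f)+\sum_{I}\frac{p^I}{p^0}t^{-\qIt}\pr_I(\f)-\sum_{I}\frac{\qIt}{t}p^I\pr_{p^I}(\f)=\mathcal{E}[\f],
\end{align*}
where $\mathcal{E}[\f]$ collects the error terms, each of the form $(\text{frame ratios})\times\DD_i\times(p\pr_p\,\f \text{ or } e\,\f)$ with the schematic blow-up rates from Lemma \ref{decayGagGabGaw}.

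Next I would commute with $\pr^{\io_1}(p\pr_p)^{\io_2}$ for $|\io_1|+|\io_2|\le k_*$. The key algebraic input is the commutation identity \eqref{Intro:commute p pr p}, $[p^J\pr_{p^K},\sum_I\frac{\qIt}{t}p^I\pr_{p^I}]=\frac{\qKt-\qJt}{t}p^J\pr_{p^K}$, which produces a diagonal zeroth-order term $\frac{C_{\io_2}}{t}\f^{(\io_1,\io_2)}$ with $|C_{\io_2}|\le|\io_2|\,\dq\le k_*\,\dq$; the commutator $[\pr^{\io_1},e_C]$ is handled via Proposition \ref{commutation}, and the $p\pr_p$ derivatives commute cleanly through the transport part $\frac{p^C}{p^0}t^{-\qIt}\pr_C$ up to harmless ratio factors since $p^I/p^0$ is homogeneous of degree zero. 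This yields
\begin{align*}
    \pr_t\big(\f^{(\io_1,\io_2)}\big)+\frac{C_{\io_2}}{t}\f^{(\io_1,\io_2)}+\sum_I\frac{p^I}{p^0}t^{-\qIt}\pr_I\big(\f^{(\io_1,\io_2)}\big)-\sum_I\frac{\qIt}{t}p^I\pr_{p^I}\big(\f^{(\io_1,\io_2)}\big)=\mathcal{E}^{(\io_1,\io_2)},
\end{align*}
with $\mathcal{E}^{(\io_1,\io_2)}$ the commuted error. Then I multiply by $2t^{2P}p^0\f^{(\io_1,\io_2)}$ with $P:=A_*+\de q+\frac{q_M+1}{2}$ and integrate over $T\Si_t$. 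The transport term $\sum_I\frac{p^I}{p^0}t^{-\qIt}\pr_I$ integrates to zero in $x$ (total $x$-divergence, after accounting for the $n$-weight and the $e_I^i$ factor which contributes a lower-order term absorbable into $\mathcal{E}$); the momentum-derivative term $-\sum_I\frac{\qIt}{t}p^I\pr_{p^I}$, upon integration by parts in $p$ using $\pr_{p^I}(p^0)=p^I/p^0$, produces the crucial \emph{favorable-sign} bulk term $\sum_I\frac{\qIt}{t}\frac{(p^I)^2}{(p^0)^2}\big(t^{2P}p^0|\f^{(\io_1,\io_2)}|^2\big)\ge 0$ (using $\qIt>0$) together with a $-\frac{3}{t}\sum\qIt(\cdots)$ term from $\pr_{p^I}(p^I)=1$ that I fold into the zeroth-order coefficient. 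This gives, after discarding the nonnegative bulk term,
\begin{align*}
    \pr_t\Big(t^{2P}\int_{T\Si_t}p^0|\f^{(\io_1,\io_2)}|^2\Big)+\frac{2C_{\io_2}-2P+O(1)}{t}\,t^{2P}\int_{T\Si_t}p^0|\f^{(\io_1,\io_2)}|^2\ge -\,t^{2P}\Big|\int_{T\Si_t}p^0\,\f^{(\io_1,\io_2)}\mathcal{E}^{(\io_1,\io_2)}\Big|.
\end{align*}
By the choice \eqref{dfkl} of $A_*,k_*$ we have $P-C_{\io_2}\gtrsim A_*$, so the coefficient of the bulk $t$-integral term has the correct sign and magnitude $\sim A_*/t$. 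Integrating from $t$ to $1$, the left side controls $t^{2P}\TT^{(\io_1,\io_2)}(t)^2 + A_*\int_t^1 s^{2P-1}\TT^{(\io_1,\io_2)}(s)^2\,ds$ from below by $t^{2A_*+2\de q+q_M+1}\|\TT^{(k_*)}\|_{L^2(\Si_t)}^2 + A_*\int_t^1 s^{2A_*+2\de q+q_M}\|\TT^{(k_*)}\|_{L^2}^2\,ds$ after taking the max over $|\io_1|+|\io_2|\le k_*$ and recalling \eqref{dfTTk}.

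It remains to bound the error integral $\int_t^1 s^{2P}\big|\int_{T\Si_s}p^0\,\f^{(\io_1,\io_2)}\mathcal{E}^{(\io_1,\io_2)}\big|\,ds$. Each term in $\mathcal{E}^{(\io_1,\io_2)}$ is a product $\DD_i^{(\io')}\cdot (p\text{-weighted derivative of }\f)^{(\io'')}$ with $|\io'|+|\io''|\le k_*$; using the mixed product inequality in Lemma \ref{Generalinterpolation} (the third estimate), Cauchy--Schwarz in $(x,p)$ with the $p^0$-weight split as $(p^0)^{1/2}\cdot(p^0)^{1/2}$, and the decay estimates of Lemma \ref{decayGagGabGaw} together with Lemma \ref{interpolation}, every such term is controlled by $s^{-1+2\si}\Dd(s)^2$ (from the $t^{-1}$-type, i.e.\ $\Gaw$-coefficient and $\DD$-type commutator terms, where the $2\si$ room comes from \eqref{AVTD}), by $\ep\, s^{-1}\Hh(s)^2$ (from the top-order-times-top-order pairings, where one slot carries a factor of $\ep$ by the bootstrap \eqref{boot}), plus $\ep_0^2$ from the initial data via $\Dd(1)\le\ep_0$. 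The factors of $C_*$ enter through the interpolation constants, hence the $\les_*$. I also need to check that the lapse weight $n$ and the frame coefficients $e_I^i$ produced when converting $\pr_t$ to $e_0$ and $e_I$ to coordinate derivatives only contribute terms $\preceq$ those already listed, which follows from Theorem \ref{M1} ($n\cdot\DD_i\preceq\DD_i$) and Lemma \ref{decayGagGabGaw}. Collecting these bounds yields exactly the claimed inequality.

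\textbf{Main obstacle.} The delicate point is the integration by parts in $p$: one must verify that the measure $\dvol$ and the $p^0$-weight interact so as to make the momentum-transport term produce a \emph{nonnegative} bulk contribution (this is where $\qIt>0$, i.e.\ $\min_I\qIt>0$, is used, valid since $\sum\qIt=1$ and $\sum\qIt^2<1$ force each $\qIt<1$ but one must rule out negative exponents — in the strong sub-critical regime \eqref{strongstability} one checks $1-2q_M+q_m<1$ hence... actually one needs $\qIt>0$ directly, which holds because $1-\qIt<1$ from \eqref{strongq}), and that no boundary terms at $p\to\infty$ arise — this is where the \emph{non-compact support} is genuinely tested, and it works only because we weight by $p^0$ rather than a higher power and because $\f\in L^2_p$ with the $(p^0)^{1/2}$ weight is assumed at $t=1$ and propagated. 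The second subtlety is the bookkeeping ensuring $P-C_{\io_2}$ stays bounded below by a constant times $A_*$ uniformly in $\io_2$, which is precisely the content of the constant hierarchy \eqref{dfkl} and the reason $A_*/k_*$ cannot be taken arbitrarily small; getting the signs and the $O(1)$ shifts (from $\pr_{p^I}(p^I)=1$ and from the $-\frac{3}{t}\sum\qIt$ term) right in the coefficient of the decisive bulk term is where care is most needed.
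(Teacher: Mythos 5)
Your proposal follows essentially the same route as the paper's proof: take the square root $\f$, commute with $\pr^{\io_1}(p\pr_p)^{\io_2}$ using the identity \eqref{Intro:commute p pr p}, multiply by $2t^{2P}p^{0}\f^{(\io_1,\io_2)}$ with $2P=2A_*+2\de q+q_M+1$, integrate over $T\Si_t$, use the nonnegativity of the momentum-transport bulk contribution, and bound the errors with Lemmas~\ref{decayGagGabGaw}, \ref{Generalinterpolation} and \ref{interpolation}. Two bookkeeping slips are worth noting, though neither derails the argument. First, the term produced by $\pr_{p^I}(p^I)=1$ after the $p$-integration by parts is $+\tfrac{1}{t}\sum_I\qIt=+\tfrac{1}{t}$ (favorable sign, coefficient $1$), not $-\tfrac{3}{t}\sum_I\qIt$; the factor $3$ and the sign are both off, but since you fold it into the zeroth-order coefficient the outcome is unaffected. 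Second, the claim $P-C_{\io_2}\gtrsim A_*$ is too strong: from \eqref{dfkl} one only gets $P-C_{\io_2}\gtrsim 1$ uniformly in $\io_2$ (the paper's \eqref{Cond:C} gives $2(C_{\io_2}-A_*-\de q)\le -10$), since $C_{\io_2}$ can be as large as $\kl\max_{I,J}\{\qIt-\qJt\}\approx A_*-5$; the conclusion still follows because the $\ep\int s^{-1}\Hh^2$ error term is absorbed by choosing $\ep$ small (and the $A_*$ factor in the statement is covered by the $\les_*$ convention), rather than by the magnitude of the bulk coefficient.
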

\begin{proof}
    Observing that $\f$ also satisfies the Vlasov equation $X(\f)=0$, by employing \eqref{Vlasoveq} we get
    \begin{align}
    \begin{split}\label{schematicVlasov}
        &\pr_t(\f)+\sum_{I=1}^3\frac{p^I}{p^0}t^{-\qIt}\pr_I(\f)-\sum_{I=1}^3\frac{\qIt}{t}p^I\pr_{p^I}(\f)\\
        =&O_p(1)\c\Gaw\c p\pr_p(\f)+O_p(1)\c\Gag\c\pr_x\f,
    \end{split}
    \end{align}
Here $O_p(1)$ denotes a certain homogeneous function of $(p^I)_{I=1,2,3}$ with degree $0$ and it satisfies 
\begin{equation*}
    |(p\pr_p)^{\le N} O_p(1)|\lesssim_N 1 \qquad \quad \text{for any} \quad N\in \mathbb{Z}_{\ge0}.
\end{equation*}
   Note that  a direct computation yields for any $I,J,K\in\{1,2,3\}$,
\begin{align*}
[p^J\pr_{p^K},p^I\pr_{p^I}]=p^I[p^J,\pr_{p^I}]\pr_{p^K}+p^J[\pr_{p^K},p^I]\pr_{p^I}=-\de_{IJ}p^J\pr_{p^K}+\de_{IK}p^J\pr_{p^I}.
\end{align*}
This implies
\begin{align}\label{commutationpp}
    \left[p^J\pr_{p^K},\sum_{I=1}^3\frac{\qIt}{t}p^I\pr_{p^I}\right]=\sum_{I=1}^3\frac{\qIt}{t}\left(-\de_{IJ}p^J\pr_{p^K}+\de_{IK}p^J\pr_{p^I}\right)=\frac{\qKt-\qJt}{t}p^J\pr_{p^K}.
\end{align}
   Thus, commuting \eqref{schematicVlasov} with $\pr_x^{\io_1}(p\pr_p)^{\io_2}$ for $|\io_1|+|\io_2|\leq\kl$, in view of \eqref{commutationpp} we obtain
    \begin{align}
    \begin{split}\label{Vlasovpartial}
        &\pr_t(\f^{(\io_1,\io_2)})+\frac{C_{\io_2}}{t}\f^{(\io_1,\io_2)}+\sum_{I=1}^3\frac{p^I}{p^0}t^{-\qIt}\pr_I(\f^{(\io_1,\io_2)})-\sum_{I=1}^3\frac{\qIt}{t}p^I\pr_{p^I}(\f^{(\io_1,\io_2)})\\
        =&O_p(1)\c\Gaw\c p\pr_p(\f^{(\io_1,\io_2)})+O_p(1)\c\Gag\c\pr_x(\f^{(\io_1,\io_2)})\\
        +&O_p(t^{-q_M})\f^{(\kl)}+O_p(1)\c\left(\Gaw^{(1)}\c\f^{(1)}\right)^{(\kl-1)},
    \end{split}
    \end{align}
    where $f^{(k)}\coloneqq \max_{|\io_1|+|\io_2|\leq k}f^{(\io_1,\io_2)}$ and $C_{\io_2}$ is a constant obeying
    \begin{align}\label{Cio}
        |C_{\io_2}|\leq |\io_2|\max_{I,J=1,2,3}\{\qIt-\qJt\}.
    \end{align}
    Multiplying \eqref{Vlasovpartial} by $2p^0\f^{(\io_1,\io_2)}$, we then deduce
    \begin{align}
    \begin{split}\label{p0f2}
        &\pr_t\left(p^0|\f^{(\io_1,\io_2)}|^2\right)+\frac{2C_{\io_2}}{t}\left(p^0|\f^{(\io_1,\io_2)}|^2\right)+\sum_{I=1}^3\frac{p^I}{p^0}t^{-\qIt}\pr_I\left(p^0|\f^{(\io_1,\io_2)}|^2\right)\\
        -&\sum_{I=1}^3\frac{\qIt}{t}p^I\pr_{p^I}\left(p^0|\f^{(\io_1,\io_2)}|^2\right)+\sum_{I=1}^3\frac{\qIt}{t}\frac{(p^I)^2}{(p^0)^2}\left(p^0|\f^{(\io_1,\io_2)}|^2\right)\\
        =&O_p(1)\c\Gaw\c p\pr_p(p^0|\f^{(\io_1,\io_2)}|^2)+O_p(1)\c\Gaw\c p^0|\f^{(\kl)}|^2\\
        +&O_p(1)\c\Gag\c\pr_x(p^0|\f^{(\io_1,\io_2)}|^2)+O_p(t^{-q_M})p^0|\f^{(\kl)}|^2\\
        +&O_p(1)\c\left(\Gaw^{(1)}\c\f^{(1)}\right)^{(\kl-1)}\c p^0\f^{(\kl)}.
    \end{split}
    \end{align}
    Then we multiply \eqref{p0f2} by $t^{2A_*+2\de q+q_M+1}$ and infer that
    \begin{align*}
    &\pr_t\left(t^{2A_*+2\dq+q_M+1}p^0|\f^{(\io_1,\io_2)}|^2\right)+\frac{2(C_{\io_2}-A_*-\de q)-q_M-1}{t}\left(t^{2A_*+2\dq+q_M+1}p^0|\f^{(\io_1,\io_2)}|^2\right)\\
        +&\sum_{I=1}^3\frac{p^I}{p^0}t^{-\qIt}\pr_I\left(t^{2A_*+2\dq+q_M+1}p^0|\f^{(\io_1,\io_2)}|^2\right)-\sum_{I=1}^3\frac{\qIt}{t}p^I\pr_{p^I}\left(t^{2A_*+2\dq+q_M+1}p^0|\f^{(\io_1,\io_2)}|^2\right)\\
        +&\sum_{I=1}^3\frac{\qIt}{t}\frac{(p^I)^2}{(p^0)^2}\left(t^{2A_*+2\dq+q_M+1}p^0|\f^{(\io_1,\io_2)}|^2\right)\\
        =&O_p(t^{2A_*+2\de q+q_M+1})\c\Gaw\c p\pr_p\left(p^0|\f^{(\io_1,\io_2)}|^2\right)+O_p(t^{2A_*+2\de q+q_M+1})\c\Gag\c\pr_x\left(p^0|\f^{(\io_1,\io_2)}|^2\right)\\
        +&O_p(t^{2A_*+2\de q+1})\c p^0|\f^{(\kl)}|^2+O(t^{2A_*+2\de q+q_M+1})\left(\Gaw^{(1)}\c\f^{(1)}\right)^{(\kl-1)}\c p^0\f^{(\kl)}.
    \end{align*}
    Integrating it on $T\Si_t$ and using \eqref{Def:TT new}, by integration by parts we derive
    \begin{equation}\label{Est:T high}
    \begin{aligned}
        &\pr_t\left(t^{2A_*+2\de q+q_M+1}\int_{\Si_t}|\TT^{(\io_1,\io_2)}|^2\right)+\frac{2(C_{\io_2}-A_*-\de q)}{t}t^{2A_*+2\de q+q_M+1}\int_{\Si_t}|\TT^{(\io_1,\io_2)}|^2\\
        \geq&\int_{T\Si_t}O_p(t^{2A_*+2\de q+q_M+1})\c\left(\Gaw,\Gag^{(1)}\right)\c p^0|\f^{(\io_1,\io_2)}|^2+O_p(t^{2A_*+2\de q+1})p^0|\f^{(\kl)}|^2\\
        +&\int_{T\Si_t}t^{2A_*+2\de q+q_M+1}\left(\Gaw^{(1)}\c (p^0)^\frac{1}{2}\f^{(1)}\right)^{(\kl-1)}\c (p^0)^\frac{1}{2}\f^{(\kl)}.
    \end{aligned}
        \end{equation}
    Here we utilize the facts that
    \begin{align*}
        \sum_{I=1}^3\qIt=1,\qquad \max_{1\leq I\leq 3}\qIt\leq q_M,\qquad\sum_{I=1}^3(p^I)^2=(p^0)^2.
    \end{align*}
    Notice that from our choice of $A_*, k_*$ as in \eqref{dfkl}, along  with \eqref{Cio} there holds
    \begin{equation}\label{Cond:C}
    \begin{aligned}
        2(C_{\io_2}-A_*-\de q)&\leq 2|\io_2|\max_{I,J=1,2,3}\{\qIt-\qJt\}-2A_*-2\de q\\
        &\leq 2\left(\kl\max_{I,J=1,2,3}\{\qIt-\qJt\}-A_*-\de q\right)\\
        &\leq -10.
    \end{aligned}
     \end{equation}
    Meanwhile, by virtue of Lemma \ref{decayGagGabGaw} we have
    \begin{align}\label{Est:GagGaw1}
\left\|\Gag^{(1)},\Gaw\right\|_{L^\infty(\Si_t)}\les\frac{\ep}{t}.
    \end{align}
    Injecting \eqref{Cond:C} and \eqref{Est:GagGaw1} into \eqref{Est:T high}, we then obtain
    \begin{align*}
        &\pr_t\left(t^{2A_*+2\de q+q_M+1}\int_{\Si_t}|\TT^{(\io_1,\io_2)}|^2\right)-9t^{2A_*+2\de q+q_M}\int_{\Si_t}|\TT^{(\io_1,\io_2)}|^2\\
        \geq&\int_{T\Si_t}O_p(t^{2A_*+2\de q+1})p^0|\f^{(\kl)}|^2+t^{2A_*+2\de q+q_M+1}\left(\Gaw^{(1)}\c (p^0)^\frac{1}{2}\f^{(1)}\right)^{(\kl-1)}\c (p^0)^\frac{1}{2}\f^{(\kl)}.
    \end{align*}
    Integrating from $t$ to 1 gives
    \begin{align}
    \begin{split}\label{estTT}
        &t^{2A_*+2\de q+q_M+1}\int_{\Si_t}|\TT^{(\io_1,\io_2)}|^2+\int_t^1s^{2A_*+2\de q+q_M}\int_{\Si_s}|\TT^{(\io_1,\io_2)}|^2ds\\
        \les_*&\ep_0^2+\int_t^1\int_{T\Si_s}s^{2A_*+2\de q+q_M+1}\left(\Gaw^{(1)}\c (p^0)^\frac{1}{2}\f^{(1)}\right)^{(\kl-1)}\c (p^0)^\frac{1}{2}\f^{(\kl)}ds\\
        +&\int_{t}^1s^{2A_*+2\de q+1}\int_{\Si_s}|\TT^{(k_*)}|^2ds.
    \end{split}
    \end{align}
    To control the rest of the error terms, applying Lemma \ref{decayGagGabGaw}, Lemma \ref{Generalinterpolation} and Lemma \ref{interpolation}, we deduce
\begin{align*}
    \left\|\left(\Gaw^{(1)}\c (p^0)^\frac{1}{2}\f^{(1)}\right)^{(\kl-1)}\right\|_{L^2(T\Si_t)}&\les_*\left\|\Gaw^{(k_*)}\right\|_{L^2_x}\left\|(p^0)^\frac{1}{2}\f^{(1)}\right\|_{L^\infty_xL^2_p}+\left\|D_w^{(1)}\right\|_{L^\infty_x}\left\|(p^0)^\frac{1}{2}\f^{(k_*)}\right\|_{L^2_xL^2_p}\\
&\les_*\|\Gaw^{(k_*)}\|_{L^2_x}\|\TT^{(1)}\|_{L^\infty_x}+\|D_w^{(1)}\|_{L^{\infty}_x} \|\TT^{(k_*)}\|_{L^2_x}\\
    &\les_*\frac{\ep\Hh(t)}{t^{A_*+1+\dq+\frac{q_M+1}{2}}}.
\end{align*}
Substituting this into \eqref{estTT}, we hence derive
\begin{align*}
    &t^{2A_*+2\de q+q_M+1}\int_{\Si_t}|\TT^{(\io_1,\io_2)}|^2+\int_t^1s^{2A_*+2\de q+q_M}\int_{\Si_s}|\TT^{(\io_1,\io_2)}|^2ds\\
    \les_*\;&\ep_0^2+\int_t^1\int_{\Si_s}s^{2A_*+2\de q+q_M+1}\left\|\left(\Gaw^{(1)}\c (p^0)^\frac{1}{2}\f^{(1)}\right)^{(\kl-1)}\right\|_{L^2_p}\TT^{(\kl)}+s^{2A_*+2\de q+1}|\TT^{(k_*)}|^2ds\\
    \les_*\;&\ep_0^2+\int_t^1 s^{2A_*+2\dq+q_M+1}\left\|\left(\Gaw^{(1)}\c (p^0)^\frac{1}{2}\f^{(1)}\right)^{(\kl-1)}\right\|_{L^2(T\Si_s)}\c\left\|\TT^{(\kl)}\right\|_{L^2(\Si_s)}ds\\
    &+\int_t^1 s^{2A_*+2\dq+1}\left\|\TT^{(k_*)}\right\|_{L^2(\Si_s)}^2ds\\
\les_*\;&\ep_0^2+\ep\int_t^1s^{-1}\Hh(s)^2ds+\ep\int_t^1s^{2A_*+2\dq+q_M}\left\|\TT^{(k_*)}\right\|^2_{L^2(\Si_s)}ds+\int_{t}^1s^{-1+2\si}\Dd(s)^2ds.
\end{align*}
    Finally, choosing $\ep$ small enough, we conclude
    \begin{align*}
    &t^{2A_*+2\de q+q_M+1}\int_{\Si_t}|\TT^{(k_*)}|^2+A_*\int_t^1s^{2A_*+2\de q+q_M}|\TT^{(k_*)}|^2ds\\
    \les_*\;&\ep_0^2+\ep\int_t^1s^{-1}\Hh(s)^2ds+\int_t^1s^{-1+2\si}\Dd(s)^2ds.
    \end{align*}
    This completes the proof of Proposition \ref{topf}.
\end{proof}
\subsection{End of the Proof of Theorem \ref{M3}}
Now we finish the proof of Theorem \ref{M3}. Combining Propositions \ref{estkga}--\ref{topf}, for $\varepsilon$ sufficiently small, we obtain
\begin{align*}
\Hh(t)^2+A_*\int_t^1s^{-1}\Hh(s)^2ds\le C_*\ep_0^2+C\int_t^1s^{-1}\Hh(s)^2ds+C_*\int_t^1s^{-1+2\si}\Dd(s)^2ds.
\end{align*}
Since the constant $C$ involved in the inequality above is independent of $A_*$, choosing $A_*$ large enough, we absorb the bulk term $\int_t^1 s^{-1}\Hh(s)ds$ on the right. The desired top-order estimate in Theorem \ref{M3} thus follows
\begin{align*}\Hh(t)^2\les_*\ep_0^2+\int_t^1s^{-1+2\si}\Dd(s)^2ds.
\end{align*}

\subsection{Proof of Theorem \ref{M4}}\label{Sec:M4proof}
In view of the bootstrap assumption \eqref{B1} and the Sobolev embedding, we readily get the uniform boundedness of $\|g\|_{C^2(\Si_t)}$ and $\|(k,\pr_t\psi, \nab\psi, F)\|_{C^1(\Si_t)}$ for $t\in (T_*, 1]$. It remains to show
\begin{equation*}
    \sup\limits_{t\in (T_*, 1]}\left\|\jp^{\mu+\kl} |p|^{-\kl} (\pr_x, p\pr_{p})^{\le 1}f\right\|_{L^{\infty}_xL^2_p(T\Si_t)}<\infty.
\end{equation*}
In fact, we will establish the below stronger quantitative energy estimate for $f$:
\begin{prop}\label{fH3}
    Under the same assumptions in \Cref{M4}, there exists $C>0$ such that for all $t\in (T_*, 1]$ it holds
    \begin{equation*}
        \|f\|_{H^{3}_{\mu}(T\Si_t)}\les e^{Ct^{-C}}\|f\|_{H^{3}_{\mu}(T\Si_1)}.
    \end{equation*}
\end{prop}
\begin{proof}
    We conduct the similar energy estimate as in the proof of Proposition \ref{topf}. By expanding the Vlasov equation $X(f)=0$ and commuting this with $(\pr_x)^{\io_1} (p\pr_p)^{\io_2}$ with $|\io_1|+|\io_2|\le 3$, mimicking the derivation of \eqref{Vlasovpartial} we get
     \begin{align*}
    \begin{split}
    &\pr_t(f^{(\io_1, \io_2)})+\frac{C_{\io_2}}{t}f^{(\io_1, \io_2)}+\sum_{I=1}^3\frac{p^I}{p^0}t^{-\qIt}\pr_I(f^{(\io_1, \io_2)})-\sum_{I=1}^3\frac{\qIt}{t}p^I\pr_{p^I}(f^{(\io_1, \io_2)})\\
        =&O_p(1)\c\Gaw\c p\pr_p(f^{(\io_1, \io_2)})+O_p(1)\c\Gag\c\pr_x(f^{(3)})\\
        +&O_p(t^{-q_M})f^{(3)}+O_p(1)\c\left(\Gaw^{(1)}\c f^{(1)}\right)^{(2)},
    \end{split}
    \end{align*}
    where $C_{\io_2}$ is a constant obeying
    \begin{align}\label{Cio1}
        |C_{\io_2}|\leq 3\max_{I,J=1,2,3}\{\qIt-\qJt\}.
    \end{align}
Multiplying by $wf^{(\io_1, \io_2)}$ with $w\coloneqq \jp^{2\mu+2\kl}|p|^{-2\kl}$ then yields
\begin{align}
    \begin{split}\label{Eqn:localf}
        &\pr_t\left(w|f^{(\io_1, \io_2)}|^2\right)+\frac{2C_{\io_2}}{t}\c \left(w|f^{(\io_1, \io_2)}|^2\right)+\sum_{I=1}^3 \frac{p^I}{p^0}t^{-\qIt} \pr_{I}\left(w|f^{(\io_1, \io_2)}|^2\right)\\-&\sum_{I=1}^3\frac{\qIt}{t}p^I\pr_{p^I}\left(w|f^{(\io_1, \io_2)}|^2\right)+\sum_{I=1}^3 \frac{\qIt}{t}p^I\pr_{p^I}(\log w)\c \left(w|f^{(\io_1, \io_2)}|^2\right)
        \\=&O_p(1)\c\Gaw\c p\pr_p(w|f^{(\io_1,\io_2)}|^2)+O_p(1)\c\Gaw\c w|f^{(3)}|^2\\
+&O_p(1)\c\Gag\c\pr_x(p^0|f^{(\io_1,\io_2)}|^2)+O_p(t^{-q_M})w|f^{(3)}|^2\\
        +&O_p(1)\c\left(\Gaw^{(1)}\c f^{(1)}\right)^{(2)}\c w f^{(3)}.
    \end{split}
\end{align}
Notice that
\begin{align*}
    \sum_{I=1}^3 \frac{\qIt}{t}p^I\pr_{p^I}(\log w)&=\sum_{I=1}^3 \frac{\qIt}{t}p^I\c\left( (\kl+\mu)\c \frac{2p^I}{\jp^2}-\kl\frac{2p^I}{|p|^2} \right)\leq 2\mu\sum_{I=1}^3\frac{\qIt}{t}\frac{|p^I|^2}{\jp^2}\leq\frac{2\mu q_M}{t}.
\end{align*}
Thus, multiplying \eqref{Eqn:localf} by $t^{P}$ with $P=6\de q+(2\mu+1)q_M+2$ and integrating it over $T\Si_t$, via integration by parts we derive\footnote{The derivation is analogous to that of \eqref{Est:T high}.}
    \begin{align*}
       &\pr_t\left(t^P\int_{T\Si_t}w|f^{(\io_1, \io_2)}|^2\right)+\frac{2C_{\io_2}+(2\mu+1)q_M+1-P}{t}\c \left(t^P\int_{T\Si_t}w|f^{(\io_1, \io_2)}|^2\right)
        \\\ge&\int_{T\Si_t}O_p(t^P)\c(\Gaw, \Gag^{(1)})\c (w|f^{(\io_1,\io_2)}|^2)
+O_p(t^{P-q_M})w|f^{(3)}|^2
        +O_p(t^P)\c\left(\Gaw^{(1)}\c f^{(1)}\right)^{(4)}\c w f^{(3)}.
    \end{align*}
    Integrating it from $t$ to $1$, thanks to \eqref{Est:GagGaw1} and \eqref{Cio1}, we hence infer 
    \begin{equation}\label{Est:localfintegral}
    \begin{aligned}
       &t^P\int_{T\Si_t}w|f^{(\io_1, \io_2)}|^2+\int_t^1s^{P-1}\int_{T\Si_s}w|f^{(\io_1, \io_2)}|^2 ds\\
    \les& \|f\|_{H^{3}_{\mu}(T\Si_1)}+\int_t^1 s^{P-q_M}\int_{T\Si_s}w|f^{(3)}|^2ds+\int_t^1 s^{P}\int_{T\Si_s}\left(\Gaw^{(1)}\c f^{(1)}\right)^{(2)}\c w f^{(3)}.
    \end{aligned}
    \end{equation}
    Regarding the last error term, from Lemma \ref{interpolation} we have
    \begin{equation*}
        \left|\left(\Gaw^{(1)}\c f^{(1)}\right)^{(2)}\right|\les|\Gaw^{(3)}|\c |f^{(3)}|\les \varepsilon t^{-1-2\de q} |f^{(3)}|.
    \end{equation*}
    As a consequence, summing \eqref{Est:localfintegral} for all $|\io_1|+|\io_2|\leq 3$, we deduce
    \begin{align*}
        t^P\|f\|_{H^{3}_{\mu}(T\Si_t)}^2\les& \|f\|_{H^{3}_{\mu}(T\Si_1)}^2+\int_t^1 \left(s^{-q_M}+\varepsilon s^{-1-2\de q} \right) \c \left(s^P\|f\|_{H^{3}_{\mu}(T\Si_s)}^2\right)ds\\
        \les&\|f\|_{H^{3}_{\mu}(T\Si_1)}^2+\int_t^1 s^{-1-2\de q} \c \left(s^P\|f\|_{H^{3}_{\mu}(T\Si_s)}^2\right)ds.
    \end{align*}
    Therefore, employing Gr\"onwall's inequality, we arrive at
    \begin{equation*}
        t^P\|f\|_{H^{3}_{\mu}(T\Si_t)}^2\les e^{\frac{t^{-2\de q}}{2\de q}}\|f\|_{H^{3}_{\mu}(T\Si_1)}^2.
    \end{equation*}
    This concludes the proof of Proposition \ref{fH3}.
\end{proof}
Finally, utilizing the Sobolev embedding on $\mathbb{T}^3$, we conclude
\begin{equation*}
   \sup\limits_{t\in (T_*, 1]}\left\|\jp^{\mu+\kl} |p|^{-\kl} (\pr_x, p\pr_{p})^{\le 1}f\right\|_{L^{\infty}_xL^2_p(T\Si_t)}\les \sup\limits_{t\in (T_*, 1]}\|f\|_{H^{3}_{\mu}(T\Si_t)}\les e^{CT_*^{-C}}\|f\|_{H^{3}_{\mu}(T\Si_1)}<\infty.
\end{equation*}
This finishes the proof of Theorem \ref{M4}.
\section{Physical Conclusions}\label{secconclusions}
In this section, we discuss several important physical implications of Theorem \ref{mainestimates}, which are analogous to Section 6 in \cite{FRS}.
\subsection{Limiting Functions and Kasner-like Behavior}
The following proposition shows that $\{tk_{IJ}\}_{I,J=1,2,3}$ and $t\pr_t\psi$ have limits in $W^{1,\infty}(\Tt^3)$ as $t\to 0$. This indicates that our perturbed spacetimes converge to a nearby ``Kasner-like'' spacetime as approaching the Big Bang singularity at $t=0$.
\begin{prop}\label{Kasnerlimit}
Within the perturbed spacetime $(\MM\simeq (0,1]\times \mathbb{T}^3, \g)$ solved in Theorem \ref{mainestimates}, the following limits exist:
\begin{align*}
\ka_{IJ}^{(\infty)}(x)\coloneqq \lim_{t\to 0}tk_{IJ}(t,x),\qquad B^{(\infty)}(x)\coloneqq \lim_{t\to 0}t\pr_t\psi(t,x).
\end{align*}
Furthermore, we have the below estimates:
\begin{align}
\|tk_{IJ}(t,\c)-\ka_{IJ}^{(\infty)}\|_{W^{1,\infty}(\Tt^3)}&\les\ep_0 t^\si,\qquad\qquad\|\ka_{IJ}^{(\infty)}+\qIt\de_{IJ}\|_{W^{1,\infty}(\Tt^3)}\les\ep_0,\label{kcontrol}\\
\|t\pr_t\psi(t,\c)-B^{(\infty)}\|_{W^{1,\infty}(\Tt^3)}&\les\ep_0 t^\si,\qquad\qquad\quad\,\|B^{(\infty)}-\Bt\|_{W^{ 1,\infty}(\Tt^3)}\les\ep_0.\label{Bcontrol}
\end{align}
In addition, for each $x\in\Tt$, the symmetric matrix $\left(-\ka_{IJ}^{(\infty)}(x)\right)_{I,J=1,2,3}$ has $3$ eigenvalues $q_I^{(\infty)}(x)$ which are the final Kasner exponents of the perturbed spacetime, that can be ordered such that $q_1^{(\infty)}$, $q_2^{(\infty)}$, $q_3^{(\infty)}\in C^{0,1}(\Tt^3)$ and such that the following estimate holds:
\begin{align}\label{qIcontrol}
    \sum_{I=1}^3\|q_I^{(\infty)}-\qIt\|_{C^{0,1}(\Tt^3)}\les\ep_0.
\end{align}
Moreover, the $\left\{q_I^{(\infty)}(x)\right\}_{I=1,2,3}$ and $B^{(\infty)}(x)$ satisfy the following algebraic relations:
\begin{align}\label{algebraicrelations}
    \sum_{I=1}^3q_I^{(\infty)}(x)=1,\qquad \sum_{I=1}^3\left[q_I^{(\infty)}(x)\right]^2=1-\left[B^{(\infty)}(x)\right]^2.
\end{align}
\end{prop}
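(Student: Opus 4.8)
The plan is to derive closed evolution equations for the rescaled quantities $t\kc_{IJ}$ and $t\tpsic$, show that their time-derivatives are integrable down to $t=0$ in $W^{1,\infty}(\Tt^3)$, deduce the existence of the limits with the claimed rates, and then read off the algebraic relations from the rescaled CMC condition and Hamiltonian constraint.

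\textbf{Step 1 (limits and rates).} Commuting the evolution equations of Proposition \ref{basicequations} — in the forms used in the proofs of Propositions \ref{estk} and \ref{estpsi} — with $\pr^\io$, $|\io|\le1$, multiplying by $t$, and using $\pr_t(t\kt_{IJ})=\pr_t(t\widetilde{e_0\psi})=0$ (Proposition \ref{kasnerquantities}), we obtain
\begin{align*}
\pr_t\big(t\,\pr^\io\kc_{IJ}\big)&=t\Big(-(\ev(\ev\vphi))^{(1)}+(\ev\ga)^{(1)}+t^{-1}\Gab^{(1)}+(\Gab\c\Gaw)^{(1)}+T^{(1)}\Big),\\
\pr_t\big(t\,\pr^\io\tpsic\big)&=t\Big((\ev(\ev\psi))^{(1)}+(\Gab\c\Gaw)^{(1)}\Big).
\end{align*}
By the final bound $\Dd\les\ep_0$ of Theorem \ref{mainestimates}, together with Lemmas \ref{decayGagGabGaw}, \ref{interpolation} and \ref{topT} and the identity $q=1-2\si$, every term on the right is bounded in $W^{1,\infty}(\Tt^3)$ by $\ep_0 s^{-q}=\ep_0 s^{-1+2\si}$ (the borderline ones being $s^{-1}\Gab^{(1)}$, $(\Gab\c\Gaw)^{(1)}$ and $T^{(1)}$) or by $\ep_0 s^{-1+4\si}$ (the $\ev(\ev\,\cdot\,)$ terms), all integrable on $(0,1]$. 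Hence $t\kc_{IJ}(t)$ and $t\tpsic(t)$ are Cauchy in the complete space $W^{1,\infty}(\Tt^3)=C^{0,1}(\Tt^3)$ as $t\to0$, converging to limits $\kc_{IJ}^{(\infty)},\tpsic^{(\infty)}$ with
\[
\|t\kc_{IJ}(t)-\kc_{IJ}^{(\infty)}\|_{W^{1,\infty}(\Tt^3)}+\|t\tpsic(t)-\tpsic^{(\infty)}\|_{W^{1,\infty}(\Tt^3)}\les\ep_0 t^{2\si}\leq\ep_0 t^\si.
\]
Setting $\ka_{IJ}^{(\infty)}:=\kc_{IJ}^{(\infty)}-\qIt\de_{IJ}$ and using $tk_{IJ}=t\kc_{IJ}-\qIt\de_{IJ}$ yields $\ka_{IJ}^{(\infty)}=\lim_{t\to0}tk_{IJ}$ and the first bound of \eqref{kcontrol}. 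For $\psi$ we write $t\pr_t\psi=n(t\tpsic+\Bt)$, $n=e^\vphi$; since $\|\vphi\|_{W^{1,\infty}}\les\Ll_n(t)t^{2\si}\les\ep_0 t^{2\si}$ (Theorem \ref{M1} and \eqref{finalestimates}) and $C^{0,1}(\Tt^3)$ is a Banach algebra, $B^{(\infty)}:=\lim_{t\to0}t\pr_t\psi=\tpsic^{(\infty)}+\Bt$ exists with the first bound of \eqref{Bcontrol}.

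\textbf{Step 2 (closeness to the background and the exponents).} Taking $t\to0$ in the bootstrap norms gives $\|\ka_{IJ}^{(\infty)}+\qIt\de_{IJ}\|_{W^{1,\infty}}=\lim_{t\to0}\|t\kc_{IJ}(t)\|_{W^{1,\infty}}\les\limsup_{t\to0}\Ll_k(t)\les\ep_0$ and likewise $\|B^{(\infty)}-\Bt\|_{W^{1,\infty}}\les\ep_0$, which complete \eqref{kcontrol}--\eqref{Bcontrol}. The matrix $M(x):=(-\ka_{IJ}^{(\infty)}(x))_{I,J=1,2,3}$ is symmetric with Lipschitz entries and equals $\diag(\widetilde{q_1},\widetilde{q_2},\widetilde{q_3})-\kc^{(\infty)}(x)$ with $\|\kc^{(\infty)}\|_{C^{0,1}(\Tt^3)}\les\ep_0$. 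By Weyl's perturbation inequality its eigenvalues, listed monotonically, satisfy (after matching orderings) $|q_I^{(\infty)}(x)-\qIt|\le\|\kc^{(\infty)}(x)\|_{\mathrm{op}}\les\ep_0$ and $|q_I^{(\infty)}(x)-q_I^{(\infty)}(y)|\le\|M(x)-M(y)\|_{\mathrm{op}}\les\|\kc^{(\infty)}(x)-\kc^{(\infty)}(y)\|\les\ep_0|x-y|$, so $q_I^{(\infty)}\in C^{0,1}(\Tt^3)$ and \eqref{qIcontrol} holds.

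\textbf{Step 3 (algebraic relations and the main obstacle).} From $\tr k=-t^{-1}$ we have $\sum_I tk_{II}\equiv-1$, hence $\sum_I q_I^{(\infty)}(x)=\tr M(x)=-\lim_{t\to0}\sum_I tk_{II}(t,x)=1$. For the quadratic identity, multiply the Hamiltonian constraint \eqref{2.26a} by $t^2$: by Lemmas \ref{decayGagGabGaw}, \ref{interpolation} and $q=1-2\si$ the terms $t^2 e_C(\ga_{DDC})$, $t^2\ga\c\ga$, $t^2(e_C\psi)e_C\psi$, $t^2(F_{0C}F_{0C}+\tfrac14 F_{\a\b}F^{\a\b})$ are $O(\ep_0^2 t^{4\si})\to0$, the Vlasov term $2t^2 T_{00}=2t^2\TT^2$ is $O(\ep_0^2 t^{1-q_M})\to0$ since $q_M=\max_I\qIt+\si<1$ by \eqref{strongq}, while $t^2(e_0\psi)^2=(te_0\psi)^2\to(B^{(\infty)})^2$ and $t^2|k|^2=\sum_{I,J}(tk_{IJ})^2\to\sum_{I,J}(\ka_{IJ}^{(\infty)})^2=\tr M^2=\sum_I[q_I^{(\infty)}]^2$; passing to the limit yields $1-\sum_I[q_I^{(\infty)}]^2=(B^{(\infty)})^2$, i.e. \eqref{algebraicrelations}. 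The principal point of the proof is the time-integrability in Step 1: it is exactly where the strong sub-critical condition is used — directly via $q<1$, and via the sharp bounds of Lemmas \ref{topT} and \ref{interpolation} (which themselves rest on $q_M+\dq<q$), so that after multiplication by $t$ no source decays faster than $s^{-1+c\si}$; the analogous mechanism, now needing only $q_M<1$, is what forces $t^2T_{00}\to0$ in the rescaled Hamiltonian constraint. If any of the Vlasov estimates were lossy, these borderline integrations would fail.
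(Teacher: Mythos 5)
Your proposal is correct and follows essentially the same route as the paper's own proof: both integrate the $\pr_t$-equations for $tk_{IJ}$ and $te_0\psi$ (equivalently $t\kc$, $t\tpsic$) with sources integrable at rate $s^{-1+c\sigma}$ after inserting the final estimate $\Dd\les\ep_0$, then read off the Lipschitz eigenvalue bounds from Weyl's inequality and the algebraic identities from the rescaled CMC and Hamiltonian constraints. The only place you go beyond the paper's sketch is in Step~1 for $B^{(\infty)}$, where you explicitly unwind $t\pr_t\psi=n(t\tpsic+\Bt)$ and track the lapse factor via $\|\vphi\|_{W^{1,\infty}}\les\ep_0 t^{2\si}$ and the Banach-algebra property of $C^{0,1}(\Tt^3)$; the paper merely says this case is ``similar,'' so your spelled-out version is a slightly more careful account of the same argument.
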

\begin{proof}
Recall from \eqref{Eqn:prt k} that $k_{IJ}$ obeys
\begin{equation*}
     \pr_t(k_{IJ})+\frac{1}{t}k_{IJ}=-\ev(\ev n)+\ev\ga+O(t^{-1})\Gab+\Gab\c\Gaw+T.
\end{equation*}
Inserting the hyperbolic estimate \eqref{finalestimates} in Theorem \ref{mainestimates} and integrating over $[a, b]\subset (0, 1]$, we get
\begin{align}\label{Cauchy}
    \|ak_{IJ}(a,\c)-bk_{IJ}(b,\c)\|_{W^{1,\infty}(\Tt^3)}\les\int_a^bs^{-1+\si}\Dd(s)ds\les\ep_0b^\si,
\end{align}
Let $\{t_n\}_{n=1}^\infty\subset(0,1]$ be a decreasing sequence of times such that $\lim_{n\to\infty}t_n=0$. From \eqref{Cauchy} we have that $\{t_nk_{IJ}(t_n,\c)\}_{n=1}^\infty$ is a Cauchy sequence in $W^{1,\infty}(\Tt^3)$. We then denote 
\begin{align*}
    \ka_{IJ}^{(\infty)}\coloneqq \lim_{n\to\infty}\{t_nk_{IJ}(t_n,\c)\}_{n=1}^\infty.
\end{align*}
Thus, for any fixed $t\in(0, 1]$, choosing $(a, b)=(t_n,t)$ in \eqref{Cauchy} with $n$ large enough and letting $n\to \infty$, we derive
\begin{align*}
    \|\ka_{IJ}^{(\infty)}-tk_{IJ}(t,\c)\|_{W^{1,\infty}(\Tt^3)}\les\ep_0 t^\si.
\end{align*}
In particular, taking $t=1$ gives
\begin{align*}
    \|\ka_{IJ}^{(\infty)}-k_{IJ}(1,\c)\|_{W^{1,\infty}(\Tt^3)}\les\ep_0.
\end{align*}
Notice that from \eqref{initial} it holds
\begin{align*}
    \|k_{IJ}(1,\c)+\qIt\de_{IJ}\|_{W^{1,\infty}(\Tt^3)}\les\ep_0.
\end{align*}
Incorporating the above two inequalities, we deduce
\begin{align}\label{matrixperturbation}
    \|\ka_{IJ}^{(\infty)}+\qIt\de_{IJ}\|_{W^{1,\infty}(\Tt^3)}\les\ep_0.
\end{align}
as stated. The estimates \eqref{Bcontrol} can be derived in a similar manner.\\ \\
Next, notice that \eqref{matrixperturbation} implies that the symmetric matrix $(\ka_{IJ}^{(\infty)}(x))_{I,J=1,2,3}$ is $O(\ep_0)$--close to the diagonal matrix $\diag(-\widetilde{q_1},-\widetilde{q_2},-\widetilde{q_3})$. Thus, for each $x\in\Tt^3$, the eigenvalues of the diagonalizable matrix  $(\ka_{IJ}^{(\infty)}(x))_{I,J=1,2,3}$ can be ordered such that\footnote{It is a direct consequence of Weyl's inequality, see also (3.6) in Chapter IV of \cite{StewardSun}.}
\begin{align*}
    \sum_{I=1}^3\left|q_I^{(\infty)}(x)-\qIt\right|&\les\max_{I,J=1,2,3}\left|\ka_{IJ}^{(\infty)}(x)+\qIt\de_{IJ}\right|,\\
    \sum_{I=1}^3\left|q_I^{(\infty)}(x)-q_I^{(\infty)}(y)\right|&\les\max_{I,J=1,2,3}\left|\ka_{IJ}^{(\infty)}(x)-\ka_{IJ}^{(\infty)}(y)\right|.
\end{align*}
Combining these with \eqref{kcontrol}, we obtain that $q_I(x)\in C^{0,1}(\Tt^3)$ and it satisfies \eqref{qIcontrol}.\\ \\
Finally, in light of $\tr k=-\frac{1}{t}$ and \eqref{kcontrol} we have
\begin{align*}
    -1=t\tr k=O(\ep_0 t^\si)+\tr\ka^{(\infty)}(x)=O(\ep_0t^\si)-\sum_{I=1}^3q_I^{(\infty)}(x).
\end{align*}
Sending $t\to 0$ implies
\begin{align*}
    \sum_{I=1}^3q_I^{(\infty)}(x)=1.
\end{align*}
To get the second algebraic constraint in \eqref{algebraicrelations}, employing the Hamiltonian equation \eqref{2.26a} from Proposition \eqref{basicequations}, together with estimates in Theorem \ref{mainestimates}, \eqref{kcontrol} and \eqref{Bcontrol}, we arrive at
\begin{align*}
1&=t^2(e_0\psi)^2+t^2k_{CD}k_{CD}+O(\ep_0t^{\si})\\
&=\ka_{CD}^{(\infty)}(x)\ka_{CD}^{(\infty)}(x)+\left[B^{(\infty)}(x)\right]^2+O(\ep_0t^{\si})\\
&=\sum_{I=1}^3\left[q_I^{(\infty)}(x)\right]^2+\left[B^{(\infty)}(x)\right]^2+O(\ep_0t^{\si}).
\end{align*}
The desired identity hence follows by taking $t\to 0$.
\end{proof}
\subsection{Curvature Blow-up at \texorpdfstring{$t=0$}{}}
Building on the behaviors of limiting fields established in Proposition \ref{Kasnerlimit}, we now prove that the Kretschmann scalar blows up like $t^{-4}$ as $t\to0$ as below. In other words, this demonstrates that the Big Bang singularity exactly occurs at $t=0$.
\begin{prop}\label{Kretschmann}
Within the perturbed spacetime $(\MM\simeq (0,1]\times \mathbb{T}^3, \g)$ solved in Theorem \ref{mainestimates}, the Kretschmann scalar satisfies the following estimate:
\begin{align*}
\R^{\a\mu\b\nu}\R_{\a\mu\b\nu}&=4t^{-4}\left\{\sum_{I=1}^3\left[\left(q_I^{(\infty)}\right)^2-q_I^{(\infty)}\right]^2+\sum_{1\leq I<J\leq 3}\left(q_I^{(\infty)}\right)^2\left(q_J^{(\infty)}\right)^2\right\}+O(\ep_0t^{-4+\si})\\
&=4t^{-4}\left\{\sum_{I=1}^3\left[\qIt^2-\qIt\right]^2+\sum_{1\leq I<J\leq 3}\qIt^2\qJt^2\right\}+O(\ep_0t^{-4}).
\end{align*}
\end{prop}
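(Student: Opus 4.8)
The plan is to compute the Kretschmann scalar $\R^{\a\mu\b\nu}\R_{\a\mu\b\nu}$ directly in the orthonormal frame $\{e_\mu\}_{\mu=0,1,2,3}$ and to identify which frame components of the Riemann tensor actually dominate as $t\to 0$. First I would recall that, relative to the CMC-transported orthonormal frame, the Riemann components split into the familiar Gauss--Codazzi--Mainardi pieces: the purely spatial components $\R(e_C,e_I,e_D,e_J)$ given by the Gauss equation \eqref{Gauss}, the mixed components $\R(e_K,e_J,e_0,e_I)$ given by the Codazzi equation \eqref{Codazzi}, and the components $\R(e_0,e_I,e_0,e_J)$ computed in \eqref{Eqn:R0I0J}. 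Expressing the Kretschmann scalar through these pieces (keeping track of the Lorentzian signature factors, so that the $0\,\cdot\,0$ contractions enter with signs), one gets schematically
\begin{align*}
\R^{\a\mu\b\nu}\R_{\a\mu\b\nu}&=4\sum_{I,J}\left|\R(e_0,e_I,e_0,e_J)\right|^2+4\sum_{I,J,K}\left|\R(e_K,e_J,e_0,e_I)\right|^2\\
&+\sum_{I,J,C,D}\left|\R(e_C,e_I,e_D,e_J)\right|^2.
\end{align*}

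The second step is to insert the a priori bounds from Theorem \ref{mainestimates} to see that only the first sum survives at the leading order $t^{-4}$. From \eqref{Eqn:R0I0J} we have $\R(e_0,e_I,e_0,e_J)=e_0(k_{IJ})-(e_C\vphi)\ga_{IJC}+e_I(e_J\vphi)+(e_I\vphi)e_J\vphi-k_{IC}k_{CJ}$; using the evolution equation \eqref{2.22a} (equivalently, $e_0(k_{IJ})=-t^{-1}k_{IJ}+\ldots$) together with the decay estimates $\Ll(t)\les\ep_0$, $\Ll_n(t)\les\ep_0$, and Lemma \ref{topT}, the dominant contribution is $-t^{-1}k_{IJ}-k_{IC}k_{CJ}$, all other terms being $O(\ep_0 t^{-2+\si})$ or better (the lapse terms carry a $t^{2\si}$ gain, $\ga$ carries $t^{-q}$ with $q<1$, and $T$ carries $t^{-1-q}$ with an extra $\ep_0$). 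Substituting $tk_{IJ}=\ka^{(\infty)}_{IJ}+O(\ep_0 t^\si)$ from \eqref{kcontrol} in Proposition \ref{Kasnerlimit}, the $(I,J)$ component of $t^2\R(e_0,e_I,e_0,e_J)$ converges to $-\ka^{(\infty)}_{IJ}-\ka^{(\infty)}_{IC}\ka^{(\infty)}_{CJ}$ up to $O(\ep_0 t^\si)$. Diagonalizing the symmetric matrix $(-\ka^{(\infty)}_{IJ})$ — whose eigenvalues are the $q_I^{(\infty)}$ by the last part of Proposition \ref{Kasnerlimit} — turns this matrix into $\diag\big(q_I^{(\infty)}-(q_I^{(\infty)})^2\big)$ up to an orthogonal conjugation, so its Frobenius norm squared contributes $4t^{-4}\sum_I\big[(q_I^{(\infty)})^2-q_I^{(\infty)}\big]^2$.

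The third step is to handle the Codazzi and Gauss terms and show they only contribute the $\sum_{I<J}(q_I^{(\infty)})^2(q_J^{(\infty)})^2$ piece plus lower order. The Codazzi components $\R(e_K,e_J,e_0,e_I)=\nab_K k_{JI}-\nab_J k_{KI}$ involve one spatial derivative of $k$ plus $\ga\cdot k$; since $\ev k$ blows up only like $t^{-q_M-\dq}$ (Lemma \ref{interpolation}, AVTD estimate \eqref{AVTDintro}) and $\ga\cdot k\les\ep_0 t^{-1-q}$, these components are $O(\ep_0 t^{-1-q_M-\dq})+O(\ep_0 t^{-1-q})=o(t^{-2})$, hence contribute $o(t^{-4})$. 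For the Gauss components $\R(e_C,e_I,e_D,e_J)=\sRic$-type curvature $+\,k_{CD}k_{IJ}-k_{CJ}k_{ID}$: the spatial curvature $R$ is built from $\ev\ga$ and $\ga\cdot\ga$, both $o(t^{-2})$ by the same AVTD counting, so only the quadratic-in-$k$ part $k_{CD}k_{IJ}-k_{CJ}k_{ID}$ survives at order $t^{-2}$. Squaring and summing, $t^4\sum|k_{CD}k_{IJ}-k_{CJ}k_{ID}|^2$ converges to the corresponding expression in the eigenvalues of $(-\ka^{(\infty)})$; in a frame diagonalizing that matrix one computes $\sum_{C,I,D,J}\big(\mu_C\mu_I\de_{CD}\de_{IJ}-\mu_C\mu_I\de_{CJ}\de_{ID}\big)^2=2\sum_{I\ne J}\mu_I^2\mu_J^2$ where $\mu_I=-q_I^{(\infty)}$, i.e. $4\sum_{I<J}(q_I^{(\infty)})^2(q_J^{(\infty)})^2$. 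Collecting the two surviving pieces gives the first displayed line of the Proposition; the second line then follows by replacing $q_I^{(\infty)}$ with $\qIt$ using \eqref{qIcontrol}, which costs only $O(\ep_0)$ and absorbs the $t^\si$ into the $O(\ep_0 t^{-4})$ error.

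The main obstacle I anticipate is bookkeeping rather than conceptual: one must carefully verify that every error term — in particular the borderline-looking $\ga\cdot k$ and $\ev k$ contributions to the Codazzi components, and the Maxwell/Vlasov contributions $2(F_{I\a}F_J{}^\a-\tfrac14\de_{IJ}F_{\a\b}F^{\a\b})$ and $T_{IJ}$ entering through the Ricci equation \eqref{2.22a} — decays strictly faster than $t^{-2}$, so that squaring yields $o(t^{-4})$. This uses the strong sub-critical bookkeeping $q<1$, $q_M+\dq<q$, and the extra $\ep_0$ attached to $T$ (Lemma \ref{topT}); the delicate point is that $\ev k\sim t^{-1-q_M-\dq}$ with $q_M+\dq<q<1$ is exactly what makes the Codazzi terms subleading, and this is precisely where the AVTD structure from \eqref{AVTDintro} is essential. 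A secondary subtlety is that the diagonalization of $(-\ka^{(\infty)}_{IJ}(x))$ is $x$-dependent, so one should phrase the leading-order identity in a manifestly frame-invariant way — e.g. by noting that $t^2\R(e_0,e_\cdot,e_0,e_\cdot)\to M:=I-(-\ka^{(\infty)})$ where $\operatorname{tr}(M^2)=\sum_I(q_I^{(\infty)}-(q_I^{(\infty)})^2)^2$ and similarly for the quadratic-in-$k$ term — rather than fixing a pointwise eigenbasis, which keeps the Lipschitz dependence on $x$ transparent and avoids any continuity issues with eigenvectors.
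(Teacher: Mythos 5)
Your proposal follows essentially the same route as the paper: decompose the Kretschmann scalar into Gauss/Codazzi/Mainardi blocks in the orthonormal frame, show the Codazzi block is sub-leading, evaluate the other two blocks at order $t^{-2}$ in terms of $\ka^{(\infty)}$ using Proposition \ref{Kasnerlimit}, and combine. One sign to correct: in signature $(-,+,+,+)$ each raised $0$ index in $\R^{\a\mu\b\nu}\R_{\a\mu\b\nu}$ contributes a factor $(-1)$, so the Codazzi block (one raised $0$) enters with coefficient $-4$, not $+4$ (compare the paper's decomposition \eqref{Kretschmanncomputation}); this is harmless here because you correctly show $t^{2}\R(e_K,e_J,e_0,e_I)=O(\ep_0 t^{\si})$ so that block drops out at leading order, but the stated intermediate identity is wrong as written. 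Finally, the paper performs the final algebra via traces $\tr(K^n)$ of $K=\ka^{(\infty)}$ rather than a pointwise eigenbasis, which is exactly the frame-invariant reformulation you suggest at the end to avoid eigenvector-regularity issues; the two computations give the identical polynomial in $q_I^{(\infty)}$.
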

\begin{proof}
Throughout this proof, we adopt the Einstein summation conventions for $I,J=1,2,3$ indices as well. According to the definition of the Kretschmann scalar, we directly calculate
\begin{align}
\begin{split}\label{Kretschmanncomputation}
\R^{\a\mu\b\nu}\R_{\a\mu\b\nu}&=\R(e_A,e_I,e_B,e_J)\R(e_A,e_I,e_B,e_J)+4\R(e_0,e_I,e_0,e_J)\R(e_0,e_I,e_0,e_J)\\
&-4\R(e_A,e_I,e_0,e_J)\R(e_A,e_I,e_0,e_J).
\end{split}
\end{align}
Using the Gauss equation \eqref{Gauss}, together with estimates in Theorem \ref{mainestimates} and \eqref{kcontrol}, we obtain
\begin{equation}\label{Est:R1}
\begin{aligned}
    &t^2\R(e_A,e_I,e_B,e_J)\\
    =\;&t^2R(e_A,e_I,e_B,e_J)+t^2k_{AB}k_{IJ}-t^2k_{AJ}k_{BI}\\
    =\;&t^2g\left(\nab_{e_A}\nab_{e_I}e_J-\nab_{e_I}\nab_{e_A}e_J-\nab_{[e_A,e_I]}e_J,e_B\right)+(tk_{AB})(tk_{IJ})-(tk_{AJ})(tk_{BI})\\
    =\;&  (tk_{AB})(tk_{IJ})-(tk_{AJ})(tk_{BI})+\ev\ga+\ga\c\ga\\
    =\;&\ka_{AB}^{(\infty)}\ka_{IJ}^{(\infty)}-\ka_{AJ}^{(\infty)}\ka_{BI}^{(\infty)}+O(\ep_0t^\si).
\end{aligned}
\end{equation}
Similarly, employing the Codazzi equation \eqref{Codazzi} and Theorem \ref{mainestimates} we get
\begin{align}\label{Est:R2}
    t^2\R(e_A,e_I,e_0,e_J)=t^2e_A(k_{IJ})-t^2e_I(k_{AJ})+O(\ep_0t^\si)=O(\ep_0t^\si).
\end{align}
Meanwhile, in view of \eqref{2.28}, \eqref{sRiceIeJ}, along with hyperbolic estimates in Theorem \ref{mainestimates}, we infer
\begin{align}\label{Est:R3}
t^2\R(e_0,e_I,e_0,e_J)=-\ka_{IJ}^{(\infty)}-\ka_{IC}^{(\infty)}\ka_{CJ}^{(\infty)}+O(\ep_0t^\si).
\end{align}
Denote $K=\left(\ka_{IJ}^{(\infty)}\right)_{I,J=1,2,3}$. From Proposition \ref{Kasnerlimit} we know that $K$ is a $3\times 3$ symmetric matrix with eigenvalues $(-q_I^{(\infty)})_{I=1,2,3}$. Injecting \eqref{Est:R1}--\eqref{Est:R3} into \eqref{Kretschmanncomputation}, we therefore conclude
\begin{align*}
t^4\R^{\a\mu\b\nu}\R_{\a\mu\b\nu}&=(K_{IJ}K_{AB}-K_{AJ}K_{BI})(K_{IJ}K_{AB}-K_{AJ}K_{BI})\\
&+4(K_{IJ}+K_{IB}K_{BJ})(K_{IJ}+K_{IC}K_{CJ})+O(\ep_0t^{\si})\\
&=2[\tr\left(K^2\right)]^2+4\tr\left(K^2\right)+8\tr\left(K^3\right)+2\tr\left(K^4\right)+O(\ep_0t^{\si})\\
&=2\left[\sum_{I=1}^3(q_I^{(\infty)})^2\right]^2+4\sum_{I=1}^3(q_I^{(\infty)})^2-8\sum_{I=1}^3(q_I^{(\infty)})^3+2\sum_{I=1}^3(q_I^{(\infty)})^4+O(\ep_0t^{\si})\\
&=4\sum_{I=1}^3(q_I^{(\infty)})^4+4\sum_{1\leq I<J\leq 3}(q_I^{(\infty)})^2(q_J^{(\infty)})^2+4\sum_{I=1}^3(q_I^{(\infty)})^2-8\sum_{I=1}^3(q_I^{(\infty)})^3+O(\ep_0t^{\si})\\
&=4\sum_{I=1}^3\left[(q_I^{(\infty)})^2-q_I^{(\infty)}\right]^2+4\sum_{1\leq I<J\leq 3}(q_I^{(\infty)})^2(q_J^{(\infty)})^2+O(\ep_0t^{\si}).
\end{align*}
Combining with \eqref{qIcontrol}, this completes the proof of Proposition \ref{Kretschmann}.
\end{proof}

\vspace{0.1cm}
Xinliang An: Department of Mathematics, National University of Singapore, Singapore 119076. \\
Email: \textit{matax@nus.edu.sg}\\ \\
Taoran He: Department of Mathematics, National University of Singapore, Singapore 119076. \\
Email: \textit{taoran$\_$he@u.nus.edu}\\ \\
Dawei Shen: Department of Mathematics, Columbia University, New York, NY, 10027. \\
Email: \textit{ds4350@columbia.edu}
\end{document}